\documentclass[11pt,a4paper,USenglish]{article}
\usepackage[in]{fullpage}

\usepackage{microtype}
\usepackage{boxedminipage}

\usepackage{aurical}
\usepackage[T1]{fontenc}
\usepackage{xspace}
\usepackage{esvect}
\usepackage{amsmath,amsfonts,amssymb,mathtools}
\usepackage[c2]{optidef}
\usepackage{datetime}
\usepackage{stmaryrd}
\usepackage{tcolorbox}
\usepackage{wasysym, fancybox}
\usepackage{dsfont}
\usepackage{eqparbox}
\usepackage{comment}
\usepackage{enumitem,amsthm,hyperref}

\usepackage[normalem]{ulem} 


\usepackage{tikz-cd}

\bibliographystyle{plainurl}

\graphicspath{{./Figures/}}


\usepackage{graphicx}%
\usepackage{multirow}%
\usepackage{mathrsfs}%
\usepackage{transparent}

\usepackage[title]{appendix}%
\usepackage{xcolor} 

\usepackage[normalem]{ulem}

\usepackage{textcomp}%
\usepackage{manyfoot}%
\usepackage{booktabs}%
\usepackage{algorithm}%
\usepackage{algorithmicx}%
\usepackage{algpseudocode}%
\usepackage{listings}%

\usepackage{caption}
\usepackage{subcaption}

\graphicspath{{Figures/}}

\definecolor{lightblue}{rgb}{.80,.85,1}
\definecolor{darkgreen}{RGB}{0,100,0}
\definecolor{firebrick}{RGB}{178,34,34}
\definecolor{salmon}{RGB}{250,128,114}
\definecolor{turquoise}{RGB}{0,128,114}
\definecolor{turquoise2}{RGB}{0,180,140}
\definecolor{turquoise3}{RGB}{0,160,220}
\definecolor{bordeaux}{RGB}{144, 12, 63}
\definecolor{darkorchid}{rgb}{0.60,0.20,0.80}
\definecolor{lightorange}{RGB}{234,219,173}
\definecolor{bianca}{RGB}{0,92,153}
\definecolor{lila}{RGB}{193,128,255}
\definecolor{mistyrose}{RGB}{255,228,225}

\newtheorem{theorem}{Theorem}
\newtheorem{lemma}[theorem]{Lemma}

\newtheorem{remark}{Remark}

\newtheorem{definition}{Definition}

\newcommand{\ignore}[1]{}

\newcommand{\Remember}[1]{}


\newcommand{\Modif}[1]{#1}



\newcommand{\defunder}[1]{\underset{\text{def.}}{#1} \:}
\newcommand{\UnitMatrix}{\mathds{1}}
\newcommand{\Skel}{\operatorname{Sk}}

\newcommand{\Step}[1]{\medskip\noindent{\underline{\bf Step #1:}}\hspace{1mm}}


\newcommand{\Vertexset}[1]  {\operatorname{Vert}{#1}}

\newcommand{\Shadow}[1]{\boldsymbol{\lvert} #1 \boldsymbol{\rvert}}
\newcommand{\relint}[1]{\operatorname{relint}(#1)}

\newcommand{\wdel}{\ensuremath{w}\xspace}

\newcommand{\Rips}[2]{\mathcal{R}(#1,#2)}
\newcommand{\Cech}[2]{\mathcal{C}(#1,#2)}
\newcommand{\Del}[1]{\operatorname{Del}(#1)}
\newcommand{\Delloc}[3]{\operatorname{Delloc}_{#3}(#1,#2)}

\newcommand{\Star}[2]{\operatorname{St}(#1,#2)}

\newcommand{\Rspace}{\ensuremath{\mathbb{R}}\xspace}
\newcommand{\paraboloid}{\ensuremath{\mathscr{P}}\xspace}
\newcommand{\Dim}{N}
\newcommand{\Zspace}{\ensuremath{\mathbb{Z}}\xspace}
\newcommand{\Sspace}{\ensuremath{\mathbb{S}}\xspace}
\newcommand{\M}{\ensuremath{\mathcal{M}}\xspace}
\newcommand{\C}{\ensuremath{\mathcal{C}}\xspace}

\newcommand{\D}{\ensuremath{\mathcal{D}}\xspace}
\newcommand{\K}{\ensuremath{\mathcal{K}}\xspace}

\newcommand{\B}{\mathbb{B}}

\newcommand{\AngularDeviation}[1]{\operatorname{angularDeviation}_{\M}(#1)}
\newcommand{\code}[1]{\operatorname{code}_{#1}}
\newcommand{\BaryCoord}[3]{\operatorname{BarycentricCoord}_{#3}^{#2}(#1)}

\newcommand{\s}[1]{\sffamily \scriptsize $#1$}
\newcommand{\styleitem}[1]{\textcolor{darkgray}{\sffamily\bfseries\upshape\mathversion{bold}#1}\xspace}

\newcommand{\card}[1]{\operatorname{card}#1}

\DeclarePairedDelimiter\abs{\lvert}{\rvert}

\newcommand \Above[2]{\genfrac{}{}{0pt}{0}{#1}{#2}}

\newcommand{\sign}[2]{\operatorname{sign}_{#2}(#1)}

\newcommand{\Support}[1]{\operatorname{Supp} #1}
\newcommand{\Indicator}[1]{\mathbf{1}_{#1}}

\newcommand{\DotProd}[2]{#1 \cdot #2}
\newcommand{\Power}[2]{\operatorname{Power}_{#1}(#2)}
\newcommand{\Id}{\operatorname{Id}}

\newcommand{\Edel}{\operatorname{E_{\operatorname{del}}}}
\newcommand{\load}{\operatorname{load}}
\newcommand{\flux}{\operatorname{flux}}
\newcommand{\Load}[4]{\load_{#1,#2}(#4)}
\newcommand{\LoadLoc}[4]{\load_{#1,#2,#3}(#4)}

\newcommand{\Comb}[3]{\lambda_{#2}^{#1}(#3)}

\newcommand{\Differential}{\mathrm{D}}
\newcommand{\NorSpace}{\boldsymbol{N}}
\newcommand{\TanSpace}{\boldsymbol{T}}


\newcommand{\Conv}[1]{\operatorname{conv}#1}
\newcommand{\Aff}[1]{\operatorname{aff}#1}
\newcommand{\Volume}[1]{\operatorname{vol}(#1)}

\newcommand{\Reach}[1]{\operatorname{reach}#1}
\newcommand{\reach}{\ensuremath{\mathcal{R}}\xspace}
\newcommand{\Offset}[2]{#1^{\oplus #2}}
\newcommand{\Diam}[1]{\operatorname{Diam}(#1)}

\newcommand{\Sep}[1]{\operatorname{separation}(#1)}

\newcommand{\height}[1]{\operatorname{height}(#1)}
\newcommand{\Height}[2]{\operatorname{height}(#1,#2)}
\newcommand{\ProtectGlob}[2]{\operatorname{protection}(#1,#2)}
\newcommand{\protection}[2]{\operatorname{protection}(#1,#2)}
\newcommand{\pro}{\mathsf{p}}
\newcommand{\sepvar}{\mathsf{s}}
\newcommand{\hei}{\Height{P}{\rho}}
\newcommand{\MA}[1]{\operatorname{axis}(#1)}

\newcommand{\Tangent}[2]{\mathbf{T}_{#1}#2}

\def\restrict#1{\raise-.5ex\hbox{\ensuremath|}_{#1}}

\newcommand\restr[2]{{
  \left.\kern-\nulldelimiterspace 
  #1 
  \vphantom{\big|} 
  \right|_{#2} 
}}

\newcommand{\ts}{\ensuremath{\tilde{\sigma}}}

\newcommand{\RestConv}[2]{\Conv_{\mid #2}{#1}}

\newcommand{\length}{\operatorname{length}}

%
%
%
%
%

\title{Delaunay-like Triangulation of Smooth Orientable Submanifolds by $\ell_1$-Norm
  Minimization}

\author{Dominique Attali\footnote{Univ. Grenoble Alpes, CNRS,
  GIPSA-lab, Grenoble, France. \texttt{Dominique.Attali@grenoble-inp.fr}}
\and
Andr\'e Lieutier\footnote{Aix-en-Provence, France. \texttt{andre.lieutier@gmail.com}}
}

\begin{document}

\maketitle

\begin{abstract}
  In this paper, we study the shape reconstruction problem, when the
  shape we wish to reconstruct is an orientable smooth $d$-dimensional
  submanifold of the Euclidean space. Assuming we have as input a
  simplicial complex $K$ that approximates the submanifold (such as
  the \v Cech complex or the Rips complex), we recast the problem of
  reconstucting the submanifold from $K$ as an $\ell_1$-norm
  minimization problem in which the optimization variable is a
  $d$-chain of $K$ over the field $\Rspace$. Providing that $K$
  satisfies certain reasonable conditions, we prove that the
  considered minimization problem has a unique solution which
  triangulates the submanifold and coincides with the flat Delaunay
  complex introduced and studied in a companion paper
  \cite{AttaliLieutierFlatDelaunay2022}. Since the objective is a
  weighted $\ell_1$-norm and the constraints are linear, the
  triangulation process can thus be implemented by linear programming.    
\end{abstract}

%
%
%
%
%

\section{Introduction}

In many practical situations, the shape of interest is only known
through a finite set of sample points. Given as input these points, a
natural question is how to construct a \emph{triangulation} of the
shape, that is, a set of simplices whose union is homeomorphic to the
shape. This problem, known as {\em shape reconstruction}, has been
widely studied \cite{hoppe1992surface,
  amenta1999surface,Amenta_Choi_Kolluri:2001:powerCrust,edels:2003:wrap,giesen2002surface,cheng2005manifold,Dey:2006:reconstruction_book,bauer2024wrapping}. In
the paper, the shape is assumed to be a smooth orientable
$d$-dimensional submanifold of the Euclidean space.  We show that,
under appropriate conditions, a triangulation of that submanifold can
be expressed as the solution of a weighted $\ell_1$-norm minimization
problem under linear constraints.

\paragraph{Overview of the method in the particular case of planar curves.}

We first give an informal description of
our variational formulation in the easy case of the triangulation of a
closed, connected smooth curve in the plane.
Assume that we are given a set of points $P$ that sample an unknown
curve $\C$ (example in Figure \ref{fig:PlanarCurve_1}). Consider the
graph $K$ whose vertices are the points of $P$ and whose edges connect
pair of points that are within a certain given distance; see Figure
\ref{fig:PlanarCurve_2}. Our goal is to compute a triangulation of the
curve $\C$ ({\em i.e.} a closed polygonal line) whose vertices and
edges are in $K$ and which follows ``nicely'' $\C$.

\begin{figure}[htb]
	\centering
	\begin{subfigure}[b]{0.49\textwidth}
		\centering
\begingroup%
  \makeatletter%
  \providecommand\color[2][]{%
    \errmessage{(Inkscape) Color is used for the text in Inkscape, but the package 'color.sty' is not loaded}%
    \renewcommand\color[2][]{}%
  }%
  \providecommand\transparent[1]{%
    \errmessage{(Inkscape) Transparency is used (non-zero) for the text in Inkscape, but the package 'transparent.sty' is not loaded}%
    \renewcommand\transparent[1]{}%
  }%
  \providecommand\rotatebox[2]{#2}%
  \newcommand*\fsize{\dimexpr\f@size pt\relax}%
  \newcommand*\lineheight[1]{\fontsize{\fsize}{#1\fsize}\selectfont}%
  \ifx\svgwidth\undefined%
    \setlength{\unitlength}{180.20492313bp}%
    \ifx\svgscale\undefined%
      \relax%
    \else%
      \setlength{\unitlength}{\unitlength * \real{\svgscale}}%
    \fi%
  \else%
    \setlength{\unitlength}{\svgwidth}%
  \fi%
  \global\let\svgwidth\undefined%
  \global\let\svgscale\undefined%
  \makeatother%
  \begin{picture}(1,0.72588113)%
    \lineheight{1}%
    \setlength\tabcolsep{0pt}%
    \put(0.50848009,0.66717425){\color[rgb]{0.39215686,0.58431373,0.92941176}\makebox(0,0)[lt]{\lineheight{1.25}\smash{\begin{tabular}[t]{l}$\C$\end{tabular}}}}%
    \put(0,0){\includegraphics[width=\unitlength,page=1]{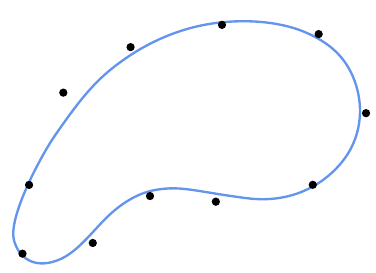}}%
  \end{picture}%
\endgroup%

		\subcaption{}
		\label{fig:PlanarCurve_1}
	\end{subfigure}
	\hfill
	\begin{subfigure}[b]{0.49\textwidth}
		\centering 
\begingroup%
  \makeatletter%
  \providecommand\color[2][]{%
    \errmessage{(Inkscape) Color is used for the text in Inkscape, but the package 'color.sty' is not loaded}%
    \renewcommand\color[2][]{}%
  }%
  \providecommand\transparent[1]{%
    \errmessage{(Inkscape) Transparency is used (non-zero) for the text in Inkscape, but the package 'transparent.sty' is not loaded}%
    \renewcommand\transparent[1]{}%
  }%
  \providecommand\rotatebox[2]{#2}%
  \newcommand*\fsize{\dimexpr\f@size pt\relax}%
  \newcommand*\lineheight[1]{\fontsize{\fsize}{#1\fsize}\selectfont}%
  \ifx\svgwidth\undefined%
    \setlength{\unitlength}{180.20492313bp}%
    \ifx\svgscale\undefined%
      \relax%
    \else%
      \setlength{\unitlength}{\unitlength * \real{\svgscale}}%
    \fi%
  \else%
    \setlength{\unitlength}{\svgwidth}%
  \fi%
  \global\let\svgwidth\undefined%
  \global\let\svgscale\undefined%
  \makeatother%
  \begin{picture}(1,0.72588113)%
    \lineheight{1}%
    \setlength\tabcolsep{0pt}%
    \put(0,0){\includegraphics[width=\unitlength,page=1]{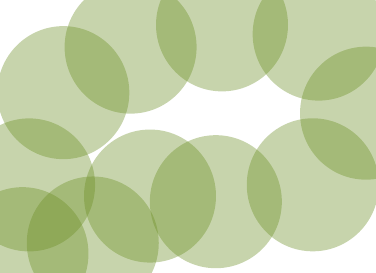}}%
    \put(0.88205378,0.15665707){\color[rgb]{0.00392157,0.43529412,0.24313725}\makebox(0,0)[lt]{\lineheight{1.25}\smash{\begin{tabular}[t]{l}$r$\end{tabular}}}}%
    \put(0,0){\includegraphics[width=\unitlength,page=2]{PlanarCurve_2.pdf}}%
    \put(0.0450506,0.30476805){\makebox(0,0)[lt]{\lineheight{1.25}\smash{\begin{tabular}[t]{l}$K$\end{tabular}}}}%
    \put(0,0){\includegraphics[width=\unitlength,page=3]{PlanarCurve_2.pdf}}%
    \put(0.11341141,0.48940131){\color[rgb]{0,0,0}\makebox(0,0)[lt]{\lineheight{1.25}\smash{\begin{tabular}[t]{l}$P$\end{tabular}}}}%
  \end{picture}%
\endgroup%

		\subcaption{}
		\label{fig:PlanarCurve_2}
	\end{subfigure}
	\vskip\baselineskip
	\begin{subfigure}[b]{0.49\textwidth} 
	  \centering
      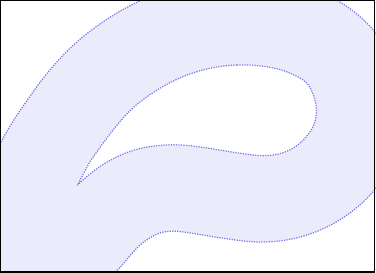
	  \subcaption{}
	  \label{fig:PlanarCurve_3}
	\end{subfigure}
	\hfill
	\begin{subfigure}[b]{0.49\textwidth}   
	  \centering
      {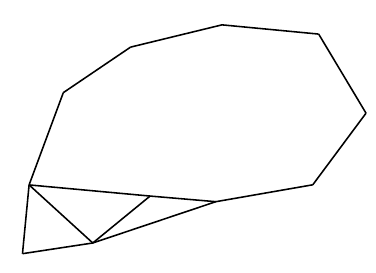}
	  \subcaption{}
      \label{fig:PlanarCurve_4}
	\end{subfigure}
	\caption{
      (a) A finite set of (black) points $P$ that sample a (blue)
      curve $\C$. (b) The graph $K$ is a proximity graph constructed
      from $P$ by connecting every pair of points that are less than
      $2r$ apart. Each edge in $K$ is arbitrarily oriented. (c) The
      segment $[a,b]$ intersects both $\C$ and $K$ transversally with
      $b$ in the inside region and $a$ in the outside region. (d) A
      1-cycle of $K$ whose flux through $[a,b]$ equals 2.
      \label{fig:PlanarCurve_first}}
\end{figure}

Let us orient (arbitrarily) edges in $K$; see Figure
\ref{fig:PlanarCurve_2}.  A $1$-chain $\gamma$ on $K$ with real
coefficients is the assignation of a real number $\gamma(e)$ to each
oriented edge $e$ in $K$. A $1$-cycle is a $1$-chain $\gamma$ such
that $\partial \gamma=0$, which means that, at each vertex $v$ of $K$,
the sum of the coefficients of the edges entering $v$ equals the sum
of coefficients of edges leaving $v$; see Figure
\ref{fig:PlanarCurve_4}.

We now define informally what we mean by a normalized $1$-cycle.
Consider a tubular neighborhood of $\C$ sufficiently small, so that
its complement consists only of two connected components, one bounded
component called the inside region and one unbounded component called
the outside region. Suppose furthermore that $K$ is contained in this
small tubular neighborhood; see Figure \ref{fig:PlanarCurve_3}.  The segment $[a,b]$ shown on Figure
\ref{fig:PlanarCurve_3} has one endpoint $a$ in the outside region,
one endpoint $b$ in the inside region and it cuts both the curve
$\C$ and the graph $K$ transversally. Let us say that an oriented edge
$[p,q]$ of $K$ intersects $[a,b]$ in a {\em forward direction} ({\em
  resp.}  {\em backward direction}) if $a$ lies on the left ({\em
  resp.} {\em on the right}) of the directed line through $p$ and
$q$. Given a $1$-chain $\gamma$, we then define the {\em flux} through
$[a,b]$ of $\gamma$ as
\[
\flux_{[a,b]}(\gamma) = \sum_{e^+} \gamma(e^+) - \sum_{e^-} \gamma(e^-),
\]
where the first sum is over all edges $e^+$ of $K$ that cross $[a,b]$
in a forward direction and the second sum is over all edges $e^-$ of
$K$ that cross $[a,b]$ in a backward direction.  For example, the flux
of $\gamma$ through $[a,b]$ is $2$ on Figure
\ref{fig:PlanarCurve_4}. Note that the flux through $[a,b]$ is a
linear form on the vector space of $1$-cycles. Moreover, the flux of a
$1$-cycle $\gamma$ through $[a,b]$ does not depend upon the location
of $[a,b]$, as long as $a$ remains in the outside region and $b$
remains in the inside region. Indeed, the expression of the flux
changes only when the edge $[a,b]$ passes through a vertex of $K$, at
which time one can check that the value of the flux remains constant
if $\gamma$ is a 1-cycle.  We say that a 1-cycle is {\em normalized}
if its flux through $[a,b]$ is equal to
$1$. Figure~\ref{fig:PlanarCurve_5} depicts such a normalized cycle.

\begin{figure}[htb]
	\centering
	\begin{subfigure}[b]{0.49\textwidth} 
	  \centering
      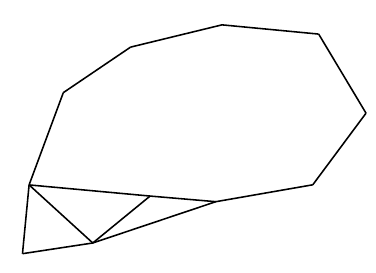
		\subcaption{}
		\label{fig:PlanarCurve_5}
	\end{subfigure}
	\hfill
	\begin{subfigure}[b]{0.49\textwidth}   
	  \centering
      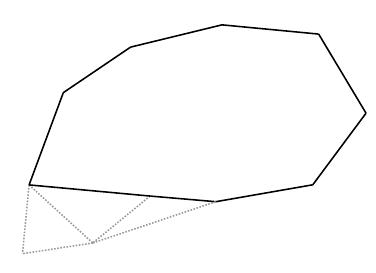
	  \subcaption{}
	  \label{fig:PlanarCurve_6}
	\end{subfigure}
	\vskip\baselineskip
	\begin{subfigure}[b]{0.49\textwidth} 
	  \centering
      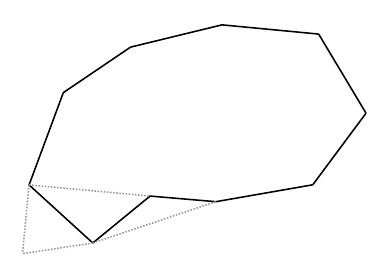
	  \subcaption{}
	  \label{fig:PlanarCurve_7}
	\end{subfigure}
	\hfill
	\begin{subfigure}[b]{0.49\textwidth}   
	  \centering
      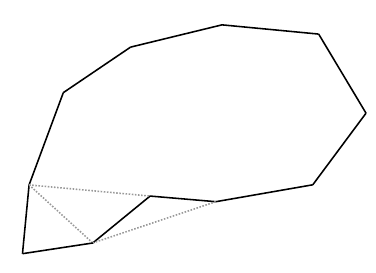
	  \subcaption{}
	  \label{fig:PlanarCurve_8}
	\end{subfigure}
	\caption{ (a) A 1-cycle is said to be normalized if its flux
      through $[a,b]$ is equal to 1. (b)~A normalized cycle that
      minimizes the length. (c)~A normalized cycle that minimizes the
      sum of the edge lengths squared. (d)~A normalized cycle that
      minimizes the sum of the edge lengths cubed, also called the
      Delaunay energy (up to a multiplicative constant). Our
      reconstruction method returns the support of that cycle.}
	\label{fig:PlanarCurve_second}
\end{figure}

Our reconstruction method consists in computing the minimal cycle
among all normalized cycles.  By minimal cycle, we mean here a cycle that
minimizes a weighted $\ell^1$-norm. Perhaps the most natural weighted
$\ell^1$-norm for geometric $1$-cycles could be the length:
\[
\length(\gamma) = \sum_e \length(e) \, | \gamma(e) |,
\]
where the sum is over all edges $e$ of $K$.  The normalized cycle
minimizing the length is depicted in Figure~\ref{fig:PlanarCurve_6},
where one can observe that it is indeed a triangulation of the curve
$\C$. However, in order to minimize the length, the resulting minimal
cycle favors long edges and skips intermediate sample points whenever
possible, so that small features of the curve are ignored.

In contrast, thanks to Pythagorean Theorem, weighting edges by the
square of the length would make the minimum cycle follow intermediate
points, as long as there exist at those intermediate points an
incoming edge and an outcoming edge forming an angle larger than
$\frac{\pi}{2}$; see Figure \ref{fig:PlanarCurve_7}.

Instead, we consider minimizing the Delaunay energy,  which, in the particular case of one dimensional simplices ({\em i.e.} edges),  consists 
in weighting edges (up to a constant factor)  by the cube of the length:
 \[
\Edel(\gamma) = \frac{1}{6} \sum_e  \length(e)^3 \, | \gamma(e) |
\]
The cycle minimizing this Delaunay energy is depicted on Figure
\ref{fig:PlanarCurve_8}. One can again check (and we shall actually
prove in the paper) that this is indeed a triangulation of $\C$. Our
method for reconstructing one dimensional curves can then be expressed
as solving the following optimization problem over $1$-chains of $K$:
\begin{tcolorbox}
  \vspace{-4mm}
  \begin{mini*}
    {\gamma}{\Edel(\gamma)}{}{} \label{problem:reconstruction1D}
    \addConstraint{\partial \gamma}{=0}
    \addConstraint{\flux_{[a,b]} (\gamma)}{=1}
  \end{mini*}
\end{tcolorbox}


We aim at generalizing the above problem beyond dimension one
manifolds and at identifying sufficient conditions for the support of
its solution to be a triangulation. But for that we need first to make
a detour to Delaunay complexes.

\paragraph{Variational formulation for Delaunay complexes.}
The starting point of the work presented in this paper is the
observation that when we consider a point cloud $P$ in $\Rspace^d$,
its Delaunay complex can be expressed as the solution of a particular
$\ell_p$-norm minimization problem. This fact is best explained by
lifting the point set $P$ vertically onto the paraboloid $\paraboloid
\subseteq \Rspace^{d+1}$ whose equation is $x_{d+1} = \sum_{i=1}^d
x_i^2$. Denoting by $\hat{P}$ the lifted points, it is well-known that
the Delaunay complex of $P$ is isomorphic to the boundary complex of
the lower convex hull of $\hat{P}$.

Starting from this equivalence, Chen has observed in
\cite{chen2004optimal} that the Delaunay complex of $P$ minimizes over
all triangulations $T$ of $P$ the $\ell_p$-norm of the difference
between the following two functions: the first function maps each
point $x$
to its vertical projection onto the lifted triangulation $\hat{T}$ and
the second function maps each point $x$
to the lifted point $\hat{x} \in
\paraboloid$. This variational formulation has been successfully
exploited in
\cite{alliez2005variational,chen2011efficient,chen2014revisiting} for
the generation of {\it Optimal Delaunay Triangulations}. When $p=1$,
the $\ell_p$-norm associated to $T$ is what we call in this paper the
Delaunay energy of $T$ and, can be interpreted as the $(d+1)$-volume
enclosed between the lifted triangulation $\hat{T}$ and the
paraboloid~\paraboloid. Given a $d$-simplex $\sigma$, we call the
$(d+1)$-volume enclosed between the convex hull of $\hat{\sigma}$ and
$\paraboloid$ the {\em Delaunay weight} of $\sigma$ and denote it as
$\omega(\sigma)$. The Delaunay energy of $T$ can then be simply
expressed as the sum of the Delaunay weights of its $d$-simplices.

\paragraph{Contributions.}

We present a variational formulation to submanifold reconstruction in
Euclidean space, that both extends our variational approach for curve
reconstruction in the plane and the variational approach for
generating Delaunay complexes in $\Rspace^d$. Consider a finite set of
points $P$ that sample an unknown $d$-dimensional submanifold $\M
\subseteq \Rspace^\Dim$ and suppose that we have at hand a simplicial
complex $K$ whose vertex set is $P$ (such as for instance the \v Cech
complex of $P$ or the Vietoris-Rips complex of $P$). Given as input
$K$, our goal is to find a triangulation of $\M$ contained in $K$.

A crucial ingredient in our variational formulation is to embed the
triangulations contained in $K$ inside the vector space formed by
simplicial $d$-cycles\footnote{Or relative $d$-cycles when the
considered domain has a boundary.} of $K$ over the field $\Rspace$. In
spirit, this is similar to what is done in the theory of minimal
surfaces, when oriented surfaces are identified to particular elements
of a much larger set, namely the space of currents
\cite{morgan2016geometric}, which enjoys the nice property of being a
vector space. Our method for reconstructing submanifolds consists in
solving the following optimization problem over $d$-chains of $K$:
\begin{tcolorbox}
  \vspace{-4mm}
  \begin{mini*}
    {\gamma}{\Edel(\gamma)}{}{} 
    \addConstraint{\partial \gamma}{=0} 
    \addConstraint{\load_A (\gamma)}{=1}
  \end{mini*}
\end{tcolorbox}

The objective function is the Delaunay energy, whose definition
we adapt to the $d$-chains $\gamma$ of $K$ by setting
\[
\Edel(\gamma) = \sum_{\sigma} \omega(\sigma) \, | \gamma(\sigma) |,
\]
where the sum is over all $d$-simplices $\sigma$ of $K$ and
$\omega(\sigma)$ is the Delaunay weight of $\sigma$.  The first
constraint expresses the fact that we are searching for
$d$-cycles. The second constraint is a linear equation which may be
interpreted as a kind of normalization of $\gamma$ that extends the
condition $\flux_{[a,b]}(\gamma)=1$ beyond dimension one. The letter
$A$ designates a set of parameters that specifies where the load is
computed.
Thus, our optimization problem is an $\ell_1$-norm
minimization problem under linear constraints. As such, it can be
turned into a linear optimization problem in the standard form through
slack variables as explained in
Appendix~\ref{appendix:linear-programming}, and can be addressed by
standard linear programming techniques such as the simplex algorithm.

One important point is that the objective function ({\em i.e.} the
Delaunay energy) is a weighted $\ell_1$-norm. Put it simply, we are
searching for weighted $\ell_1$-minima.  The celebrated sparsity of
$\ell_1$-minima manifests itself in our context by the fact that the
support of such minima is sparse, in other words it is non-zero only
on a small subset of simplices of $K$. Our main result is a set of
conditions under which our optimization problem has a unique solution
whose support is a triangulation of $\M$ (either with theoretical or
practical normalization).

\paragraph{Proof technique.}

The proof requires us to introduce an elaborate construction, the {\em
  Delloc complex} of $P$, as a tool to describe the solution. The
$d$-simplices of that complex possess exactly the property that we
need for our analysis. In a companion paper
\cite{AttaliLieutierFlatDelaunay2022} we show that the Delloc complex
indeed provides a triangulation of the manifold, assuming the set of
sample points $P$ to be sufficiently dense, safe, and not too
noisy. Incidently, the Delloc complex coincides with the {\em flat
  Delaunay complex} introduced in our companion paper
\cite{AttaliLieutierFlatDelaunay2022} and is akin to the {\em
  tangential Delaunay complex} introduced and studied in
\cite{boissonnat2014manifold,boissonnat2018geometric}.  When the
manifold is sufficiently densely sampled by the data points, all three
constructions are locally isomorphic to a (weighted) Delaunay
triangulation computed in a local tangent space to the
manifold. Intuitively, this indicates that the Delaunay energy should
locally reach a minimum for all three constructions and, therefore
ought to be also a global minimum.  Actually, turning this intuitive
reasoning into a correct proof turns out to be more tricky than it
appears and is the main purpose of the present paper.  In particular,
we need to globally compare the Delaunay energy of the cycle carrying
the Delloc complex with that of alternate $d$-cycles, and this
requires us to carefully distribute the Delaunay energy along
barycentric coordinates. 

For the purpose of the proof, it is convenient to first consider a
rather artificial problem (Problem (\ref{problem:reconstruction}))
where, besides the sample $P$, the manifold $\M$ is known. At the end
of the paper, we show how to turn this problem into a more realistic
one (Problem (\ref{problem:practical})) that takes as input only the
sample of the unknown manifold, and is correct assuming that
reasonable sampling conditions are met.

An overview of the proof is provided in Section
\ref{section:proof-overview}.

\paragraph{Related work.}
A closely related problem is the computation of $\ell_1$-minimum
homology representative cycles. Several authors, with
computational topology or topological data analysis motivations, have
considered the computation of such cycles, generally for integers or
integers modulo $p$ coefficients
\cite{chen_HardnessResultsHomology_2011,
  borradaile_MinimumBoundedChains_2020,dey_OptimalHomologousCycles_2011,chambers_MinimumCutsShortest_2009,
  dey_ComputingMinimalPersistent_2020}.

A combinatorial counterpart of Delaunay energy, called {\em
  lexicographic order}, has been considered, using the field
$2\Zspace/\Zspace$ of integer modulo $2$ instead of $\Rspace$ as chain
coefficients.  Given a point cloud $P$ in Euclidean space, the
lexicographic minimal chain, among chains with vertices in $P$ and
whose boundary support is the boundary of the convex hull of $P$ has
the Delaunay triangulation as its support \cite{cohen2023delaunay}.
In \cite{cohen2022lexicographic}, the authors consider lexicographic
minimal chains for practical applications to surface reconstruction
in $\Rspace^3$.

\paragraph{Outline.}
Section~\ref{section:preliminaries} introduces the necessary
terminology.  Section~\ref{section:Delaunay-complex} reviews Delaunay
complexes and characterizes them as the triangulations with smallest
Delaunay energy. Section~\ref{section:Delaunay-weight} defines
Delaunay weights and expresses the Delaunay energy as a sum of
Delaunay weights. Section~\ref{section:variational-formulation}
tackles the problem of reconstructing a submanifold $\M$ from a
simplicial complex $K$ whose vertices sample $\M$. The section
presents a convex optimization problem on the $d$-chains of $K$ whose
objective function is the Delaunay energy and whose constraints are
linear. We then state our main result, which are conditions under
which a solution to that optimization problem provides a triangulation
of $\M$. Section \ref{section:proof-main-result} is dedicated to
proving our main result.  Section~\ref{section:PracticalAlgorithm}
discusses practical aspects.

%
%
%
%
%

\section{Preliminaries}
\label{section:preliminaries}

In this section, we review the necessary background and explain some of our terms.

\subsection{Subsets and submanifolds}

\Modif{Given a set $A \subseteq \Rspace^N$ and a point $x \in \Rspace^\Dim$,
we say that $x$ is an \emph{affine combination} of points in $A$ if we
can find  $a_1, \ldots, a_p$ in $A$ and real numbers $\lambda_1, \ldots, \lambda_p$ summing up
to 1 such that $x = \sum_{i=1}^p \lambda_i a_i$. The set of all affine
combinations of points in $A$ is called the \emph{affine space} spanned
by $A$ and is denoted as $\Aff A$. Similarly, we say that $x$ is a
\emph{convex combination} of points in $A$ if we can find $a_1, \ldots, a_p$ in $A$ and 
$\lambda_1, \ldots, \lambda_p$ such that $x = \sum_{i=1}^p \lambda_i a_i$,
where $\sum_{i=1}^p \lambda_i = 1$ and $\lambda_i \geq 0$ for all $1 \leq i \leq p$. The set
of all convex combinations of points in $A$ form the \emph{convex hull}
of $A$ and is denoted as $\Conv A$.}
The \emph{relative interior} of $A$, denoted as
  $\relint{A}$, represents the interior of $A$ within $\Aff{A}$.
For any $x \in \Rspace^\Dim$ and any $r
\in \Rspace$, we denote the closed ball with center $x$ and radius
$r$ by $B(x,r)$. We shall say that $A
\subseteq \Rspace^\Dim$ is \emph{$r$-small} if it can be enclosed
in a ball of radius $r$.
The {\em $r$-tubular neighborhood} of $A$ is the set of
points \Modif{$\Offset A r = \bigcup_{a \in A} B(a,r)$}.
The {\em medial axis} of $A$, denoted as $\MA{A}$, is the set of
points in $\Rspace^\Dim$ that have at least two closest points in
$A$. The {\em reach} of $A$ is the infimum of distances between $A$
and its medial axis, and is denoted as $\Reach A$.  Furthermore, we
define the \emph{projection map} $\pi_A : \Rspace^\Dim \setminus \MA{A}
\to A$, which associates to each point $x$ its unique closest point in
$A$. This projection map is well-defined on every subset of
$\Rspace^\Dim$ that does not intersect the medial axis of $A$. In
particular, it is well-defined on every $r$-tubular neighborhood of
$A$ with $r < \Reach A$.  Recall that the {\em angle} between two
vector spaces $V_0$ and $V_1$ is defined as $\angle(V_0,V_1) =
\max_{v_0 \in V_0} \min_{v_1 \in V_1} \angle v_0,v_1$.  The definition
is not symmetric in $V_0$ and $V_1$, unless the two vector spaces
$V_0$ and $V_1$ share the same dimension. The angle between two affine
spaces $A_0$ and $A_1$ whose corresponding vector spaces are $V_0$ and
$V_1$ is $\angle(A_0,A_1) = \angle(V_0,V_1)$
\cite{BSMF_1875__3__103_2}.

\subsection{Simplicial complexes}

In this section, we review some background notation on algebraic
topology and refer the reader to \cite{munkres1993elements} for a
detailed introduction to the topic.

All simplices and simplicial complexes that we consider in the paper
are abstract.  We recall that an {\em abstract simplicial complex} is
a collection $K$ of finite non-empty sets with the property that if
$\sigma$ belongs to $K$, so does every non-empty subset of
$\sigma$. Each element $\sigma$ of $K$ is called an {\em abstract
  simplex} and its {\em dimension} is one less than its cardinality,
$\dim \sigma = \card \sigma - 1$.  A simplex of dimension
  $i$ is called an $i$-simplex. If $\tau$ and $\sigma$ are two
  simplices such that $\tau \subseteq \sigma$, then $\tau$ is called a
  {\em face} of $\sigma$, and $\sigma$ is called a {\em coface} of
  $\tau$. The elements of $\sigma$ are also referred to as the
  \emph{vertices} of $\sigma$ and the \emph{vertex set} of $K$ is the
  set of vertices of all simplices in $K$, $\Vertexset K =
  \bigcup_{\sigma \in K} \sigma$. When an abstract simplex $\sigma
  \subseteq \Rspace^\Dim$ has its vertices in $\Rspace^\Dim$, it is
naturally associated to the geometric simplex defined as
$\Conv{\sigma}$.
The dimension of $\Conv{\sigma}$, which is the dimension of the affine space
$\Aff{\sigma}$, cannot be larger than the dimension of the
abstract simplex $\sigma$. When the dimension of the geometric simplex
$\Conv{\sigma}$ coincides with that of the abstract simplex $\sigma$,
we say that $\sigma$ is \emph{non-degenerate}.
For a simplicial complex $K$ with vertices in
  $\Rspace^\Dim$, we say that $K$ is \emph{geometrically realized}
(or \emph{embedded}) if (1) $\dim(\sigma)=\dim(\Aff\sigma)$ for all
$\sigma \in K$, and (2) $\Conv(\alpha \cap\beta) = \Conv\alpha \cap
\Conv\beta$ for all $\alpha, \beta \in K$.

Given a set of abstract simplices $\Sigma$ with vertices in $\Rspace^\Dim$ (not
necessarily forming a simplicial complex), we let $\Sigma^{[i]}$
designate the set of $i$-simplices of $\Sigma$. We define the
\emph{shadow} of $\Sigma$ as the subset of $\Rspace^\Dim$
covered by the relative interior of the geometric simplices associated
to the abstract simplices in $\Sigma$, $\Shadow{\Sigma} =
\bigcup_{\sigma \in \Sigma} \relint{\Conv{\sigma}}$. The closure of
$\Sigma$ is the smallest simplicial complex that contains $\Sigma$.

\subsection{Barycentric coordinates}

\Modif{Consider an abstract simplex $\alpha \subseteq \Rspace^\Dim$
  and note that $\alpha$ is non-degenerate if and only if its vertices
  are affinely independent. Suppose that $\alpha$ is non-degenerate
  and consider the affine combination $x = \sum_{a \in \alpha}
  \lambda_a a$ with $\sum_{a \in \alpha} \lambda_a = 1$. Then, the
  $\lambda_a$ are uniquely determined by $x$ and are called the
  \emph{(normalized) barycentric coordinates} of $x$ with respect to
  $\alpha$. In the paper, we shall denote each
  $\lambda_a$ as $\BaryCoord x \alpha a$.  }

\subsection{Chains and weighted norms}

Chains play an important role in this work as they provide a tool to
embed the discrete set of candidate solutions (triangulations of $\M$
in some simplicial complex $K$) into a larger continuous space (the
$d$-chains of $K$). In this section, we recall some standard
definitions concerning chains from \cite{munkres1993elements}.
Given an abstract simplex $\sigma$, two orderings of the
  vertices of $\sigma$ are said to be \emph{equivalent} if they differ
  from one another by an even permutation. The orderings of the
  vertices of $\sigma$ fall into equivalent classes: two classes if
  $\dim\sigma>0$ and one class if $\dim\sigma=0$. Each of these
  classes is called an \emph{orientation} of $\sigma$. An
  \emph{oriented simplex} is a simplex $\sigma$ together with an
  orientation of $\sigma$. We denote as $[v_0,\ldots,v_d]$ the
  oriented $d$-simplex consisting of the $d$-simplex $\{v_0, \ldots,
  v_d\}$ together with the equivalent class of the particular ordering
  $(v_0,\ldots,v_d)$.  Consider an abstract simplicial complex $K$
and assume that each simplex $\sigma$ in $K$ is given an arbitrary
orientation. A {\em $d$-chain} of $K$ with coefficients in \Rspace is
a formal sum $\gamma = \sum_{\sigma} \gamma(\sigma) \sigma$, where
$\sigma$ ranges over all $d$-simplices of $K$ and $\gamma(\sigma)
\in \Rspace$ is the value (or the coordinate) assigned to the 
$d$-simplex $\sigma$ with the rule that if $\sigma$ and $\sigma'$
are the same simplex but have two different orientations,
then $\sigma = - \sigma'$. The set of such $d$-chains is a vector
space denoted by $C_d(K,\Rspace)$. Recall that the $\ell_1$-norm of
$\gamma$ is defined by $\| \gamma \|_1 = \sum_{\sigma}
|\gamma(\sigma)|$. Let $W$ be a weight function which assigns a
non-negative weight $W(\sigma)$ to each $d$-simplex $\sigma$ of $K$.
The $W$-weighted $\ell_1$-norm of $\gamma$ is expressed as
$\|\gamma\|_{1,W} = \sum_{\sigma} W(\sigma) \abs{\gamma(\sigma)}$. We
shall say that a chain $\gamma$ is {\em carried by} a subcomplex $D$
of $K$ if $\gamma$ has value 0 on every simplex that is not in
$D$. The {\em support} of $\gamma$ is the set of simplices on which
$\gamma$ has a non-zero value. It is denoted by
$\Support{\gamma}$. The {\em boundary operator} is a homomorphism
$\partial: C_d(K,\Rspace) \to C_{d-1}(K,\Rspace)$ that associates to
each oriented $d$-simplex $\sigma = [v_0, \ldots, v_d]$ the
$(d-1)$-chain:
\[
\partial \sigma = \sum_{i=0}^d (-1)^i \, [v_0, \ldots, \hat v_i, \ldots, v_d],
\]
where the symbol $\hat v_i$ means that the vertex $v_i$ has been
deleted from the sequence of vertices forming $\sigma$. \Modif{A
  $d$-chain $\gamma \in C_d(K,\Rspace)$ whose boundary vanishes,
  $\partial \gamma = 0$, is called a \emph{$d$-cycle}.}

%
%
%
%
%

\section{Background on Delaunay complexes}
\label{section:Delaunay-complex}

In this section, we recall basic facts about Delaunay complexes
(Section~\ref{section:Delaunay-geometric-characterization}). We then
give a variational characterization of Delaunay complexes
(Section~\ref{section:Delaunay-variational-characterization}).
Throughout the section, $P$ designates a finite point set of
$\Rspace^\Dim$.

\subsection{Definitions and basic property}
\label{section:Delaunay-geometric-characterization} 

\begin{definition}[Delaunay simplex]
A \emph{Delaunay simplex} of $P$ is an abstract simplex $\sigma
\subseteq P$ for which there exists a ball $B$ whose boundary
circumscribes $\sigma$ and whose interior does not contain any point
of $P$.
\end{definition}

\begin{definition}[Delaunay complex]
  The set of Delaunay simplices form an abstract simplicial complex
  called the \emph{Delaunay complex} of $P$ and is denoted as $\Del{P}$.
\end{definition}

We now state a classical result on Delaunay complexes, for which we
need two extra definitions.

\begin{definition}[General position]
  Letting $d = \dim(\Aff P)$, we say that $P \subseteq \Rspace^\Dim$
  is in \emph{general position} if no $d+2$ points of $P$ lie on a
  common $(d-1)$-dimensional sphere and no $k+2$ points of
    $P$ lie on the same $k$-dimensional flat for $k<d$.
\end{definition}

\begin{definition}[Triangulation]
A \emph{triangulation} of $P$ is an abstract simplicial complex whose
vertex set is $P$, whose shadow is $\Conv P$, and which is
geometrically realized.
\end{definition}

\begin{theorem}
  \label{theorem:Delaunay-complexes-are-triangulations}
When $P$ is in general position, $\Del{P}$ is a triangulation of
$P$.
\end{theorem}

\subsection{A variational characterization}
\label{section:Delaunay-variational-characterization} 

The Delaunay complex of $P$ optimizes many functionals over the set of
triangulations of $P$
\cite{boissonnat1998algorithmic,rippa1990minimal,musin1997properties},
one of them being the Delaunay energy that we shall now define
\cite{chen2004MeshSmoothing}.

In preparation for this, we recall a famous result which says that building a
Delaunay complex in $\Rspace^\Dim$ is topologically equivalent to
building a lower convex hull in $\Rspace^{\Dim+1}$. For simplicity, we
shall identify each point $x \in \Rspace^\Dim$ with the point $(x,0)$ in
$\Rspace^{\Dim+1}$.  Consider the paraboloid $\paraboloid
\subseteq \Rspace^{\Dim+1}$ defined as the graph of the function
$\|\cdot\|^2 : \Rspace^\Dim \to
\Rspace$, $x \mapsto \| x \|^2$, where $\|\cdot\|$ designates the
Euclidean norm; see Figure \ref{figure:lifting}, left. For each point $x \in \Rspace^\Dim$, its vertical
projection onto $\paraboloid$ is the point $\hat{x} = (x,\|x\|^2)
\in \Rspace^{\Dim+1}$, which we call the {\em lifted image} of
$x$. Similarly, the lifted image of $P \subseteq
\Rspace^\Dim$ is $\hat{P} = \{ \hat{p} \mid p \in P\}$. Recall that
the lower convex hull of $\hat{P}$ is the portion of $\Conv \hat{P}$
visible to a viewer standing at $x_{d+1} = -\infty$. A classical
result says that for all $\sigma \subseteq P$, the following
equivalence holds: $\sigma$ is a Delaunay simplex of $P$ if and only if
$\Conv \hat{\sigma}$ is contained in the lower convex hull of
$\hat{P}$ \cite{edelsbrunner1996incremental}.

\begin{figure}[htb]
  \def\svgwidth{1\linewidth}
  \centering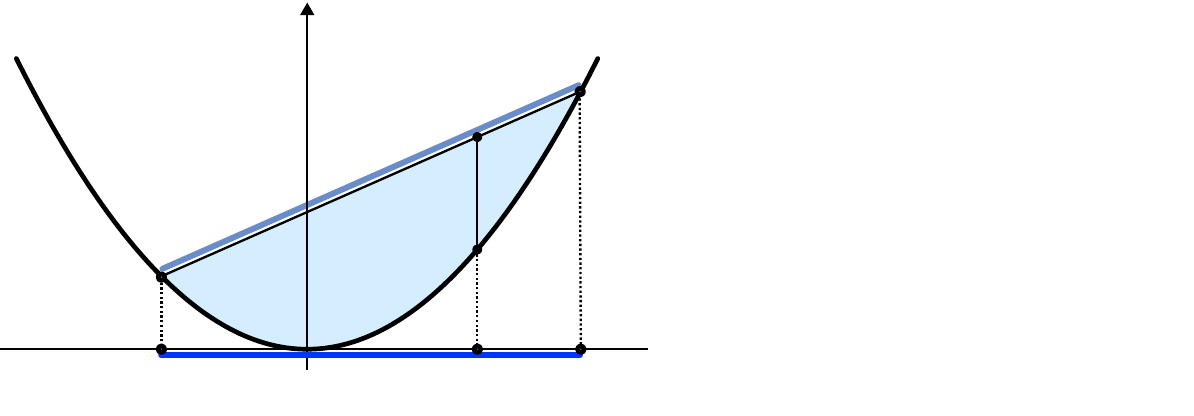
  \caption{Left: the Delaunay weight of $\sigma$ can be depicted as
    the $(d+1)$-volume of the blue region between the lifted geometric
    simplex $\Conv \hat\sigma$ and the paraboloid $\paraboloid$ (see
    Lemma \ref{lemma:interpreting-Delaunay-weights}). Right: the
    Delaunay weight of $\sigma$ is also the $(d+1)$-volume of the blue
    region lying below the graph of $-\operatorname{Power}_\sigma$ and above
    $\Conv \sigma$. \label{figure:lifting}}
\end{figure}

We are now ready to define the Delaunay energy of any triangulation
$T$ of $P$. Let $d = \dim(\Aff P)$. Given a triangulation $T$ of $P$,
the \emph{Delaunay energy} $\Edel(T)$ of $T$ is defined as the
$(d+1)$-volume between the $d$-manifold $\Shadow{\hat{T}} =
\bigcup_{\sigma \in T} \Conv{\hat{\sigma}}$ and the paraboloid
\paraboloid. Let us give a formula for this energy.  Consider a point
$x \in \Conv P$. By construction, $x$ belongs to at least one
geometric $d$-simplex $\Conv \sigma$ for some $\sigma \in T$. Erect an
infinite vertical half-line going up from $x$.  This half-line
intersects the paraboloid \paraboloid at point $\hat x$ and $\Conv
\hat{\sigma}$ at point $x_\sigma^\star$; see Figure \ref{figure:lifting}, left. We have
\[
\Edel(T) = \sum_{\sigma \in T^{[d]}} \int_{x \in \Conv \sigma} \| \hat{x} - x_\sigma^\star\|  \, dx.
\]
Let us recall a well-known result
\cite{musin1997properties,chen2004optimal}, that is a direct
consequence of the lifting construction:

\begin{theorem}
  \label{theorem:euclidean}
  When  $P$ is in general position, the triangulation of $P$ that
  minimizes the Delaunay energy is unique and is the Delaunay complex of $P$.
\end{theorem}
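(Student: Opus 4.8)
The plan is to exploit the lifting construction directly, turning the statement about triangulations into a statement about the lower convex hull of the lifted point set $\hat Q$. The key geometric observation is that for any triangulation $T$ of $Q$, the lifted underlying space $\Shadow{\hat T}$ is the graph of a piecewise-linear function $f_T : \Conv Q \to \Rspace$ that interpolates the values $\|q\|^2$ at each $q \in Q$, and that everywhere $f_T \geq \|\cdot\|^2$ (each lifted simplex $\Conv\hat\sigma$ lies weakly above the paraboloid, since the paraboloid is convex and $\Conv\hat\sigma$ has its vertices on it). Consequently
\[
\Edel(T) = \int_{x \in \Conv Q} \bigl( f_T(x) - \|x\|^2 \bigr)\, dx,
\]
and minimizing $\Edel$ over triangulations of $Q$ is the same as minimizing $\int_{\Conv Q} f_T$, i.e. finding the pointwise-lowest such interpolating PL function.

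First I would show that the PL function $f_{\del}$ whose graph is $\Shadow{\widehat{\Del Q}}$ is pointwise below every other $f_T$. This is exactly the content of the characterization recalled just before the theorem: $\sigma$ is a Delaunay simplex iff $\Conv\hat\sigma$ lies on the lower convex hull of $\hat Q$. Hence $\Shadow{\widehat{\Del Q}}$ is precisely the lower convex hull of $\hat Q$, which by definition of lower convex hull lies weakly below every point of $\Conv\hat Q$ that projects vertically onto it; since every $\Conv\hat\sigma$ with $\sigma\in T$ is contained in $\Conv\hat Q$, we get $f_{\del}(x) \leq f_T(x)$ for all $x \in \Conv Q$. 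Integrating gives $\Edel(\Del Q) \leq \Edel(T)$ for every triangulation $T$, so $\Del Q$ is a minimizer; note that general position is what guarantees $\Del Q$ is genuinely a triangulation of $Q$ (by the classical theorem recalled above), so the minimum is attained within the admissible set.

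Next I would establish uniqueness. Suppose $T$ is a triangulation of $Q$ with $\Edel(T) = \Edel(\del Q)$. Then $\int_{\Conv Q}(f_T - f_{\del}) = 0$ with $f_T - f_{\del} \geq 0$ and both functions continuous, so $f_T = f_{\del}$ everywhere on $\Conv Q$; the two lifted underlying spaces coincide as subsets of $\Rspace^{\Dim+1}$. It remains to deduce that $T = \Del Q$ as simplicial complexes. Here I would argue that for a point $x$ in the relative interior of a $d$-simplex $\Conv\sigma$ of $T$, the graph of $f_T$ near $\hat x$ is the affine piece $\Aff\hat\sigma$; since $f_T = f_{\del}$ and $\Del Q$ is also a triangulation, $x$ lies in the relative interior of some $d$-simplex $\Conv\tau$ of $\Del Q$ with $\Aff\hat\tau = \Aff\hat\sigma$, and because the vertices of both simplices are among $\hat Q$ and lie on this common affine $d$-plane which meets the paraboloid in an ellipsoid, general position (no $d+2$ points of $Q$ cocircular) forces $\sigma = \tau$. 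Running over all $x$ shows the $d$-simplices of $T$ and of $\Del Q$ coincide, and since both are triangulations (hence closed under taking faces) this gives $T = \Del Q$.

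The main obstacle is the last step — passing from equality of the two PL graphs to equality of the two simplicial complexes. The subtlety is that a priori $T$ could refine or coarsen $\Del Q$ while producing the same PL function if several adjacent lifted simplices happened to be coplanar; this is precisely where the general position hypothesis must be used, and the cleanest way to invoke it is through the cocircularity formulation: coplanarity of the lifted vertices of a would-be $(d+1)$-point face is equivalent to those $d+2$ points of $Q$ lying on a common $(d-1)$-sphere, which general position forbids. I would spell this equivalence out carefully, as it is the crux, and treat the rest as routine.
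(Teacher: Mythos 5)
Your proof is correct and is exactly the standard lifting argument that the paper itself invokes: the paper states Theorem \ref{theorem:euclidean} without proof, citing it as a classical consequence of the lifting construction, and your reduction to the lower convex hull of $\hat Q$, together with the cocircularity/coplanarity dictionary for handling uniqueness under general position, is the intended route. The only step you leave implicit — that a triangulation whose $d$-simplices agree with those of $\Del{Q}$ cannot carry extra lower-dimensional simplices — follows routinely from the geometric-realization and underlying-space conditions in the paper's definition of a triangulation.
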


%
%
%
%
%

\section{Delaunay weight}
\label{section:Delaunay-weight}


\Modif{In this section, we define the Delaunay weight of a simplex with
vertices in $\Rspace^\Dim$
({Section~\ref{section:delaunay-weights-definition}}). We show that
the Delaunay energy defined in the previous section can be expressed
as a sum of Delaunay weights
({Section~\ref{section:delaunay-weights-as-heights}}). We then provide
an intrinsec expression for Delaunay weights
({Section~\ref{section:delaunay-weights-intrinsec-formulation}}) that
motivates extending the Delaunay energy to collection of $d$-simplices
in $\Rspace^\Dim$ and more generally to $d$-chains of simplicial
complexes with vertices in $\Rspace^\Dim$, as will be done in the next
section.}

\subsection{Definition}
\label{section:delaunay-weights-definition}

Let $\sigma \subseteq \Rspace^\Dim$ be an abstract simplex. If
$\sigma$ is non-degenerate, we define $S(\sigma)$ as the smallest
$(N-1)$-sphere that circumscribes $\sigma$. We also let $Z(\sigma)$
and $R(\sigma)$ denote the center and radius of $S(\sigma)$,
respectively. Finally, we introduce the map
\[
x \mapsto \Power{\sigma}{x} = \|x-Z(\sigma)\|^2 - R(\sigma)^2,
\]
which associates each point $x \in \Rspace^\Dim$ with the power
distance of $x$ from $S(\sigma)$.

\begin{definition}[Delaunay weight]
  \label{definition:delaunay-weight}
The {\em Delaunay weight} of an abstract $d$-simplex $\sigma
\subseteq \Rspace^\Dim$ is:
\begin{equation*}
  \omega(\sigma) =
  \begin{cases}
    - \int_{x \in \Conv{\sigma}} \Power{\sigma}{x}  \, dx \quad \quad &\text{if $\sigma$ is non-degenerate,}\\
    0 \quad\quad &\text{otherwise}.
  \end{cases}
\end{equation*}
\end{definition}

{Figure~\ref{figure:lifting}}, right depicts graphically the Delaunay
weight of a simplex $\sigma \subseteq \Rspace^\Dim$.  It is
  worth noting that the Delaunay weight is defined for any abstract
  $d$-simplex $\sigma$, irrespective of whether
  $\sigma$ is the Delaunay simplex of some point set $P$ or not. The
  reason for calling it a Delaunay weight will become clear in the
  next section.

\subsection{Delaunay energy as a sum of Delaunay weights}
\label{section:delaunay-weights-as-heights}

Before showing that the Delaunay energy can be expressed as a sum of
Delaunay weights in Lemma~\ref{lemma:interpreting-Delaunay-weights},
we first provide a useful expression of $\Power{\sigma}{x}$ when $x$
is an affine combination of the vertices of $\sigma$ (Lemma
\ref{lemma:power-expression-with-additional-point}) and deduce an
alternative expression of the Delaunay weight in Lemma~\ref{lemma:interpreting-Delaunay-weights}.

\begin{lemma}
  \label{lemma:power-expression-with-additional-point}
  Let $\sigma \subseteq \Rspace^\Dim$ be a non-degenerate simplex and
  let $x$ be an affine combination of the vertices of $\sigma$.  For every $z
  \in \Rspace^\Dim$
\[
\Power{\sigma}{x} = \|x-z\|^2 -  \sum_{a \in \sigma} \BaryCoord x \sigma a \, \|a-z\|^2.
\]
\end{lemma}

\begin{proof}
  Recall that $\Power{\sigma}{x} = \|x-Z(\sigma)\|^2 - R(\sigma)^2$. On one hand, we have
  \begin{eqnarray*}
    \|x - Z(\sigma)\|^2 &=& \|x-z\|^2 + 2(x-z)\cdot(z-Z(\sigma)) + \|z - Z(\sigma)\|^2.
  \end{eqnarray*}
  On the other hand, writing $\lambda_a = \BaryCoord x \sigma a$ for short, we have 
  \begin{eqnarray*}
    R(\sigma)^2 &=& \sum_{a \in \sigma} \lambda_a \|Z(\sigma) - a\|^2\\
    &=& \sum_{a \in \sigma} \lambda_a \left[ \|Z(\sigma) - z\|^2 + 2(Z(\sigma)-z)\cdot(z-a) + \|z-a\|^2 \right]\\
    &=& \|Z(\sigma) - z\|^2 + 2 (Z(\sigma)-z)\cdot(z-x) + \sum_{a \in \sigma}  \lambda_a  \|z-a\|^2.
  \end{eqnarray*}
  Substracting the above expressions of $\|x-Z(\sigma)\|^2$ and $R(\sigma)^2$ yields the result.
\end{proof}

\begin{lemma}
  \label{lemma:interpreting-Delaunay-weights}
  
  For any non-degenerate abstract $d$-simplex $\sigma \subseteq \Rspace^\Dim$, its
  Delaunay weight represents the $(d+1)$-volume between the lifted
  geometric simplex $\Conv \hat{\sigma}$ and the paraboloid
  \paraboloid, {\em i.e.}
  \[
  \omega(\sigma) = \int_{x \in \Conv \sigma} \| \hat{x} - x_\sigma^\star\|   \, dx,
  \]
  where $\hat{x} = (x, \|x\|^2)$ and $x_\sigma^\star =
    \sum_{a \in \sigma} \BaryCoord x \sigma a \hat{a}$.  If $P$
  designates a finite point set of $\Rspace^\Dim$ in general position
  and $d = \dim(\Aff P)$, then the Delaunay energy of any
  triangulation $T$ of $P$ can be expressed as
  \[
  \Edel(T) = \sum_{\sigma \in T^{[d]}} \omega(\sigma).
  \]
\end{lemma}

\begin{proof}
  Letting $x \in \Conv \sigma$, we show that $-\Power \sigma x =
  \|\hat{x} - x_\sigma^\star\|$. Applying Lemma
  \ref{lemma:power-expression-with-additional-point} with $z=0$ and
  writing $\lambda_a = \BaryCoord x \sigma a$ for short, we get that
  \begin{align*}
    - \Power{\sigma}{x} &=   \sum_{a \in \sigma} \lambda_a \, \|a\|^2 - \|x\|^2 \\
    &=  \left\|  \sum_{a \in \sigma} \lambda_a (a,\|a\|^{2}) - (x,\|x\|^2) \right\|\\
    &= \left\| \sum_{a \in \sigma} \lambda_a  \hat{a} - \hat{x} \right\|\\
    &= \| x^\star_\sigma - \hat{x} \|.
  \end{align*}
  The
  expressions of both $\omega(\sigma)$ and $\Edel(T)$ follow
  immediately.
\end{proof}

The above lemma suggests the following interpretation of the Delaunay
energy. Given a triangulation $T$ of $P$, consider the map $w_T :
\Conv P \to \Rspace$ whose restriction to any $d$-simplex $\sigma$ of
$T$ is defined by $w_T(x) = - \Power \sigma x$. The graph of $w_T$ is
a $d$-dimensional piecewise parabolic manifold $\mathcal W_T$ which
has been depicted for two different triangulations in Figure
\ref{figure:comparing-Delaunay-energy}. The Delaunay energy can then
be interpreted as the $(d+1)$-volume of the region lying below $\mathcal W_T$
and above $\Conv P$.

\begin{figure}[htb]
  \begin{center}
    \includegraphics[width=0.49\linewidth]{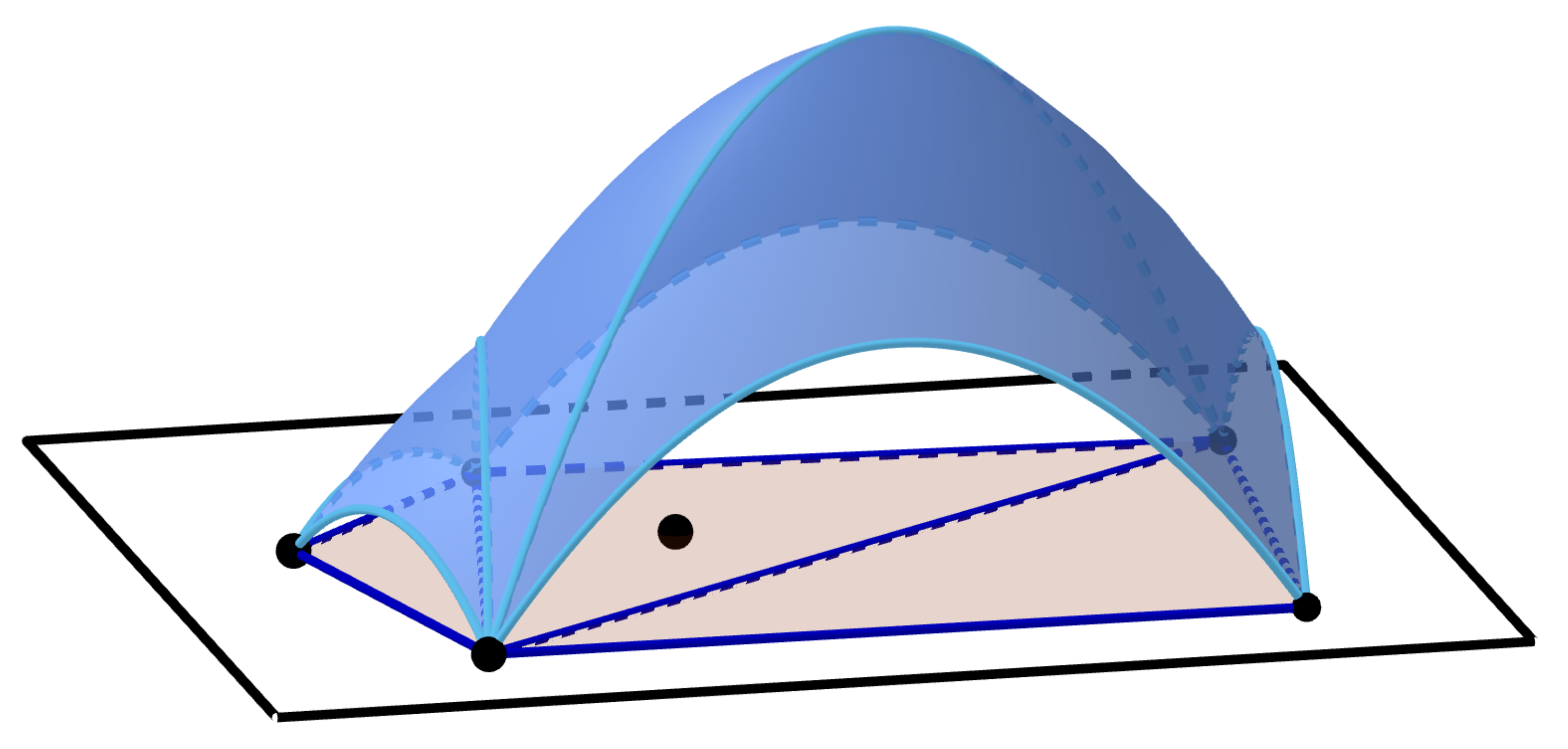}
    \includegraphics[width=0.49\linewidth]{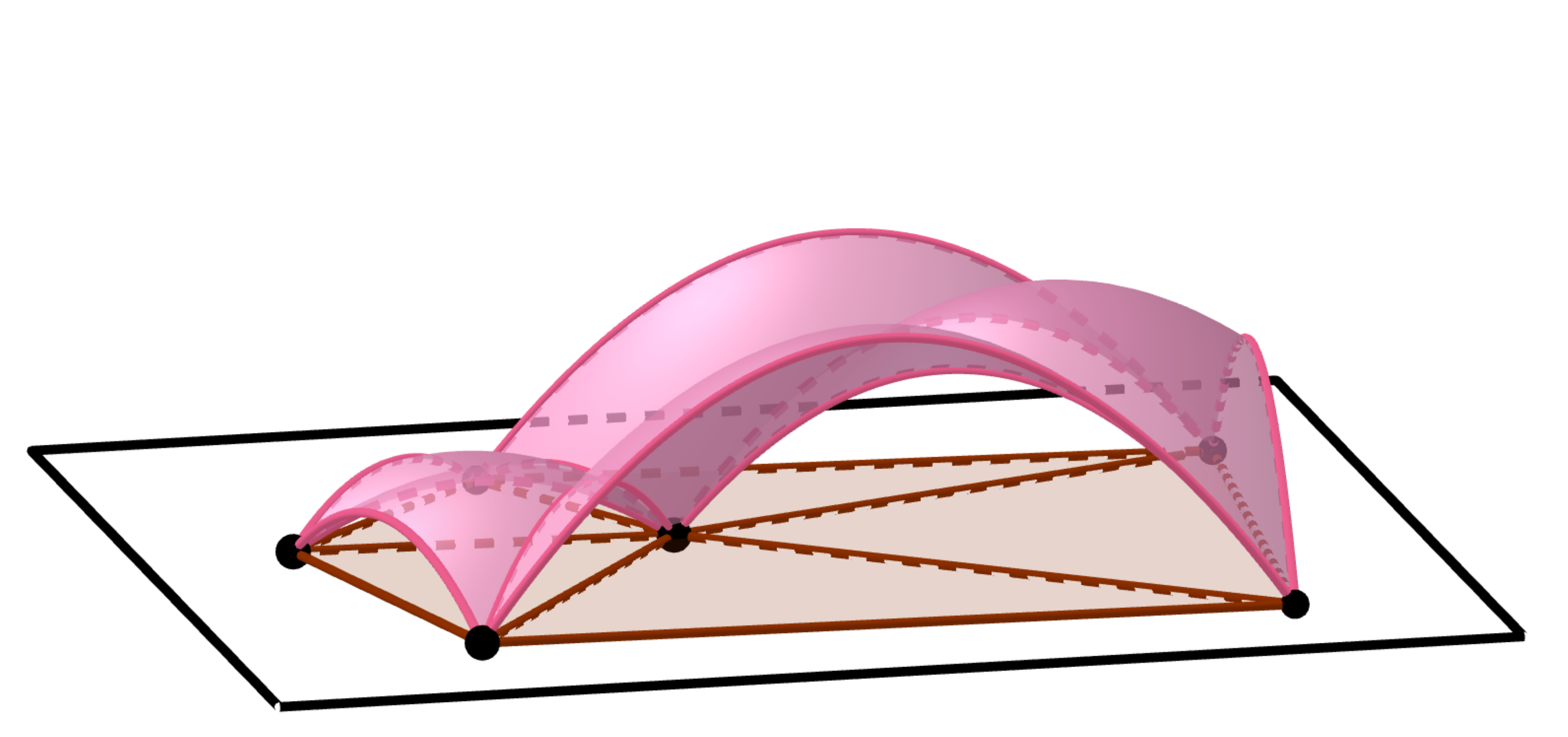}
  \end{center}
  \caption{Two triangulations of six points (black dots) in the plane. For each
    triangulation $T$, the Delaunay energy is the volume between the
    convex hull of the points and the piecewise parabolic surface
    $\mathcal W_T$. The surface $\mathcal W_T$ is lowest and therefore
    the Delaunay energy of $T$  is smallest when $T$ is the Delaunay complex
    as is the case on the
    right. \label{figure:comparing-Delaunay-energy}}
\end{figure}

\subsection{Intrinsec closed expression}
\label{section:delaunay-weights-intrinsec-formulation}

Below, we give a closed expression for the Delaunay weight due to Chen
and Holst in \cite{chen2011efficient}. For completeness, we provide a
proof. Writing $\Volume{\sigma}$ for the $d$-dimensional volume of
$\Conv{\sigma}$, we have:

\begin{lemma}[\cite{chen2011efficient}]
  \label{lemma:weight}
  The weight of the abstract $d$-simplex $\sigma = \{a_0, \ldots, a_d\}$ is
  \begin{equation*}
    \omega(\sigma) = \frac{1}{(d+1)(d+2)} \Volume{ \sigma } 
    \left[
      \sum_{0 \leq i < j \leq d} \|a_i - a_j\|^2  \right].
  \end{equation*}
\end{lemma}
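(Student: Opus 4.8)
The plan is to evaluate the integral defining $\omega(\alpha)$ by passing to barycentric coordinates on $\Conv{\alpha}$, after first rewriting the power distance in a convenient form. If $\alpha$ is degenerate then $\Volume{\alpha}=0$ and both sides of the claimed identity vanish, so I may assume $\alpha=\{a_0,\dots,a_d\}$ is non-degenerate and let $S(\alpha)$ be its smallest circumscribing $(N-1)$-sphere, with center $Z(\alpha)$ and radius $R(\alpha)$. The key observation is that
\[
\Power{\alpha}{x} = \|x-Z(\alpha)\|^2 - R(\alpha)^2 = \|x\|^2 + \ell(x),
\]
where $\ell(x) = -2\langle x, Z(\alpha)\rangle + \|Z(\alpha)\|^2 - R(\alpha)^2$ is an affine function of $x$; crucially, the precise location of $Z(\alpha)$ will play no role. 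Since every vertex $a_i$ lies on $S(\alpha)$, we have $\Power{\alpha}{a_i}=0$, hence $\ell(a_i) = -\|a_i\|^2$ for each $i$.

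Next I would use the affineness of $\ell$ together with the partition-of-unity property of barycentric coordinates: writing a point $x\in\Conv{\alpha}$ as $x=\sum_{i=0}^d \lambda_i a_i$ with $\lambda_i\ge 0$ and $\sum_i\lambda_i=1$, one gets $\ell(x) = \sum_i \lambda_i \ell(a_i) = -\sum_i \lambda_i\|a_i\|^2$, and therefore
\[
\Power{\alpha}{x} = \Bigl\|\sum_i \lambda_i a_i\Bigr\|^2 - \sum_i \lambda_i\|a_i\|^2 = \sum_{i,j}\lambda_i\lambda_j\langle a_i,a_j\rangle - \sum_{i,j}\lambda_i\lambda_j\|a_i\|^2,
\]
where in the last equality I multiplied the second sum by $\sum_j\lambda_j=1$. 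Symmetrizing the summand in $i$ and $j$ turns this into $\sum_{i,j}\lambda_i\lambda_j\bigl(\langle a_i,a_j\rangle-\tfrac12\|a_i\|^2-\tfrac12\|a_j\|^2\bigr) = -\tfrac12\sum_{i,j}\lambda_i\lambda_j\|a_i-a_j\|^2$, that is,
\[
\Power{\alpha}{x} = -\sum_{0\le i<j\le d}\lambda_i\lambda_j\,\|a_i-a_j\|^2 .
\]

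It then remains to integrate this expression over $\Conv{\alpha}$. The one ingredient worth isolating is the standard moment identity for a simplex: for $i\neq j$,
\[
\frac{1}{\Volume{\alpha}}\int_{\Conv{\alpha}}\lambda_i\lambda_j\,dx = \frac{1}{(d+1)(d+2)},
\]
which follows from the Dirichlet integral formula $\int_{\Conv{\alpha}}\lambda_0^{m_0}\cdots\lambda_d^{m_d}\,dx = d!\,\Volume{\alpha}\,\frac{m_0!\cdots m_d!}{(m_0+\cdots+m_d+d)!}$ (obtained by an affine change of variables to the standard simplex followed by the classical Dirichlet evaluation). Plugging this in,
\[
\omega(\alpha) = -\int_{\Conv{\alpha}}\Power{\alpha}{x}\,dx = \sum_{0\le i<j\le d}\|a_i-a_j\|^2\int_{\Conv{\alpha}}\lambda_i\lambda_j\,dx = \frac{\Volume{\alpha}}{(d+1)(d+2)}\sum_{0\le i<j\le d}\|a_i-a_j\|^2,
\]
which is the claimed formula. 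There is no genuine obstacle here: the computation is elementary once the power distance is rewritten in barycentric form, and the only points requiring a little care are the bookkeeping in the symmetrization step and the correct normalization in the simplex moment identity.
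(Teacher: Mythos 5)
Your proof is correct and follows essentially the same route as the paper's: both pass to barycentric coordinates and evaluate the integral by the Dirichlet-type moment formula on the simplex. The only organizational difference is that you symmetrize first, deriving $\Power{\alpha}{x} = -\sum_{0\le i<j\le d}\lambda_i\lambda_j\|a_i-a_j\|^2$ so that only the mixed moment $\int_{\Conv{\alpha}}\lambda_i\lambda_j\,dx = \Volume{\alpha}/\bigl((d+1)(d+2)\bigr)$ is needed, whereas the paper expands the power distance about the vertex $a_0$ (Lemma \ref{lemma:power-expression-with-additional-point} with $z=a_0$), integrates, and only afterwards rearranges into the symmetric form---slightly cleaner bookkeeping on your side, but the same underlying computation.
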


\begin{proof}
  Let $\sigma = \{a_0,a_1,\ldots,a_d\} \subseteq \Rspace^\Dim$. If
  $\sigma$ is degenerate, then $\Volume\sigma=0$ and the result is
  clear. Suppose that $\sigma$ is non-degenerate and recall that
  the standard simplex is
  \[
  \Delta_d = \{ \lambda \in \Rspace^d \mid \sum_{i=1}^d \lambda_i \leq 1; \lambda_i \geq 0,
  i=1,2,\ldots, d \}.
  \]
  We introduce the map $\psi : \Rspace^d \to \Rspace^d$, defined by $\psi(\lambda)
  = a_0 + \sum_{i=1}^d \lambda_i (a_i - a_0)$, which establishes a
  one-to-one correspondence between the points $\lambda$ of the standard
  simplex $\Delta_d$ and the points $x = \psi(\lambda)$ of $\Conv{\sigma}$.  Making the change of
  variable $x = \psi(\lambda) \rightarrow \lambda$, we get that:
  \[
  \wdel(\sigma) = \int_{\lambda \in \Delta_d}  -\Power\sigma{\psi(\lambda)} \cdot | \det(\Differential\psi)(\lambda) | \, d\lambda.
  \] Noting that $\Differential\psi(\lambda)$ is the $d\times d$ matrix whose
  $i$th column is the vector $a_i - a_0$, we deduce that
  $|\det(\Differential\psi)(\lambda) | = d! \Volume{\sigma}$.  Observing that
  $\psi(\lambda)$ has (normalized) barycentric coordinates $(1 -
  \sum_{i=1}^d \lambda_i, \lambda_1, \lambda_2, \ldots, \lambda_d)$
  and applying Lemma \ref{lemma:power-expression-with-additional-point} with $z=a_0$, we 
  can write:
  \[
  \Power\sigma{\psi(\lambda)} = - \left( \sum_{i=1}^d \lambda_i \|a_i - a_0\|^2 \right) + \| \psi(\lambda) - a_0 \|^2,
  \]
  and thus obtain (after plugging in the expression of $\psi(\lambda)$)
  \[
  w(\sigma) = d! \Volume{\sigma} \int_{\lambda \in \Delta_d} 
  \left[\sum_{i=1}^d \lambda_i \|a_i - a_0\|^2 - 
  \left\| \sum_{i=1}^d \lambda_i (a_i - a_0) \right\|^2
  \right] \, d\lambda.
  \] We then use a formula for integrating a homogeneous polynomial on
  the standard simplex that may be found in \cite{zeller2001almost}:
  \[
  \int_{\lambda \in \Delta_d} \lambda_1^{\eta_1} \ldots \lambda_d^{\eta_d} \, d\lambda =
  \frac{\eta_1 ! \ldots \eta_d ! }{(d + \sum_i \eta_i)!}.
  \]
  We obtain that
  \begin{equation*}
    w(\sigma) = \frac{1}{(d+1)(d+2)} \Volume{ \sigma } 
    \left[
      d \sum_{i=1}^d \|a_i - a_0\|^2 - 2 \sum_{1 \leq i < j \leq d}
      \DotProd{(a_i-a_0)}{(a_j - a_0)}
      \right].
  \end{equation*}
  Observing that $\|a_i-a_0\|^2 + \|a_j - a_0\|^2 -
  2(a_i-a_0)\cdot(a_j-a_0) = \|a_i - a_j\|^2$, we can further rearrange
  the above formula to get the result.
\end{proof}

It follows from Definition~\ref{definition:delaunay-weight} but also
from the expression of the Delaunay weight given in
Lemma~\ref{lemma:weight} that two isometric simplices have the same
Delaunay weight. Hence, a Delaunay energy can be straightforwardly
associated to any collection $\Sigma$ of $d$-simplices
living in $\Rspace^\Dim$ by setting $E(\Sigma) = \sum_{\sigma \in
  \Sigma} \omega(\sigma)$. It is then tempting to ask what would
happen if one minimizes this energy over all collections
$\Sigma$ of $d$-simplices whose vertices sample a $d$-dimensional
submanifold \M and whose union is homeomorphic to that submanifold. As
is, the problem is non-convex. We shall transform it into a convex
problem in the next section.

%
%
%
%
%


\section{Variational formulation for submanifold reconstruction}
\label{section:variational-formulation}

Afterwards, we assume that the shape $\M$ we wish to reconstruct is a
compact orientable $C^2$ $d$-dimensional submanifold of $\Rspace^\Dim$
for some $d< \Dim$. We let $P$ be a finite point set that samples \M
and suppose furthermore that we have at our disposal a simplicial
complex $K$ whose vertices are the points of $P$. The complex $K$ can
be thought of as some rough approximation of \M as illustrated in Figure \ref{figure:curve-sample-and-complex}.

\begin{figure}[htb]
  \centering
\begingroup%
  \makeatletter%
  \providecommand\color[2][]{%
    \errmessage{(Inkscape) Color is used for the text in Inkscape, but the package 'color.sty' is not loaded}%
    \renewcommand\color[2][]{}%
  }%
  \providecommand\transparent[1]{%
    \errmessage{(Inkscape) Transparency is used (non-zero) for the text in Inkscape, but the package 'transparent.sty' is not loaded}%
    \renewcommand\transparent[1]{}%
  }%
  \providecommand\rotatebox[2]{#2}%
  \newcommand*\fsize{\dimexpr\f@size pt\relax}%
  \newcommand*\lineheight[1]{\fontsize{\fsize}{#1\fsize}\selectfont}%
  \ifx\svgwidth\undefined%
    \setlength{\unitlength}{351.6787262bp}%
    \ifx\svgscale\undefined%
      \relax%
    \else%
      \setlength{\unitlength}{\unitlength * \real{\svgscale}}%
    \fi%
  \else%
    \setlength{\unitlength}{\svgwidth}%
  \fi%
  \global\let\svgwidth\undefined%
  \global\let\svgscale\undefined%
  \makeatother%
  \begin{picture}(1,0.22316263)%
    \lineheight{1}%
    \setlength\tabcolsep{0pt}%
    \put(0,0){\includegraphics[width=\unitlength,page=1]{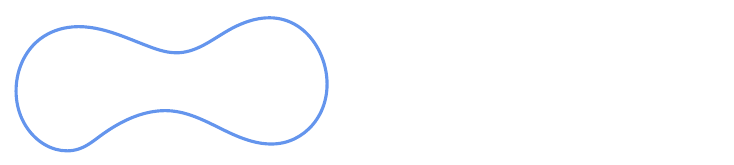}}%
    \put(0.0194129,0.18374856){\color[rgb]{0.39215686,0.58431373,0.92941176}\makebox(0,0)[lt]{\lineheight{1.25}\smash{\begin{tabular}[t]{l}$\M$\end{tabular}}}}%
    \put(0.21354204,0.18845468){\makebox(0,0)[lt]{\lineheight{1.25}\smash{\begin{tabular}[t]{l}$P$\end{tabular}}}}%
    \put(0.71298331,0.20139666){\makebox(0,0)[lt]{\lineheight{1.25}\smash{\begin{tabular}[t]{l}$K$\end{tabular}}}}%
    \put(0,0){\includegraphics[width=\unitlength,page=2]{curve-sample-and-complex.pdf}}%
  \end{picture}%
\endgroup%

  \caption{Left: a $d$-dimensional submanifold $\M$ (for $d=1$) and a noisy sample
    $P$ of $\M$. Right: a simplicial complex $K$ whose vertex set is
    $P$. \label{figure:curve-sample-and-complex}}
\end{figure}

Details on how to derive $K$ from $P$ are given at the end of the
section.  In this section, we describe a convex optimization problem
on the $d$-chains of $K$ and state conditions under which the solution
to that problem is unique and provides a faithful reconstruction of
\M. The concept of faithful reconstruction encapsulates what we mean
by a ``desirable'' reconstruction of \M:

\begin{definition}[Faithful reconstruction]
  Consider a subset $\M \subseteq \Rspace^\Dim$ whose reach is positive,
  and a simplicial complex $D$ with  vertex set in $\Rspace^\Dim$. We
  say that $D$ \emph{reconstructs $\M$ faithfully} (or is a
  \emph{faithful reconstruction} of $\M$) if the following three
  conditions hold:
  \begin{description}
  \item[\styleitem{Embedding:}] $D$ is geometrically realized;
  \item[\styleitem{Closeness:}] $\Shadow D \subseteq \Offset \M r$ for some $0 \leq r < \Reach \M$;
  \item[\styleitem{Homeomorphism:}] the projection map $\pi_\M: \Shadow D \to \M$ is a homeomorphism.
  \end{description}
\end{definition}

We note that when $D$ reconstructs $\M$ faithfully,
  $\Shadow{D}$ and $\M$ are homeomorphic and $D$ is a triangulation of \M.

The rest of the section is organized as
follows. Section~\ref{section:least-norm-problem} presents our convex
optimization problem on the $d$-chains of $K$.
Section~\ref{section:encoding} shows that the feasible set of that
problem contains all
faithful reconstructions of \M in $K$. In
Section~\ref{section:geometric-conditions}, we introduce the necessary
definitions to state our main theorem in
Section~\ref{section:main-theorem}.

Because we have assumed $\M$ to be a compact $C^2$ submanifold of
$\Rspace^\Dim$, the reach of $\M$ is positive and finite
\cite{federer-59}. Afterwards, we denote it as $\reach = \Reach \M$.
Given $m \in \M$, we denote the vector tangent space to $\M$
  at $m$ as $T_m\M$ and the affine tangent space to $\M$ at $m$ as
  $\Tangent m \M$. Clearly, $\Tangent m \M = x + T_x \M$. In the
rest of the section, we assume that \M together with all $d$-simplices
of $K$ have received an arbitrary orientation.  We also assume that
$\Shadow K \subseteq \Offset \M r$ for some $0 \leq r < \Reach \M$ and
that none of the $d$-simplices of $K$ are orthogonal to
$\M$. Precisely, defining the {\em angular deviation} of a simplex
$\sigma$ relatively to \M as
\[
\AngularDeviation \sigma = \max_{m \in \pi_\M(\Conv\sigma)} \angle(\Aff \sigma, \Tangent m \M),
\]
we assume that each $d$-simplex $\sigma \in K$ is such that
$\AngularDeviation{\sigma} < \frac{\pi}{2}$. This allows us to assign
to each $d$-simplex $\sigma \in K$ a sign with respect to \M as follows:
\[
\sign{\sigma}{\M} =
\begin{cases}
  1 \quad \quad &\text{if the orientation of $\sigma$ is consistent with that of \M,}\\
  -1 \quad \quad &\text{otherwise.}
\end{cases}
\]
We refer the reader to Appendix~\ref{appendix:transfering-orientation} for a formal definition of consistency
and more details.

\subsection{Least $\ell_1$-norm problem}
\label{section:least-norm-problem}

We need notation to describe the convex optimization problem that we
are considering.  Let $\omega$ be the weight function which assigns to
each $d$-simplex $\sigma$ of $K$ its Delaunay weight $\omega(\sigma)$
introduced in Section \ref{section:preliminaries}.  We define the {\em
  Delaunay energy} of the chain $\gamma \in C_d(K,\Rspace)$ to be its
$\omega$-weighted $\ell_1$-norm:
\begin{equation*}
  \Edel(\gamma) = \|\gamma\|_{1,\omega} = \sum_{\sigma}  \omega(\sigma) \cdot |\gamma(\sigma)| =
  \sum_{\sigma}\left( \int_{x \in \Conv{\sigma}} - \Power{\sigma}{x} \, dx \right) \cdot  |\gamma(\sigma)|,
\end{equation*}
where $\sigma$ ranges over all $d$-simplices of $K$. The
  Delaunay energy is the objective function of our optimization
  problem. To describe the constraint functions, let $\Indicator{X}: \Rspace^\Dim \to \{0,1\}$ denote the
  indicator function of a subset $X \subseteq \Rspace^\Dim$. Suppose that
  $\Shadow K \subseteq \Offset \M r$ for some $0 \leq r < \Reach \M$
  and that each $d$-simplex $\sigma \in K$ satisfies
  $\operatorname{angularDeviation}_\M(\sigma) < \frac{\pi}{2}$. Given
  $m_0 \in \M$, we assign to each $d$-chain $\gamma$ of $K$ the real
  number:
\[
\Load {m_0} \M K \gamma = \sum_{\sigma \in K^{[d]}} \gamma(\sigma) \sign{\sigma}{\M} \Indicator{\pi_\M(\Conv{\sigma})}(m_0)
\]
and call it the {\em load} of $\gamma \in C_d(K,\Rspace)$ on
  $\M$ at $m_0$.  Roughly, it measures the ``flux'' of the chain
  $\gamma$ above point $m_0 \in \M$. We illustrate its evaluation in
  {Figure~\ref{figure:least-norm-problem}}.

\begin{figure}[htb]
  \centering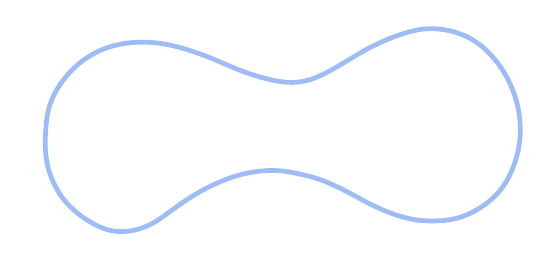
  \caption{A simplicial complex $K$ whose edges have an orientation
    consistent with that of $\M$ and a $1$-cycle $\gamma$ of $K$. To
    evaluate the load of $\gamma \in C_1(K,\Rspace)$ on the curve $\M$
    at $m_0$, one has first to select edges (depicted in blue) for
    which there is a point (blue dot) whose projection onto $\M$ is
    $m_0$ and sum up the coefficients of $\gamma$ on these edges. In
    this example, $\Load {m_0} \M K \gamma=1$ and therefore $\gamma$
    satisfies the constraints of
    Problem~(\ref{problem:reconstruction}).
    \label{figure:least-norm-problem}}
\end{figure}

Letting $m_0$ be a generic\footnote{Generic in the sense that it is
not in the projection on \M of the convex hull of any $(d-1)$-simplex
of $K$.}  point on $\M$, we are interested in the following
optimization problem over the set of chains in $C_d(K,\Rspace)$:

\medskip

\begin{tcolorbox}
  \vspace{-4mm}
  \begin{mini*}
    {\gamma}{\Edel(\gamma)}{}{} \label{problem:reconstruction}
    \addConstraint{\partial \gamma}{=0} \tag{$\star$}
    \addConstraint{\Load {m_0} \M K \gamma}{=1}
  \end{mini*}
\end{tcolorbox}

Problem (\ref{problem:reconstruction}) is a least-norm problem whose
constraint functions $\partial$ and $\load_{m_0,\M,K}$ are clearly
linear. It is therefore a convex optimization problem.
The first constraint $\partial \gamma = 0$ expresses the
fact that we are searching for $d$-cycles. The second constraint
$\Load {m_0} \M K \gamma = 1$ is a
normalization of $\gamma$ and forbids the zero chain to belong to the
feasible set. Two chains that satisfy the constraints are
  depicted in {Figures~\ref{figure:least-norm-problem}} and {\ref{figure:curve-code}}. We shall see
that, under the assumptions of our main theorem, the solution to
Problem~({\ref{problem:reconstruction}}) takes its coordinate values in
$\{0,+1,-1\}$ and is furthermore the code of a faithful reconstruction
of \M.

In Problem~(\ref{problem:reconstruction}), besides the simplicial
complex $K$ that we shall see how to build from $P$, the knowledge of the manifold
$\M$ seems to be required as well for expressing the normalization
constraint. In Section \ref{section:RealisticAlgorithm}, we discuss how to transform
Problem~(\ref{problem:reconstruction}) into an equivalent problem that
does not refer to \M anymore.

\subsection{Faithful reconstructions are encoded in the feasible set}
\label{section:encoding}

Given a subcomplex $D \subseteq K$, we associate to $D$ the $d$-chain
$\code{D}$ of $K$ whose coordinate on the $d$-simplex $\sigma$ is:
\[
\code{D}(\sigma) =
\begin{cases}
  \sign{\sigma}{\M} \quad \quad &\text{if $\sigma \in D^{[d]}$,}\\
  0 \quad \quad  &\text{otherwise.}
  \end{cases}
\]
We note that whenever $D$ is a faithful reconstruction of $\M$, then
$\code{D}$ provides a way of encoding $D$ as a $d$-chain, since $D$
can be recovered straightforwardly from $\code{D}$ by taking the
closure of the support of $\code{D}$. The code of a faithful reconstruction is depicted in {Figure~\ref{figure:curve-code}}.

\begin{figure}[htb]
  \centering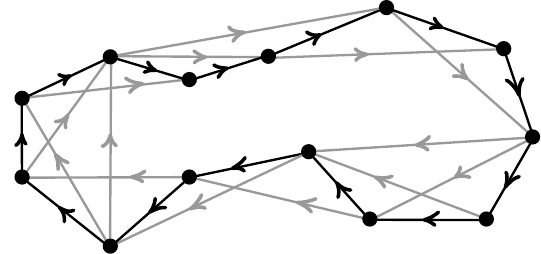
  \caption{A 1-chain of $K$ that encodes a faithful reconstruction of
    curve $\M$. Its coefficients are either 0 (on grey edges) or 1
    (on black edges), assuming the orientation of edges is
    consistent with that of \M. This 1-chain satisfies the
    constraints of Problem~(\ref{problem:reconstruction}).
    \label{figure:curve-code}}
\end{figure}

In this section, we show that, under weak
conditions on $K$, if $D$ is a faithful reconstruction of $\M$, then
$\code{D}$ satisfies the constraints of
Problem~(\ref{problem:reconstruction}).  Indeed, $\Shadow{D}$ being
homeomorphic to $\M$, the generic point $m_0 \in \M$ is covered by the
projection of the convex hull of a unique $d$-simplex $\sigma \in D$
and
\begin{equation*}
  \Load {m_0} \M K {\code{D}} = \code{D}(\sigma) \sign{\sigma}{\M} = 1.
\end{equation*}
The next lemma states conditions under which the constraint $\partial
\code{D} = 0$ is also satisfied.

\begin{lemma}\label{lemma:manifold-as-cycle}
  Let $r, \rho \geq 0$ such that $\rho < \frac{\sqrt{2}}{4}(\reach -
  r)$. Let $K$ be a simplicial complex such that $\Shadow{K} \subseteq
  \Offset \M r$ and whose $d$-simplices are $\rho$-small and have an
  angular deviation smaller than $\frac{\pi}{4}$ relatively to \M. If
  the subcomplex $D \subseteq K$ is a faithful reconstruction of \M,
  then $\code{D}$ is a cycle. 
\end{lemma}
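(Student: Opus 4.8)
The plan is to show that $\partial(\code{D}) = 0$ by exploiting the homeomorphism $\pi_\M|_{\US D} : \US D \to \M$ together with the fact that $\M$ is a closed manifold, so that $\M$ carries no boundary. First I would observe that, under the stated hypotheses ($\rho < \frac{\sqrt 2}{4}(\reach - r)$, angular deviations below $\frac{\pi}{4}$, and $\US K \subseteq \Offset \M r$), every $d$-simplex $\alpha$ of $K$ satisfies $\AngularDeviation{\alpha} < \frac{\pi}{2}$, so that $\sign{\alpha}{\M}$ and hence $\code{D}$ are well-defined; moreover $\pi_\M$ restricted to each geometric simplex $\Conv\alpha$ with $\alpha \in D$ is injective (this is where the quantitative bound relating $\rho$, $r$ and $\reach$ is used, via a standard reach-based argument controlling how $\pi_\M$ distorts a small, nearly-tangent simplex). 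The key structural point is then that, because $\US D$ maps homeomorphically onto $\M$ and $D$ is geometrically realized, the geometric complex $\US D$ is itself a closed (boundaryless) $d$-manifold: its underlying space has no boundary points, since locally around the image of any point of a $(d-1)$-face the homeomorphism with $\M$ forces the link to be a $(d-1)$-sphere rather than a $(d-1)$-ball.

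The heart of the argument is a purely local, combinatorial computation at each $(d-1)$-simplex $\tau$ of $D$. I would fix such a $\tau$ and consider the coefficient of $\tau$ in $\partial(\code{D}) = \sum_{\alpha \in D^{[d]}} \sign{\alpha}{\M}\, \partial\alpha$. Only the $d$-simplices $\alpha \in D$ having $\tau$ as a face contribute. Since $\US D$ is a closed $d$-manifold, exactly two such cofaces $\alpha_1, \alpha_2$ exist. It remains to check that the two contributions cancel, i.e. that $\sign{\alpha_1}{\M}\,[\partial\alpha_1 : \tau] + \sign{\alpha_2}{\M}\,[\partial\alpha_2 : \tau] = 0$. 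This is exactly the statement that the local orientations induced from $\M$ are compatible across $\tau$: transporting the orientation of $\M$ back through $\pi_\M$ (which is orientation-preserving or orientation-reversing coherently, since $\M$ is orientable and $\pi_\M|_{\US D}$ is a homeomorphism onto it, and since each simplex is nearly tangent to $\M$) one finds that $\sign{\alpha_i}{\M}$ records precisely the discrepancy between the chosen orientation of $\alpha_i$ and the $\M$-pulled-back one, so the signed incidences against $\tau$ telescope to zero. The formal bookkeeping here is what Appendix~\ref{appendix:transfering-orientation} sets up, and I would invoke it: the consistency notion is designed so that $\code{D}$ is, locally, the pushforward of the fundamental class of the manifold patch.

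The main obstacle is the local orientation-compatibility step: making precise that ``$\pi_\M$ transports the orientation of $\M$ coherently onto the nearly-tangent simplices of $D$'' and that this is exactly what $\sign{\cdot}{\M}$ encodes. This requires (i) the angular bound $\AngularDeviation{\alpha} < \frac{\pi}{4}$, so that the linear projection $\pi_{T_m\M}$ restricted to $\Aff\alpha$ is a nonsingular orientation-determining map with controlled sign, and (ii) continuity/connectedness to propagate the sign consistently — one checks that along any path in $\US D$ crossing from $\Conv\alpha_1$ into $\Conv\alpha_2$ through $\relint\Conv\tau$, the pulled-back orientation varies continuously, forcing the cancellation. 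Once this is in place, summing the vanishing coefficient over all $\tau \in D^{[d-1]}$ gives $\partial(\code{D}) = 0$, so $\code{D}$ is a cycle, as claimed. (If $\M$ has boundary, the same computation shows $\partial(\code{D})$ is supported on $(d-1)$-simplices mapping into $\partial\M$, i.e. $\code{D}$ is a relative cycle; but in the stated lemma $\M$ is a closed submanifold, so $\code{D}$ is an honest cycle.)
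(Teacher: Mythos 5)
Your overall strategy coincides with the paper's: reduce $\partial(\code{D})$ to the coefficient of each $(d-1)$-face $\tau$, use faithfulness to conclude that $\tau$ has exactly two $d$-cofaces $\sigma_1,\sigma_2$ in $D$, and cancel their contributions by transferring orientations to a nearby tangent plane $\Tangent{m}{\M}$ (this is exactly what Lemma~\ref{lemma:TangentSpaceProjectionIPreservesOrientation} is invoked for, after the preliminary inclusion $\pi_\M(\Conv{\sigma})\subseteq B(m,\sin(\pi/4)\Reach{\M})^\circ$, which is where the hypothesis $\rho<\frac{\sqrt2}{4}(\reach-r)$ enters).

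The one step your sketch elides is the actual mechanism of cancellation, and it is the step the paper works hardest on. Knowing that each $\sigma_i$ is consistently oriented with \M --- equivalently, that its projection $\sigma_i'$ onto $\Tangent{m}{\M}$ carries the sign $s_i=\sign{\sigma_i}{\M}$ with respect to the tangent plane --- does not by itself make the two incidences on $\tau'$ cancel: if $\Conv{\sigma_1'}$ and $\Conv{\sigma_2'}$ landed on the \emph{same} side of $\Conv{\tau'}$, the signed incidences would add rather than telescope. ``Continuity of the pulled-back orientation along a path crossing $\relint{\Conv{\tau}}$'' does not rule this out, because the extrinsic sign $\sign{\cdot}{\M}$ is defined simplex by simplex via tangent-plane projection, not via an intrinsic coherent orientation of the triangulated manifold $\US{D}$. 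The paper closes this gap by proving that $\Conv{\sigma_1'}$ and $\Conv{\sigma_2'}$ have disjoint interiors: the composite $\pi_{\Tangent{m}{\M}}\circ\pi_\M$ is injective on a small neighborhood of $x$ in $\US{\Star{x}{D}}$ (this is where the homeomorphism condition of faithfulness is used), and it has the same differential at $x$ as the affine projection $\pi_{\Tangent{m}{\M}}$ restricted to $\Aff{\sigma_i}$, so the two half-spaces of directions pointing into $\Conv{\sigma_1}$ and $\Conv{\sigma_2}$ are sent to complementary half-spaces of $T_m\M$. If you make that injectivity-plus-equal-differentials argument explicit, your proof is complete and is essentially the paper's.
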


\begin{proof}
  We first prove that for all simplices $\sigma \in D$ and all points $m \in \pi_\M(\Conv \sigma)$, we have that
  \begin{equation}
    \label{eq:projection-simplex-bounded}
    \pi_\M(\Conv \sigma) \subseteq B\left(m, \sin \left(\frac{\pi}{4}\right) \reach\right)^\circ.
  \end{equation}
  Indeed, consider $x,x' \in \Conv \sigma$. Suppose that $m =
  \pi_\M(x)$ and let $m' = \pi_\M(x')$. We know from \cite[page
    435]{federer-59} that for $0 \leq r < \Reach \M$, the projection
  map $\pi_\M$ is $\left( \frac{\reach}{\reach-r} \right)$-Lipschitz for points at distance less than $r$ from
  $\M$. It follows that
  \[
  \| m - m' \| \leq \frac{\reach}{\reach - r} \|x - x'\| \leq \frac{2\rho}{\reach - r}  \reach < \frac{\sqrt{2}}{2} \reach
  \]
  and Inclusion~\eqref{eq:projection-simplex-bounded} follows.

Given a simplicial complex $L$ and a point $x \in \Rspace^\Dim$, we
define the \emph{star} of $x$ in $L$ as the set of simplices
$\Star{x}{L} = \{ \sigma \in L \mid x \in \Conv \sigma \}$.  Since $D$
is a faithful reconstruction of \M, $\Shadow{D}$ is a
$d$-dimensional submanifold. Hence, each $(d-1)$-simplex $\tau \in D$ has exactly two
$d$-cofaces $\sigma_1$ and $\sigma_2$; see Figure \ref{figure:proof-cycle}. Consider a point $x$ in the
relative interior of $\tau$ and its projection $m = \pi_\M(x)$ onto
$\M$. The star of $x$ in $D$ consists of the two $d$-simplices
$\sigma_1$ and $\sigma_2$ and the common $(d-1)$-face $\tau$. It follows
that the set $\pi_{\Tangent {m} \M}(\Star x {D})$ possesses exactly
two $d$-simplices $\sigma_1' = \pi_{\Tangent m \M}(\sigma_1)$ and
$\sigma_2' = \pi_{\Tangent m \M}(\sigma_2)$, and one $(d-1)$-simplex
$\tau' =\pi_{\Tangent m \M}(\tau)$. As we project the $d$-simplex
$\sigma_i = [u_0, \ldots, u_{d}]$, let us preserve the ordering of the vertices,
that is, let $\sigma_i' = [\pi_{\Tangent m \M}(u_0), \pi_{\Tangent
    m \M}(u_1), \ldots, \pi_{\Tangent m \M}(u_d)]$. Let us give to
$\Tangent m \M$ an orientation that is consistent with that of
\M. Inclusion~\eqref{eq:projection-simplex-bounded} allows us to apply
Lemma \ref{lemma:TangentSpaceProjectionIPreservesOrientation} in
Appendix~\ref{appendix:transfering-orientation}:
each $d$-simplex
$\sigma_i'$ has the same orientation with respect to $\Tangent m \M$
than that of $\sigma_i$ with respect to $\M$. Let $s_i =
\sign{\sigma_i'}{\Tangent m \M} = \sign{\sigma_i}{\M}$.

\begin{figure}[htb]
  \centering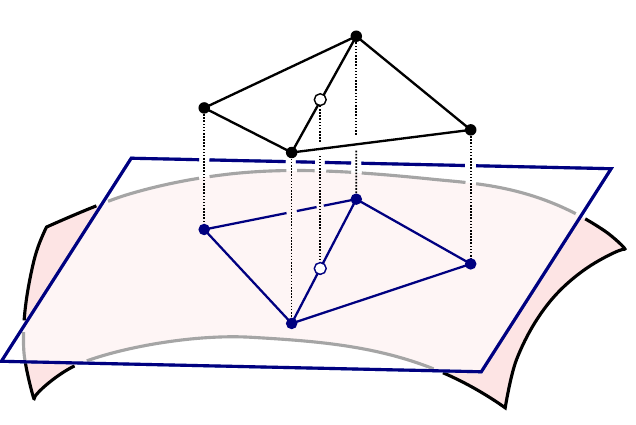
  \caption{Notation for the proof of Lemma \ref{lemma:manifold-as-cycle}. In this example, $s_1 = -1$ and $s_2 = +1$. \label{figure:proof-cycle}}
\end{figure}

We claim that the two geometric $d$-cofaces $\Conv\sigma_1'$ and
$\Conv\sigma_2'$ of $\Conv\tau'$ have disjoint relative interiors.
Indeed, let us denote by $U_x$ an open neighborhood of $x$ in
$\Rspace^\Dim$ and suppose that $U_x$ is sufficiently small so that
its restriction to $\Shadow{K}$ is contained in $\Shadow{\Star x
  {D}}$. For $i=1,2$, let $U^i_x = U_x \cap \Aff \sigma_i$. Note
that the map $\pi_{\Tangent m \M} \circ \pi_\M\restrict{U^i_x}$ is
differentiable and the map $\pi_{\Tangent m \M}\restrict{U^i_x}$ is
affine.  Both maps have equal differential maps at $x$, that is:
\begin{equation}\label{equation:SameDifferentialProjOnTgtPlaneAndThroughProjOnManifold}
\Differential_x \left(\pi_{\Tangent m \M} \circ \pi_\M\restrict{U^i_x}\right)= \Differential_x\left(\pi_{\Tangent m \M}\restrict{U^i_x}\right).
\end{equation}
Let $T_i^+$ denote the set of all vectors parallel to $\Aff \sigma_i$
and pointing inside $\Conv\sigma_i$ after translation at $x$. This set
forms a closed half-space in the vector tangent space to $\Aff
\sigma_i$. Since $\pi_{\Tangent m \M}\restrict{U^i_x}$ is affine, it
coincides, up to a constant, with its differential at $x$ and using
Equation
\eqref{equation:SameDifferentialProjOnTgtPlaneAndThroughProjOnManifold},
we get that
\begin{equation}
  \label{eq:locating-projection-convex-hull}
\Conv\sigma_i' \; \subseteq \;
x + \Differential_x\left(\pi_{\Tangent m\M}\restrict{U^i_x}\right)(T^+_i) \; = \;
x + \Differential_x\left(\pi_{\Tangent m\M} \circ \pi_\M\restrict{U^i_x}\right) (T^+_i).
\end{equation}
Observe that the map $\pi_{\Tangent m \M} \circ \pi_\M\restrict{U^i_x}$,
being the composition of two injective functions, is injective.  It
follows that $\Differential_x \left(\pi_{\Tangent m\M} \circ
\pi_\M\restrict{U^1_x}\right) (T^+_1)$ and $\Differential_x
\left(\pi_{\Tangent m\M} \circ \pi_\M\restrict{U^2_x}\right) (T^+_2)$
are two half-spaces in the vector space $T_m\M$ with disjoint
interiors.  Using Equation \eqref{eq:locating-projection-convex-hull}, we obtain
that $\Conv\sigma_1'$ and $\Conv\sigma_2'$ also have
disjoint interiors, as claimed.

It follows that $\partial (s_1 \sigma_1'+ s_2 \sigma_2')$ is $0$ on $\tau'$, and
consequently $\partial (s_1 \sigma_1+ s_2 \sigma_2)$ is $0$ on
$\tau$. We have shown that $\partial \code{D}=0$.
\end{proof}

\subsection{Geometric conditions on the sample}
\label{section:geometric-conditions}

Recall that our goal is to give conditions under which a solution to
Problem~(\ref{problem:reconstruction}) provides a faithful
reconstruction of \M. To express the conditions that we need, let us
introduce some definitions and notations.

\begin{definition}[Dense sample]
  We say that $P$ is an {\em $\varepsilon$-dense} sample of \M if for
  every point $m \in \M$, there is a point $p \in P$ with $\|p-m\|
  \leq \varepsilon$ or, equivalently, if $\M \subseteq \Offset P
  \varepsilon$.
\end{definition}

\begin{definition}[Accurate sample]
  We say that $P$ is a {\em $\delta$-accurate} sample of \M if for
  every point $p \in P$, there is a point $m \in \M$ with $\|p-m\|
  \leq \delta$ or, equivalently, if $P \subseteq \Offset \M \delta$.
\end{definition}

The {\em separation} of a point set $P$ is
  \[
  \Sep P = \min_{p \neq q \in P} \|p-q\|.
  \]
We recall that the \emph{height} of a simplex $\sigma$ is
\[
\height{\sigma} = \min_{v \in \sigma} d(v,\Aff(\sigma \setminus \{v\})).
\]
The height of $\sigma$ vanishes if
and only if $\sigma$ is degenerate. 
The {\em protection} of a simplex $\sigma$ relatively to a point set $Q$ is
\[
\protection{\sigma}{Q} = \min_{q \in \pi_{\Aff \sigma}(Q \setminus \sigma)} d(q,S(\sigma)).
\]
We stress that our definition of a simplex protection differs slightly
from the one in \cite{BOISSONNAT_2013,boissonnat2018geometric}. We now
associate to a finite point set $P$ and a scale $\rho$ three
quantities that describe the quality of the pair $(P,\rho)$ at dimension $d$:
\begin{align*}
  \Height{P}{\rho} &= \min_{\sigma} \height{\sigma},\\
  \AngularDeviation {P,\rho} &= \max_\sigma \AngularDeviation \sigma,\\
  \ProtectGlob P \rho &= \min_\sigma \protection{\sigma}{P \cap B(c_\sigma,\rho)},
\end{align*}
where the two minima and the maximum are over all $\rho$-small
$d$-simplices $\sigma \subseteq P$. Observe that assuming
$\Height{P}{\rho}>0$ is equivalent to assuming that all $\rho$-small
$d$-simplices of $P$ are non-degenerate.

\begin{definition}[Safety condition]
  Let $\varepsilon$, $\delta$, and $\rho$ be non-negative real
  numbers.  The {\em safety condition} on $(P,\varepsilon,\delta)$ at
  scale $\rho$ is the existence of a real number $\theta \in
  \left[0,\frac{\pi}{6}\right]$ such that:
  \begin{align*}
    \AngularDeviation{P,\rho} \; &\leq \; \frac{\theta}{2} -  \arcsin \left(\frac{\rho+\delta}{\reach}\right), \\ 
    \Sep P \; &> \; 8 (\delta \theta + \rho \theta^2) + 6 \delta + \frac{2\rho^2}{\reach}, \\
    \ProtectGlob{P}{3\rho} \; &> \; 8 (\delta \theta + \rho \theta^2) \left( 1 + \frac{4 d \varepsilon}{\Height{P}{\rho}} \right). 
  \end{align*}
\end{definition}

Roughly speaking, assuming the safety condition on
$(P,\varepsilon,\delta)$ at scale $\rho$ enforces $\rho$-small
$d$-simplices of $P$ to make a sufficiently small angle relatively to
\M. It also enforces $P$ to be both sufficiently separated and
protected at scale $3\rho$. As explained in the companion paper
\cite{AttaliLieutierFlatDelaunay2022}, the safety condition on
$(P,\varepsilon,\delta)$ can be met by considering a
$\left(\frac{20\varepsilon}{21}\right)$-dense
$\left(\frac{\delta}{2}\right)$-accurate point set~$P$ and perturbing
it as described in \cite{AttaliLieutierFlatDelaunay2022}.

\subsection{Main theorem}
\label{section:main-theorem}

In the statement of our main theorem, there is a constant
$\Omega(\Delta_d)$ that depends only upon the dimension $d$ and whose
definition is given in the proof of Lemma \ref{lemma:Wmin-beta}.  Let
$\Cech{P}{r}$ denote the set of simplices of $P$ that are $r$-small,
also known as the \emph{\v Cech complex} of $P$ at scale $r$.

\begin{theorem}[Faithful reconstruction by a variational approach]
  \label{theorem:reconstruction}
  Let $\M$ be a compact orientable $C^2$ $d$-dimensional submanifold
  of $\Rspace^\Dim$ for some $d<\Dim$.  Let $\varepsilon$, $\delta$,
  and $\rho$ be non-negative real numbers such that $\delta \leq
  \varepsilon$ and $16\varepsilon \leq \rho < \frac{\reach}{4}$. Let
  $\Theta = \AngularDeviation{P,\rho}$ and assume that $\Theta \leq
  \frac{\pi}{6}$. Set
  \[
  J = \frac{ (\reach+ \rho)^d  }
  {\left( \reach- \rho \right)^d \:   \left(\cos \Theta  \right)^{\min\{d, N-d\}}} -1.
  \]
  Let $P$ be a $\delta$-accurate $\varepsilon$-dense sample of \M such
  that $\Height{P}{\rho}>0$. Suppose that the safety condition on
  $(P,\varepsilon,\delta)$ is satisfied at scale $\rho$. Suppose furthermore that
  \begin{equation}
    \label{eq:extra-safety-condition}
    \begin{split}
      \ProtectGlob{P}{3\rho}^2 &+ \ProtectGlob{P}{3\rho} \Sep P > \\
      & \max \left\{10 \rho  \Theta (\varepsilon + \rho  \Theta) , \frac{4 J (1+J)}{(d+2)(d-1)! \, \Omega(\Delta_d)} \rho^2 \right\}.
    \end{split}    
  \end{equation}
  Consider a simplicial complex $K$ such that
  \begin{equation}
    \label{eq:complex-condition}
    \Del{P} \cap \Cech P \varepsilon \subseteq K \subseteq \Cech P \rho.    
  \end{equation}
  Then Problem~(\ref{problem:reconstruction}) has a unique
  solution and the closure of the support of that solution is a
  faithful reconstruction of \M.
\end{theorem}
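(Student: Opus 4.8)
The plan is to prove the theorem via the following strategy: introduce the Delloc complex $D$ of $P$ at scale $\rho$ (defined in Section~\ref{section:Delloc-complex}), show that it is a faithful reconstruction of $\M$ so that $\code{D}$ is a feasible chain for Problem~(\ref{problem:reconstruction}), and then prove that $\code{D}$ is in fact the unique optimal solution. The feasibility is largely imported: the companion paper \cite{AttaliLieutierFlatDelaunay2022} together with the sampling hypotheses (density $\varepsilon$, accuracy $\delta$, safety condition at scale $\rho$, and $\Height{P}{\rho}>0$) guarantees that $D$ is geometrically realized, lies in a small tubular neighborhood of $\M$, and projects homeomorphically onto $\M$; hence by Lemma~\ref{lemma:manifold-as-cycle} (whose hypotheses $\US K\subseteq\Offset\M r$, $\rho$-smallness, and angular deviation $<\pi/4$ follow from $16\varepsilon\le\rho<\reach/4$, $\delta\le\varepsilon$, the inclusion $K\subseteq\Cech P\rho$, and $\Theta\le\pi/6$) the chain $\code{D}$ is a $d$-cycle. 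One also checks that $D\subseteq\Del P\cap\Cech P\varepsilon\subseteq K$, so $\code{D}$ is carried by $K$, and that the load normalization $\Load{m_0}\M K{\code{D}}=1$ holds because the generic point $m_0$ is covered exactly once (with the correct sign) by the projection of $D$. Thus $\code{D}$ is feasible.

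Next I would set up the optimality argument. Let $\gamma$ be any feasible chain; I want $\Edel(\gamma)\ge\Edel(\code{D})$ with equality only if $\gamma=\code{D}$. The key idea, foreshadowed in the introduction, is to \emph{distribute} the Delaunay energy along barycentric coordinates: using $\omega(\alpha)=-\int_{\Conv\alpha}\Power{\alpha}{x}\,dx$ and the barycentric-coordinate machinery of Section~\ref{section:ComparingPowerDistances}, rewrite $\Edel(\gamma)=\sum_\alpha\omega(\alpha)|\gamma(\alpha)|$ as an integral over $\M$ (pulled back through $\pi_\M$) of a pointwise energy density, so that $\Edel(\gamma)=\int_\M g_\gamma(m)\,dm$ for a suitable nonnegative integrand, and similarly for $\code{D}$. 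For a fixed generic $m\in\M$, the cycle condition $\partial\gamma=0$ forces the local chain seen near $m$ (after tangent-space projection, as in the proof of Lemma~\ref{lemma:manifold-as-cycle}) to be a relative cycle in a local tangent-space picture, and the load constraint pins down its total multiplicity near $m_0$; propagating along $\M$ by connectedness, the local multiplicity of $\gamma$ at $m$ equals $1$. Then the pointwise comparison reduces to a statement about (weighted) Delaunay triangulations in the local tangent space: among all chains with the prescribed local winding number, the one supported on the local Delaunay/Delloc simplices minimizes the local power-distance energy at $m$ (Theorem~\ref{theorem:euclidean}, transported to the flat local model), with strict inequality unless the local chain coincides with $\code{D}$'s. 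The quantitative slack built into the safety condition and into \eqref{eq:extra-safety-condition} — in particular the protection term $\Protection{P}{3\rho}$, the separation $\Sep P$, and the constant $J$ controlling the distortion between the curved manifold and its tangent model via $(\reach+\rho)^d/((\reach-\rho)^d(\cos\Theta)^{\min\{d,N-d\}})$ — is exactly what is needed to beat the error incurred by replacing the curved picture with the flat one and to rule out near-degenerate alternate simplices.

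Integrating the pointwise inequality over $\M$ yields $\Edel(\gamma)\ge\Edel(\code{D})$, and the equality analysis gives $\gamma=\code{D}$, proving uniqueness; since $D$ is a faithful reconstruction of $\M$, the closure of the support of the optimum is too. The main obstacle — and the technical heart of the paper — is the pointwise comparison step: one must control, uniformly in $m$, the discrepancy between the genuine Delaunay energy of $\rho$-small simplices (computed with their true circumscribing spheres in $\Rspace^\Dim$) and the energy of their tangent-space projections, and show that the protection and separation guaranteed by the safety condition dominate this discrepancy so that the flat-model optimality survives the transfer. This is precisely where Section~\ref{section:technical-lemma} and the power-distance comparison of Section~\ref{section:ComparingPowerDistances} are invoked, and where the somewhat intricate inequalities \eqref{eq:extra-safety-condition} get used; handling the non-generic $m$ (measure zero, where $m_0$ or $m$ lies in the projection of a lower-dimensional face) requires a separate but routine limiting argument.
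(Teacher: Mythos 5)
Your proposal is correct and follows essentially the same route as the paper: feasibility of $\code{D}$ via Theorem~\ref{theorem:homeomorphism-from-sampling-conditions} and Lemma~\ref{lemma:manifold-as-cycle}, propagation of the single-point normalization to a multiplicity-one constraint almost everywhere on $\M$ via the cycle condition (Lemma~\ref{lemma:PushForwardOfCycleIsConstant}), and optimality by distributing the Delaunay energy as a pointwise density $f$ over $\M$ and comparing power distances fiberwise (Lemmas~\ref{lemma:technical}, \ref{lemma:final-bound-on-power}, \ref{lemma:Wmin-alpha}, \ref{lemma:Wmin-beta}), with $J$ absorbing the Jacobian distortion between a simplex and the Delloc carrier. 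The only cosmetic difference is that the paper packages the integration and equality analysis in the abstract Lemma~\ref{lemma:technical} rather than invoking Theorem~\ref{theorem:euclidean} in a local flat model, but the underlying mechanism is identical.
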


Observe that our main theorem does not require $K$ to be geometrically realized nor
to retain the homotopy type of \M.  One may ask about the feasability
of realizing the assumptions of Theorem~\ref{theorem:reconstruction}.
In Section~\ref{section:PerturbationForSafetyAlgorithm}, we explain how
to apply Moser Tardos Algorithm (\cite{moser2010constructive} and
\cite[Section 5.3.4]{boissonnat2018geometric}) as a perturbation
scheme for enforcing both the safety condition and Condition \eqref{eq:extra-safety-condition} required by
Theorem~\ref{theorem:reconstruction}.

\paragraph{Choosing the simplicial complex $K$.}
Recall that the \v Cech complex of $P$ at scale $r$, denoted as
$\Cech{P}{r}$, is the set of simplices of $P$ that are $r$-small. The
Rips complex of $P$ at scale $r$, denoted as $\Rips{P}{r}$, consists
of all simplices of $P$ with diameter at most $2r$. It is a more
easily-computed version of the \v Cech complex. We stress that our
main theorem applies to any simplicial complex $K$ such that $\Del{P}
\cap \Cech P \varepsilon \subseteq K \subseteq \Cech{P}{\rho}$. Since
$\Cech{P}{r} \subseteq \Rips{P}{r} \subseteq \Cech{P}{\sqrt{2}r}$ for
all $r \geq 0$, it applies in particular to any $K = \Rips{P}{r}$ with
$\varepsilon \leq r \leq \frac{\rho}{\sqrt{2}}$.
This choice of $K$ is well-suited
for applications in high dimensional spaces, while choosing $K = \Del{P} \cap \Cech{P}{r}$ for
any $\varepsilon \leq r \leq \rho$ may be more suited for
applications in low dimensional spaces.

%
%
%
%
%

\section{Proving the main theorem}
\label{section:proof-main-result}

\subsection{Technical lemma}
\label{section:technical-lemma}

The proof of Theorem~\ref{theorem:reconstruction} relies on a
technical lemma which we now state and prove.

\begin{lemma}
  \label{lemma:technical} 
  Let $\D$ be an orientable $d$-dimensional submanifold (with or without boundary) of  $\Rspace^\Dim$ and let $K$ be a
  simplicial complex with vertices in $\Rspace^\Dim$. Assume that there is a continuous function $\varphi: \Shadow{K}
  \to \D$. Suppose that for each
  $d$-simplex $\sigma \in K$, we have two positive weights
  $W(\sigma) \geq W_{\min}(\sigma)$ and that there exists an integrable function $f :
  \D \to \Rspace^+$ such that $W_{\min}(\sigma) = \int_{\varphi(\Conv{\sigma})}
  f$. Consider the $d$-chain $\gamma_{\min}$ on $K$ defined by
  \[
  \gamma_{\min}(\sigma) =
  \begin{cases}
    \sign{\sigma}{\D} \quad \quad &\text{if $W_{\min}(\sigma) = W(\sigma)$,}\\
    0 \quad \quad  &\text{otherwise.}
  \end{cases}
  \] Suppose that
  $\sum_{\sigma \in K^{[d]}} \gamma_{\min}(\sigma) \sign{\sigma}{\D} \Indicator{\varphi(\Conv{\sigma})}(x)=1$, for almost all $x \in \D$. Then,
  $\gamma_{\min}$ is the unique solution to the following optimization problem over the set of chains in $C_d(K,\Rspace)$:
  \begin{tcolorbox}
    \vspace{-4mm}
    \begin{mini*}
      {\gamma}{\|\gamma\|_{1,W}}{}{}
      \label{problem:technical}
      \addConstraint{\sum_{\sigma\in K^{[d]}} \gamma(\sigma) \sign{\sigma}{\D}  \Indicator{\varphi(\Conv{\sigma})}(x)}{=1, \hspace{2mm}}{\text{for almost all } x \in \D}
    \end{mini*}
  \end{tcolorbox}
\end{lemma}

\begin{proof}
  We note that the problem is invariant under change of orientation of
  $d$-simplices in $K$ and thus we may assume that every $d$-simplex
  $\sigma$ in $K$ has an orientation that is consistant with that of
  \D, that is, $\sign{\sigma}{\D}=1$ for all $\sigma \in K^{[d]}$.
  With this assumption, the lemma simply asserts the following.
  Consider the $d$-chain $\gamma_{\min}$ on $K$ defined by
  \[
  \gamma_{\min}(\sigma) =
  \begin{cases}
    1 \quad \quad \text{if $W_{\min}(\sigma) = W(\sigma)$,}\\
    0 \quad \quad \text{otherwise.}
  \end{cases}
  \] Suppose that
  $\sum_{\sigma \in K^{[d]}} \gamma_{\min}(\sigma) \Indicator{\varphi(\Conv{\sigma})}(x)=1$, for almost all $x \in \D$. Then the $\ell_1$-like
  norm $\|\gamma\|_{1,W}$ attains its minimum over all $d$-chains $\gamma$ such
  that
  \begin{equation}
    \label{eq:constraint}
      \sum_{\sigma \in K^{[d]}} \gamma(\sigma) \Indicator{\varphi(\Conv{\sigma})}(x)=1, \quad \text{for almost all } x \in \D
  \end{equation}
  if and only if $\gamma = \gamma_{\min}$.

  We write $\ts = \varphi(\Conv{\sigma})$ throughout the proof for a shorter notation.
  We prove the lemma by showing that for all $d$-chains $\gamma$ on
  $K$ that satisfy constraint~(\ref{eq:constraint}), we have:
  \begin{eqnarray}
    \label{eq:technical1}
    \| \gamma \|_{1,W} \; \geq \;  \| \gamma \|_{1, W_{\min}} \; \geq \; 
    \int_{\D} f \; = \;
    \| \gamma_{\min}
    \|_{1,W_{\min}} \; = \;  \| \gamma_{\min} \|_{1,W},
  \end{eqnarray}
  with the first inequality being an equality if and only if $\gamma =
  \gamma_{\min}$. Clearly, $ \| \gamma \|_{1,W} \geq \| \gamma \|_{1, W_{\min}}$
  because $W(\sigma) \geq W_{\min}(\sigma)$. To obtain the second
  inequality, recall that we have assumed $\sum_{\sigma} \gamma(\sigma)
  \Indicator{\ts}(x)=1$ almost everywhere in \D. We use this to write that:
  \begin{eqnarray}
    \label{eq:technical2}
    \|\gamma\|_{1, W_{\min}}
    \; \geq \;  \sum_{\sigma} \gamma(\sigma) \int_{\ts} f
    \; = \; \sum_{\sigma} \gamma(\sigma) \int_{\D} f \Indicator{\ts}
    \; = \; \int_{\D} f \sum_{\sigma} \gamma(\sigma) \Indicator{\ts}
    \; = \; \int_{\D} f,
  \end{eqnarray}
  where sums are over all $d$-simplices $\sigma$ in $K$. Setting
  $\gamma = \gamma_{\min}$ in (\ref{eq:technical2}), we observe that
  the inequality in (\ref{eq:technical2}) becomes an equality because
  none of the coefficients of $\gamma_{\min}$ are negative by
  construction. It follows that $\int_{\D} f = \| \gamma_{\min}
  \|_{1,W_{\min}}$. Finally, $\| \gamma_{\min} \|_{1,W_{\min}} \; = \;
  \| \gamma_{\min} \|_{1,W}$ because $\gamma_{\min}$ has been defined
  so that for all simplices $\sigma$ in its support, $W_{\min}(\sigma)
  = W(\sigma)$. We have thus established (\ref{eq:technical1}). Suppose
  now that $\gamma \neq \gamma_{\min}$ and let us prove that $ \|
  \gamma \|_{1,W} > \| \gamma \|_{1, W_{\min}}$, or equivalently that
  \[
  \sum_{\sigma \in \Support{\gamma}} | \gamma(\sigma) | \; ( W(\sigma) -
  W_{\min}(\sigma)) > 0.
  \] Since none of the terms in the above sum are negative, it
  suffices to show that there exists at least one simplex $\sigma \in
  \Support{\gamma}$ for which $W(\sigma) > W_{\min}(\sigma)$. By
  contradiction, assume that for all $\sigma \in \Support{\gamma}$,
  $W(\sigma) = W_{\min}(\sigma)$. By construction, we thus have the
  implication: $\gamma(\sigma) \neq 0 \implies
  \gamma_{\min}(\sigma) = 1$, and therefore
  $\Support{\gamma} \subseteq \Support{\gamma_{\min}}$. But, since
  $\sum_{\sigma} \gamma_{\min}(\sigma) \Indicator{\ts}(x)=1$ for
  almost all $x \in \D$ and coefficients of $\gamma_{\min}$ are either
  $0$ or $1$, it follows that for almost all $x \in \D$, point $x$ is
  covered by a unique $d$-simplex in the support of $\gamma_{\min}$.
  Hence, the simplices in $\Support{\gamma_{\min}}$ have pairwise disjoint
  interiors while their union covers \D. Since $\sum_{\sigma}
  \gamma(\sigma) \Indicator{\ts}(x)=1$ for almost all $x \in \D$,
  the simplices in $\Support{\gamma}$ must also cover $\D$ while using
  only a subset of simplices in $\Support{\gamma_{\min}}$. The only
  possibility is that $\gamma = \gamma_{\min}$, yielding a
  contradiction.
\end{proof}

\subsection{Overview of the proof}
\label{section:proof-overview}

We first illustrate the use of the technical lemma by establishing a
simple variant of Theorem~\ref{theorem:reconstruction} in which $\M$
has no curvature. We then pinpoint what has to be modified in the
proof to establish Theorem~\ref{theorem:reconstruction}.

\paragraph{Euclidean setting.}

\begin{theorem}
  \label{theorem:euclidean}
  Let $P$ be a finite point set of $\Rspace^\Dim$ in general position
  and let $K$ be a simplicial complex with vertex set $P$ such that
  $\Del{P} \subseteq K$. Let $\M = \Conv P$ and let $d = \dim(\Aff
  P)$. Then, $\code{\Del P}$ is the unique solution to the following
  optimization problem over the set of chains in $C_d(K,\Rspace)$:
  \begin{tcolorbox}
    \vspace{-4mm}
    \begin{mini*}
      {\gamma}{\Edel(\gamma)}{}{}
      \addConstraint{\sum_{\sigma\in K^{[d]}} \gamma(\sigma) \sign{\sigma}{\M}  \Indicator{\Conv{\sigma}}(x)}{=1, \hspace{2mm}}{\text{for almost all } x \in \M}
    \end{mini*}
  \end{tcolorbox}
\end{theorem}

\begin{proof}
  The proof consists in applying the technical lemma (Lemma
  \ref{lemma:technical}). In preparation for this, we make the
  following definitions.  Let $\D = \M = \Conv P =
  \Shadow{K}$. Clearly, $\D$ is an orientable $d$-dimensional submanifold (with
  boundary).
  Let $\varphi$ be the identity map of $\D$ and let $f: \D \to \Rspace^+$ be the map
  defined by:
  \[
  f(x) = \min_{\sigma}  \left(- \Power{\sigma}{x}\right),
  \]
  where the minimum is taken over all $d$-simplices $\sigma \in K$
  such that $x \in \Conv\sigma$. Finally, for any $\sigma \in K$, we
  let $W(\sigma) = \omega(\sigma)$ be the Delaunay weight of $\sigma$
  and define the weight:
  \begin{align}
    W_{\min}(\sigma) &= \int_{x \in \Conv\sigma} f(x) \, dx.
  \end{align}
  By construction, $W(\sigma) \geq
    W_{\min}(\sigma)$. Because $P$ is in general position, all
    $d$-simplices of $K$ are non-degenerate and $W_{\min}(\sigma)>0$.
  Figure \ref{figure:weights-in-euclidean-setting} depicts the two
  weights associated to a simplex $\beta$.  Consider the $d$-chain
  $\gamma_{\min}$ on $K$:
  \[
  \gamma_{\min}(\sigma) =
  \begin{cases}
    \sign{\sigma}{\D} \quad \quad &\text{if $W_{\min}(\sigma) = W(\sigma)$,}\\
    0 \quad \quad &\text{otherwise.}
  \end{cases}
  \]
  Before applying the technical lemma, we make three observations, one
  per step.

  \Step{1} For any $d$-simplices $\alpha \in \Del{P}$ and $\beta \in K
  \setminus \Del{P}$ and for any $x \in \Conv \alpha \cap \Conv
  \beta$, we have that $-\Power{\alpha}{x} \leq -\Power{\beta}{x}$ and
  the inequality is strict whenever $x \not \in \Conv{(\alpha \cap
    \beta)}$ as illustrated in Figure \ref{figure:weights-in-euclidean-setting}.

  \begin{figure}[htb]
    \centering
\begingroup%
  \makeatletter%
  \providecommand\color[2][]{%
    \errmessage{(Inkscape) Color is used for the text in Inkscape, but the package 'color.sty' is not loaded}%
    \renewcommand\color[2][]{}%
  }%
  \providecommand\transparent[1]{%
    \errmessage{(Inkscape) Transparency is used (non-zero) for the text in Inkscape, but the package 'transparent.sty' is not loaded}%
    \renewcommand\transparent[1]{}%
  }%
  \providecommand\rotatebox[2]{#2}%
  \newcommand*\fsize{\dimexpr\f@size pt\relax}%
  \newcommand*\lineheight[1]{\fontsize{\fsize}{#1\fsize}\selectfont}%
  \ifx\svgwidth\undefined%
    \setlength{\unitlength}{367.75405523bp}%
    \ifx\svgscale\undefined%
      \relax%
    \else%
      \setlength{\unitlength}{\unitlength * \real{\svgscale}}%
    \fi%
  \else%
    \setlength{\unitlength}{\svgwidth}%
  \fi%
  \global\let\svgwidth\undefined%
  \global\let\svgscale\undefined%
  \makeatother%
  \begin{picture}(1,0.44140333)%
    \lineheight{1}%
    \setlength\tabcolsep{0pt}%
    \put(0.65160665,0.03631879){\color[rgb]{1,0,0}\makebox(0,0)[lt]{\lineheight{1.25}\smash{\begin{tabular}[t]{l}$x$\end{tabular}}}}%
    \put(0.7083775,0.24292531){\makebox(0,0)[lt]{\lineheight{1.25}\smash{\begin{tabular}[t]{l}$-\Power{\beta}{x}$\end{tabular}}}}%
    \put(0.72697451,0.15139039){\makebox(0,0)[lt]{\lineheight{1.25}\smash{\begin{tabular}[t]{l}$-\Power{\alpha}{x}$\end{tabular}}}}%
    \put(0,0){\includegraphics[width=\unitlength,page=1]{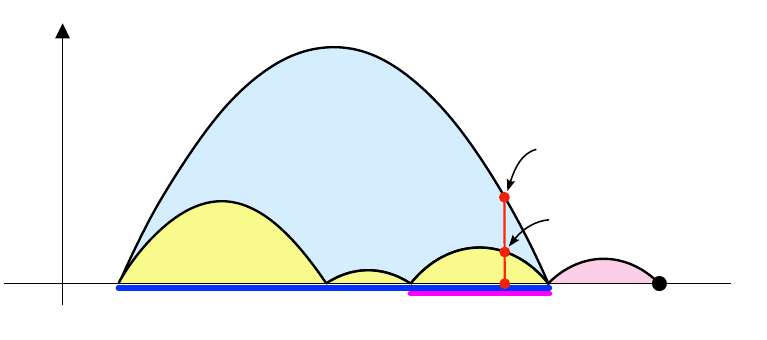}}%
    \put(0.07320115,0.42280968){\makebox(0,0)[lt]{\lineheight{1.25}\smash{\begin{tabular}[t]{l}$\Rspace$\end{tabular}}}}%
    \put(-0.00203143,0.08458691){\makebox(0,0)[lt]{\lineheight{1.25}\smash{\begin{tabular}[t]{l}$\Aff P$\end{tabular}}}}%
    \put(0,0){\includegraphics[width=\unitlength,page=2]{weights-in-euclidean-setting.pdf}}%
    \put(0.38775556,0.03087187){\color[rgb]{0.01568627,0.2,1}\makebox(0,0)[lt]{\lineheight{1.25}\smash{\begin{tabular}[t]{l}$\beta$\end{tabular}}}}%
    \put(0.57600701,0.03236558){\color[rgb]{1,0,1}\makebox(0,0)[lt]{\lineheight{1.25}\smash{\begin{tabular}[t]{l}$\alpha$\end{tabular}}}}%
  \end{picture}%
\endgroup%

    \caption{The two weights $W(\beta)$ and $W_{\min}(\beta)$ defined
      in the proof of Theorem~\ref{theorem:euclidean} can be depicted
      as the volume of the blue region and the volume of the the yellow region,
      respectively. The blue region is the part of the subgraph of
      $-\operatorname{Power}_{\beta}$ lying above $\Conv \beta$ and
      the yellow region is the part of the subgraph of $f$ lying above
      $\Conv\beta$. \label{figure:weights-in-euclidean-setting}}
  \end{figure}

  \Step{2} For every point $x \in \D$, we thus have that
  $f(x) = - \Power{\alpha}{x}$, where $\alpha$ is any $d$-simplex of
  $\Del{P}$ whose convex hull contains $x$.

  \Step{3} For all simplices $\sigma \in K$, the following property
  holds: $W_{\min}(\sigma)=W(\sigma)$ if and only if $\sigma$ is a
  Delaunay $d$-simplex of $P$.

  \medskip
    \noindent Hence, $\gamma_{\min} = \code{\Del{P}}$ and applying
    Lemma \ref{lemma:technical} yields the result.
\end{proof}

\paragraph{Adapting the proof to the submanifold setting.}

In the submanifold setting, we let $\M$ be an orientable $C^2$
$d$-dimensional submanifold of $\Rspace^N$ and let $P$ be a finite
$\delta$-accurate $\varepsilon$-dense sample of $\M$. We also consider
a simplicial complex $K$ with vertex set $P$ and let $\rho \geq 0$ be
a scale parameter. Assuming that $\M$, $P$ $\varepsilon$, $\delta$ and
$\rho$ satisfy the assumptions of
{Theorem~\ref{theorem:reconstruction}}, we now give an overview of our
proof of Theorem~\ref{theorem:reconstruction}.

The proof consists in applying the technical lemma, following for this
the same steps as in the proof of
Theorem~\ref{theorem:euclidean}. However, the different steps are now
more involved as is the definition of the various objects required to
apply the technical lemma.  First of all, we introduce in
Section~\ref{section:Delloc-complex} a simplicial complex, called the
{\em Delloc complex} of $P$ at scale $\rho$ and denoted as $\Delloc P
\rho d$. This complex is going to act as the counterpart of the Delaunay
complex of $P$ in the Euclidean setting. In particular, we aim at
showing that the unique solution to
Problem~(\ref{problem:reconstruction}) is $\code{\Delloc P \rho d}$
(instead of $\code{\Del P}$ in the Euclidean setting).

We first show in Section~\ref{section:Delloc-complex} that, under the
assumptions of Theorem~\ref{theorem:reconstruction}, $\Shadow{\Delloc
  P \rho d}$ is a faithful reconstruction of \M
(Theorem~\ref{theorem:homeomorphism-from-sampling-conditions}) and
$\Delloc{P}{\rho}{d} \subseteq K$
(Remark~\ref{remark:contained-in-Delaunay}). The set $\D =
\Shadow{\Delloc P \rho d}$ is thus an orientable $d$-dimensional
submanifold of $\Rspace^\Dim$, which is going to play the role of
$\Conv P$ in the Euclidean setting. Next we define $\varphi:
\Shadow{K} \to \D$ which maps any $y \in \Shadow{K}$ to the point $x
\in \D$ such that $\pi_\M(x) = \pi_\M(y)$. We then define the map $f :
\D \to \Rspace^+$ by
\[
f(x) = \min_\sigma  \left(- \Power{\sigma}{y}\right),
\]
where the minimum is taken over all $d$-simplices $\sigma \in K$ and all points $y \in \Conv \sigma$ such that $\pi_\M(x) = \pi_\M(y)$.
Finally, for any $\sigma \in K$, we
  let $W(\sigma) = \omega(\sigma)$ be the Delaunay weight of $\sigma$
  and define the weight:
  \begin{align}
    W_{\min}(\sigma) &= \int_{x \in \varphi(\Conv\sigma)} f(x) \, dx.
  \end{align}
  By construction, $W(\sigma) \geq W_{\min}(\sigma)$ and because every
  $d$-simplex $\sigma \in K$ is non-degenerate,
  $W_{\min}(\sigma)>0$. Figure
  \ref{figure:weights-in-submanifold-setting} depicts the two weights
  associated to a simplex $\beta$.

\begin{figure}[htb]
  \centering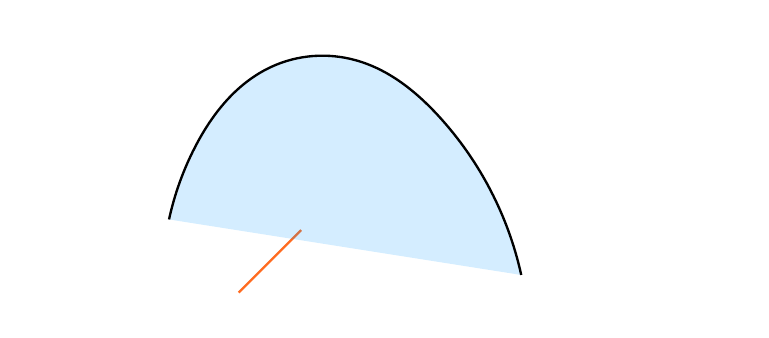
  \caption{The two weights $W(\beta)$ and $W_{\min}(\beta)$ used in
    the proof of Theorem \ref{theorem:reconstruction} can be
    depicted as the volume of the blue region and the volume of the yellow region,
    respectively. The complex $\Delloc P \rho d$ is formed of the five
    black dots, the brown segment and the three green segments. The
    green polygonal chain represents $\varphi(\Conv \beta)$.
    \label{figure:weights-in-submanifold-setting}}
\end{figure}

The first step in the proof of Theorem \ref{theorem:euclidean} shows
that $-\Power{\alpha}{x} \leq -\Power{\beta}{x}$ for any $d$-simplices
$\alpha \in \Del{P}$ and $\beta \subseteq P$ and for any $x \in \Conv
\alpha \cap \Conv \beta$.  One difficulty in the submanifold setting
is that now $d$-simplices of $K$ do not lie anymore in the same
$d$-dimensional affine space as illustrated in
Figure~\ref{figure:weights-in-submanifold-setting}. Nonetheless, we
can still compare $\Power \alpha x$ and $\Power \beta y$ for two
$d$-simplices $\alpha \in \Delloc P \rho d$ and $\beta \subseteq P$,
and for two points $x \in \Conv \alpha$ and $y \in \Conv \beta$,
assuming that they share the same projection onto $\M$, {\em i.e.}
assuming that $\pi_\M(x) = \pi_\M(y)$. This is the goal of
Section~\ref{section:ComparingPowerDistances}.

In a second step, we establish that for every point $x \in \D$, we
have that $f(x) = - \Power \alpha x$, where $\alpha$ is any
$d$-simplex of $\Delloc P \rho d$ whose convex hull contains $x$
(Lemma~\ref{lemma:Wmin-alpha} in Section~\ref{section:final}).

In a third step, we check that we have defined the two weight
functions $W$ and $W_{\min}$ in such a way that $W(\sigma) =
W_{\min}(\sigma)$ if and only if $\sigma$ belongs to the Delloc
complex of $P$. This is done in Section~\ref{section:final} thanks to
Lemmas~\ref{lemma:Wmin-alpha} and \ref{lemma:Wmin-beta}. The tricky
part consists in showing that $W_{\min}(\beta) < W(\beta)$ whenever
$\beta$ is not in the Delloc complex of $P$. Indeed, to compare the
two quantities, we make a change of variable which possibly can
jeopardize the strict inequality but which we are able to
counterbalance, thanks to the conditions we are assuming on $P$ (and
in particular a sufficient protection of $P$).

Finally, all the elements are put together in Section \ref{section:final} where one
can find the proof of the main theorem.

%
%
%
%
%

\subsection{Delloc complexes}
\label{section:Delloc-complex}

In this section, we define the Delloc complex. We then recall a key
result established in the companion paper
\cite{AttaliLieutierFlatDelaunay2022}: when the Delloc complex is
computed over a finite point set $P$ that samples some $d$-dimensional
submanifold of $\Rspace^\Dim$, it provides a faithful reconstruction
of that submanifold. Incidentally, under the right assumptions, the
Delloc complex coincides with the flat Delaunay complex
\cite{AttaliLieutierFlatDelaunay2022} and the tangential Delaunay
complex \cite{boissonnat2014manifold,boissonnat2018geometric}. Since
all the results in this paper are based on the property for a simplex
to belong to the Delloc complex, we find it more enlightening to
formulate the results of this paper using the Delloc complex.

\begin{figure}[htb]
  \centering
\begingroup%
  \makeatletter%
  \providecommand\color[2][]{%
    \errmessage{(Inkscape) Color is used for the text in Inkscape, but the package 'color.sty' is not loaded}%
    \renewcommand\color[2][]{}%
  }%
  \providecommand\transparent[1]{%
    \errmessage{(Inkscape) Transparency is used (non-zero) for the text in Inkscape, but the package 'transparent.sty' is not loaded}%
    \renewcommand\transparent[1]{}%
  }%
  \providecommand\rotatebox[2]{#2}%
  \newcommand*\fsize{\dimexpr\f@size pt\relax}%
  \newcommand*\lineheight[1]{\fontsize{\fsize}{#1\fsize}\selectfont}%
  \ifx\svgwidth\undefined%
    \setlength{\unitlength}{280.40221447bp}%
    \ifx\svgscale\undefined%
      \relax%
    \else%
      \setlength{\unitlength}{\unitlength * \real{\svgscale}}%
    \fi%
  \else%
    \setlength{\unitlength}{\svgwidth}%
  \fi%
  \global\let\svgwidth\undefined%
  \global\let\svgscale\undefined%
  \makeatother%
  \begin{picture}(1,0.58166534)%
    \lineheight{1}%
    \setlength\tabcolsep{0pt}%
    \put(0,0){\includegraphics[width=\unitlength,page=1]{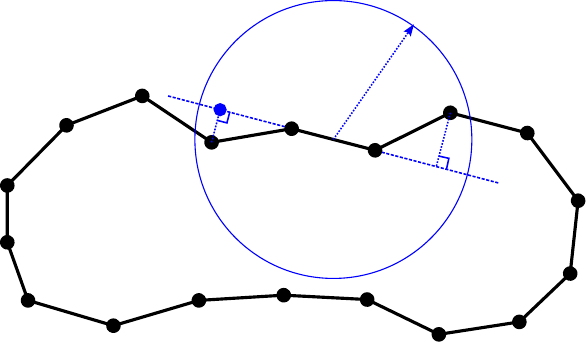}}%
    \put(0.49371723,0.3835589){\makebox(0,0)[lt]{\lineheight{1.25}\smash{\begin{tabular}[t]{l}$a$\end{tabular}}}}%
    \put(0.6336196,0.347204){\makebox(0,0)[lt]{\lineheight{1.25}\smash{\begin{tabular}[t]{l}$b$\end{tabular}}}}%
    \put(0.55059336,0.30172593){\color[rgb]{0,0,1}\makebox(0,0)[lt]{\lineheight{1.25}\smash{\begin{tabular}[t]{l}$c_{ab}$\end{tabular}}}}%
    \put(0.37518303,0.41751067){\color[rgb]{0,0,1}\makebox(0,0)[lt]{\lineheight{1.25}\smash{\begin{tabular}[t]{l}$x$\end{tabular}}}}%
    \put(0.73211069,0.25626046){\color[rgb]{0,0,1}\makebox(0,0)[lt]{\lineheight{1.25}\smash{\begin{tabular}[t]{l}$y$\end{tabular}}}}%
    \put(0,0){\includegraphics[width=\unitlength,page=2]{Delloc-complex-1d.pdf}}%
    \put(0.71984609,0.54495933){\color[rgb]{0,0,1}\makebox(0,0)[lt]{\lineheight{1.25}\smash{\begin{tabular}[t]{l}$\rho$\end{tabular}}}}%
    \put(0,0){\includegraphics[width=\unitlength,page=3]{Delloc-complex-1d.pdf}}%
  \end{picture}%
\endgroup%

  \caption{A set of black dots and its $1$-dimensional Delloc complex
    at scale $\rho$. The edge $ab$ belongs to the Delloc complex
    because ab is a Delaunay edge of the four points $\{a,b,x,y\}$
    obtained by projecting the four black dots inside $B(c_{ab},\rho)$
    onto the line passing through $a$ and $b$.}
  \label{figure:Delloc-complex}
\end{figure}

\paragraph{Definition.}
Afterwards, $P$ designates a finite set of points in $\Rspace^\Dim$,
$d$ designates an integer in $[0,\Dim)$ and $\rho \geq 0$ designates a scale
  parameter.

\begin{definition}[Delloc complex]
  We say that a simplex $\sigma$ is \emph{delloc} in $P$ at
  scale $\rho$ if
  \[
  \sigma \in \Del{\pi_{\Aff \sigma}(P \cap B(c_\sigma,\rho))},
  \]
  where $c_\sigma$ denotes the center of the smallest $N$-ball
  enclosing $\sigma$.  The $d$-dimensional {\em Delloc complex} of $P$
  at scale $\rho$, denoted by $\Delloc P \rho d$, is the set of
  $d$-simplices that are delloc in $P$ at scale $\rho$ together with
  all their faces; see Figure \ref{figure:Delloc-complex}.
\end{definition}

\begin{remark} \label{remark:Gabriel}
  It is easy to see that if $2R(\sigma)\leq \rho$, then the smallest
  circumsphere $S(\sigma)$ of $\sigma$ is contained in
  $B(c_\sigma,\rho)$. It follows that a delloc simplex $\sigma$ in $P$
  at scale $\rho$ is also a \emph{Gabriel simplex} of $P$, by which we
  mean that $S(\sigma)$ does not enclose any point of $P$ in its
  interior. In particular, when $2R(\sigma) \leq \rho$, then $\sigma$ is a
  Delaunay simplex of $P$.
\end{remark}

\paragraph{Key result.} We now recall a key result established in the companion paper
\cite{AttaliLieutierFlatDelaunay2022} and which gives condition under
which the Delloc complex of $P$ is a faithful reconstruction of \M.

\begin{theorem}[Faithful reconstruction by a geometric approach]
  \label{theorem:homeomorphism-from-sampling-conditions}
  Let $\varepsilon$, $\delta$, and $\rho$ be non-negative real numbers
  such that $\delta \leq \varepsilon$ and $16\varepsilon \leq \rho <
  \frac{\reach}{4}$. Let $P$ be a $\delta$-accurate
  $\varepsilon$-dense sample of \M such that
  $\Height{P}{\rho}>0$. Suppose that the safety condition on
  $(P,\varepsilon,\delta)$ is satisfied at scale $\rho$.  Then,
  $\Delloc P \rho d$ is a faithful reconstruction of \M. Furthermore,
  for all $d$-simplices $\sigma \in \Delloc P \rho d$, we have
  $R(\sigma) \leq \varepsilon$.
\end{theorem}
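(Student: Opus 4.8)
The plan is to follow the local-to-global scheme familiar from tangential-Delaunay manifold reconstruction (in the spirit of \cite{boissonnat2014manifold,boissonnat2018geometric}): bound circumradii, analyze the complex locally around each sample point by projecting to a tangent plane, glue the local pictures together using the protection hypothesis, and finally verify the three conditions defining a faithful reconstruction. Throughout write $\Theta = \AngularDeviation{P,\rho}$, which is at most $\tfrac{\pi}{6}$ by the safety condition. I would start with the ``Furthermore'' assertion, that $R(\sigma)\le\varepsilon$ for every $d$-simplex $\sigma\in\Delloc P \rho d$, since it underlies everything else. First observe that being delloc already forces $\sigma$ to be $\rho$-small: its vertices must lie in $P\cap B(c_\sigma,\rho)$, so the radius of the smallest ball enclosing $\sigma$ is at most $\rho$, hence $\AngularDeviation{\sigma}\le\Theta$. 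Now suppose $R(\sigma)>\varepsilon$: moving from a vertex $v$ of $\sigma$ towards the circumcenter $Z(\sigma)$ by a distance $2\varepsilon$ lands strictly inside the open ball in $\Aff\sigma$ witnessing dellocness; using that $\Aff\sigma$ makes angle $\le\Theta$ with $\Tangent m \M$ for $m=\pi_\M(v)$ and that $\Reach \M = \reach$ confines \M to a thin slab over $\Aff\sigma$ near $v$, one finds $m'\in\M$ close to that point, and $\varepsilon$-density produces $p\in P$ with $\|p-m'\|\le\varepsilon$; the inequalities $\delta\le\varepsilon$ and $16\varepsilon\le\rho<\tfrac{\reach}{4}$ then guarantee $p\in B(c_\sigma,\rho)$ and that $\pi_{\Aff\sigma}(p)$ lies strictly inside $S(\sigma)$, a contradiction. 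Combined with $\Height{P}{\rho}>0$ this makes every simplex of $\Delloc P \rho d$ non-degenerate, so the condition \textbf{Embedding} reduces to the intersection property $\Conv(\alpha\cap\beta)=\Conv\alpha\cap\Conv\beta$, which I postpone to the gluing step below.

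For the local picture, fix $p\in P$ and let $\pi_p$ denote the orthogonal projection onto $\Tangent p \M$. I would show that the star of $p$ in $\Delloc P \rho d$ is carried isomorphically by $\pi_p$ onto the star of $p$ in a weighted Delaunay triangulation of $\pi_p(P\cap B(p,c\rho))$ inside $\Tangent p \M$, for a suitable constant $c$. The mechanism is that the defining condition $\sigma\in\Del{\pi_{\Aff\sigma}(P\cap B(c_\sigma,\rho))}$ can be transported from the plane $\Aff\sigma$ to the single plane $\Tangent p \M$ at the cost of perturbing the projected points and the radii of the witnessing empty balls by an amount of order $\delta\Theta+\rho\Theta^2$ --- exactly the quantity that the safety condition bounds against $\Protection{P}{3\rho}$ and $\Sep P$. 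A Delaunay-stability lemma --- a point configuration whose Delaunay simplices are $\mu$-protected retains its combinatorics under perturbations of size $o(\mu)$ --- then shows that $\pi_p$ sends the star of $p$ to a genuine geometric triangulation of a neighborhood of $\pi_p(p)$ in $\Tangent p \M$; in particular $\pi_p$ is injective there with small angular distortion and the link of $p$ is a triangulated $(d-1)$-sphere.

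Next I would check that these local pictures are mutually consistent: if a $d$-simplex $\sigma\in\Delloc P \rho d$ has two vertices $p,q$, then $\sigma$ appears in the local picture at $p$ if and only if it appears in the one at $q$. The planes $\Tangent p \M$ and $\Tangent q \M$ differ by an angle $O(\varepsilon/\reach+\Theta)$ (standard, using the reach and $R(\sigma)\le\varepsilon$), inducing a further perturbation that is again strictly below $\Protection{P}{3\rho}$ thanks to the safety condition and to \eqref{eq:extra-safety-condition}, so the stability lemma forces the two local complexes to coincide on their shared simplices. Consistency implies that every $(d-1)$-simplex of $\Delloc P \rho d$ has exactly two $d$-cofaces and that every vertex link is a $(d-1)$-sphere, hence $\US{\Delloc P \rho d}$ is a compact $d$-manifold without boundary; and since the local triangulations fit together coherently, the intersection property needed for \textbf{Embedding} holds globally. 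For \textbf{Closeness}, every point of a $d$-simplex lies within $2R(\sigma)+\delta\le 3\varepsilon<\reach$ of \M, so any $r\in[3\varepsilon,\reach)$ works and $\pi_\M$ is well-defined and $C^1$ on $\US{\Delloc P \rho d}$. For \textbf{Homeomorphism}, on each $d$-simplex the differential of $\pi_\M$ is close to the orthogonal projection of the direction of $\Aff\sigma$ onto $T_m\M$, hence invertible because $\AngularDeviation{\sigma}<\tfrac{\pi}{2}$; combined with the manifold structure above, $\pi_\M$ is a local homeomorphism from the compact $d$-manifold $\US{\Delloc P \rho d}$ onto the connected $d$-manifold \M, i.e. a covering map; exhibiting one point of \M lying over the relative interior of exactly one projected $d$-simplex --- which exists by $\varepsilon$-density together with the injectivity established above --- shows the covering is one-sheeted, so $\pi_\M$ restricts to a homeomorphism.

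The technical heart, and the main obstacle, is the two perturbation estimates and the Delaunay-stability lemma underlying the local and gluing arguments: one must control precisely how far the projected point set and the witnessing empty balls move when the reference plane changes (from $\Aff\sigma$ to $\Tangent p \M$, and from $\Tangent p \M$ to $\Tangent q \M$), and show that a configuration whose Delaunay simplices are $\mu$-protected keeps its combinatorics under perturbations of size $o(\mu)$. The numerical content of the safety condition and of \eqref{eq:extra-safety-condition} is exactly what makes both perturbations strictly smaller than $\Protection{P}{3\rho}$ (and than $\Sep P$), so that the local combinatorics are rigid and glue consistently; with that in hand, the remaining verifications --- including the one-sheeted-covering step --- are routine.
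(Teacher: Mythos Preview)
This theorem is not proved in the present paper at all: it is explicitly imported from the companion paper \cite{AttaliLieutierFlatDelaunay2022} (``We now recall a key result established in the companion paper\ldots''). So there is no in-paper proof to compare your outline against; at best one can say that your local-to-global strategy via tangential projections and Delaunay stability is indeed the standard route for results of this type and is in the spirit of what \cite{AttaliLieutierFlatDelaunay2022,boissonnat2014manifold,boissonnat2018geometric} do.

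There is, however, a concrete gap in your sketch. In the gluing step you write that the perturbation induced by changing from $\Tangent p \M$ to $\Tangent q \M$ is controlled ``thanks to the safety condition and to \eqref{eq:extra-safety-condition}''. But \eqref{eq:extra-safety-condition} is \emph{not} a hypothesis of Theorem~\ref{theorem:homeomorphism-from-sampling-conditions}; it appears only in Theorem~\ref{theorem:reconstruction}. The present statement assumes only the safety condition (the three inequalities involving $\Theta$, $\Sep P$, and $\Protection{P}{3\rho}$) together with $\delta\le\varepsilon$, $16\varepsilon\le\rho<\reach/4$, and $\Height{P}{\rho}>0$. Your argument must therefore extract the needed perturbation margin from the safety condition alone --- in particular from the protection lower bound $\Protection{P}{3\rho} > 8(\delta\theta+\rho\theta^2)\bigl(1+4d\varepsilon/\Height{P}{\rho}\bigr)$ --- and you should check that this genuinely dominates the plane-change error without borrowing the stronger inequality \eqref{eq:extra-safety-condition}. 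As written, the gluing step is not justified under the stated hypotheses.

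A second, smaller point: the circumradius bound $R(\sigma)\le\varepsilon$ is the ``Furthermore'' clause, and your argument for it is only a heuristic (``moving from a vertex towards the circumcenter by $2\varepsilon$\ldots''). Since the rest of your proof leans on this bound, you would need to make that step quantitative, tracking the slab thickness and the $\delta$-accuracy carefully so that the produced point $p$ really falls inside $B(c_\sigma,\rho)$ and its projection really lands strictly inside $S(\sigma)$.
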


\begin{remark}
  \label{remark:contained-in-Delaunay}
  Under the assumptions of
  Theorem~\ref{theorem:homeomorphism-from-sampling-conditions},
  Remark~\ref{remark:Gabriel} implies that the Delloc complex of $P$
  at scale $\rho$ is a subset of the Delaunay complex of $P$, that is,
  $\Delloc P \rho d \subseteq \Del{P}$. It follows that under the
  assumptions of Theorem
  \ref{theorem:homeomorphism-from-sampling-conditions}:
  \[
  \Delloc P \rho d  \subseteq \Del{P} \cap \Cech P \varepsilon
  \]
  and therefore any simplicial complex $K$ that satisfies the
  assumptions of Theorem~\ref{theorem:reconstruction} contains
  $\Delloc P \rho d$.
\end{remark}

\subsection{Comparing power distances}
\label{section:ComparingPowerDistances}

The goal of this section is to relate the two maps $\Power \alpha x$
and $\Power \beta y$ for two $d$-simplices $\alpha \in \Delloc P \rho
d$ and $\beta \subseteq P$, and for two points $x \in \Conv \alpha$
and $y \in \Conv \beta$, such that $\pi_\M(x) = \pi_\M(y)$. The main
result of the section is stated in the following lemma and proved at
the end of the section.  We recall that given a non-degenerate simplex
$\alpha$ and a point $x \in \Aff \sigma$, the (normalized) barycentric
coordinates of $x$ relatively to the simplex $\alpha$ are real numbers
$\{\lambda_a\}_{a \in \alpha}$ such that $x = \sum_{a \in \alpha}
\lambda_a a$ and $\sum_{a \in \alpha} \lambda_a = 1$. We write
\[
\BaryCoord x \alpha a = \lambda_a
\]

\begin{lemma}
  \label{lemma:final-bound-on-power}
  Let $\varepsilon, \delta, \rho \geq 0$ such that $0 \leq
  2\varepsilon \leq \rho$, and $16 \delta \leq \rho \leq
  \frac{\reach}{3}$. Suppose that $P \subseteq \Offset \M \delta$. Let
  $\pro = \ProtectGlob P {3\rho}$, $\sepvar = \Sep{P}$, and $\Theta =
  \AngularDeviation{P,\rho}$. Assume that $\Theta \leq \frac{\pi}{6}$ and
  \[
  10 \rho\, \Theta  \left( \varepsilon + \rho\, \Theta \right) < \pro^2 + \pro \sepvar.
  \]
  Then, for every non-degenerate $\varepsilon$-small $d$-simplex $\alpha \in \Delloc P \rho d$, every
  non-degenerate $\rho$-small $d$-simplex $\beta \subseteq P$, every $x \in \Conv \alpha$, and every $y
  \in \Conv \beta$ such that $\pi_\M(x) = \pi_\M(y)$:
  \[
  - \Power{\beta}{y} + \Power{\alpha}{x} \geq \frac{1}{2} \left(\pro^2+\pro \sepvar \right)
  \, \sum_{b \in \beta \setminus \alpha} \BaryCoord y \beta b.
  \]
\end{lemma}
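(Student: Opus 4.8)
The plan is to convert the statement into an exact identity for $-\Power{\beta}{y}+\Power{\alpha}{x}$ by means of the barycentric formula for power distances, and then to estimate the resulting two pieces separately: the ``dominant'' piece is controlled by the delloc property of $\alpha$, and the ``mismatch'' piece by the constraint $\pi_\M(x)=\pi_\M(y)$ together with the smallness of the angular deviation $\Theta$. Concretely, recall from Lemma~\ref{lemma:power-expression-with-additional-point} that for a non-degenerate simplex $\sigma$, a point $p\in\Aff{\sigma}$ with barycentric coordinates $(\lambda_v)_{v\in\sigma}$, and any point $z$, one has $\Power{\sigma}{p}=\|p-z\|^2-\sum_{v\in\sigma}\lambda_v\|v-z\|^2$. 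Applying this with $\sigma=\beta$, $p=y$, $\mu_b:=\BaryCoord{y}{\beta}{b}$, and $z=Z(\alpha)$, then replacing $\|w-Z(\alpha)\|^2$ by $\Power{\alpha}{w}+R(\alpha)^2$ for $w\in\{y\}\cup\beta$ and using $\sum_b\mu_b=1$, the terms $R(\alpha)^2$ cancel and, since $\Power{\alpha}{b}=0$ for $b\in\beta\cap\alpha$,
\[
-\Power{\beta}{y}+\Power{\alpha}{x}=\bigl(\Power{\alpha}{x}-\Power{\alpha}{y}\bigr)+\sum_{b\in\beta\setminus\alpha}\mu_b\,\Power{\alpha}{b}.
\]
Writing $\eta:=\sum_{b\in\beta\setminus\alpha}\mu_b$ and $S:=\sum_{b\in\beta\setminus\alpha}\mu_b\Power{\alpha}{b}$, it then suffices to prove \textbf{(i)} $S\ge(\pro^2+\pro\sep)\eta$ and, the hard part, \textbf{(ii)} $\Power{\alpha}{y}-\Power{\alpha}{x}\le\tfrac12 S$.

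For \textbf{(i)}, fix $b\in\beta\setminus\alpha$ and set $\bar b:=\pi_{\Aff{\alpha}}(b)$; since $b-\bar b\perp\Aff{\alpha}\ni Z(\alpha)$ we get $\Power{\alpha}{b}=\|b-\bar b\|^2+\Power{\alpha}{\bar b}$. Because $\alpha\in\Delloc{P}{\rho}{d}$, the point $\bar b$ lies outside the open circumball of $\alpha$ whenever $b\in B(c_\alpha,\rho)$, and the protection hypothesis $\protection{\alpha}{P\cap B(c_\alpha,3\rho)}\ge\Protection{P}{3\rho}=\pro$ sharpens this, for $b\in B(c_\alpha,3\rho)$, to $\|\bar b-Z(\alpha)\|\ge R(\alpha)+\pro$, whence $\Power{\alpha}{\bar b}\ge 2R(\alpha)\pro+\pro^2$. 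As the circumradius is at least half the longest edge of $\alpha$ and every edge has length at least $\Sep{P}=\sep$, we have $2R(\alpha)\ge\sep$, so $\Power{\alpha}{b}\ge\Power{\alpha}{\bar b}\ge\pro^2+\pro\sep$ for such $b$. For vertices of $\beta$ lying outside $B(c_\alpha,3\rho)$ one checks that $\Power{\alpha}{b}$ exceeds $\pro^2+\pro\sep$ with ample room, using the standing size constraints $2\varepsilon\le\rho$, $16\delta\le\rho\le\frac{\reach}{3}$, $\Theta\le\frac{\pi}{6}$, which — via $\pi_\M(x)=\pi_\M(y)$ and the angular bound — keep the vertices of $\beta$ within a bounded neighbourhood of $\Conv{\alpha}$. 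Summing gives \textbf{(i)}.

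For \textbf{(ii)}, put $\bar y:=\pi_{\Aff{\alpha}}(y)$, so $\Power{\alpha}{y}-\Power{\alpha}{x}=\|y-\bar y\|^2+\bigl(\Power{\alpha}{\bar y}-\Power{\alpha}{x}\bigr)$. For the first summand, $y-\bar y=\sum_{b\in\beta\setminus\alpha}\mu_b(b-\bar b)$ with each term orthogonal to $\Aff{\alpha}$, so the triangle inequality and Cauchy--Schwarz give $\|y-\bar y\|^2\le\eta\sum_{b\in\beta\setminus\alpha}\mu_b\|b-\bar b\|^2$; since $\|b-\bar b\|^2=\Power{\alpha}{b}-\Power{\alpha}{\bar b}\le\Power{\alpha}{b}-(\pro^2+\pro\sep)$ for the near vertices (and is dominated by the surplus of $\Power{\alpha}{b}$ for the far ones), this refines to $\|y-\bar y\|^2\le\eta\bigl(S-\eta(\pro^2+\pro\sep)\bigr)$ up to the usual slack. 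For the second summand the constraint $\pi_\M(x)=\pi_\M(y)=:m$ is used twice: first, $y-x$ lies in the normal space to $\M$ at $m$, and since $\angle(\Aff{\alpha},\Tangent{m}{\M})\le\Theta<\tfrac{\pi}{2}$, orthogonal projection onto $\Aff{\alpha}$ contracts $y-x$ by a factor $\le\sin\Theta$, so $\|\bar y-x\|\le\sin\Theta\,\|y-x\|$ and hence $\Power{\alpha}{\bar y}-\Power{\alpha}{x}\le\|\bar y-x\|\bigl(2R(\alpha)+\|\bar y-x\|\bigr)$; second, because $\pi_\M$ restricted to the $\varepsilon$-small, nearly tangent simplex $\Conv{\alpha}$ is bi-Lipschitz and $y$ lies within $O(\rho\eta)$ of the common face $\Conv{(\beta\cap\alpha)}\subseteq\Conv{\alpha}$, the matching point $x$ lies within $O(\rho\eta)$ of $y$. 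Combining, $\Power{\alpha}{\bar y}-\Power{\alpha}{x}\le c\,\eta\,\rho\,\Theta(\varepsilon+\rho\Theta)$ for an absolute constant $c$. Feeding both bounds back into the identity, $-\Power{\beta}{y}+\Power{\alpha}{x}=S-\|y-\bar y\|^2-(\Power{\alpha}{\bar y}-\Power{\alpha}{x})\ge(1-\eta)S+\eta^2(\pro^2+\pro\sep)-c\,\eta\,\rho\,\Theta(\varepsilon+\rho\Theta)\ge\eta(\pro^2+\pro\sep)-c\,\eta\,\rho\,\Theta(\varepsilon+\rho\Theta)$, and the hypothesis $10\rho\Theta(\varepsilon+\rho\Theta)<\pro^2+\pro\sep$ leaves this $\ge\tfrac12\eta(\pro^2+\pro\sep)$, which is the claim.

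\textbf{Where the difficulty lies.} The real obstacle is \textbf{(ii)}. In the Euclidean setting of Theorem~\ref{theorem:euclidean} there is no term like $\Power{\alpha}{x}-\Power{\alpha}{y}$: the minimiser lives in a single space and its simplices tile without overlap. Here $\alpha$ and $\beta$ are genuinely distinct simplices in the same ambient space, and the negative mismatch $\Power{\alpha}{x}-\Power{\alpha}{y}$ must be absorbed into a definite fraction of the positive harvest $S$. Making this work requires showing that \emph{every} error term — from the tilt $\Theta$, from the curvature of $\M$, and from vertices of $\beta$ that stray far from $\Conv{\alpha}$ — is proportional to $\eta=\sum_{b\in\beta\setminus\alpha}\BaryCoord{y}{\beta}{b}$ and not merely to an absolute power of $\rho$; this is exactly why the constraint $\pi_\M(x)=\pi_\M(y)$, which pins $y$ to the normal fibre of $\M$ over $m$ in proportion to $\eta$, cannot be dispensed with, and it is what prevents a naive ``project onto a tangent plane and quote the Euclidean result'' argument from working directly. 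Calibrating the constants so that the surplus $\tfrac12(\pro^2+\pro\sep)\eta$ survives is precisely the role of the extra hypothesis $10\rho\Theta(\varepsilon+\rho\Theta)<\pro^2+\pro\sep$.
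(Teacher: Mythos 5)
Your skeleton is essentially the paper's own argument, reorganized: your exact identity $-\Power{\beta}{y}+\Power{\alpha}{x}=\bigl(\Power{\alpha}{x}-\Power{\alpha}{y}\bigr)+\sum_{b\in\beta\setminus\alpha}\mu_b\Power{\alpha}{b}$ is a repackaging of Lemma~\ref{lemma:power-expression-with-additional-point} applied with $z=Z(\alpha)$, and your items \textbf{(i)} and \textbf{(ii)} correspond to Lemmas~\ref{lemma:basic-bound-on-power} and~\ref{lemma:advanced-bound-on-power}. However, two steps are left at a level where they do not go through as written. First, to harvest $\Power{\alpha}{\pi_{\Aff{\alpha}}(b)}\geq\pro^2+\pro\sep$ for every $b\in\beta\setminus\alpha$ you need both that $\pi_{\Aff{\alpha}}(b)$ lies outside the circumball of $\alpha$ and that $b\in B(c_\alpha,3\rho)$, because $\pro=\Protection{P}{3\rho}$ only protects against points of $P\cap B(c_\alpha,3\rho)$. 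Your fallback for vertices outside $B(c_\alpha,3\rho)$ --- that their power exceeds $\pro^2+\pro\sep$ ``with ample room'' --- is not a consequence of the hypotheses: the lemma provides no upper bound on $\pro$ or $\sep$, so $\pro^2+\pro\sep$ cannot be compared with a multiple of $\rho^2$. The correct (and necessary) step, which the paper carries out explicitly, is to show there are no such vertices at all: by Lemma~\ref{lemma:small-tubular-neigborhood}, $\Conv{\alpha}$ and $\Conv{\beta}$ lie in $\Offset{\M}{\rho/4}$, so $\pi_\M(x)=\pi_\M(y)$ forces $\|x-y\|\leq\rho/2$ and then $\|c_\alpha-b\|\leq\varepsilon+\rho/2+2\rho\leq 3\rho$ for every vertex $b$ of $\beta$. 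This quantitative matching with the protection scale $3\rho$ (together with upgrading the delloc emptiness from $B(c_\alpha,\rho)$ to $B(c_\alpha,3\rho)$, which the paper does using $B(Z(\alpha),R(\alpha))\subseteq B(c_\alpha,\rho)$) is an ingredient you gesture at (``bounded neighbourhood'') but do not supply.

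Second, the constant in \textbf{(ii)} is not just a calibration afterthought; the margin is thin and your route loses it. The paper's bound $\|x-y\|\leq\frac{4}{\sqrt3}\rho\,\eta$ comes from the right angle at $y'=\pi_{\Aff{\alpha}}(y)$: since $y-x$ is normal to $\M$ at $m$ and $x\in\Aff{\alpha}$, one has simultaneously $\|y'-x\|=\|x-y\|\sin\theta$ (which you use) and $\|x-y\|\cos\theta=\|y-y'\|\leq\|y-u\|\leq 2\rho\,\eta$ with $u\in\Conv{(\alpha\cap\beta)}$ (which you do not); with $\cos\theta\geq\cos\frac{\pi}{6}$ this gives the factor $\frac{8}{\sqrt3}\approx 4.62<5$ that makes the coefficient $10$ in the hypothesis yield exactly the stated $\tfrac12(\pro^2+\pro\sep)\eta$. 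Your alternative --- an (unproven, though provable) bi-Lipschitz bound for $\restr{\pi_\M}{\Conv{\alpha}}$ plus the triangle inequality through $u$ --- gives at best roughly $\|x-y\|\lesssim 4.7\rho\,\eta$, about twice the paper's constant; the error term then comes out near $9\rho\,\eta\,\Theta(\varepsilon+\rho\Theta)$, and the hypothesis $10\rho\Theta(\varepsilon+\rho\Theta)<\pro^2+\pro\sep$ leaves only a small positive fraction of $(\pro^2+\pro\sep)\eta$, not one half. So as sketched the conclusion with the factor $\tfrac12$ does not follow; replace the bi-Lipschitz detour by the orthogonality relation above (and treat separately the easy cases $\alpha\cap\beta=\emptyset$, where $\eta=1$ and $\|x-y\|\leq\rho/2$, and $y\in\Conv{(\alpha\cap\beta)}$, where $x=y$), exactly as in the paper's Lemma~\ref{lemma:advanced-bound-on-power}.
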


\begin{figure}[htb]
  \centering
\begingroup%
  \makeatletter%
  \providecommand\color[2][]{%
    \errmessage{(Inkscape) Color is used for the text in Inkscape, but the package 'color.sty' is not loaded}%
    \renewcommand\color[2][]{}%
  }%
  \providecommand\transparent[1]{%
    \errmessage{(Inkscape) Transparency is used (non-zero) for the text in Inkscape, but the package 'transparent.sty' is not loaded}%
    \renewcommand\transparent[1]{}%
  }%
  \providecommand\rotatebox[2]{#2}%
  \newcommand*\fsize{\dimexpr\f@size pt\relax}%
  \newcommand*\lineheight[1]{\fontsize{\fsize}{#1\fsize}\selectfont}%
  \ifx\svgwidth\undefined%
    \setlength{\unitlength}{205.4530731bp}%
    \ifx\svgscale\undefined%
      \relax%
    \else%
      \setlength{\unitlength}{\unitlength * \real{\svgscale}}%
    \fi%
  \else%
    \setlength{\unitlength}{\svgwidth}%
  \fi%
  \global\let\svgwidth\undefined%
  \global\let\svgscale\undefined%
  \makeatother%
  \begin{picture}(1,0.60454553)%
    \lineheight{1}%
    \setlength\tabcolsep{0pt}%
    \put(0,0){\includegraphics[width=\unitlength,page=1]{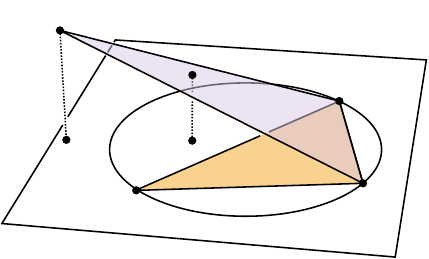}}%
    \put(0.13652484,0.55949999){\makebox(0,0)[lt]{\lineheight{1.25}\smash{\begin{tabular}[t]{l}$b_2 \in \beta \setminus \alpha$\end{tabular}}}}%
    \put(0.77718267,0.40160682){\makebox(0,0)[lt]{\lineheight{1.25}\smash{\begin{tabular}[t]{l}$b_0 \in \alpha \cap \beta$\end{tabular}}}}%
    \put(0.03210333,0.09345754){\makebox(0,0)[lt]{\lineheight{1.25}\smash{\begin{tabular}[t]{l}$\Aff \alpha$\end{tabular}}}}%
    \put(0.53845872,0.19976665){\makebox(0,0)[lt]{\lineheight{1.25}\smash{\begin{tabular}[t]{l}$\alpha$\end{tabular}}}}%
    \put(0.40149788,0.42815431){\makebox(0,0)[lt]{\lineheight{1.25}\smash{\begin{tabular}[t]{l}$y$\end{tabular}}}}%
    \put(0.57974912,0.35161283){\makebox(0,0)[lt]{\lineheight{1.25}\smash{\begin{tabular}[t]{l}$\beta$\end{tabular}}}}%
    \put(0.86166906,0.13724405){\makebox(0,0)[lt]{\lineheight{1.25}\smash{\begin{tabular}[t]{l}$b_1 \in \alpha \cap \beta$\end{tabular}}}}%
    \put(0.25837013,0.14430126){\makebox(0,0)[lt]{\lineheight{1.25}\smash{\begin{tabular}[t]{l}$a$\end{tabular}}}}%
  \end{picture}%
\endgroup%

  \caption{Notation for the proof of Lemma \ref{lemma:basic-bound-on-power}. \label{figure:basic-bound-on-power}}
\end{figure}

\begin{lemma}
  \label{lemma:basic-bound-on-power}
  Let $\alpha$ and $\beta$ be two non-degenerate abstract
  $d$-simplices in $\Rspace^N$ such that $\alpha \in \Del{\pi_{\Aff
      \alpha}(\alpha \cup \beta)}$. Let $\pro = \protection{\alpha}{\beta}$.
  Then for every $y \in \Conv \beta$, we have
  \[
  \Power{\beta}{y} \leq \Power{\alpha}{\pi_{\Aff\alpha}(y)} - (\pro^2 + 2 \pro R(\alpha)) \sum_{b \in \beta \setminus \alpha} \BaryCoord y \beta b.
  \]
\end{lemma}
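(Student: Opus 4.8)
The plan is to estimate $\Power{\beta}{y}$ using the characterization of Delaunay simplices through power distances, after projecting everything into $\Aff\alpha$. First I would set $\alpha' = \pi_{\Aff\alpha}(\alpha) = \alpha$ and write $y' = \pi_{\Aff\alpha}(y)$. The key algebraic identity is the one supplied by Lemma~\ref{lemma:power-expression-with-additional-point}: for any reference point $z$,
\[
\Power{\beta}{y} = \|y - z\|^2 - \sum_{b \in \beta} \BaryCoord{y}{\beta}{b} \|b - z\|^2 .
\]
I would split the Pythagorean decomposition of $\|y-z\|^2$ and $\|b-z\|^2$ with respect to the affine projection onto $\Aff\alpha$, choosing $z \in \Aff\alpha$ (for instance $z = Z(\alpha)$ or a vertex of $\alpha$), so that the components orthogonal to $\Aff\alpha$ contribute a term of the form $\|y - y'\|^2 - \sum_b \lambda_b^\beta \|b - b'\|^2$ where $b' = \pi_{\Aff\alpha}(b)$. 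Since $y' = \sum_b \lambda_b^\beta b'$, convexity of $t \mapsto t^2$ (or a direct expansion) gives $\|y - y'\|^2 \le \sum_b \lambda_b^\beta \|b - b'\|^2$ only with the wrong sign; instead I expect the correct route is to keep this orthogonal term as a controlled quantity and focus the main estimate on the in-plane part.

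The heart of the argument is the in-plane comparison: $\Power{\alpha}{y'} - \Power{\beta'}{y'}$ where $\beta' = \pi_{\Aff\alpha}(\beta)$. Since $\alpha \in \Del{\pi_{\Aff\alpha}(\alpha \cup \beta)}$, every vertex $b' = \pi_{\Aff\alpha}(b)$ with $b \in \beta \setminus \alpha$ lies outside the open circumball of $\alpha$, i.e.\ $\Power{\alpha}{b'} \ge 0$; moreover, by the definition of $\pro = \protection{\alpha}{\beta} = \min_{b} d(b', S(\alpha))$, we get the sharper bound $\|b' - Z(\alpha)\| \ge R(\alpha) + \pro$, hence $\Power{\alpha}{b'} = \|b'-Z(\alpha)\|^2 - R(\alpha)^2 \ge 2\pro R(\alpha) + \pro^2$. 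Now apply Lemma~\ref{lemma:power-expression-with-additional-point} with $\alpha$ as the simplex, $z = Z(\alpha)$, and $x = y'$, but expressing $y'$ via its barycentric coordinates with respect to $\beta'$: this yields
\[
\Power{\alpha}{y'} = \|y' - Z(\alpha)\|^2 - R(\alpha)^2 ,
\]
and separately, writing $\|y' - Z(\alpha)\|^2 = \|\sum_{b} \lambda_b^\beta (b' - Z(\alpha))\|^2$ and comparing with the affine combination $\sum_b \lambda_b^\beta \|b' - Z(\alpha)\|^2$ through the variance/Jensen identity, one extracts exactly a term $\sum_b \lambda_b^\beta \Power{\alpha}{b'} \ge (2\pro R(\alpha) + \pro^2)\sum_{b \in \beta \setminus \alpha}\lambda_b^\beta$ on the right-hand side, where the vertices $b \in \alpha$ contribute nothing since $\Power{\alpha}{a} = 0$. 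Assembling, I expect
\[
\Power{\alpha}{y'} \ge \Power{\beta'}{y'} + (2\pro R(\alpha) + \pro^2) \sum_{b \in \beta \setminus \alpha} \BaryCoord{y}{\beta}{b},
\]
and then relate $\Power{\beta'}{y'}$ back to $\Power{\beta}{y}$ — these differ precisely by the squared orthogonal displacement terms, which I expect to cancel or combine favorably because projection can only shrink the relevant distances in the right direction.

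The main obstacle will be the bookkeeping of the orthogonal (out-of-plane) components when passing between $\Power{\beta}{y}$ and $\Power{\beta'}{y'}$: one must verify that the difference has the correct sign, i.e.\ that $\Power{\beta}{y} \le \Power{\beta'}{y'}$, so that the in-plane inequality transfers. This amounts to checking $\|y - z\|^2 - \sum_b \lambda_b^\beta \|b-z\|^2 \le \|y'-z\|^2 - \sum_b \lambda_b^\beta \|b'-z\|^2$ for $z \in \Aff\alpha$, which after the Pythagorean split reduces to $-\|y - y'\|^2 \le -\sum_b \lambda_b^\beta \|b - b'\|^2$, i.e.\ $\sum_b \lambda_b^\beta \|b-b'\|^2 \le \|y-y'\|^2$ — but this is false in general, so the genuinely careful step is to instead carry the orthogonal term $\sum_b \lambda_b^\beta \|b-b'\|^2$ as a nonnegative quantity added to the right-hand side and observe it only helps (it increases the gap), since the statement we want is a lower bound on $-\Power{\beta}{y} + \Power{\alpha}{\pi_{\Aff\alpha}(y)}$. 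Concretely: $-\Power{\beta}{y} = -\Power{\beta'}{y'} + \bigl(\text{orthogonal terms} \ge 0 \text{ contribution}\bigr)$, so discarding the nonnegative orthogonal contribution gives $-\Power{\beta}{y} \ge -\Power{\beta'}{y'}$, which is the direction we need. Once that sign is pinned down, the lemma follows by combining with the in-plane estimate above.
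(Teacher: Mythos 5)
Your proposal is correct and follows essentially the same route as the paper's proof: project onto $\Aff\alpha$, use the Delaunay hypothesis together with the protection bound to get $\|\pi_{\Aff\alpha}(b)-Z(\alpha)\|\geq R(\alpha)+\pro$ for $b\in\beta\setminus\alpha$ (and equality to $R(\alpha)$ for $b\in\alpha\cap\beta$), sum against the barycentric coordinates $\BaryCoord{y}{\beta}{b}$ using Lemma~\ref{lemma:power-expression-with-additional-point} with $z=Z(\alpha)$, and then transfer the in-plane estimate from $\Power{\beta'}{y'}$ to $\Power{\beta}{y}$, where $y'=\pi_{\Aff\alpha}(y)$, $b'=\pi_{\Aff\alpha}(b)$, $\beta'=\pi_{\Aff\alpha}(\beta)$.

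One remark on your last paragraph: the reduction you declare ``false in general'' is a sign slip, not a real obstacle. With $z\in\Aff\alpha$, the Pythagorean split shows that $\Power{\beta}{y}\leq\Power{\beta'}{y'}$ is equivalent to $\|y-y'\|^2\leq\sum_{b\in\beta}\BaryCoord{y}{\beta}{b}\,\|b-b'\|^2$, which is true by convexity since $y-y'=\sum_{b\in\beta}\BaryCoord{y}{\beta}{b}\,(b-b')$; equivalently (this is how the paper phrases it), $-\Power{\beta}{y}=\sum_{b}\BaryCoord{y}{\beta}{b}\,\|b-y\|^2\geq\sum_{b}\BaryCoord{y}{\beta}{b}\,\|b'-y'\|^2=-\Power{\beta'}{y'}$ because orthogonal projection is $1$-Lipschitz, the per-term nonnegative surplus being $\|(b-b')-(y-y')\|^2$. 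So your closing identity, ``$-\Power{\beta}{y}=-\Power{\beta'}{y'}$ plus a nonnegative orthogonal contribution,'' is exactly the correct statement and completes the proof; the earlier ``false in general'' claim contradicts it only because the inequality in that intermediate reduction was inverted.
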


\begin{proof}
  See Figure \ref{figure:basic-bound-on-power}. Let $Z(\alpha)$ be the
  radius of the $(d-1)$-dimensional circumsphere of $\alpha$. Clearly,
  $\|a-Z(\alpha)\| = R(\alpha)$ for all $a \in \alpha$. Since $\alpha \in \Del{\pi_{\Aff \alpha}(\alpha \cup \beta)}$
  and $\pro = \protection{\alpha}{\beta}$, we get:
  \begin{align*}
    (R(\alpha) + \pro)^2 &\leq \| \pi_{\Aff\alpha}(b) - Z(\alpha)\|^2,
     \qquad \text{for all $b \in \beta\setminus \alpha$},\\
    R(\alpha)^2 &= \|\pi_{\Aff\alpha}(b)-Z(\alpha)\|^2, \qquad \text{for all $b \in \beta \cap \alpha$}.      
  \end{align*}
  Let $\mu_b = \BaryCoord y \beta b$ and note that $\mu_b \geq 0$. Multiplying both sides of each equation above by $\mu_b$ and summing over all $b \in \beta$, we obtain:
  \begin{equation}
  \label{eq:proof-bounding-power}
  R(\alpha)^2 +  (\pro^2 + 2\pro R(\alpha))\sum_{b \in \beta \setminus \alpha}\mu_b
  \leq  \sum_{b \in \beta} \mu_b \| \pi_{\Aff\alpha}(b) - Z(\alpha)\|^2.    
  \end{equation}
  For short, write $y' = \pi_{\Aff\alpha}(y)$ and $\beta' =
  \pi_{\Aff\alpha}(\beta)$.  Noting that $y' = \sum_{b \in \beta}
  \mu_b b'$ and applying Lemma
  \ref{lemma:power-expression-with-additional-point} with $z = Z(\alpha)$, we get that
  \[
  \Power{\beta'}{y'} = \| y' - Z(\alpha) \|^2 - \sum_{b \in \beta} \mu_b \| \pi_{\Aff \alpha}(b) - Z(\alpha) \|^2.
  \]
  Substracting $\|y'-Z(\alpha)\|^2$ from both sides of (\ref{eq:proof-bounding-power}) and using the
  above expression, we obtain
  \begin{align*}
    -\Power{\alpha}{y'} +  (\pro^2 + 2 \pro R(\alpha))\sum_{b \in \beta \setminus \alpha}\mu_b
    &\leq  - \Power{\beta'}{y'}.
  \end{align*}
  Applying Lemma \ref{lemma:power-expression-with-additional-point} again,  with $Z = y'$ and $Z=y$ respectively, we get that:
  \[
    - \Power{\beta'}{y'} = \sum_{b \in \beta} \mu_b \| \pi_{\Aff \alpha}(b) - \pi_{\Aff \alpha}(y) \|^2
    \leq  \sum_{b \in \beta} \mu_b \| b - y \|^2 = - \Power{\beta}{y},
  \]
  which concludes the proof.
\end{proof}

\begin{lemma}
  \label{lemma:advanced-bound-on-power}
  Let $\alpha$ and $\beta$ be two non-degenerate abstract
  $d$-simplices in $\Rspace^\Dim$ such that $\alpha \in \Del{\pi_{\Aff
      \alpha}(\alpha \cup \beta)}$. Let $\pro =
  \protection{\alpha}{\beta}$ and $\Theta =
  \AngularDeviation{\alpha}$.  Suppose that both $\Conv\alpha$ and
  $\Conv\beta$ are contained in the
  $\left(\frac{\rho}{4}\right)$-tubular neighborhood of \M with $\rho
  < 4 \reach$. Suppose furthermore that $\beta$ is $\rho$-small. If
  $\Theta \leq \frac{\pi}{6}$ and
  \[
  5\rho \sin(\Theta) \left( 2 R(\alpha) + \frac{\rho}{2} \sin(\Theta) \right)
    < \pro^2 + 2 \pro R(\alpha),
  \]
  then for every $x \in \Conv\alpha$ and every
  $y \in \Conv\beta$ with $\pi_\M(x) = \pi_\M(y)$, we have
  \[
  \Power{\beta}{y} \leq \Power{\alpha}{x} - \frac{1}{2} (\pro^2 + 2 \pro R(\alpha)) \sum_{b \in \beta \setminus \alpha} \BaryCoord y \beta b.
  \]
\end{lemma}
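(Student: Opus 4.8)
\noindent\emph{Proof strategy.} Write $W = \pro^2 + 2\pro R(\alpha)$, let $m = \pi_\M(x) = \pi_\M(y)$, set $y' = \pi_{\Aff\alpha}(y)$ for the orthogonal projection onto $\Aff\alpha$, and put $S = \sum_{b\in\beta\setminus\alpha}\BaryCoord{y}{\beta}{b} \in [0,1]$. Lemma~\ref{lemma:basic-bound-on-power} already gives $\Power{\beta}{y} \le \Power{\alpha}{y'} - WS$, so the whole statement reduces to proving $\Power{\alpha}{y'} - \Power{\alpha}{x} \le \tfrac12 W S$. Since the circumcenter $Z(\alpha)$ lies in $\Aff\alpha$ and $\Conv\alpha \subseteq B\bigl(Z(\alpha),R(\alpha)\bigr)$, expanding $\Power{\alpha}{z} = \|z - Z(\alpha)\|^2 - R(\alpha)^2$ gives the elementary identity and bound
\[
\Power{\alpha}{y'} - \Power{\alpha}{x} \;=\; 2\,(x - Z(\alpha))\cdot(y' - x) + \|y' - x\|^2 \;\le\; 2R(\alpha)\,\|y' - x\| + \|y' - x\|^2 ,
\]
so everything hinges on a sharp enough bound on $\|y' - x\|$.

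The geometric heart of the proof is the following chain of estimates. Because $x$ and $y$ have the same nearest point $m$ on $\M$, the vector $y - x$ is orthogonal to $T_m\M$; and because $m \in \pi_\M(\Conv\alpha)$, we have $\angle(\Aff\alpha, T_m\M) \le \Theta$, i.e.\ every unit vector parallel to $\Aff\alpha$ makes an angle at most $\Theta$ with $T_m\M$. Writing $y' - x$ as the orthogonal projection of $y - x$ onto the direction space of $\Aff\alpha$, a short computation with these two facts (the normal component of any vector parallel to $\Aff\alpha$ has length at most $\sin\Theta$ times its own length) yields $\|y' - x\| \le \sin\Theta\,\|y - x\|$, and combining this with the Pythagorean identity $\|y - x\|^2 = \|y - y'\|^2 + \|y' - x\|^2$ upgrades it to $\|y' - x\| \le \tan\Theta \cdot d(y, \Aff\alpha)$. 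I would then bound $d(y,\Aff\alpha)$ in two complementary ways: first, crudely, $d(y,\Aff\alpha) \le \|y - x\| \le d(x,\M) + d(y,\M) \le \rho/2$ (using $\Conv\alpha,\Conv\beta\subseteq\Offset{\M}{\rho/4}$), giving $\|y'-x\| \le \tfrac\rho2\sin\Theta$; second, by convexity of $d(\cdot,\Aff\alpha)$ together with $d(a,\Aff\alpha)=0$ for $a\in\alpha$, $d(y,\Aff\alpha) \le \sum_{b\in\beta\setminus\alpha}\BaryCoord{y}{\beta}{b}\,d(b,\Aff\alpha) \le 2\rho\,S$, because each $d(b,\Aff\alpha)\le 2\rho$ for $b\in\beta$ — bound it by the distance from $b$ to a common vertex of $\alpha$ and $\beta$, which is $\le 2\rho$ since $\beta$ is $\rho$-small (the alternative $\alpha\cap\beta=\emptyset$ forces $S=1$ and is already covered by the crude bound). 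Hence $\|y'-x\| \le \tfrac\rho2\sin\Theta$ and $\|y'-x\| \le 2\rho S\tan\Theta$.

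Feeding these into the elementary bound — the second estimate for the linear term $2R(\alpha)\|y'-x\|$, and the product $\|y'-x\|^2 \le (2\rho S\tan\Theta)(\tfrac\rho2\sin\Theta)$ for the square — produces
\[
\Power{\alpha}{y'} - \Power{\alpha}{x} \;\le\; \frac{S\sin\Theta}{\cos\Theta}\bigl(4\rho R(\alpha) + \rho^2\sin\Theta\bigr).
\]
Since $\Theta\le\pi/6$ forces $\cos\Theta \ge \tfrac{\sqrt3}{2} > \tfrac45$, the right-hand side is at most $S\bigl(5\rho\sin\Theta\,R(\alpha) + \tfrac54\rho^2\sin^2\Theta\bigr) = \tfrac12\, S\cdot 5\rho\sin\Theta\bigl(2R(\alpha)+\tfrac\rho2\sin\Theta\bigr)$, which is strictly less than $\tfrac12 W S$ by the standing hypothesis $5\rho\sin\Theta(2R(\alpha)+\tfrac\rho2\sin\Theta) < W$. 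Together with Lemma~\ref{lemma:basic-bound-on-power} this gives $\Power{\beta}{y} \le \Power{\alpha}{x} - \tfrac12 W S$, the desired inequality.

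The algebraic expansions are routine; the genuinely delicate step is the estimate on $\|y'-x\|$, where one must simultaneously exploit that $y-x$ is $\M$-normal while $\Aff\alpha$ is almost tangent, extract the factor $S$ (necessary because the gain furnished by Lemma~\ref{lemma:basic-bound-on-power} is only $WS$), and keep every numerical constant small enough to sit under $5\rho\sin\Theta(2R(\alpha)+\tfrac\rho2\sin\Theta) < W$ — which is exactly why both bounds on $d(y,\Aff\alpha)$ are used, one for each of the two factors of $\|y'-x\|$.
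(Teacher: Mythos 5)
Your proposal is correct and follows essentially the same route as the paper's proof: reduce to comparing $\Power{\alpha}{y'}$ with $\Power{\alpha}{x}$ via Lemma~\ref{lemma:basic-bound-on-power}, expand the power difference to get $2R(\alpha)\|y'-x\|+\|y'-x\|^2$, control $\|y'-x\|$ by $\sin\Theta\,\|y-x\|$ using that $y-x$ is normal to $\M$ at $m$, and extract the factor $S$ from $d(y,\Aff\alpha)\leq \Diam\beta\cdot S$ (the paper phrases this as the lower bound $S\geq \|y-y'\|/\|v-u\|$ via a decomposition $y=(1-S)u+Sv$, which is the same estimate, and even lands on the same constant $8/\sqrt{3}<5$). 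The only cosmetic difference is that you fold the case $y\in\Conv{(\alpha\cap\beta)}$ (i.e.\ $S=0$) into the uniform bound $\|y'-x\|\leq 2\rho S\tan\Theta$, whereas the paper treats it separately by showing $x=y$ there.
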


\begin{proof}
  Consider a point $x \in \Conv\alpha$ and a point $y \in \Conv\beta$
  with $\pi_\M(x) = \pi_\M(y)$. We distinguish two cases depending on whether $y$ belongs to $\Conv{(\alpha \cap \beta)}$ or not.

  \medskip \noindent First, assume that $y \in \Conv{(\alpha \cap
    \beta)}$. In that case, we claim that the only possibility is that
  $x = y$. Indeed, assume for a contradiction that this is not the
  case. Then, we would have two distinct points $x \neq y$ of
  $\Conv\alpha$ that share the same projection onto $\M$, showing that
  $\angle(\Aff\alpha,\Tangent {\pi_\M(x)} \M) = \frac{\pi}{2}$ for
  some $x \in \Conv\alpha$ and contradicting our assumption that
  $\Theta < \frac{\pi}{6}$. Hence, $x = y \in \Conv{(\alpha \cap
    \beta)}$. We claim that furthermore $\Power \alpha x = \Power
  \beta y$. Indeed,
  Lemma~\ref{lemma:power-expression-with-additional-point} implies
  that when $x$ is an affine combination of points in $\alpha$, that
  is, when $x = \sum_{a\in\alpha} \lambda_a a$ with $\sum_a \lambda_a
  = 1$, then $\Power{\alpha}{x} = - \sum_{a\in\alpha} \lambda_{a}
  \|x-a\|^2$. In particular, if $x$ belongs to the convex hull of a
  face of $\alpha$, the expression of the power distance depends only
  upon the vertices of that face. It follows that
  \[
  \Power \alpha x = \Power {\alpha \cap \beta} x = \Power {\alpha\cap\beta} y = \Power \beta y. 
  \]
  Since $y \in \Conv{(\alpha \cap \beta)}$, we have $\sum_{b \in \beta\setminus\alpha} \BaryCoord y \beta b = 0$ and combining this with the above equality, we get the desired inequality.

  \begin{figure}[htb]
  \centering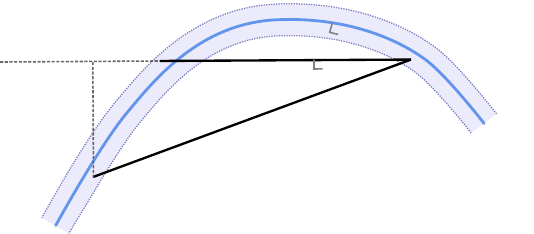
  \caption{Right: Notation for the proof of Lemma
  \ref{lemma:advanced-bound-on-power}. We know by Lemma \ref{lemma:small-tubular-neigborhood} that if $\alpha,\beta \subseteq \Offset \M \delta$ with $\delta \leq \frac{\rho}{16}$, then $\Conv \alpha, \Conv \beta \subseteq \Offset \M {\frac{\rho}{4}}$ as assumed in Lemma \ref{lemma:advanced-bound-on-power}. \label{figure:advanced-bound-on-power}}
  \end{figure}

  \medskip \noindent Second, assume that $y \in \Conv\beta \setminus
  \Conv{(\alpha\cap\beta)}$; see Figure
  \ref{figure:advanced-bound-on-power}. Write $\mu_b =
  \BaryCoord y \beta b$ and note that $\mu_b \geq 0$. Letting $y' =
  \pi_{\Aff\alpha}(y)$, we know by Lemma~\ref{lemma:basic-bound-on-power} that:
  \begin{equation}
    \label{eq:compare-powers-1}
    \Power{\beta}{y} \leq \Power{\alpha}{y'} -  (\pro^2 + 2 \pro R(\alpha))\sum_{b \in \beta \setminus \alpha} \mu_b.
  \end{equation}
    Because $y \not \in \Conv{(\alpha \cap \beta)}$, we have $\sum_{b
      \in \alpha \cap \beta} \mu_b \neq 1$ and therefore $\sum_{b \in
      \beta \setminus \alpha} \mu_b \neq 0$. First, suppose that
    $y=x$. In that case, $y'=x$ and the result follows
    immediately. Second, suppose that $y \neq x$. We claim that in
    that case we also have $y \neq y'$. Indeed, if we were to have
    that $y = y'$, then both $x$ and $y$ would belong to $\Aff\alpha$
    and since $\pi_\M(x) = \pi_\M(y)$, this would mean that
    $\angle(\Aff\alpha,\Tangent {\pi_\M(x)} \M) = \frac{\pi}{2}$ for
    $x \in \Conv\alpha$, contradicting our assumption that
    $\Theta < \frac{\pi}{6}$. Thus, $x \neq y$ and $y \neq
    y'$, and we can define the angle $\theta = \angle
    xyy'$. Noting that $\theta
    \leq \angle(\Aff\alpha,\Tangent{\pi_\M(x)}{\M}) \leq \Theta$
    and $\|x-y'\| = \|x-y\| \sin\theta$, and recalling that $Z(\alpha)$ is
    the radius of the $(d-1)$-dimensional circumsphere of $\alpha$, we
    have:
  \begin{align}
    \Power{\alpha}{y'} - \Power{\alpha}{x} &= \| y' - Z(\alpha) \|^2 - \| x - Z(\alpha) \|^2 \nonumber
    = \DotProd{(y' - x)}{(x+y'-2Z(\alpha))} \nonumber \\
    &\leq \|x - y'\| \cdot \left( \| x - Z(\alpha) \| + \| y' - Z(\alpha) \| \right) \nonumber \\
    &\leq \|x - y'\| \cdot \left( 2 \| x - Z(\alpha) \| + \| x - y' \| \right) \nonumber \\
    &\leq \|x - y\| \sin(\theta) \left( 2 R(\alpha) + \|x - y\| \sin(\theta) \right) \label{eq:compare-powers-2}.
  \end{align}
  Writing $m = \pi_\M(x) = \pi_\M(y)$, we have $\|x-y\| \leq \|x-m\| +
  \|m-y\| \leq \frac{\rho}{2}$. Summing up Inequalities (\ref{eq:compare-powers-1}) and (\ref{eq:compare-powers-2}), we get
  \begin{equation*}
    \Power{\beta}{y} - \Power{\alpha}{x} \leq
    - \underbrace{(\pro^2 + 2 \pro R(\alpha)) \sum_{b \in \beta \setminus \alpha} \mu_b}_{A}
    + \underbrace{\|x - y\| \sin(\Theta) \left( 2 R(\alpha) + \frac{\rho}{2} \sin(\Theta) \right)}_{B}.
  \end{equation*}
  To establish the lemma in the second case, it suffices to show that $2B < A$, that is, 
  \begin{equation}
    \label{eq:raw-condition-for-bounding-power}
  2\|x - y\| \sin \Theta \cdot \left( 2 R(\alpha) + \frac{\rho}{2} \sin \Theta \right)
  < (\pro^2 + 2 \pro R(\alpha)) \sum_{b \in \beta \setminus \alpha} \mu_b.    
  \end{equation}
  We consider two subcases:

  \smallskip \noindent \underline{Subcase 1: $\alpha \cap \beta =
    \emptyset$.}  In that case, $\sum_{b \in \beta \setminus \alpha}
  \mu_b = 1$, and because $2\|x-y\|  \leq 4\rho \leq 5 \rho$, one can see that
  (\ref{eq:raw-condition-for-bounding-power}) follows from our assumptions.
  
    \smallskip \noindent \underline{Subcase 2: $\alpha \cap \beta \neq
      \emptyset$.} In that case, we know that there exists a point $u
    \in \Conv{(\beta \cap \alpha)}$ and a point $v \in
    \Conv{(\beta\setminus\alpha)}$ such that $y = \sum_{b \in \beta \cap
      \alpha} \mu_b u + \sum_{b \in \beta \setminus \alpha} \mu_b
    v$; see Figure \ref{figure:advanced-bound-on-power}. Furthermore, letting $v' = \pi_{\Aff\alpha}(v)$ we have
    \[
    \sum_{b \in \beta \setminus \alpha} \mu_b = \frac{\|y-u\|}{\|v-u\|} \geq \frac{\|y-y'\|}{\|v-u\|} \geq \frac{\|x - y\|\cos\theta}{\Diam\beta}
    \geq \frac{\|x-y\| \cos \theta}{2\rho} \geq \frac{\sqrt{3}}{4\rho} \cdot \|x-y\|.
    \]
    Again,  (\ref{eq:raw-condition-for-bounding-power}) follows from our assumptions.
\end{proof}

The next lemma says that if a subset $\sigma \subseteq \Rspace^\Dim$
is sufficiently small and sufficiently close to a subset $A
\subseteq \Rspace^\Dim$ compare to the reach of $A$, then the convex
hull of $\sigma$ is not too far away from $A$.

\begin{lemma}
  \label{lemma:small-tubular-neigborhood}
  Let $16\delta \leq \rho \leq \frac{\Reach A}{3}$. If the subset $\sigma
  \subseteq \Offset A \delta$ is $\rho$-small, then $\Conv \sigma
  \subseteq \Offset A {\frac{\rho}{4}}$.
\end{lemma}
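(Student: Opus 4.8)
The plan is to bound $d(x,A)$ for an arbitrary point $x\in\Conv\sigma$ by rolling a ball of radius $R:=3\rho$ on $A$; this is legitimate because $R\le\Reach A$ by hypothesis. I would first fix a closed ball $B(c,\rho)\supseteq\sigma$ (it exists since $\sigma$ is $\rho$-small) and note $\Conv\sigma\subseteq B(c,\rho)$ as well, balls being convex. Fixing $x\in\Conv\sigma$ and writing it, via Carathéodory, as a convex combination $x=\sum_i\lambda_i p_i$ of points $p_i\in\sigma$, the crude bound $d(x,A)\le\|x-p_i\|+\delta\le 2\rho+\delta<R$ shows $x$ lies in the domain of $\pi_A$. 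Set $m:=\pi_A(x)$ and $t:=d(x,A)$; we may assume $t>0$ (else $x\in\Offset A{\rho/4}$ trivially), so $u:=(x-m)/t$ is a unit vector and $c^\ast:=m+Ru$ satisfies $\|x-c^\ast\|=R-t$.

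I would then invoke the reach property (Federer \cite{federer-59}): applied at $x$ with radius $R\le\Reach A$, since $0<d(x,A)<R$ it gives that the open ball $B(c^\ast,R)$ is disjoint from $A$. Consequently every point $p$ of $\Offset A\delta$ satisfies $\|p-c^\ast\|\ge\|\pi_A(p)-c^\ast\|-\|p-\pi_A(p)\|\ge R-\delta$; in particular $\|p_i-c^\ast\|\ge R-\delta$ for all $i$, whereas $\|p_i-c\|\le\rho$ because $p_i\in B(c,\rho)$.

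The crux is the observation that $p\mapsto\|p-c^\ast\|^2-\|p-c\|^2$ is an \emph{affine} function of $p$ (the $\|p\|^2$ terms cancel), hence is preserved by the convex combination $x=\sum_i\lambda_i p_i$:
\[
\|x-c^\ast\|^2-\|x-c\|^2=\sum_i\lambda_i\bigl(\|p_i-c^\ast\|^2-\|p_i-c\|^2\bigr)\ge(R-\delta)^2-\rho^2.
\]
Discarding $\|x-c\|^2\ge0$ and using $\|x-c^\ast\|=R-t$ gives $(R-t)^2\ge(R-\delta)^2-\rho^2$. With $R=3\rho$ this right-hand side equals $8\rho^2-6\rho\delta+\delta^2\ge8\rho^2-6\rho\delta$, and since $16\delta\le\rho$ it is at least $\tfrac{61}{8}\rho^2>0$; therefore $R-t\ge\sqrt{61/8}\,\rho$, i.e. $t\le(3-\sqrt{61/8})\,\rho\le\rho/4$, the last inequality holding because $\sqrt{61/8}\ge 11/4$ (equivalently $\tfrac{61}{8}\ge\tfrac{121}{16}$). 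As $x\in\Conv\sigma$ was arbitrary, $\Conv\sigma\subseteq\Offset A{\rho/4}$.

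The delicate point is keeping the constant small enough. The naive route — bounding $\|p_i-x\|$ by the diameter $2\rho$ of the enclosing ball and then rolling the ball — only yields $d(x,A)\lesssim\tfrac23\rho$, which is far too weak. The gain comes from comparing the two ``power-like'' quantities $\|\cdot-c^\ast\|^2$ and $\|\cdot-c\|^2$ at the center $c$ of the small enclosing ball rather than at $x$; this is what the affine identity above exploits (it silently encodes that the weighted variance of the $p_i$ about their barycenter is $\le\rho^2$), and it is exactly the numerical slack $\tfrac{61}{8}\ge\tfrac{121}{16}$ provided by the constants $16$ and $3$ in the hypotheses that makes the bound land below $\rho/4$.
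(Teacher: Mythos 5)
Your proof is correct, and it takes a genuinely more self-contained route than the paper. The paper disposes of this lemma in two lines by citing Lemma~14 of \cite{socg10-convex}, which asserts $\Conv\sigma \subseteq \Offset{A}{r}$ with $r = \reach - \sqrt{(\reach-\delta)^2-\rho^2}$, and then checks the elementary inequality $1-\sqrt{(1-t/16)^2-t^2}\le t/4$ for $t\le 1/3$. What you have done is essentially re-prove (a version of) that external lemma from scratch: the empty tangent ball of radius $R$ at $\pi_A(x)$ guaranteed by Federer, combined with the observation that $p\mapsto\|p-c^\ast\|^2-\|p-c\|^2$ is affine and hence respects convex combinations, yields $(R-t)^2\ge(R-\delta)^2-\rho^2$, which is exactly the inequality behind the cited formula. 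The one difference is your choice $R=3\rho$ rather than $R=\Reach A$; this weakens the intermediate bound (you get $t\le 3\rho-\sqrt{(3\rho-\delta)^2-\rho^2}$ instead of the paper's $r$), but it is still amply sufficient — your numerics $\tfrac{61}{8}\ge\tfrac{121}{16}$ check out — and it has the minor virtue of only using the rolling ball at scale $3\rho\le\Reach A$ rather than at the full reach. All the side conditions are in order: $d(x,A)\le 2\rho+\delta<3\rho\le\Reach A$ puts $x$ in the domain of $\pi_A$ and guarantees $R-t>0$ before you take the square root. In short, your argument buys self-containedness at the cost of length; the paper's buys brevity at the cost of an external reference.
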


\begin{proof}
  Let $\reach = \Reach A$. Applying Lemma 14 in \cite{socg10-convex},
  we get that $\Conv \sigma \subseteq \Offset A r$ for $r = \reach -
  \sqrt{(\reach - \delta)^2 - \rho^2}$. Since $\delta \leq
  \frac{\rho}{16}$, we deduce that $\frac{r}{\reach} \leq 1 -
  \sqrt{\left(1 - \frac{\rho}{16\reach}\right)^2 -
    \frac{\rho}{\reach}^2}$ and since for all $0 \leq t \leq
  \frac{1}{3}$ we have $1 - \sqrt{\left(1-\frac{t}{16}\right)^2-t^2}
  \leq \frac{t}{4}$, we obtain the result.
\end{proof}

We are now ready to prove Lemma \ref{lemma:final-bound-on-power}.

\begin{proof}[Proof of Lemma \ref{lemma:final-bound-on-power}]
  Let $\alpha$ be a non-degenerate $\varepsilon$-small $d$-simplex of
  $\Delloc P \rho d$. Because $\alpha \in \Delloc P \rho d$, we have that
  $\alpha \in \Del{\pi_{\Aff \sigma}(P \cap B(c_\sigma,\rho))}$, and
  because $\alpha$ is $\varepsilon$-small, we have that
  $B(Z(\alpha),R(\alpha)) \subseteq B(c_\sigma,2\varepsilon) \subseteq
  B(c_\sigma,\rho)$ and consequently $\alpha \in \Del{\pi_{\Aff
      \sigma}(P \cap B(c_\sigma,3\rho))}$.

  Let $\beta$ be a non-degenerate $\rho$-small $d$-simplex of $P$. Assume that
  $\pi_\M(\alpha) \cap \pi_\M(\beta) \neq \emptyset$ and let us show
  that $\beta \subseteq P \cap B(c_\sigma,3\rho)$. Suppose that $x \in
  \Conv\alpha$ and $y \in \Conv\beta$ share the same projection $m$
  onto \M, that is, $m = \pi_\M(x) = \pi_\M(y)$.  Since both $\alpha$
  and $\beta$ are $\rho$-small,
  Lemma~\ref{lemma:small-tubular-neigborhood} implies that both $\Conv
  \alpha$ and $\Conv \beta$ are contained in the
  $\left(\frac{\rho}{4}\right)$-tubular neighborhood of $\M$ and in
  particular $\|x - y\| \leq \|x - m\| + \|m - y\| \leq \frac{\rho}{4}
  + \frac{\rho}{4} \leq \frac{\rho}{2}$. For all vertices $b \in
  \beta$, we thus have
  \[
  \|c_\alpha - b \| \leq \|c_\alpha - x\| + \| x - y \| +\|y - b\| \leq \varepsilon + \frac{\rho}{2} + 2\rho \leq 3\rho,
  \]
  showing that $\beta \subseteq P \cap B(c_\sigma,3\rho)$. Hence, we
  get that $\alpha \in \Del{\pi_{\Aff\sigma}(\alpha \cup \beta)}$ and
  can easily see that $\pro = \ProtectGlob P {3\rho} \leq \tilde\pro =
  \protection \alpha \beta$. Let $\tilde\Theta =
  \AngularDeviation{\alpha} \leq \Theta$. To apply Lemma
  \ref{lemma:advanced-bound-on-power}, we need to verify that
  \[
  5\rho \sin(\tilde\Theta) \left( 2 R(\alpha) + \frac{\rho}{2} \sin(\tilde\Theta) \right)
  < \tilde\pro^2 + 2 \tilde\pro R(\alpha).
  \]
  Since $\frac{\sepvar}{2} \leq R(\alpha) \leq \varepsilon$ and $\sin
  t \leq t$ for all $t \geq 0$, this follows from:
  \[
  10 \rho  \Theta \left( \varepsilon + \rho
  \Theta \right) < \pro^2 + \pro \sepvar,
  \]
  which is a consequence of our hypotheses.
\end{proof}

\subsection{Final}
\label{section:final}

Suppose that $K$ is a simplicial complex with vertex set $P$. Write $D
= \Delloc P \rho d$, $\D = \Shadow D$ and $\K = \Shadow{K}$ for short.
In this section, we prove our main theorem by applying
Lemma~\ref{lemma:technical}. This requires us to define two maps
$\varphi : \K \to \D$ and $f : \D \to \Rspace^+$, two weights
$W(\alpha)$ and $W_{\min}(\alpha)$ for each $d$-simplex $\alpha \in
K$, and to check that these maps and weights satisfy the requirements
of Lemma~\ref{lemma:technical}. For each $\alpha \in K$, let
$W(\alpha)=\omega(\alpha)$ be the Delaunay weight of $\alpha$. To be
able to define $\varphi$, $f$, and $W_{\min}$, we assume that the
following conditions are met:
\begin{itemize}
\item $D$ is a faithful reconstruction of \M;
\item For every $d$-simplex $\sigma \subseteq K$, the map
  $\restr{\pi_\M}{\Conv \sigma}$ is well-defined and injective.
\end{itemize}
These conditions are easily derived from the assumptions of the main
theorem.  We are now ready to introduce additional notation. Consider
a subset $X \subseteq \Rspace^\Dim$ and suppose that the map
$\restr{\pi_\M}{X}$ is well-defined and injective. Then it is possible
to define a bijective map $\pi_{X \to \M}: X \to \pi_\M(X)$. Because
$D$ is a faithful reconstruction of \M, the map $\pi_{\D\to\M}$ is
well-defined and bijective. Similarly, for every $d$-simplex $\sigma
\in K$, the map $\pi_{\Conv\sigma\to\M}$ is well-defined and
bijective. We now introduce the map $\varphi: \K \to \D$ defined by
$\varphi = [\pi_{\D \to \M}]^{-1} \circ \pi_\M$ and let $f: \D
\to \Rspace^+$ be the map defined by:
\begin{equation}
  \label{eq:g}
  f(x) = \min_{\sigma}  \left(- \Power{\sigma}{[\pi_{\Conv\sigma\to\M}]^{-1}\circ \pi_\M(x)}\right),
\end{equation}
where the minimum is taken over all $d$-simplices $\sigma \in K$ such
that $x \in \varphi(\Conv\sigma)$. Note that $f(x)$ can be defined
equivalently as the minimum of $- \Power{\beta}{y}$ over all
$d$-simplices $\beta \in K$ and all points $y \in \Conv \beta$ such
that $\pi_\M(x) = \pi_\M(y)$. Given a $d$-simplex $\sigma \in K$, we
associate to $\sigma$ the weight:
\begin{align}
W_{\min}(\sigma) &= \int_{x \in \varphi(\Conv\sigma)} f(x) \, dx.
\end{align}

\begin{lemma}
  \label{lemma:Wmin-alpha}
  Under the assumptions of Theorem \ref{theorem:reconstruction}:
  \begin{itemize}
  \item For every $d$-simplex $\alpha \in D$ and every point $x
    \in \Conv \alpha$, we have $f(x) = - \Power \alpha x$.
  \item For every $d$-simplex $\alpha \in D$, we have
    $W_{\min}(\alpha) = W(\alpha)$.
  \end{itemize}
\end{lemma}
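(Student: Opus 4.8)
The plan is to obtain both assertions from Lemma~\ref{lemma:final-bound-on-power}, after first recording the elementary observation that $\varphi$ restricts to the identity on $\D$. Indeed, $\pi_{\D\to\M}$ is, by construction, the corestriction of $\restr{\pi_\M}{\D}$ to its image, hence a bijection onto $\pi_\M(\D)$, so for every $x \in \D$ we have $\varphi(x) = [\pi_{\D\to\M}]^{-1}(\pi_\M(x)) = [\pi_{\D\to\M}]^{-1}(\pi_{\D\to\M}(x)) = x$. Since $D$ is a simplicial complex, every $d$-simplex $\alpha \in D$ satisfies $\Conv\alpha \subseteq \US D = \D$, and therefore $\varphi(\Conv\alpha) = \Conv\alpha$. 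This reduces $W_{\min}(\alpha) = \int_{\varphi(\Conv\alpha)} f$ to $\int_{\Conv\alpha} f(x)\,dx$, so the second bullet will follow from the first by integration.

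For the first bullet I would fix a $d$-simplex $\alpha \in D$ and a point $x \in \Conv\alpha$, and work with the equivalent description of $f$ recalled in the text: $f(x)$ is the minimum of $-\Power{\beta}{y}$ over all $d$-simplices $\beta \in K$ and all $y \in \Conv\beta$ with $\pi_\M(y) = \pi_\M(x)$. This is a minimum over a finite non-empty set, since by condition~(2) each $\beta$ contributes at most one such $y$, and since $\alpha \in D \subseteq K$ (Remark~\ref{remark:contained-in-Delaunay}) the pair $(\beta,y) = (\alpha,x)$ is admissible. Taking $\beta = \alpha$, $y = x$ gives $f(x) \le -\Power{\alpha}{x}$. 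For the reverse inequality I would apply Lemma~\ref{lemma:final-bound-on-power} to $\alpha$ and to an arbitrary admissible $\beta$ with its point $y$; it yields $-\Power{\beta}{y} + \Power{\alpha}{x} \ge \frac{1}{2}(\pro^2 + \pro\,\sep)\sum_{b \in \beta\setminus\alpha}\BaryCoord{y}{\beta}{b} \ge 0$, hence $-\Power{\beta}{y} \ge -\Power{\alpha}{x}$ and thus $f(x) \ge -\Power{\alpha}{x}$. Combining the two bounds gives $f(x) = -\Power{\alpha}{x}$.

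The only real work is to check that Lemma~\ref{lemma:final-bound-on-power} is applicable under the assumptions of Theorem~\ref{theorem:reconstruction}. Its numerical hypotheses $2\varepsilon \le \rho$, $16\delta \le \rho$ and $\rho \le \reach/3$ are immediate from $\delta \le \varepsilon$, $16\varepsilon \le \rho$ and $\rho < \reach/4$; $P \subseteq \Offset{\M}{\delta}$ is $\delta$-accuracy; $\Theta = \AngularDeviation{P,\rho} \le \pi/6$ is assumed; and $10\rho\,\Theta(\varepsilon + \rho\,\Theta) < \pro^2 + \pro\,\sep$ is exactly the first branch of the maximum in Condition~\eqref{eq:extra-safety-condition}. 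For the simplices: $\beta \in K \subseteq \Cech{P}{\rho}$ is $\rho$-small with vertices in $P$, hence non-degenerate since $\Height{P}{\rho} > 0$; and $\alpha \in \Delloc P \rho d$ is non-degenerate for the same reason, while Theorem~\ref{theorem:homeomorphism-from-sampling-conditions} gives $R(\alpha) \le \varepsilon$, so $\alpha$ is $\varepsilon$-small. With the first bullet established, the second follows at once: $W_{\min}(\alpha) = \int_{\Conv\alpha} f(x)\,dx = -\int_{\Conv\alpha}\Power{\alpha}{x}\,dx = \omega(\alpha) = W(\alpha)$, the last two equalities being the definitions of the Delaunay weight and of $W$.

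None of these steps is hard; the whole argument is essentially a dictionary look-up, with all genuine estimates already carried out in Section~\ref{section:ComparingPowerDistances}. The \emph{main obstacle}, such as it is, is bookkeeping: making sure that the simplices over which $f$ is minimized are exactly those to which Lemma~\ref{lemma:final-bound-on-power} applies, and matching each hypothesis of that lemma with the appropriate assumption of Theorem~\ref{theorem:reconstruction} --- in particular, remembering that the $\varepsilon$-smallness of $\alpha$ comes from Theorem~\ref{theorem:homeomorphism-from-sampling-conditions} and not from any hypothesis placed on $K$.
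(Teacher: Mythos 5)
Your proof is correct and follows essentially the same route as the paper: apply Lemma~\ref{lemma:final-bound-on-power} to get $-\Power{\beta}{y} \ge -\Power{\alpha}{x}$ for every admissible competitor, note that $\beta=\alpha$, $y=x$ is itself admissible so $f(x) = -\Power{\alpha}{x}$, and then use $\varphi\restrict{\Conv\alpha} = \Id$ to identify $W_{\min}(\alpha)$ with $\omega(\alpha) = W(\alpha)$. The only difference is that you spell out the verification of the hypotheses of Lemma~\ref{lemma:final-bound-on-power} (including the $\varepsilon$-smallness of $\alpha$ via Theorem~\ref{theorem:homeomorphism-from-sampling-conditions}), which the paper leaves implicit.
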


\begin{proof}
  Consider a $d$-simplex $\alpha \in D$, a $d$-simplex $\beta \in K$,
  $x \in \Conv\alpha$ and $y \in \Conv\beta$ such that $\pi_\M(x) =
  \pi_\M(y)$. Applying Lemma \ref{lemma:final-bound-on-power}, we obtain that $\Power
  \beta y \leq \Power \alpha x$ or equivalently
  $
  \Power \beta {[\pi_{\Conv\beta\to\M}]^{-1}  \circ \pi_\M(x)} \leq \Power \alpha x
  $
  and therefore $f(x) = - \Power \alpha x$.  To establish the second
  item of the lemma, notice that for all $\alpha \in D$, the
  restriction of $\varphi$ to $\Conv\alpha$ is the identity function,
  $\varphi_{| \Conv\alpha} = \Id$ and therefore $\varphi(\Conv\alpha)
  = \Conv{\alpha}$. Since we have just established that $f(x) = -
  \Power \alpha x$, we get that
\[
W_{\min}(\alpha) = \int_{x \in \varphi(\Conv\alpha)} f(x) \, dx = \int_{x \in \Conv \alpha} - \Power \alpha x \, dx =  \omega(\alpha) = W(\alpha),
\]
which concludes the proof.
\end{proof}

\begin{lemma}
  \label{lemma:Wmin-beta}
  Under the assumptions of Theorem \ref{theorem:reconstruction}, for
  every $d$-simplex $\beta \in K \setminus D$, we have
  $W_{\min}(\beta) < W(\beta)$.
\end{lemma}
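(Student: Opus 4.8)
The plan is to bound $W_{\min}(\beta)=\int_{\varphi(\Conv\beta)}f$ from above by estimating $f$ pointwise with Lemma~\ref{lemma:final-bound-on-power}, and then to pull the integral back onto $\Conv\beta$ while controlling the Jacobian distortion of $\pi_\M$. Fix $\beta\in K\setminus D$; since $K\subseteq\Cech P\rho$ and $\Height P\rho>0$, it is a non-degenerate $\rho$-small $d$-simplex. For almost every $z\in\varphi(\Conv\beta)\subseteq\D$ there is a unique $d$-simplex $\alpha(z)\in D^{[d]}$ whose relative interior contains $z$, so that $f(z)=-\Power{\alpha(z)}{z}$ by Lemma~\ref{lemma:Wmin-alpha}; there is also a unique $y(z)\in\Conv\beta$ with $\pi_\M(y(z))=\pi_\M(z)$, and $z\mapsto y(z)$ is the inverse of the (piecewise $C^1$) bijection $\varphi|_{\Conv\beta}\colon\Conv\beta\to\varphi(\Conv\beta)$. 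Applying Lemma~\ref{lemma:final-bound-on-power} to $\alpha(z)$, $\beta$, $x=z$ and $y=y(z)$ — all its hypotheses follow from those of Theorem~\ref{theorem:reconstruction}, the numerical one being the first alternative of Condition~\eqref{eq:extra-safety-condition} — yields, for a.e.\ $z$ and with $\pro=\Protection P{3\rho}$, $\sep=\Sep P$,
\[
f(z)\le -\Power{\beta}{y(z)}-\tfrac12(\pro^2+\pro\sep)\sum_{b\in\beta\setminus\alpha(z)}\BaryCoord{y(z)}{\beta}{b}.
\]

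Next I integrate this over $\varphi(\Conv\beta)$ and change variables to $y\in\Conv\beta$. The geometric input is that the Jacobian of $\pi_\M$ restricted to a $d$-plane making angle at most $\Theta$ with a tangent space of $\M$, at a point within distance $\tfrac{\rho}{4}$ of $\M$, lies in $\bigl[(\cos\Theta)^{\min\{d,N-d\}}(\tfrac{\reach}{\reach+\rho})^{d},\,(\tfrac{\reach}{\reach-\rho})^{d}\bigr]$; this follows from the standard formula for $\Differential\pi_\M$ on a $C^2$ submanifold of positive reach, together with $\Conv\alpha,\Conv\beta\subseteq\Offset\M{\rho/4}$ (Lemma~\ref{lemma:small-tubular-neigborhood}) and $\AngularDeviation{\alpha},\AngularDeviation{\beta}\le\Theta$, the latter because the $d$-simplices of $D$ are $\varepsilon$-small (Theorem~\ref{theorem:homeomorphism-from-sampling-conditions}) and $\beta$ is $\rho$-small. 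Since both $\Conv\beta$ and $\varphi(\Conv\beta)$ project onto $\pi_\M(\Conv\beta)$ through such restrictions, the Jacobian of $\varphi|_{\Conv\beta}$ stays between $\tfrac{1}{1+J}$ and $1+J$, with $J$ as in the theorem. Using $-\Power{\beta}{y}\ge 0$ on $\Conv\beta$ and $\int_{\Conv\beta}-\Power{\beta}{y}\,dy=\omega(\beta)=W(\beta)$, I get
\[
W_{\min}(\beta)\le (1+J)\,W(\beta)-\frac{\pro^2+\pro\sep}{2(1+J)}\int_{\Conv\beta}\sum_{b\in\beta\setminus\alpha'(y)}\BaryCoord{y}{\beta}{b}\,dy,
\]
where $\alpha'(y)\in D^{[d]}$ denotes the simplex whose relative interior contains $\varphi(y)$.

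It remains to estimate the last two quantities. Because $\alpha'(y)\in D$ whereas $\beta\notin D$, the intersection $\beta\cap\alpha'(y)$ is always a proper face of $\beta$, hence $\beta\setminus\alpha'(y)\ne\emptyset$ and $\sum_{b\in\beta\setminus\alpha'(y)}\BaryCoord{y}{\beta}{b}\ge\min_{b\in\beta}\BaryCoord{y}{\beta}{b}$ for every $y$; transporting the integral to the standard simplex gives $\int_{\Conv\beta}\min_{b}\BaryCoord{y}{\beta}{b}\,dy=d!\,\Volume{\beta}\,\Omega(\Delta_d)$, where I define $\Omega(\Delta_d):=\int_{\Delta_d}\min\{1-\sum_{i=1}^{d}\lambda_i,\lambda_1,\dots,\lambda_d\}\,d\lambda>0$, a constant depending only on $d$. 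On the other hand, Lemma~\ref{lemma:weight} and the fact that every edge of the $\rho$-small simplex $\beta$ has length at most $2\rho$ give $W(\beta)=\omega(\beta)\le\tfrac{2d}{d+2}\,\Volume{\beta}\,\rho^2$. Substituting both into the previous display, the common factor $\Volume{\beta}>0$ cancels and the required inequality $W_{\min}(\beta)<W(\beta)$ reduces to $\pro^2+\pro\sep>\frac{4J(1+J)}{(d+2)(d-1)!\,\Omega(\Delta_d)}\,\rho^2$, which is exactly the second alternative of Condition~\eqref{eq:extra-safety-condition}.

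I expect the main obstacle to be the middle step: showing that the distortion of $\pi_\M$ on the flat pieces carrying $D$ and on that carrying $\beta$ never exceeds the single factor $1+J$, and arranging the inequalities so that the resulting error $J\,W(\beta)$ — proportional to $\Volume{\beta}$ — is precisely the $\Volume{\beta}$-proportional protection gap produced by Lemma~\ref{lemma:final-bound-on-power}. This is what dictates both the shape of Condition~\eqref{eq:extra-safety-condition} and the appearance of the dimensional constant $\Omega(\Delta_d)$.
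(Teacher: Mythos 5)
Your proof is correct and follows essentially the same route as the paper's: bound $f$ pointwise via Lemma~\ref{lemma:final-bound-on-power}, pull the integral back to $\Conv\beta$ with the Jacobian bound $[\tfrac{1}{1+J},1+J]$, lower-bound the protection gap by $\min_b\BaryCoord{y}{\beta}{b}$ using $\beta\setminus\alpha\neq\emptyset$, and reduce to the second alternative of Condition~\eqref{eq:extra-safety-condition} via $\Omega(\beta)=d!\,\Volume\beta\,\Omega(\Delta_d)$ and Lemma~\ref{lemma:weight}. The only cosmetic difference is that you organize the argument with the a.e.-defined selector $\alpha(z)$ rather than the paper's explicit sum over $\alpha\in D^{[d]}$ of integrals over $\RestConv\beta\alpha$.
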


\begin{proof}
  We need some notation. Given $\alpha$ and $\beta$ in $K$, we write
  $\RestConv\beta\alpha$ for the set of points $y \in \Conv{\beta}$
  for which there exists a point $x \in \Conv\alpha$ such that
  $\pi_\M(x) = \pi_\M(y)$.  We define the map $\varphi_{\beta \to
    \alpha} : \RestConv\beta\alpha \to \RestConv\alpha\beta$ as
  $\varphi_{\beta \to \alpha}(y) = [\pi_{\Conv\alpha\to\M}]^{-1} \circ
  \pi_{\Conv\beta\to\M}(y)$. Note that $\varphi_{\beta\to\alpha}$ is
  invertible and its inverse is $\varphi_{\alpha\to\beta}$. Also, note
  that $J$ in Theorem \ref{theorem:reconstruction} has been chosen
  precisely so that one can apply Lemma
  \ref{lemma:BoundsOnJacobianPhiPi1Pi2} in Appendix~\ref{appendix:J} and guarantee that
  $|\det(\Differential\varphi_{\beta \to \alpha})(y)| \in [\frac{1}{1 + J}, 1 + J]$ for all
  $\alpha,\beta \in K$ and all $y \in \RestConv\beta\alpha$.  Consider
  a $d$-simplex $\beta \in K \setminus D$. By Lemma
  \ref{lemma:Wmin-alpha}, $f(x) = - \Power\alpha x$ and therefore:
  \begin{eqnarray*}
    W_{\min}(\beta) &=& \sum_{\alpha \in D^{[d]}} \int_{x \in \RestConv\alpha\beta} - \Power{\alpha}{x} \, dx.    
  \end{eqnarray*}
  For any convex combination $y$ of points in $\beta$, let $\{ \Comb
  \beta b y\}_{b \in \beta}$ designate the family of non-negative
  real numbers summing up to 1 such that $y = \sum_{b \in \beta}
  \Comb \beta b y b$. 
  Plugging in the upper bound on $- \Power{\alpha}{x}$ provided by Lemma
  \ref{lemma:final-bound-on-power}, letting
  \[
  c = \frac{1}{2} \left(\pro^2 + \pro \sepvar \right),
  \]
  and making the change of variable $x = \varphi_{\beta \to
    \alpha}(y)$, we upper bound $W_{\min}(\beta)$ as follows:
  \begin{eqnarray*}
     W_{\min}(\beta) &\leq& \sum_{\alpha \in D^{[d]}} \int_{x \in \RestConv\alpha\beta}
     \left[-\Power{\beta}{\varphi_{\alpha \to \beta}(x)} - c \sum_{b \in \beta \setminus \alpha} \Comb{b}{\beta}{\varphi_{\alpha \to \beta}(x)} \right] \, dx\\
     &=& \sum_{\alpha \in D^{[d]}} \int_{y \in \RestConv\beta\alpha}
     \left[-\Power{\beta}{y} - c \sum_{b \in \beta \setminus \alpha} \Comb{b}{\beta}{y} \right]
     |\det(\Differential\varphi_{\beta \to \alpha})(y)| \, dy \\
     &\leq& (1+J) W(\beta)
       - (1 + J)^{-1} c \sum_{\alpha \in D^{[d]}}
       \int_{y \in \RestConv\beta\alpha}  \sum_{b \in \beta \setminus \alpha} \Comb{b}{\beta}{y} \, dy.
  \end{eqnarray*}
  A key observation is that, because $\beta \neq \alpha$, then $\beta
  \setminus \alpha \neq \emptyset$. Therefore the sum $\sum_{b
    \in \beta \setminus \alpha} \Comb{b}{\beta}{y}$ does not vanish
  and is always lower bounded by $\inf_{b \in \beta}
  \Comb{b}{\beta}{y}$. Associating the quantity
  \[
  \Omega(\beta) = \int_{y  \in \Conv\beta} \inf_{b \in \beta} \Comb{b}{\beta}{y} \, dy,
  \]
  to $\beta$ we thus obtain that $W_{\min}(\beta) \leq (1 + J) W(\beta) - \left(1 +
  J\right)^{-1} c\, \Omega(\beta)$. Hence, $W_{\min}(\beta) <
  W(\beta)$ as long as
  \begin{equation}
    \label{eq:key-condition}
  J W(\beta) < (1 + J)^{-1} c\,  \Omega(\beta).  
  \end{equation}
  Using a change of variable, it is not too difficult to show that
  $\Omega(\beta) = d!  \Volume\beta \Omega(\Delta_d)$, where $\Delta_d
  = \{ \lambda \in \Rspace^d \mid \sum_{i=1}^d \lambda_i \leq 1;
  \lambda_i \geq 0, i=1,2,\ldots, d \}$ represents the standard
  $d$-simplex. Remark that $\Omega(\Delta_d)$ is a constant that
  depends only upon the dimension $d$ and is thus universal. Plugging in
  $\Omega(\beta) = d!  \Volume\beta \Omega(\Delta_d)$ on the right
  side of (\ref{eq:key-condition}), and the expression of $W(\beta) =
  \omega(\beta)$ given by Lemma \ref{lemma:weight} on the left side of
  (\ref{eq:key-condition}), and recalling that $\beta$ is
  $\rho$-small, we find that condition
  (\ref{eq:key-condition}) is implied by the following condition:
  \begin{equation*}
    \label{eq:J-condition}
    J  \rho^2
    <  (1 + J)^{-1} \frac{ (d+2) (d-1)!}{4}
     \left(\pro^2 + \pro \sepvar \right) \,  \Omega(\Delta_d),
  \end{equation*}
  which we have assumed to hold.
\end{proof}

\begin{proof}[Proof of Theorem \ref{theorem:reconstruction}]
    Let $D = \Delloc P \rho d$, $\D = \Shadow{D}$ and $\K =
    \Shadow{K}$. Theorem~\ref{theorem:homeomorphism-from-sampling-conditions}
    ensures that $\D$ is a $d$-manifold and $\pi_\M : \D \to \M$ is a
    homeomorphism. Give to $\D$ an orientation that is consistent
    with that of \M, that is, $\sign{\sigma}{\D}=\sign{\sigma}{\M}$ for all $\sigma \in K^{[d]}$.
    Define $\varphi : \K \to \D$, $f : \D \to \Rspace^+$,
    $W$, and $W_{\min}$ as explained at the beginning of the
    section. Consider the $d$-chain $\gamma_{\min}$ on $K$:
  \[
  \gamma_{\min}(\sigma) =
  \begin{cases}
    \sign{\sigma}{\M} \quad \quad &\text{if $W_{\min}(\sigma) = W(\sigma)$,}\\
    0 \quad \quad &\text{otherwise.}
  \end{cases}
  \] By Lemma \ref{lemma:Wmin-alpha} and Lemma \ref{lemma:Wmin-beta}, the following
  property holds: for all $\sigma \in K$, 
  $W_{\min}(\sigma) = W(\sigma)$ if and only if $\sigma$ is a
  $d$-simplex of $D$.  It follows that $\gamma_{\min} =
  \code{D}$. Furthermore, we have $\sum_{\sigma\in K^{[d]}}
  \gamma_{\min}(\sigma) \sign{\sigma}{\M}\Indicator{\varphi(\Conv{\sigma})}(x) =
  \sum_{\sigma\in D^{[d]}} \Indicator{\Conv{\sigma}}(x) = 1$ for
  almost all $x \in \D$. Recalling that $W = \omega$ and therefore
  $\|\gamma\|_{1,W}=\Edel(\gamma)$, and applying Lemma
  \ref{lemma:technical}, we deduce that $\gamma_{\min}=\code{D}$ is the unique
  solution to the following optimization problem over the set of
  chains in $C_d(K,\Rspace)$:

  \medskip
  \begin{tcolorbox}
    \vspace{-4mm}
    \begin{mini*}
      {\gamma}{\Edel(\gamma)}{}{}
      \label{problem:technical}
      \addConstraint{\sum_{\sigma\in K^{[d]}} \gamma(\sigma) \sign{\sigma}{\M}  \Indicator{\varphi(\Conv{\sigma})}(x)}{=1, \hspace{2mm}}{\text{for almost all } x \in \D}
      \tag{$\circledast$}
    \end{mini*}
  \end{tcolorbox}

  \noindent We now claim that the feasible set of Problem~(\ref{problem:technical})
  contains the feasible set of
  Problem~(\ref{problem:reconstruction}). Indeed, consider a $d$-chain
  $\gamma$ that satisfies the constaints of Problem~(\ref{problem:reconstruction}), that is, such that
  \begin{equation*}
    \begin{cases}
      \partial \gamma = 0,\\
      \sum_{\sigma\in K^{[d]}} \gamma(\sigma)\sign{\sigma}{\M}\Indicator{\pi_\M(\Conv{\sigma})}(m_0) = 1.
    \end{cases}
  \end{equation*}
  Then, by Lemma~\ref{lemma:PushForwardOfCycleIsConstant} in Appendix~\ref{appendix:practical-conditions}, we obtain
  that $\gamma$ also satisfies the following constraint:
  \begin{equation*}
    \label{eq:one-constraint}
    \sum_{\sigma\in K^{[d]}} \gamma(\sigma)\sign{\sigma}{\M}\Indicator{\pi_\M(\Conv{\sigma})}(m)=1, \quad \text{for almost all } m \in \M,
  \end{equation*}
  which is equivalent to the constaint of
  Problem~(\ref{problem:technical}). Since the unique solution to
  Problem~(\ref{problem:technical}) is $\code D$,
  Theorem~\ref{theorem:homeomorphism-from-sampling-conditions}
  guarantees that $D = \Delloc P \rho d$ is a faithful reconstruction
  of \M. By Lemma \ref{lemma:manifold-as-cycle}, $\code{D}$ is thus a
  cycle. Hence, the unique solution $\code{D}$ to
  Problem~(\ref{problem:technical}) also satisfies the containts of
  Problem~(\ref{problem:reconstruction}) and, because the feasible set
  of Problem~(\ref{problem:technical}) contains the feasible set of
  Problem~(\ref{problem:reconstruction}), $\code{D}$ is also the
  unique solution to Problem~(\ref{problem:reconstruction}).
\end{proof}

\section{Practical aspects}
\label{section:PracticalAlgorithm}

In this section, we discuss practical aspects.

\subsection{Transforming the problem into a realistic algorithm}
\label{section:RealisticAlgorithm}

Besides the complex $K$ that one can build from $P$, Problem
(\ref{problem:reconstruction}) seems to require the knowledge of \M
for expressing the normalization constraint $\Load {m_0} \M {K} \gamma
= 1$.  What we call a {\em realistic} algorithm is an algorithm that
takes only the point set $P$ as input.  In this section, we explain
how to transform Problem (\ref{problem:reconstruction}) into an
equivalent problem that does not refer to \M anymore, thus providing a
realistic algorithm. Roughly, we simply replace the constraint $\Load
{m_0} \M {K} \gamma = 1$ by a constraint of the form $\LoadLoc {x_0} \Pi
{\Sigma} \gamma = 1$, where $x_0$ is a point ``close'' to \M, $\Pi$ is
a $d$-flat that ``roughly approximates'' \M near $x_0$ and $\Sigma$
are simplices of $K$ ``close'' to $x_0$. To make this idea precise, we
use the following localized version of the load:
\[
\LoadLoc {x_0} \Pi \Sigma \gamma = \sum_{\sigma \in \Sigma^{[d]}} \gamma(\sigma) \sign{\sigma}{\Pi} \Indicator{\pi_\Pi(\Conv{\sigma})}(x_0)
\]
and state conditions in Lemma \ref{lemma:RealisticProblemEquivalence}
(see below) under which Problem~(\ref{problem:reconstruction}) is
equivalent to the problem obtained by replacing the constraint $\Load
{m_0} \M {K} \gamma = 1$ with the constraint $\LoadLoc {x_0} \Pi {\Sigma}
\gamma = 1$. Given a point $x \in \Rspace^\Dim$ and $r \geq 0$, let us
introduce the subset of $K$:
\[
K[x,r] = \{
\sigma \in K \mid \Conv \sigma \cap B(x,r) \neq \emptyset \}.
\]
Note that $K[x,r]$ is not necessarily a simplicial
complex.

\begin{lemma}\label{lemma:RealisticProblemEquivalence}
  Suppose $0 \leq \rho \leq \frac{\reach}{25}$. Consider a point $x_0
  \in \Offset \M \rho$ and a $d$-dimensional affine space $\Pi$
  passing through $x_0$. Suppose that $\angle(\Pi,\Tangent {\pi_\M(x_0)}
  \M) \leq \frac{\pi}{8}$ and that the orientation of $\Pi$ is
  consistent with that of $\Tangent {\pi_\M(x_0)} \M$. Then, Problem
  (\ref{problem:reconstruction}) is equivalent to the following problem
  \begin{tcolorbox}
    \vspace{-4mm}
    \begin{mini*}
      {\gamma}{\Edel(\gamma)}{}{}  \label{problem:practical}
      \addConstraint{\partial \gamma}{=0}  \tag{$\star\star$}
      \addConstraint{\LoadLoc {x_0} \Pi {K[x_0,4\rho]} \gamma}{=1}
    \end{mini*}
  \end{tcolorbox}
\end{lemma}

\begin{proof}
  This is a direct consequence of Lemma~\ref{lemma:PracticalNormalizationByApproximateTangentSpace} in Appendix~\ref{appendix:practical-conditions}.
\end{proof}

Observe that the conditions on the $d$-flat $\Pi$ in the above lemma
are rather mild. Indeed, we only require $\Pi$ to pass through a point
$x_0$ such that $d(x_0,\M) \leq \frac{\reach}{25}$ and
$\angle(\Pi,\Tangent {\pi_\M(x_0)} \M) \leq \frac{\pi}{8}$. Hence, $\Pi$
only needs to be what we could call a {\em rough approximation} of \M
near $x_0$.  In practice, we may take for $x_0$ any point $p_0 \in P$ and
for $\Pi$ the $d$-dimensional affine space $T_{p_0}(P, \rho)$ passing
through $p_0$ and parallel to the $d$-dimensional vector space
$V_{p_0}(P, \rho)$ defined as follows: $V_{p_0}(P, \rho)$
is spanned by the eigenvectors associated to
the $d$ largest eigenvalues of the inertia tensor of $(P \cap B(p_0,
\rho)) - c$, where $c$ is the center of mass of $P \cap B(p_0,
\rho)$. By Lemma~\ref{lemma:PerturbationAndLLL_pseudoTangent} in Appendix~\ref{appendix:subsection:PCA}, for
$\frac{\rho}{\reach}$ small enough and $\varepsilon < \frac{\rho}{16}$,
we have
\[ 
\angle( T_{p_0}(P, \rho), T_{\pi_\M(p_0)}\M) \leq \frac{\pi}{8}.
\]
See Appendix~\ref{appendix:subsection:PCA} for more
details. Hence, the assumptions of the above lemma hold for $x_0=p_0$
and $\Pi = T_{p_0}(P, \rho)$. This shows that the normalization
constraint in Problem~(\ref{problem:reconstruction}) can be replaced
by a constraint whose definition depends only upon the point set $P$,
thus providing a realistic algorithm.

\subsection{Perturbing the data set for ensuring the safety conditions}\label{section:PerturbationForSafetyAlgorithm}

In this section, we assume that $P_0$ is a $\delta_0$-accurate
$\varepsilon_0$-dense sample of \M and perturbe it to obtain a point
set $P$ that satisfies the assumptions of our main theorem. For this,
we use the Moser Tardos Algorithm \cite{moser2010constructive} as a
perturbation scheme in the spirit of what is done in \cite[Section
  5.3.4]{boissonnat2018geometric}.

The perturbation scheme is parametrized with real numbers $\rho \geq
0$, $r_{\operatorname{pert.}} \geq 0$, $\mathrm{Height}_{\min}> 0$, and
$\mathrm{Prot}_{\min}> 0$. To describe it, we need some notations and
terminology. Let $T_{p_0}^*= T_{p_0}(P_0,3\rho)$ be the $d$-dimensional
affine space passing through $p_0$ and parallel to the $d$-dimensional
vector space $V_{p_0}(P_0,3\rho)$.
To each point $p_0 \in P_0$, we associate a
perturbed point $p \in P$, computed by applying a sequence of
elementary operations called reset. Precisely, given a point $p \in P$
associated to the point $p_0 \in P_0$, the {\em reset} of $p$ is the
operation that consists in drawing a point $q$ uniformely at random in
$T_{p_0}^* \cap B(p_0,r_{\operatorname{pert.}})$ and assigning $q$ to
$p$. Finally, we call any of the two situations below a {\em bad
  event}:
\begin{description}
\item[\styleitem{Violation of the height condition:}] There exists a $\rho$-small $d$-simplex
$\sigma \subseteq P$ such that $\height{\sigma}<\mathrm{Height}_{\min}$;
\item[\styleitem{Violation of the protection condition:}] There exists a pair $(p,\sigma)$ made of a point $p \in P$
  and a $d$-simplex $\sigma \subseteq P \setminus \{p\}$ such that $p \in
  B(c_\sigma,3\rho)$ and $\protection{\sigma}{\{p\}} \leq \mathrm{Prot}_{\min}$.
\end{description}
In both situations, we associate to the bad event $E$ a set of points
called the points {\em correlated} to $E$. In the first
situation, the points correlated to $E$ are the $d+1$ vertices of
$\sigma$ and in the second situation, they are the $d+2$ points of
$\{p\} \cup \sigma$.

\bigskip

\noindent \fbox{
    \begin{minipage}{0.97\textwidth}
      \noindent {\bf Moser-Tardos Algorithm: }\\
      \texttt{
      1. For each $p_0 \in P_0$, compute the $d$-dimensional affine space $T_{p_0}^*$ \\
      2. For each point $p \in P$, reset $p$ \\
      3. WHILE (some bad event $E$ occurs):\\
      --------------- For each point $p$ correlated to $E$, reset $p$ \\
      ----- END WHILE\\
      4. Return $P$
      }
    \end{minipage}
}

\bigskip

Roughly speaking, in our context, the Moser Tardos Algorithm reassigns
new coordinates to any point $p \in P$ that is correlated to a bad
event as long as a bad event occurs. A beautiful result from
\cite{moser2010constructive} tells us that if bad events are
  mostly independent from one another and have each a sufficiently
  small probability to occur, then the Moser-Tardos Algorithm
terminates and does so in a number of steps that is expected
to be linear in the size of $P_0$. Precisely, suppose that each
  bad event is independent of all but at most $\Gamma$ of the other
  bad events and the probability of a bad event is at most
  $\varpi$. Then, the result in \cite{moser2010constructive} tells us
  that the Moser-Tardos algorithm terminates with expected time
  $O(\sharp P_0)$ whenever
\begin{equation}
  \label{eq:moser-tardos}
  \varpi \leq \frac{1}{e(\Gamma+1)},
\end{equation}
where $e$ is the base of natural logarithms.
Using this result, one can establish the following lemma, the proof of which is beyond the scope of this paper:

\begin{lemma}\label{lemma:PerturbationTunedForTheorem}
Let $\varepsilon_0 \geq 0$, $\eta_0 >0$, and $\rho =
C_{\operatorname{ste}} \varepsilon_0$, where $C_{\operatorname{ste}}
  \geq 32$.  Let $\delta_0 = \frac{\rho^2}{\reach}$,
$r_{\operatorname{pert.}} = \frac{\eta_0 \varepsilon_0}{20}$,
$\varepsilon = \frac{21}{20}\varepsilon_0$, and $\delta = 2\delta_0$.
 There are
positive constants $c_0$, $c_1$, and $c_2$ that depend only upon
$\eta_0$, $C_{\operatorname{ste}}$, and $d$ such that if
$\frac{\rho}{\reach} < c_0$ then, given a point set $P_0$
such that $\M \subseteq \Offset {(P_0)} {\varepsilon_0}$, $P_0
\subseteq \Offset \M {\delta_0}$, and $\Sep{P_0} > \eta_0
\varepsilon_0$, the point set $P$ obtained after resetting each of its points
satisfies $\M \subseteq \Offset {P} {\varepsilon}$, $P
\subseteq \Offset \M {\delta}$, and $\Sep{P} > \frac{9}{10} \eta_0
\varepsilon_0$.  Moreover, whenever we apply the Moser-Tardos Algorithm with
$\mathrm{Height}_{\min} = c_1 \left( \frac{\rho}{\reach}
\right)^{\frac{1}{3}} \rho $ and $\mathrm{Prot}_{\min} =c_2 \left(
\frac{\rho}{\reach} \right)^{\frac{1}{3}} \rho$, the algorithm terminates with
expected time $O( \sharp P_0)$ and returns a point set $P$ that is a 
$\delta$-accurate $\varepsilon$-dense sample of \M and that satisfies the assumptions of 
Theorem \ref{theorem:reconstruction}.
\end{lemma}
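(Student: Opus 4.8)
The plan is to follow a run of the Moser--Tardos algorithm and prove three things in turn: that every \emph{reset} keeps the sampling parameters of the current set $P$ under control (so the claimed bounds hold both right after step~2 and at termination); that the two families of bad events have bounded mutual dependency and small enough probabilities to meet the Moser--Tardos criterion \eqref{eq:moser-tardos}; and that, since at termination no bad event occurs, the returned $P$ verifies every hypothesis of Theorem~\ref{theorem:reconstruction}. The single property of the reset rule I would use throughout is that a reset moves a point $p$ (associated with $p_0\in P_0$) to a point of $T_{p_0}^*\cap B(p_0,r_{\operatorname{pert.}})$, so $\|p-p_0\|\le r_{\operatorname{pert.}}=\tfrac{\eta_0\varepsilon_0}{20}=\tfrac{\eta_0}{20C_{\operatorname{ste}}}\rho$ \emph{at all stages of the algorithm}. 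From this displacement bound and the bijectivity of $P_0\to P$ one gets at once $\Sep P>\Sep{P_0}-2r_{\operatorname{pert.}}>\eta_0\varepsilon_0-\tfrac{\eta_0\varepsilon_0}{10}=\tfrac9{10}\eta_0\varepsilon_0$ and, since $r_{\operatorname{pert.}}\le\tfrac{\varepsilon_0}{20}$, $\M\subseteq\Offset{(P_0)}{\varepsilon_0}\subseteq\Offset{P}{\varepsilon}$ with $\varepsilon=\tfrac{21}{20}\varepsilon_0$. For accuracy I would invoke Lemma~\ref{lemma:PerturbationAndLLL_pseudoTangent}: for $\rho/\reach$ below a threshold the pseudo-tangent $T_{p_0}^*$ makes an angle $O(\rho/\reach)$ with $\Tangent{\pi_\M(p_0)}{\M}$, and since $q\in T_{p_0}^*$ lies within $r_{\operatorname{pert.}}$ of $p_0\in\Offset{\M}{\delta_0}$ along a nearly tangent flat, $d(q,\M)\le\delta_0+O(r_{\operatorname{pert.}}\rho/\reach+r_{\operatorname{pert.}}^2/\reach)\le 2\delta_0=\delta$, the error term being absorbed by $C_{\operatorname{ste}}\ge 32$; thus $P\subseteq\Offset{\M}{\delta}$. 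Finally, on the event (which holds at termination) that every $\rho$-small $d$-simplex $\sigma\subseteq P$ satisfies $\height{\sigma}\ge\mathrm{Height}_{\min}$, a standard thickness-to-flatness estimate for simplices with vertices $\delta$-close to a set of positive reach gives $\AngularDeviation{P,\rho}=O(\rho/\reach+d\,\delta/\mathrm{Height}_{\min})=O(d\,(\rho/\reach)^{2/3})\le\tfrac\pi6$.

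For the probabilistic part I would first observe that every bad event involves $d+1$ or $d+2$ points lying in a ball of radius $O(\rho)=O(C_{\operatorname{ste}}\varepsilon_0)$ --- for a height bad event by $\rho$-smallness, and for a protection bad event because $p\in B(c_\sigma,3\rho)$ together with $\protection{\sigma}{\{p\}}\le\mathrm{Prot}_{\min}$ forces $R(\sigma)=O(\rho)$ and hence $\sigma\subseteq B(c_\sigma,O(\rho))$. Two bad events are dependent only if they share a correlated point, and since the points of $P$ are $\tfrac9{10}\eta_0\varepsilon_0$-separated and lie within $\delta\ll\varepsilon_0$ of the $d$-manifold $\M$, any such ball contains only $O((C_{\operatorname{ste}}/\eta_0)^d)$ of them; hence the dependency degree $\Gamma$ is bounded by a constant $\Gamma(d,\eta_0,C_{\operatorname{ste}})$. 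To bound the probability of a candidate bad event I would condition on all but one of its correlated points: for a height bad event the free vertex must fall within $\mathrm{Height}_{\min}$ of a fixed $(d-1)$-flat, an event of probability $O(\mathrm{Height}_{\min}/r_{\operatorname{pert.}})$ for a point uniform in a $d$-ball of radius $r_{\operatorname{pert.}}$, so $\varpi_{\operatorname{height}}=O(d\,\mathrm{Height}_{\min}/r_{\operatorname{pert.}})$; for a protection bad event, conditioning on $\sigma$, the free point $p$ is bad iff $\pi_{\Aff\sigma}(p)$ lands in a width-$2\mathrm{Prot}_{\min}$ tubular neighborhood of the circumsphere $S(\sigma)$, which after pulling back through the near-isometry $\pi_{\Aff\sigma}$ restricted to $T_{p_0}^*$ has probability $O(\mathrm{Prot}_{\min}/r_{\operatorname{pert.}})$, so $\varpi_{\operatorname{prot}}=O(d\,\mathrm{Prot}_{\min}/r_{\operatorname{pert.}})$. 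With $\mathrm{Height}_{\min}=c_1(\rho/\reach)^{1/3}\rho$, $\mathrm{Prot}_{\min}=c_2(\rho/\reach)^{1/3}\rho$ and $r_{\operatorname{pert.}}=\tfrac{\eta_0}{20C_{\operatorname{ste}}}\rho$, both probabilities are $O((\rho/\reach)^{1/3})$ times a constant depending only on $d,\eta_0,C_{\operatorname{ste}}$ and $c_1$ (resp.\ $c_2$), so choosing $c_0$ small enough makes $\varpi\le\tfrac1{e(\Gamma+1)}$; by \cite{moser2010constructive} the algorithm then terminates in expected time $O(\sharp P_0)$.

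For the last part I would first check the parameter inequalities of Theorem~\ref{theorem:reconstruction}: $\delta\le\varepsilon$ and $16\varepsilon\le\rho<\tfrac{\reach}4$ follow from $\delta=2\rho^2/\reach$, $\varepsilon=\tfrac{21}{20}\rho/C_{\operatorname{ste}}$, $C_{\operatorname{ste}}\ge 32$ and $\rho/\reach<c_0$. At termination no bad event holds, whence $\Height{P}{\rho}\ge\mathrm{Height}_{\min}>0$ and $\Protection{P}{3\rho}>\mathrm{Prot}_{\min}$. It then remains to produce a $\theta\in[0,\tfrac\pi6]$ realizing the safety condition and to verify \eqref{eq:extra-safety-condition}. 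I would take $\theta=2\AngularDeviation{P,\rho}+2\arcsin((\rho+\delta)/\reach)$, which is $O(d(\rho/\reach)^{2/3})\le\tfrac\pi6$ by the first part and makes the first safety inequality an equality. The remaining two safety inequalities and \eqref{eq:extra-safety-condition} are then pure scaling comparisons: their left-hand sides are bounded below by $\Sep P\gtrsim(\eta_0/C_{\operatorname{ste}})\rho$ or $\Protection{P}{3\rho}>c_2(\rho/\reach)^{1/3}\rho$, carrying exponent $0$ or $1/3$ in $\rho/\reach$, whereas substituting $\delta=2\rho^2/\reach$, $\theta=O(d(\rho/\reach)^{2/3})$ and $J=O(d\,\rho/\reach)$ into the right-hand sides --- including the amplifying factor $1+\tfrac{4d\varepsilon}{\Height{P}{\rho}}=O((\rho/\reach)^{-1/3})$ in the third safety inequality --- gives quantities of exponent $\ge 2/3$; comparing exponents, each inequality holds once $\rho/\reach$ drops below a threshold $c_0(d,\eta_0,C_{\operatorname{ste}})$, provided $c_1,c_2$ were fixed small enough that the ambient constants stay dominated by $\eta_0/C_{\operatorname{ste}}$ and $c_2$. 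Since $K$ can then be taken to be any complex with $\Del P\cap\Cech P\varepsilon\subseteq K\subseteq\Cech P\rho$, Theorem~\ref{theorem:reconstruction} applies and yields the announced reconstruction.

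The main obstacle is the simultaneous sizing of $\mathrm{Height}_{\min}$ and $\mathrm{Prot}_{\min}$. They must be \emph{small} on the scale $r_{\operatorname{pert.}}\sim(\eta_0/C_{\operatorname{ste}})\rho$ so that $\varpi\sim\mathrm{Height}_{\min}/r_{\operatorname{pert.}}$ is tiny enough for Moser--Tardos, yet \emph{large} compared with $\delta\sim(\rho/\reach)\rho$ --- which governs the induced angular deviation $\AngularDeviation{P,\rho}\sim d\,\delta/\mathrm{Height}_{\min}$ and hence $J$ --- and still \emph{large enough} to dominate the right-hand sides of the safety condition and \eqref{eq:extra-safety-condition}, which after substitution carry an extra power of $\rho/\reach$. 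The value $\mathrm{Height}_{\min}=\mathrm{Prot}_{\min}=\Theta((\rho/\reach)^{1/3}\rho)$ is a convenient choice for which all of these hold as $\rho/\reach\to 0$; any exponent in a suitable open subinterval of $(0,\tfrac12)$ would do. Beyond this balancing act, the two genuinely technical ingredients, which I would cite as black boxes, are the accuracy of the PCA pseudo-tangent space (Lemma~\ref{lemma:PerturbationAndLLL_pseudoTangent}) and the thickness-to-flatness bound controlling $\AngularDeviation{P,\rho}$; the rest is bookkeeping around the displacement bound $\|p-p_0\|\le r_{\operatorname{pert.}}$ and the elementary probability estimates above.
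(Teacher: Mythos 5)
Your proposal follows essentially the same route as the paper's (sketched) proof: verify the Moser--Tardos criterion \eqref{eq:moser-tardos} by bounding the dependency degree $\Gamma$ and the bad-event probabilities $\varpi$ using the PCA pseudo-tangent accuracy of Lemma \ref{lemma:PerturbationAndLLL_pseudoTangent}, then, once no bad event occurs, bound $\AngularDeviation{P,\rho}$ by $O\bigl((\rho/\reach)^{2/3}\bigr)$ via the flatness estimate of Lemma \ref{lemma:general-angle-bound}, choose $\theta$ as a small multiple of this bound, and check the safety inequalities and \eqref{eq:extra-safety-condition} by comparing powers of $\rho/\reach$ with $\mathrm{Height}_{\min}$ and $\mathrm{Prot}_{\min}$ of order $(\rho/\reach)^{1/3}\rho$. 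The paper itself only sketches this argument and defers the explicit $\varpi$ and $\Gamma$ estimates to an adaptation of the companion paper, so your extra bookkeeping on the reset displacement bound and on the event probabilities is consistent with, and somewhat more explicit than, what the paper records.
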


We only sketch the proof of Lemma \ref{lemma:PerturbationTunedForTheorem}
below. 

\begin{proof}[Sketch of proof]
  The proof consists in applying the Moser Tardos theorem
  \cite{moser2010constructive}. In other words, we show that
  Condition~\eqref{eq:moser-tardos} holds, for a well-chosen upper
  bound $\varpi$ on the probability of each bad event and a
  well-chosen upper bound $\Gamma$ on the number of bad events to
  which each bad event is dependent upon. Upper bounds $\varpi$ and
  $\Gamma$ are obtained by adapting the proof of a similar simpler result
  presented in the appendix of
  \cite{AttaliLieutierFlatDelaunay2022}. The intuition is that thanks
  to Lemma~\ref{lemma:PerturbationAndLLL_pseudoTangent} in Appendix~\ref{appendix:subsection:PCA}, one can
  compute from the sample $P_0$ a local approximation
  $T_{p_0}(P_0,3\rho)$ of a local tangent space with accuracy
  $\mathcal{O} \left( \frac{\rho}{\reach} \right)$.  It follows that,
  if $\frac{\rho}{\reach} $ is small enough, the volume, in
  $\Pi_{p_0\in P_0} T_{p_0}(P_0,3\rho)$, for which a height or
  protection condition is violated, can be made arbitrary small, and
  Condition~\eqref{eq:moser-tardos} required for Moser-Tardos
  algorithm to terminate will be met.

  When the Moser-Tardos algorithm terminates, we thus have two positive constants
  $c_1$ and $c_2$ such that
 \begin{align}
   \Height{P}{\rho} &> c_1 \left( \frac{\rho}{\reach}
   \right)^{\frac{1}{3}} \rho, \label{eq:control-height}\\
   \ProtectGlob P {3\rho} &> c_2 \left( \frac{\rho}{\reach} \right)^{\frac{1}{3}}  \rho. \label{eq:control-protection} 
 \end{align}
 For short, write
 \begin{eqnarray*}
   \pro &=& \ProtectGlob P {3\rho},\\
   \sepvar &=& \Sep{P},\\
   \Theta &=& \AngularDeviation{P,\rho}.
 \end{eqnarray*}
 Let us check that the
 safety assumptions of Theorem \ref{theorem:reconstruction} are then
 satisfied. For this, we need to show that one can find $\theta \in
 \left[0,\frac{\pi}{6}\right]$ such that:
    \begin{align}
      \Theta \; &\leq \; \frac{\theta}{2} -  \arcsin \left(\frac{\rho+\delta}{\reach}\right), \label{eq:safety-Theta}\\
      \sepvar \; &> \; 8 (\delta \theta + \rho \theta^2) + 6 \delta + \frac{2\rho^2}{\reach}, \label{eq:safety-sep} \\
      \pro \; &> \; 8 (\delta \theta + \rho \theta^2) \left( 1 + \frac{4 d \varepsilon}{\hei} \right), \label{eq:safety-pro} \\
      \pro^2 + \pro \sepvar &> \; \max \left\{10 \rho  \Theta (\varepsilon + \rho  \Theta) , \frac{4 J (1+J)}{(d+2)(d-1)! \, \Omega(\Delta_d)} \rho^2 \right\}, \label{eq:safety-additional}
    \end{align}
    where
    \[
    J = \frac{ (\reach+ \rho)^d  }
    {\left( \reach- \rho \right)^d \:   \left(\cos \Theta  \right)^{\min\{d, N-d\}}} -1.
    \]
    By Lemma~\ref{lemma:general-angle-bound} in Appendix~\ref{appendix:submanifold-generalities}, we obtain that
    \[
    \Theta \leq \arcsin\left( \frac{2 d}{\hei} \left(\frac{3 \rho^2}{\reach}+\delta \right) \right).
    \]
    Hence, since there exists a positive constant $c_1$ such that
    $\hei > c_1 \left( \frac{\rho}{\reach} \right)^{\frac{1}{3}}
    \rho$, we deduce that there exists a positive constant $c_3$ such
    that for $\frac{\rho}{\reach}$ small enough we have
    \[
    \Theta  \leq c_3 \left( \frac{\rho}{\reach} \right)^{\frac{2}{3}}.
    \]
    Let $\theta = 3 \Theta$ and observe that for $\frac{\rho}{\reach}$
    small enough, $\theta \in \left[0,\frac{\pi}{6}\right]$. With this
    choice of $\theta$ and using $\sepvar >
    \frac{\eta_0}{C_{\operatorname{ste}}} \rho$, $\pro > c_2 \left(
    \frac{\rho}{\reach} \right)^{\frac{1}{3}} \rho$, $\varepsilon =
    \frac{21}{20 C_{\operatorname{ste}}} \rho$, $\delta =
    \frac{2\rho^2}{\reach}$, and $\hei > c_1 \left(
    \frac{\rho}{\reach} \right)^{\frac{1}{3}} \rho$, it is easy to
    check that for $\frac{\rho}{\reach}$ small enough, Inequalities
    \eqref{eq:safety-Theta}, \eqref{eq:safety-sep},
    \eqref{eq:safety-pro}, and \eqref{eq:safety-additional} hold and
    therefore the safety assumptions of Theorem \ref{theorem:reconstruction} are
    met.
\end{proof}

\section{Conclusion}

We have shown that the submanifold reconstruction problem can be
recast as a weighted $\ell_1$-norm minimization problem under linear
constraints and as such is solvable by linear programming.

In the future, it would be interesting to study variants of this
minimization problem. For instance, one could imagine constraining the
solution to be a homology representative $d$-cycle (instead of a
normalized $d$-cycle).  Indeed, when $\M$ is orientable and connected,
its $d$-homology group with real coefficients is one-dimensional, and
the homology class of a normalized generator of it is called the
\emph{manifold fundamental class}. Furthermore, it is known that if
$K$ is either the \v Cech complex of $P$ or the Vietoris-Rips complex
of $P$, then $K$ and $\M$ are homotopy equivalent, assuming that $P$
samples sufficiently densely and accurately the manifold $\M$ and for
a careful choice of the scale parameter of these complexes
\cite{chazal08:_smoot_manif_recon_from_noisy,chazal2009sampling,attali2013vietoris,niyogi08:_findin_homol_of_subman_with,kim2019homotopy}.
Hence, it follows that the $d$-homology group of $K$ is also
one-dimensional. One representative of a generator of the fundamental
class of $\M$ can then be obtained, up to a multiplicative constant,
by taking any non-boundary $d$-cycle $\gamma_0$ of $K$ (performing for
this standard linear algebra operations on the boundary operators
$\partial_d$ and $\partial_{d+1}$ of $K$). Thus, a variant to our
problem can be expressed as follows: among all the $d$-chains of $K$
homologous to $\gamma_0$, search for the one with smallest Delaunay
energy.  We believe that it would be possible to adapt the proof
presented in the paper and establish conditions under which the
solution to this variant is also a $d$-chain which carries a
triangulation of $\M$ (namely, the Delloc complex of $P$).




\appendix


\clearpage
\section{A linear programming formulation}
\label{appendix:linear-programming}

For the sake of completeness, we recall in this appendix how
minimizing a weighted $\ell^1$-norm under linear constraints can be expressed
as a linear programming problem through slack variables. Consider the following minimization problem:
  \begin{mini*}
    {}{ \sum_i \omega_i | x_i | }{}{}
    \addConstraint{A x}{=b.}
  \end{mini*}
Here $x \in \Rspace^n$ is the variable, and $\omega_1, \omega_2, \ldots, \omega_n \in \Rspace$, $A \in \Rspace^{k \times n}$, $b \in \Rspace^k$ are parameters.
This problem is equivalent to the linear programming problem:
  \begin{mini*}
    {}{ \sum_i \omega_i s_i }{}{}
    \addConstraint{A x}{=b}
    \addConstraint{x_i}{\leq s_i, \quad}{i = 1,\ldots,n}
    \addConstraint{x_i}{\geq - s_i, \quad}{i = 1,\ldots,n,}
  \end{mini*}
where the variables are $x \in \Rspace^n$ and $s \in \Rspace^n$. Each
$s_i$ is called a {\em slack variable}.


\clearpage
\section{Angle between affine spaces}\label{appendix:AnglebetweenAffineSpaces}


In this appendix, we start by recalling how the angle between two
vector spaces and two affine spaces are defined, following for this
\cite{BSMF_1875__3__103_2}. The appendix presents classical results
(see \cite{BSMF_1875__3__103_2} and also the Wikipedia page untitled
"Angles between flats"), except for Lemma
\ref{lemma:RotationBetweenVectorSubspaces} which provides an explicit
expression of the path of the orthonormal frame.

\begin{definition}\label{definition:DefinitionAngleAffineSpaces}
The {\em angle} between two vector subspaces $V_1$ and $V_2$ of same Euclidean space 
is defined as:
\begin{equation}\label{eq:DefinitionAngleAffineSpaces}
\angle V_1, V_2  \defunder{=} \sup_{\Above{v_1\in V_1}{\|v_1\|=1}} \inf_{\Above{v_2\in V_2}{\|v_2\|=1}}  \angle v_1, v_2.
= \max_{\Above{v_1\in V_1}{\|v_1\|=1}} \min_{\Above{v_2\in V_2}{\|v_2\|=1}}  \angle v_1, v_2
\end{equation}
The {\em angle} between affine subspaces $A_1$ and $A_2$ is defined as the angle between their associated vector spaces.
\end{definition}

One gets trivially the  equivalent definition:
\begin{equation}\label{eq:DefinitionAngleAffineSpacesAlternative}
\angle V_1, V_2 = \inf \big\{ \theta \geq 0, \:  \forall v_1 \in V_1 \setminus \{0\}, \: \exists v_2 \in V_2 \setminus \{0\}, \: \angle v_1, v_2 \leq \theta \big\}.
\end{equation}
Since when $\dim V_1= \dim V_2$ there is an isometry (mirror symmetry) 
 that swaps $V_1$ and $V_2$ and preserves angles, we get the following implication:
\[
\dim V_1= \dim V_2 \implies \angle V_1, V_2 = \angle V_2, V_1,
\]
and one gets from \eqref{eq:DefinitionAngleAffineSpacesAlternative} and the triangular inequality on angles between vectors that:
\[
\angle V_1, V_3 \leq \angle V_1, V_2 + \angle V_2, V_3.
\]

\begin{lemma}[Minimal angle corresponds to orthogonal projection]\label{lemma:MinumalAngleAndOrthogonalPtrojection}
Let $V \subseteq  \Rspace^N$ be a vector subspace and $\pi_V$ the orthogonal projection on $V$.
Let $v' \in \Rspace^N$ a vector such that $\|v'\| = 1$, $\pi_V(v') \neq 0$ and $\theta = \min_{v^\star \in V, \| v^\star \| = 1} \angle v^{\star}, v'$.
Then 
\[
\arg \min_{v^\star \in V, \| v^\star \| = 1} \angle v^{\star}, v'  = \frac{1}{\cos \theta} \, \pi_V( v').
\]
\end{lemma}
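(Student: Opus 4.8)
The plan is to turn the minimization of an angle into the maximization of a linear functional on the unit sphere of $V$, and then apply Cauchy--Schwarz. First I would recall that for nonzero vectors one has $\cos\angle u,w = \frac{\langle u,w\rangle}{\|u\|\,\|w\|}$ and that $\arccos$ is strictly decreasing on $[-1,1]$. Since all competitors $v^\star$ are unit vectors and $\|v'\|=1$, this means that minimizing $\angle v^\star,v'$ over $\{v^\star\in V : \|v^\star\|=1\}$ is equivalent to maximizing $\langle v^\star, v'\rangle$ over that same set.

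Next I would invoke the defining property of the orthogonal projection: $v'-\pi_V(v')$ is orthogonal to $V$, so for every $v^\star\in V$ we have $\langle v^\star, v'\rangle = \langle v^\star, \pi_V(v')\rangle$. By Cauchy--Schwarz, $\langle v^\star,\pi_V(v')\rangle \le \|v^\star\|\,\|\pi_V(v')\| = \|\pi_V(v')\|$, and because $\pi_V(v')\neq 0$ the equality case holds if and only if $v^\star$ is a nonnegative scalar multiple of $\pi_V(v')$; among unit vectors this forces $v^\star = \pi_V(v')/\|\pi_V(v')\|$. Hence the $\arg\min$ exists and is the unique unit vector $\pi_V(v')/\|\pi_V(v')\|$.

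Finally I would identify the normalizing constant by evaluating at this minimizer: $\cos\theta = \langle v^\star, v'\rangle = \|\pi_V(v')\|$, so $\|\pi_V(v')\| = \cos\theta$ and therefore $\pi_V(v')/\|\pi_V(v')\| = \frac{1}{\cos\theta}\,\pi_V(v')$, which is the claimed formula. I do not expect a genuine obstacle here; the only points deserving a line of care are that the Cauchy--Schwarz equality case selects the positive multiple (because $\langle v^\star,\pi_V(v')\rangle$ must equal $+\|\pi_V(v')\|$ at the maximum, not $-\|\pi_V(v')\|$), and that $\cos\theta \neq 0$, i.e.\ $\theta < \frac{\pi}{2}$, which is exactly guaranteed by the hypothesis $\pi_V(v')\neq 0$.
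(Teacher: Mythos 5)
Your proof is correct, but it takes a different route from the paper's. You convert the problem into maximizing the linear functional $v^\star \mapsto \langle v^\star, v'\rangle$ on the unit sphere of $V$ (using that $\arccos$ is decreasing), replace $v'$ by $\pi_V(v')$ via orthogonality of the residual, and settle everything with the equality case of Cauchy--Schwarz. The paper instead works with the \emph{sine}: it uses the least-squares characterization $\pi_V(v') = \arg\min_{v^\star\in V}\|v^\star - v'\|^2$ together with the identity $\sin^2\angle(v^\star,v') = \min_\lambda \|\lambda v^\star - v'\|^2$, concluding that $\pi_V(v')$ minimizes $\sin^2$ of the angle and hence is collinear with the minimizer, with the norm $\cos\theta$ fixing the constant. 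Your version is arguably tighter on one point: minimizing $\sin^2$ does not by itself distinguish $v^\star$ from $-v^\star$, and the paper has to append the remark about ``positively collinear vectors,'' whereas your Cauchy--Schwarz equality case selects the nonnegative multiple automatically (the maximum of $\langle v^\star,\pi_V(v')\rangle$ is $+\|\pi_V(v')\|$, not $-\|\pi_V(v')\|$). What the paper's approach buys in exchange is a directly geometric reading (distance to the line spanned by $v^\star$) that it reuses implicitly in the surrounding discussion of principal angles. Both arguments are complete; yours also correctly notes that $\pi_V(v')\neq 0$ is exactly what guarantees $\cos\theta>0$ so the normalization $\frac{1}{\cos\theta}\pi_V(v')$ makes sense.
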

\begin{proof}
One has by definition of $\pi_V$:
\begin{equation}\label{eq:ByDefinitionOfProjection}
\pi_V( v') =  \arg \min_{v^\star \in V} ( v^{\star} - v' )^2 
\end{equation}
and since $\|v'\|= 1$ and $\pi_V( v')- v'$ is orthogonal to $V$, one has $(\pi_V( v') -v')^2 = \left( \sin   \angle \pi_V( v'), v'  \right)^2$.
Also, for any vector  $v^\star \neq 0$,  one has $\left( \sin   \angle v^\star , v'  \right)^2 = \min_{\lambda} \left(\lambda v^\star - v' \right)^2$ so that
 \eqref{eq:ByDefinitionOfProjection} implies that $\pi_V(v')$ (as well as all its positively collinear vectors) minimises $v^{\star} \rightarrow \sin^2   \angle v^{\star}, v' $ in $V$.
 It follows that $\pi_V(v')$ is collinear to $\arg \min_{v^\star \in V, \| v^\star \| = 1} \angle v^{\star}, v'$ and since its norm is $\cos \theta$ we get the result.
\end{proof}

If $V$ is a vector subspace of $\Rspace^N$, and $\pi_V$ the orthogonal projection on $V$ then it is well known that:
\begin{itemize}
\item $\pi_V$ is self-adjoint and therefore its matrix in any orthonormal frame is symmetric.
\item $\pi_V \circ \pi_V = \pi_V$
\item the kernel of $\pi_V$ is the vector space normal to $V$ and its restriction to $V$ is the identity. 
\end{itemize}
Let $V$ and $V'$ be two $d$-dimensional vector subspaces of $\Rspace^N$ such that $\theta = \angle V',V < \pi/2$ and let $\pi_V$ and $\pi_{V'}$ be their corresponding  
orthogonal projections. Thanks to Lemma \ref{lemma:MinumalAngleAndOrthogonalPtrojection},  and since the cosinus function is decreasing on $[0, \pi/2]$,
one has:
\[
\cos \theta = \min_{v' \in V', \|v'\|=1}  \| \pi_V(v') \|
\]
So that, since $v' \in V' \Rightarrow v' =  \pi_{V'} (v')$:
\[
(\cos \theta)^2 = \min_{v' \in V', \|v'\|=1} \langle  \pi_V \circ \pi_{V'} (v') | \pi_V \circ \pi_{V'} (v') \rangle
\]
Denoting $M_V$ and $M_{V'}$ the respective symmetric matrix of $\pi_V$ and $\pi_{V'}$ in some orthonormal basis and since $M_V M_V = M_V$, we obtain:
\begin{align*}
(\cos \theta)^2 &= \min_{v' \in V', \|v'\|=1}   \left( M_V M_{V'} v' \right)^t M_V M_{V'} v' \\
&= \min_{v' \in V', \|v'\|=1}  v'^t  M_{V'}^t  M_V M_V M_{V'} v' \\
&= \min_{v' \in V', \|v'\|=1}  v'^t  M_{V'}^t  M_V M_{V'} v' 
\end{align*}
Since  $M_{V'}^t= M_{V'}$ and $v'\in V'$, we obtain that $M_{V'} v'  = v'$ and therefore
\[
(\cos \theta)^2 =  \min_{v' \in V', \|v'\|=1}  v'^t  M_{V'}  M_V v'.
\]
Let $A_{V'} : V \rightarrow V'$ be the restriction of $M_{V'}$ to $V$ 
and  $A_V : V' \rightarrow V$ the restriction 
of $M_V$ to $V'$. One has:
\begin{equation}\label{eq:LowerBoundRayleighQuotient}
(\cos \theta)^2 =  \min_{v' \in V', \|v'\|=1}  v'^t  A_{V'}  A_V v'
\end{equation}

Since $M_{V'}^t  M_V M_V M_{V'} : \Rspace^N \rightarrow V' \subseteq   \Rspace^N$
 is self-adjoint, so is its restriction  $C' = A_{V'}  A_V : V' \rightarrow V'$.
 It follows that $C'$ has $d$ (counting multiplicities)  real eigenvalues,
 associated to $d$  eigenvectors of $C'$ making an orthogonal basis of $V$,  and \eqref{eq:LowerBoundRayleighQuotient}
 is the Rayleigh quotient of $C'$ which gives that the smallest eigenvalue of $C'$ is $(\cos \theta)^2$.
 Since $\theta < \pi/2$ we have that all eigenvalues of $C'$ are positive. In particular, $C'$ is invertible.

It follows that  $A_V$ and $A_{V'}$ have rank $d$ and 
are also invertible so that $C= A_V A_{V'}   : V \rightarrow V$
is also invertible. If $v'_i$ is an eigenvector of $C'$ with eigenvalue $\lambda_i$, 
then $ C' v'_i = A_{V'}  A_V v'_i = \lambda_i v'_i$ and:
\begin{equation}\label{eq:EigenVectorOfC}
A_V A_{V'}  A_V v'_i = \lambda_i  A_V v'_i
\end{equation}
Since $A_V$ is invertible, $ A_V v'_i \neq 0$ and \eqref{eq:EigenVectorOfC} says that
$v_i =  A_V v'_i $ 
 is an eigenvector of $C= A_V A_{V'}$ with eigenvalue $\lambda_i$:
\[
C  v_i =A_V A_{V'}  v_i = \lambda_i  v_i
\]
Also, since $A_V $ and $A_{V'}$ have their $L^2$ operator norms upper bounded by $1$, so is the operator norm of $C$ and $C'$.
We have shown that:
\begin{lemma}\label{lemma:CorrespndanceEigenVectorsCPrimeC}
The orthogonal projection $A_V : V' \to V$ sends an orthogonal basis of $V'$ made of eigenvectors of $C'= A_{V'}  A_V$ 
to an orthogonal basis of $V$ made of eigenvectors of $C= A_V A_{V'}$ with the same eigenvalues.
These eigenvalues are included in $\left[ \left( \cos  \theta \right)^2 , 1 \right]$ with the smallest one being equal to $\left( \cos  \theta \right)^2$.
\end{lemma}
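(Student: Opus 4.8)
The plan is to transport the spectral decomposition already obtained for $C'$ over to $C$ by means of the projection $A_V$, and then to verify the one genuinely delicate point, namely that this transport respects orthogonality even though $A_V$ is not an isometry.

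First I would record what the discussion preceding the statement has already provided: the operator $C' = A_{V'}A_V$ is self-adjoint on $V'$, so it admits an orthogonal eigenbasis $v'_1,\dots,v'_d$ of $V'$ with real eigenvalues $\lambda_1,\dots,\lambda_d$; moreover the Rayleigh quotient identity \eqref{eq:LowerBoundRayleighQuotient}, together with the fact that $A_V$ and $A_{V'}$ have operator norm at most $1$, forces each $\lambda_i$ to lie in $\left[(\cos\theta)^2,1\right]$ with $\min_i\lambda_i=(\cos\theta)^2$. In particular every $\lambda_i$ is strictly positive, so $A_V$, $A_{V'}$, $C'$ and $C$ are all invertible, as was already noted.

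Next I would set $v_i = A_V v'_i$ and invoke equation \eqref{eq:EigenVectorOfC}, which says exactly that $C v_i = A_V A_{V'} v_i = \lambda_i v_i$; hence each $v_i$ is an eigenvector of $C$ with eigenvalue $\lambda_i$. Since $A_V : V'\to V$ is a linear isomorphism and $v'_1,\dots,v'_d$ is a basis of $V'$, the family $v_1,\dots,v_d$ is a basis of $V$. Consequently $C$ is diagonalized by the $v_i$ and shares the spectrum of $C'$ (with multiplicities), which already yields the claimed interval $\left[(\cos\theta)^2,1\right]$ for the eigenvalues of $C$ and the stated value of the smallest one.

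The remaining step — the one I expect to be the only mildly subtle part of the argument — is to check that $v_1,\dots,v_d$ is an \emph{orthogonal} basis consisting of nonzero vectors, which does not follow formally since $A_V$ distorts lengths and angles. Here I would run a short computation with the symmetric idempotent projection matrices $M_V$ and $M_{V'}$: using $M_V^2 = M_V$, $M_{V'}^2 = M_{V'}$ and $v'_i,v'_j\in V'$, one gets $\langle v_i,v_j\rangle = \langle v'_i,\, M_{V'}M_V v'_j\rangle = \langle v'_i,\, A_{V'}A_V v'_j\rangle = \lambda_j\langle v'_i,v'_j\rangle$, which vanishes for $i\neq j$ by orthogonality of the $v'_i$ and equals $\lambda_i\|v'_i\|^2>0$ for $i=j$. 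This exhibits $v_1,\dots,v_d$ as an orthogonal eigenbasis of $V$ for $C$ made of nonzero vectors, which completes the proof.
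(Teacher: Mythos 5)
Your proof is correct and follows essentially the same route as the paper: it uses the relation $A_V A_{V'} A_V v'_i = \lambda_i A_V v'_i$ to transport eigenvectors, the Rayleigh quotient to identify the smallest eigenvalue as $(\cos\theta)^2$, and the operator-norm bound to cap the eigenvalues at $1$. Your explicit check that $\langle A_V v'_i, A_V v'_j\rangle = \lambda_j\langle v'_i,v'_j\rangle$ is a welcome addition, since the paper asserts orthogonality of the image basis without spelling out this computation.
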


\begin{lemma}[Rotation between two vector spaces]\label{lemma:RotationBetweenVectorSubspaces}
Let $V$ and $V'$ be $d$-dimensional vector subspaces of Euclidean space such that the angle  $\theta = \angle V, V'$ satisfies:
\[
0 < \angle V, V' < \frac{\pi}{2}
\]
and $d' = d - \dim( V \cap V')$.
Then there is an orthonormal  basis $v_1,\ldots ,v_{d'}, v'_1, \ldots, v'_{d'}, w_1, \ldots, w_{d-d'}$ 
and a sequence of angles $\theta_1 \geq \theta_2 \geq \ldots, \theta_{d'} > 0$
such that $\theta_1 = \theta$,
\[
v_1, \ldots , v_{d'} , w_1, \ldots, w_{d-d'} 
\]
is a basis of $V$ and:
\[
\cos \theta_1 v_1 + \sin \theta_1 v'_1, \ldots ,\cos \theta_{d'} v_{d'} + \sin \theta_{d'} v'_{d'} , w_1, \ldots, w_{d-d'} 
\]
is a basis of $V'$.
\end{lemma}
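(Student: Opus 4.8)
The plan is to diagonalize the self-adjoint operator $C = A_V A_{V'} : V \to V$ introduced just before Lemma~\ref{lemma:CorrespndanceEigenVectorsCPrimeC} and to read off the angles $\theta_i$ and the orthonormal basis directly from its spectral decomposition. Recall from Lemma~\ref{lemma:CorrespndanceEigenVectorsCPrimeC} (and the discussion preceding it) that $C$ is self-adjoint with all eigenvalues in $[(\cos\theta)^2, 1]$, the smallest being $(\cos\theta)^2$. First I would identify the eigenspace of $C$ for the eigenvalue $1$ with $V \cap V'$: if $v \in V$ is a unit eigenvector with $Cv = v$, then $\|\pi_{V'}(v)\|^2 = \langle v, Cv\rangle = 1 = \|v\|^2$ forces $\pi_{V'}(v) = v$, hence $v \in V \cap V'$; conversely every $v \in V \cap V'$ satisfies $\pi_V(v) = \pi_{V'}(v) = v$, hence $Cv = v$. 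Consequently the orthogonal complement $U$ of $V \cap V'$ inside $V$ has dimension $d'$ and is $C$-invariant, and $C|_U$ is self-adjoint with all its eigenvalues in $[(\cos\theta)^2, 1)$ --- none equals $1$ by the above --- the smallest still being $(\cos\theta)^2$, since the eigenvector realising it cannot lie in $V \cap V'$ when $\theta > 0$ (which, as $\dim V = \dim V'$, also forces $V \neq V'$ and hence $d' \geq 1$).

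Next I would fix an orthonormal eigenbasis $v_1, \dots, v_{d'}$ of $U$ with eigenvalues $\lambda_1 \leq \dots \leq \lambda_{d'}$, so $\lambda_1 = (\cos\theta)^2$, complete it with an orthonormal basis $w_1, \dots, w_{d - d'}$ of $V \cap V'$, and set $\theta_i = \arccos\sqrt{\lambda_i}$. Since $(\cos\theta)^2 = \lambda_1 \leq \dots \leq \lambda_{d'} < 1$, I get $\theta_1 \geq \dots \geq \theta_{d'} > 0$ and $\theta_1 = \theta$; moreover $\theta_i < \tfrac{\pi}{2}$ because $\lambda_i \geq (\cos\theta)^2 > 0$ (as $\theta < \tfrac{\pi}{2}$), so $\cos\theta_i$ and $\sin\theta_i$ are both nonzero. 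For each $i$ I would put $e_i = \tfrac{1}{\cos\theta_i}\,\pi_{V'}(v_i)$; using $\|\pi_{V'}(v_i)\|^2 = \langle v_i, \pi_V\pi_{V'}(v_i)\rangle = \langle v_i, Cv_i\rangle = \lambda_i$ (idempotence and self-adjointness of $\pi_{V'}$, and $\langle v_i, x\rangle = \langle v_i, \pi_V x\rangle$ for $v_i \in V$), one checks that $e_i$ is a unit vector of $V'$ --- this is also a special case of Lemma~\ref{lemma:MinumalAngleAndOrthogonalPtrojection} --- and that $\langle v_i, e_i\rangle = \cos\theta_i$. Finally I would define $v'_i = \tfrac{1}{\sin\theta_i}\,(e_i - \cos\theta_i v_i)$, so that $\cos\theta_i v_i + \sin\theta_i v'_i = e_i \in V'$ by construction.

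It then remains to verify that $v_1, \dots, v_{d'}, v'_1, \dots, v'_{d'}, w_1, \dots, w_{d-d'}$ is an orthonormal family; once this is known the rest is a dimension count, since all these vectors lie in $V + V'$, which has dimension $d + d'$: the $v_i$'s together with the $w_j$'s already form an orthonormal basis of $V$, and hence $e_1, \dots, e_{d'}$ (that is, the vectors $\cos\theta_i v_i + \sin\theta_i v'_i$) together with $w_1, \dots, w_{d-d'}$ form an orthonormal basis of $V'$, as required. All the needed orthogonality relations funnel through the single identity $\langle \pi_{V'}(v_i), \pi_{V'}(v_k)\rangle = \langle v_i, \pi_V\pi_{V'}(v_k)\rangle = \langle v_i, Cv_k\rangle = \lambda_k\,\langle v_i, v_k\rangle$, which yields $\langle e_i, e_k\rangle = \langle e_i, v_k\rangle = 0$ for $i \neq k$, while $\langle e_i, w_j\rangle = \tfrac{1}{\cos\theta_i}\langle v_i, \pi_{V'}(w_j)\rangle = \tfrac{1}{\cos\theta_i}\langle v_i, w_j\rangle = 0$ because $w_j \in V'$; expanding $v'_i$ via $e_i$ and $v_i$ then gives $\langle v'_i, v_k\rangle = \langle v'_i, w_j\rangle = 0$ for all $k, j$, $\langle v'_i, v'_k\rangle = 0$ for $i \neq k$, and $\|v'_i\|^2 = \tfrac{1}{\sin^2\theta_i}\big(1 - 2\cos^2\theta_i + \cos^2\theta_i\big) = 1$. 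The main obstacle is precisely this orthogonality bookkeeping, but it is entirely controlled by the spectral identity above; the only genuine subtleties are checking that each $\theta_i$ stays strictly between $0$ and $\tfrac{\pi}{2}$ (so that both $\cos\theta_i$ and $\sin\theta_i$ are nonzero) and that $d' \geq 1$ when $\theta > 0$, so that $\theta_1 = \theta$ is actually attained.
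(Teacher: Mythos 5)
Your proof is correct and follows essentially the same route as the paper: diagonalize the self-adjoint operator $C=A_VA_{V'}$ on $V$, identify its eigenvalue-$1$ eigenspace with $V\cap V'$, set $\cos^2\theta_i=\lambda_i$ on the orthogonal complement, and obtain $v'_i$ by normalizing the component of $\pi_{V'}(v_i)$ orthogonal to $v_i$. The only (minor) difference is bookkeeping: you verify all orthogonality relations directly from the identity $\langle \pi_{V'}v_i,\pi_{V'}v_k\rangle=\lambda_k\langle v_i,v_k\rangle$, whereas the paper invokes its Lemma~\ref{lemma:CorrespndanceEigenVectorsCPrimeC} for the basis of $V'$ and concludes by showing $v'_k$ is orthogonal to all of $V$.
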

\begin{proof}
We first claim that, for $v \in V$ one has:
\begin{equation}\label{eq:EigenSpace1IsIntersection}
 C v = v \iff A_{V'} v = v \iff v \in V \cap V'
\end{equation}

Indeed, if $v \in V \cap V'$ one has trivially $A_{V'} v  = v$ and $C v = A_V A_{V'} v = A_V v = v$. In the other direction,
if $C V = A_V A_{V'} v =  v$, since the operator norm of $A_{V'}$ is $1$, one must have $\| A_{V'} v \| \leq \| v \|$
and, since the operator norm of $A_V$ is $1$, one must have $\|  v \| = \| A_V A_{V'} v \| \leq \| A_{V'} v  \|$.
Therefore one has $\| A_{V'} v \| = \| v \|$. 
But since $\| A_{V'} v \| = \| v \| \cos \angle v, A_{V'}v$ 
we get $\cos\angle v, A_{V'}v = 1$ and $\angle v, A_{V'}v = 0$. This with $\| A_{V'} v \| = \| v \|$ gives  $A_{V'}v = v \in V \cap V'$.

It follows from \eqref{eq:EigenSpace1IsIntersection} that the eigenspace of $C$ corresponding to the eigenvalue $1$ coincides with $V \cap V'$.
We sort the eigenvalues of $C$  in increasing order (see Lemma \ref{lemma:CorrespndanceEigenVectorsCPrimeC}):
\[
\left( \cos  \theta \right)^2 = \lambda_1 \leq \lambda_2  \leq \ldots \leq \lambda_d,
\]
with $\lambda_{d'+1}= \ldots =\lambda_d = 1$ for $d' = d - \dim( V \cap V')$.

For any $k$,  $1\leq k \leq d'$, we define $v_k$ as an unit eigenvector 
associated\footnote{For multiple eigenvalues we choose orthogonal unit vectors as the eigenspace basis.} 
to  the eigenvalue $\lambda_k$.
For any $l$, $1 \leq l \leq d-d'$ we define $w_l \in V\cap V'$  such that $w_1,\ldots,w_{d-d'}$ is an orthonormal 
basis of $V\cap V'$, where $w_l$ is associated to the eigenvalue $\lambda_{d'+l} = 1$ by \eqref{eq:EigenSpace1IsIntersection}.
Then $(v_1, \ldots , v_{d'} , w_1, \ldots, w_{d-d'})$ is an orthonormal basis of $V$ and,
from  \eqref{eq:EigenSpace1IsIntersection},  $(w_l)_{ d'\leq l\leq d}$ is an orthonormal  of $V \cap V'$.

For $1\leq k \leq d'$ define $\theta_k$ as $\theta_k = \angle v_k, A_{V'}v_k$. We have seen that $\theta_1 = \theta$ and one has $0 < \theta_k \leq \theta< \pi/2$ and:
\[
\theta_k = \angle v_k, A_{V'}v_k =  \angle \lambda_k v_k, A_{V'}v_k = \angle C v_k, A_{V'}v_k = \angle A_V (A_{V'}v_k), A_{V'}v_k
\]
It follows that:
\[
\lambda_k \| v_k \| =\| C v_k \| = \cos \theta_k \| A_{V'}v_k \| =  \cos^2 \theta_k \|v_k \| 
\]
So that $\left( \cos \theta_k \right)^2= \lambda_k$.

We define now for $1\leq k \leq d'$:
\[
v'_k =  \frac{ A_{V'}v_k - \langle v_k, \,  A_{V'}v_k \rangle v_k}{\left\| A_{V'}v_k - \langle v_k, \, A_{V'}v_k \rangle v_k \right\|}
\]
where the denominator is no zero since $\theta_k > 0$.
One has by construction that $\|v'_k\| = 1$, $v'_k$ is orthogonal to $v_k$ and:
\begin{equation}\label{eq:ProjectionVkExpressedWithVkPrime}
\frac{A_{V'}v_k}{\left\| A_{V'}v_k \right\|} = \cos \theta_k v_k + \sin \theta_k v'_k
\end{equation}

Since $\left( \left( \frac{A_{V'}v_k}{\left\| A_{V'}v_k \right\|} \right)_{1\leq k \leq d'}, w_1,\ldots, w_{d-d'} \right)$ are unit eigenvectors of $C'$
they form an orthonormal basis of $V'$.
 In order to complete the proof, it remains to prove that, for $k'\neq k$, $v_k$ is orthogonal to $v'_{k'}$.
 One has from \eqref{eq:ProjectionVkExpressedWithVkPrime}, 
 using that $A_V A_{V'} v_{k}  = \cos^2   \theta_{k} v_{k}$ 
 and $\left\| A_{V'}v_{k} \right\| =  \cos   \theta_{k}$:
\begin{align*}
 \sin \theta_{k} v'_{k} &= \frac{A_{V'}v_{k}}{\left\| A_{V'}v_{k} \right\|} - \cos \theta_{k} v_{k}  
 =   \frac{A_{V'}v_{k}}{\left\| A_{V'}v_{k} \right\|} - \cos \theta_{k}   \frac{A_{V} \left(A_{V'}v_{k} \right)}{\cos^2   \theta_{k} } \\
 &= \frac{A_{V'}v_{k}}{\left\| A_{V'}v_{k} \right\|} - A_{V} \left( \frac{A_{V'}v_{k}}{\left\| A_{V'}v_{k} \right\|} \right)
 \end{align*}
Recall thet $A_V$ is the orthogonal projection on $V$ and the last equality shows then  that $ v'_{k}$ is orthogonal to $V$ 
and therefore orthogonal to any $v_{k'}$ for $1\leq k' \leq d'$.
\end{proof}


\clearpage
\section{$C^2$-submanifold of Euclidean space}
\label{appendix:submanifold-generalities}

We recall the definition of a smooth submanifold; see for instance \url{https://maths-people.anu.edu.au/~andrews/DG/DG_chap3.pdf}.
 
\begin{definition}
A subset  $\M$ of $\Rspace^N$ is a $d$-dimensional $C^2$ submanifold if for 
every point $x$ in $\M$ there exists a neighborhood $V$ of $x$ in $\Rspace^N$,
 an open set $U \subseteq  \Rspace^d$ and a $C^2$  map $\xi : U \rightarrow \Rspace^N$ such that
 $\xi$  is a homeomorphism onto $\M \cap V$, and the differential $\Differential_y \xi$  is injective for every $y \in U$.
\end{definition}
A $d$-dimensional $C^2$ submanifold is a $C^2$ manifold topologically
embedded in $\Rspace^N$ but the converse is not true in general.  For
instance, the square is not a submanifold of the plane. Yet, it
is a $C^\infty$ manifold topologically embedded in the
plane. Precisely, the square can be defined as the image of the circle
$\Sspace^1$ (which is a $C^\infty$ manifold) by a $C^\infty$ map that
is non-regular ({\em i.e.} whose derivative is non-injective)
precisely at the four corners of the square.

A compact $C^2$ submanifold has positive reach \cite{federer-59}. 
Moreover one has (see for example Lemma 4 and following paragraph in \cite{boissonnat2019reach} or 
Proposition 2.3 in \cite{aamari2019estimating}) that:

\begin{lemma}[The inverse of the reach bounds the curvature]\label{lemma:InverReachBoundsCurvature}
If $\M \subseteq  \Rspace^N$ is a $C^2$ $d$-dimensional submanifold with reach $ \Reach (\M) >0$ then $1/ \Reach (\M)$
bounds the (absolute values of the ) principal curvatures at $m \in \M$ in the direction $v$ for any  vector $v$ in the space normal to $\M$ at $m$.
In particular  $1/ \Reach (\M)$ bounds the principal curvatures when $\M$ has codimension $1$.
\end{lemma}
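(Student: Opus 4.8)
The plan is to reduce the statement to a second-order Taylor expansion along curves in $\M$, using the ``rolling ball'' characterisation of positive reach. First I would fix $m \in \M$ and a unit normal vector $v \in N_m\M$, and invoke the standard consequence of the definition of reach (as established by Federer \cite{federer-59}): for every $r < \reach$ the centre $m + r v$ admits $m$ as its unique nearest point on $\M$, so the open ball $B(m+rv,r)$ is disjoint from $\M$. This translates, after expanding $\|p-(m+rv)\|^2 \geq r^2$, into the inequality $\|p - m\|^2 \geq 2 r\,\langle p - m, v\rangle$ valid for every $p \in \M$ and every $r < \reach$.

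Next I would fix a unit tangent vector $u \in T_m\M$ and pick an arc-length-parametrised $C^2$ curve $\gamma$ in $\M$ with $\gamma(0) = m$ and $\gamma'(0) = u$ (available since $\M$ is a $C^2$ submanifold). Plugging $p = \gamma(t)$ into the inequality and using the expansion $\gamma(t) = m + tu + \tfrac{t^2}{2}\gamma''(0) + o(t^2)$ together with $\|u\| = 1$ and $\langle u, v\rangle = 0$, one gets $\|\gamma(t)-m\|^2 = t^2 + o(t^2)$ and $\langle\gamma(t)-m,v\rangle = \tfrac{t^2}{2}\langle\gamma''(0),v\rangle + o(t^2)$; dividing by $t^2 > 0$ and letting $t\to 0$ yields $r\,\langle\gamma''(0),v\rangle \leq 1$ for all $r < \reach$, hence $\langle\gamma''(0),v\rangle \leq 1/\reach$. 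Since $\langle\gamma''(0),v\rangle$ is the normal curvature $\kappa_v(u)$ of $\M$ at $m$ in the tangent direction $u$ with respect to $v$ (in particular independent of the chosen $\gamma$), and since replacing $v$ by $-v$ flips its sign, I would conclude $|\kappa_v(u)| \leq 1/\reach$ for every unit $u \in T_m\M$.

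To finish, I would recall that the principal curvatures of $\M$ at $m$ in the normal direction $v$ are the eigenvalues of the self-adjoint shape operator $S_v$, equivalently the extreme values of the quadratic form $u \mapsto \kappa_v(u)$ on the unit sphere of $T_m\M$; the bound above therefore bounds all of them by $1/\reach$ in absolute value. When $\operatorname{codim}\M = 1$, the space $N_m\M$ is a line, so a single choice of unit $v$ exhausts all principal curvatures, which gives the final assertion.

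The main obstacle is the very first step: cleanly justifying that an open ball of radius $r < \reach$ internally tangent to $\M$ at $m$ does not meet $\M$ anywhere else. This is ``well known'' but genuinely relies on Federer's analysis of sets of positive reach (uniqueness of the nearest point throughout the $r$-tubular neighbourhood); so I would either cite \cite{federer-59} for that fact directly, or, as the excerpt already does, cite \cite{boissonnat2019reach,aamari2019estimating} where exactly this curvature bound is proved. Everything after that is an entirely routine Taylor expansion, so I would not belabour it.
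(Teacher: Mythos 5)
Your proposal is correct, but note that the paper does not actually prove this lemma: it is stated as a known fact and attributed to Lemma~4 of \cite{boissonnat2019reach} and Proposition~2.3 of \cite{aamari2019estimating}. What you wrote is essentially the standard argument given in those references: the only genuinely non-elementary ingredient is the fact that for a unit normal $v$ at $m$ and any $r<\Reach{\M}$ the point $m+rv$ has $m$ as its nearest point on $\M$ (equivalently, the open ball $B(m+rv,r)$ misses $\M$), which is Federer's Theorem~4.8(12) in \cite{federer-59} and not a direct consequence of the bare definition of reach; you flag this correctly and cite it, and one could even shortcut your first step by invoking Federer's Theorem~4.18, which gives the inequality $\langle v,\,p-m\rangle \leq \|p-m\|^2/(2\Reach{\M})$ for all $p\in\M$ directly. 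After that, your second-order Taylor expansion along a $C^2$ curve with $\gamma(0)=m$, $\gamma'(0)=u$ (arc-length parametrization is not even needed, only $\|\gamma'(0)\|=1$), the identification of $\langle\gamma''(0),v\rangle$ with the normal curvature $\kappa_v(u)$, the sign flip via $v\mapsto -v$, and the passage to the eigenvalues of the shape operator $S_v$ are all routine and correct, and they do yield the stated bound $1/\Reach{\M}$ on all principal curvatures in every normal direction, with the codimension-one case as an immediate specialization.
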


The following lemma,  due to Federer, bounds the distance of a point $q\in \M$ to the tangent space at a point $p \in \M$. 
It holds for any set with positive reach and in particular  for $C^2$  submanifolds.
\begin{lemma}[Distance to tangent space, Theorem 4.8(7) of \cite{federer-59}] \label{lemma:distanceToTangent}
Let $p,q \in \M \subseteq  \Rspace^N$ such that $\|p-q\| < \Reach(\M)$. We have 
\begin{align} 
\sin \angle \left([pq], T_p \M \right) \leq \frac{\|p-q\|}{2\,  \Reach (\M)},
\label{eq:AngleSegmentToT}
\end{align} 
and 
\begin{align}
d(q , T_p \M )  \leq \frac{\|p-q\|^2}{2\, \Reach(\M) }.
\label{eq:DistToTangent}
\end{align}
\label{TangentSpaceLine1}
\end{lemma}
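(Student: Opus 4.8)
The plan is to reduce both inequalities to a single bound on the distance from $q$ to the affine tangent space, and then obtain that bound from the empty supporting-ball property of sets of positive reach. Write $R = \Reach(\M)$ and decompose $q - p = u + v$ orthogonally, with $u$ lying in the vector tangent space $T_p\M$ and $v$ in its orthogonal complement (the normal space, since $\M$ is a $C^2$ submanifold). Then $d(q, \Tangent p \M) = \|v\|$ and $\sin\angle([pq], T_p\M) = \|v\| / \|q-p\|$, so it suffices to prove $\|v\| \le \|q-p\|^2 / (2R)$.

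If $v = 0$ there is nothing to prove, so assume $v \neq 0$ and set $\nu = v / \|v\|$, a unit normal vector to $\M$ at $p$. The one substantive ingredient is the classical fact --- this is exactly Federer's Theorem 4.8 --- that for every $p \in \M$ and every unit vector $\nu$ in the normal space at $p$, the open ball $B(p + R\nu, R)$ meets $\M$ only at $p$; equivalently, $\pi_\M(p + t\nu) = p$ for all $0 \le t < R$. Applying this to the point $q \in \M$, which is distinct from $p$ because $v \neq 0$, gives $\|q - (p + R\nu)\| \ge R$.

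It then only remains to expand the square: $\|q - p - R\nu\|^2 = \|q-p\|^2 - 2R\,(q-p)\cdot\nu + R^2$, and since $(q-p)\cdot\nu = (u+v)\cdot v / \|v\| = \|v\|$, the inequality $\|q - p - R\nu\|^2 \ge R^2$ collapses to $\|q-p\|^2 \ge 2R\|v\|$, i.e.\ $\|v\| \le \|q-p\|^2/(2R)$. Substituting back yields \eqref{eq:DistToTangent} at once, and dividing through by $\|q-p\|$ (positive, and indeed smaller than $R$) yields \eqref{eq:AngleSegmentToT}.

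I do not expect any genuine obstacle here: the empty supporting-ball property is the only non-elementary point, and the natural route is to cite Federer for it; if a self-contained argument were wanted instead, one would derive it from the fact that every point on the half-open normal segment $p + [0,R)\nu$ has $p$ as its unique nearest point in $\M$. The one thing worth making explicit is the role of the hypotheses: the $C^2$-submanifold assumption (through Federer's theorem that a compact $C^2$ submanifold has positive reach) is what guarantees both $R > 0$ and that the entire orthogonal complement of $T_p\M$ is available as normal directions, so that the direction $\nu$ produced by the decomposition is a legitimate one for the supporting-ball property; beyond that, the argument uses only that $\M$ has positive reach.
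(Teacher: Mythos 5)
Your proof is correct. Note, though, that the paper does not prove this lemma at all: it is imported verbatim as Theorem 4.8(7) of Federer's paper, so any actual argument is by definition a different route. What you do is derive 4.8(7) from a more primitive part of the same theorem, namely the fact that for a unit normal $\nu$ at $p$ the projection satisfies $\pi_\M(p+t\nu)=p$ for $0\le t<\Reach\M$ (equivalently, the supporting empty-ball property along normal directions). Your reduction is the standard one and it is sound: the orthogonal decomposition $q-p=u+v$ correctly identifies both $d(q,\Tangent p \M)=\|v\|$ and $\sin\angle([pq],T_p\M)=\|v\|/\|q-p\|$, so both inequalities collapse to $\|v\|\le\|q-p\|^2/(2\Reach\M)$, which your expansion of $\|q-p-R\nu\|^2\ge R^2$ delivers; and you correctly flag that the $C^1$/$C^2$ submanifold hypothesis is what makes every vector of $(T_p\M)^\perp$ a legitimate Federer normal, so the argument really only uses positive reach (as the paper itself remarks). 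Two harmless imprecisions worth tightening: the point $p$ lies on the boundary of the open ball $B(p+R\nu,R)$, so that ball in fact meets \M nowhere (the conclusion $\|q-(p+R\nu)\|\ge R$ for all $q\in\M$ is still exactly what you need); and the supporting-ball statement is available for radii $t<R$, so reaching radius $R$ itself requires a one-line limit $t\to R^-$ (or, equivalently, writing the inequality $(q-p)\cdot\nu\le\|q-p\|^2/(2t)$ for all $t<R$ and letting $t\to R$). Neither affects correctness; your hypothesis $\|p-q\|<\Reach\M$ is in fact not even needed for the argument, which matches Federer's original statement.
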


Next lemma bounds the angle variation for $C^2$ manifolds (a slightly weaker condition si given for $C^{1,1}$ manifolds in the same paper):
\begin{lemma}[Corollary  3 in \cite{boissonnat2019reach}]\label{lemma:TangentSpaceVariation}
For any $p,q \in \M$, we have
\begin{align}
\sin \left(\frac{\angle (T_p\M, T_q \M )}{2} \right)
\leq \frac{\|p-q\|}{2\Reach{\M}} .
\nonumber
\end{align}
\end{lemma}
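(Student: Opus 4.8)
The statement is Corollary~3 of \cite{boissonnat2019reach}; the plan is to reconstruct that proof, which routes the bound through the intrinsic distance on $\M$. First note that we may assume $\|p-q\|<2\Reach{\M}$: otherwise $\tfrac{\|p-q\|}{2\Reach{\M}}\ge 1\ge\sin\!\big(\tfrac{\angle(T_p\M,T_q\M)}{2}\big)$ and there is nothing to prove. Under this assumption $p$ and $q$ lie in the same connected component of $\M$ (two components of a set of reach $\Reach{\M}$ are at Euclidean distance at least $2\Reach{\M}$) and are joined by a minimizing geodesic $\gamma:[0,\ell]\to\M$ parametrized by arc length, with $\gamma(0)=p$, $\gamma(\ell)=q$ and $\ell$ the intrinsic distance between $p$ and $q$.

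The first ingredient bounds the angle between tangent planes by $\ell/\Reach{\M}$. Since $\M$ is $C^2$, $t\mapsto T_{\gamma(t)}\M$ is a $C^1$ family of $d$-planes in $\Rspace^\Dim$, and its instantaneous rotation is carried by the second-fundamental-form operator $Y\mapsto\mathrm{II}_{\gamma(t)}(\gamma'(t),Y)$ from $T_{\gamma(t)}\M$ to $N_{\gamma(t)}\M$; since $\|\gamma'(t)\|=1$, Lemma~\ref{lemma:InverReachBoundsCurvature} bounds this operator in norm by $1/\Reach{\M}$, so the tangent plane turns at angular speed at most $1/\Reach{\M}$. Integrating this bound along $\gamma$ (choosing a moving orthonormal frame whose angular-velocity tensor is this operator, and using subadditivity of rotation angles under composition) yields $\angle(T_p\M,T_q\M)\le \ell/\Reach{\M}$.

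The second ingredient is the sharp metric-distortion estimate $\ell\le 2\Reach{\M}\,\arcsin\!\big(\tfrac{\|p-q\|}{2\Reach{\M}}\big)$, whose extremal case is a sphere of radius $\Reach{\M}$. Setting $\rho(t)=\|\gamma(t)-p\|$ one computes $(\rho^2)''(t)=2+2\langle\gamma''(t),\gamma(t)-p\rangle$; since $\gamma$ is a geodesic, $\gamma''(t)\perp T_{\gamma(t)}\M$ with $\|\gamma''(t)\|\le 1/\Reach{\M}$ (Lemma~\ref{lemma:InverReachBoundsCurvature}), and the normal part of $\gamma(t)-p$ has length $d(p,T_{\gamma(t)}\M)\le \rho(t)^2/(2\Reach{\M})$ by Federer's estimate \eqref{eq:DistToTangent}, so $(\rho^2)''\ge 2-\rho^2/(\Reach{\M})^2$. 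The sphere of radius $\Reach{\M}$ realizes equality, with $\rho(t)=2\Reach{\M}\sin\!\big(t/(2\Reach{\M})\big)$, so a comparison (Gr\"onwall-type) argument gives $\rho(t)\ge 2\Reach{\M}\sin\!\big(t/(2\Reach{\M})\big)$ in the relevant range; at $t=\ell$ this rearranges to the displayed bound on $\ell$. Combining the two ingredients, $\tfrac{\angle(T_p\M,T_q\M)}{2}\le\tfrac{\ell}{2\Reach{\M}}\le\arcsin\!\big(\tfrac{\|p-q\|}{2\Reach{\M}}\big)$; both sides lie in $[0,\tfrac{\pi}{2}]$, so applying the increasing function $\sin$ gives the lemma.

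The hard part is the second ingredient: the \emph{sharp} metric-distortion bound genuinely uses the reach condition along the whole geodesic, not just at $p$ and $q$ --- knowing only that the $\Reach{\M}$-balls tangent to $\M$ at $p$ and at $q$ miss $\M$ puts no constraint at all on $\angle(T_p\M,T_q\M)$, as one sees by taking $T_p\M$ and $T_q\M$ to be arbitrary $d$-planes containing the orthogonal complement of $p-q$. A cheaper estimate, bounding the length of the curve $t\mapsto\pi_\M\big((1-t)p+tq\big)$ by the Lipschitz constant of $\pi_\M$, only gives $\ell\le \tfrac{2\Reach{\M}}{2\Reach{\M}-\|p-q\|}\|p-q\|$, which is too large to close the half-angle inequality; hence the minimizing geodesic together with the sphere comparison is really needed, and some care is required near the boundary of the ``relevant range'' where $\rho(t)$ approaches $\Reach{\M}$ (a short bootstrap reduces this to small distances). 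Pinning down the clean constant $1/\Reach{\M}$ in the first ingredient is routine once the frame is chosen correctly.
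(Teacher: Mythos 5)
The paper does not prove this lemma; it imports it verbatim as Corollary~3 of \cite{boissonnat2019reach}, so there is no in-paper argument to compare against. Your reconstruction follows the route of that reference faithfully and correctly: reduce to $\|p-q\|<2\Reach{\M}$, join $p$ and $q$ by a minimizing geodesic of length $\ell$, bound $\angle(T_p\M,T_q\M)\le \ell/\Reach{\M}$ via the curvature bound of Lemma~\ref{lemma:InverReachBoundsCurvature} and subadditivity of angles, and combine with the sharp metric-distortion estimate $\ell\le 2\Reach{\M}\arcsin\bigl(\tfrac{\|p-q\|}{2\Reach{\M}}\bigr)$ obtained by the sphere-comparison argument on $\rho(t)=\|\gamma(t)-p\|$ using Federer's tangent-space estimate \eqref{eq:DistToTangent}. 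Your side remarks (why the Lipschitz bound on $\pi_\M$ is quantitatively insufficient, and where the comparison argument needs care) are also accurate.
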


Using Lemmas \ref{lemma:distanceToTangent} and \ref{lemma:TangentSpaceVariation} we can show that the projection on a tangent space
defines a local chart for $\M$.
Indeed, if $m \in \M$ then for any $p,q  \in  \M \cap B\big(m,  \sin (\pi/4) \Reach{\M} \big)$, \eqref{eq:AngleSegmentToT} gives:
\[
\sin \angle ([pq], T_p \M) \leq \frac{\|p-q\|}{2\,  \Reach (\M)} \leq   \frac{2 \sin (\pi/4) \Reach{\M}}{2\,  \Reach (\M)} = \sin (\pi/4)
\]
So that:
\begin{align}
\angle ([pq], T_p \M) \leq  \pi/4, \label{eq:BounfAnglePQTpM} 
\end{align}
and Lemma \ref{lemma:TangentSpaceVariation} gives:
\[
\sin \left(\frac{\angle (T_p\M, T_m\M )}{2} \right)
\leq \frac{\|p-m\|}{2\Reach{\M}} \leq \frac{\sin (\pi/4) \Reach{\M} }{2\Reach{\M}} =  \frac{\sin (\pi/4) }{2}
\]
so that:
\[
\angle (T_p\M, T_m\M ) \leq  2 \arcsin \left(\frac{\sin (\pi/4) }{2}\right) <  \pi/4
\]
Summing with \eqref{eq:BounfAnglePQTpM} gives: 
\[
\angle ([pq], T_m \M) < \pi/2
\]
and this in turn implies:
\[
p\neq q \Rightarrow \pi_{T_m\M} (p) \neq \pi_{T_m\M} (q)
\]
We have shown that the restriction of $\pi_{T_m\M}$ to $\M \cap B(m,  \sin (\pi/4) \Reach{\M}$ is injective, 
and, by Invariance of Domain Theorem (\cite{brouwer1912invarianz}),
it is an homeomorphism on its image, which gives us:
\begin{lemma}[Projection on tangent space is a local chart]\label{lemma:TangentSpaceProjectionIsAMap}
If $\M$ is a compact $C^2$ submanifold of Euclidean space, and $m\in \M$ then, identifying $T_m\M$ with $\Rspace^d$ through a given frame, 
the restriction of $\pi_{T_m\M}$ to $\M \cap B\left(m,  \sin (\pi/4) \Reach{\M}\right)^\circ$ is a local chart of $\M$.
Moreover, for any $p \in \M \cap B\left(m,  \sin (\pi/4) \Reach{\M}\right)^\circ$, one has:
\[
\angle (T_p\M, T_m\M  )  <  \pi/4
\]
\end{lemma}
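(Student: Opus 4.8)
The plan is to promote the quantitative tangent estimates recalled just above to the two claimed conclusions. Write $\reach = \Reach{\M}$, set $r = \sin(\pi/4)\,\reach$ and $U = \M \cap B(m,r)^{\circ}$, and recall that $\pi_{T_m\M}$ denotes the orthogonal projection onto the affine tangent space at $m$, which we post-compose with the chosen linear identification $T_m\M \cong \Rspace^d$. The two ingredients are Federer's bound on the angle between a secant of $\M$ and a tangent space (Lemma~\ref{lemma:distanceToTangent}) and the bound on the variation of tangent spaces (Lemma~\ref{lemma:TangentSpaceVariation}); the topological conclusion then follows from the Invariance of Domain theorem~\cite{brouwer1912invarianz}.

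I would first prove the ``moreover'' clause, since it feeds into the injectivity step. For $p \in U$ one has $\|p-m\| < r = \sin(\pi/4)\,\reach < \reach$, so Lemma~\ref{lemma:TangentSpaceVariation} applies and gives $\sin\!\big(\tfrac12\angle(T_p\M,T_m\M)\big) \le \|p-m\|/(2\reach) < \tfrac12\sin(\pi/4)$, whence $\angle(T_p\M,T_m\M) < 2\arcsin\!\big(\tfrac12\sin(\pi/4)\big)$, and a one-line numerical check gives that this last quantity is $< \tfrac{\pi}{4}$.

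Next I would establish injectivity of the restriction of $\pi_{T_m\M}$ to $U$. Given distinct points $p,q \in U$, the triangle inequality gives $\|p-q\| \le \|p-m\| + \|m-q\| < 2r = \sqrt{2}\,\reach$, so Lemma~\ref{lemma:distanceToTangent} yields $\sin\angle([pq],T_p\M) \le \|p-q\|/(2\reach) < \sin(\pi/4)$, i.e.\ $\angle([pq],T_p\M) < \tfrac{\pi}{4}$. Combining with the tangent-space bound through the triangle inequality for angles between subspaces established in Appendix~\ref{section:AnglebetweenAffineSpaces}, $\angle([pq],T_m\M) \le \angle([pq],T_p\M) + \angle(T_p\M,T_m\M) < \tfrac{\pi}{2}$. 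Hence the line $[pq]$ is not orthogonal to $T_m\M$, so $\pi_{T_m\M}(p) \ne \pi_{T_m\M}(q)$, and $\pi_{T_m\M}$ is injective on $U$. Since $U$ is an open subset of the $d$-manifold $\M$ and the restriction of $\pi_{T_m\M}$ to $U$ is a continuous injective map into $\Rspace^d$, Invariance of Domain~\cite{brouwer1912invarianz} shows it is an open map, hence a homeomorphism onto an open subset of $\Rspace^d$; this is exactly the assertion that it is a local chart of $\M$.

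The argument is essentially routine. The two points that deserve a moment's care are the arithmetic of the constants --- namely that $2\arcsin(\tfrac12\sin(\pi/4)) < \tfrac{\pi}{4}$ and that two contributions each $< \tfrac{\pi}{4}$ add up strictly below $\tfrac{\pi}{2}$, both of which hold with room to spare --- and the fact that Lemma~\ref{lemma:distanceToTangent} is phrased under the hypothesis $\|p-q\| < \reach$ while points of $U$ are only guaranteed to be at distance $< \sqrt{2}\,\reach$ of one another, so one should check that the estimate \eqref{eq:AngleSegmentToT} it relies on remains available at this separation. There is no real obstacle beyond this bookkeeping; the topological closing step is standard.
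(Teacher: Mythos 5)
Your proof is correct and follows essentially the same route as the paper's: the same two Federer-type bounds (Lemmas~\ref{lemma:distanceToTangent} and~\ref{lemma:TangentSpaceVariation}), the same angle arithmetic giving $\angle([pq],T_m\M)<\pi/2$, and the same appeal to Invariance of Domain. The caveat you flag about applying \eqref{eq:AngleSegmentToT} at separation up to $\sqrt{2}\,\reach$ is a fair observation, but note that the paper's own proof does exactly the same thing without comment (and Federer's Theorem~4.8(7) does hold at that range).
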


Following \cite[Notation 1.1]{boissonnat2020local}, we call Thickness of a $k$-simplex:
\[
t(\sigma) = \frac{h}{k L}
\]
where $h$ is the smallest altitude of $\sigma$ and $L$ the length of the longest edge.
Then we have (adaptation of \cite[Section IV.15]{whitney2005geometric}, proven in \cite[Lemma 8.11]{boissonnat2018geometric}):
\begin{lemma}[Angle between  simplices  tangent space]\label{lemma:UpperBoundAngleTangentSpaceSimplex}
If $\M$ is a compact $C^2$ submanifold of Euclidean space, $\sigma$ a $d$-simplex with vertices in $\M$, and $p$ a vertex of $\sigma$:
then:
\[
\angle \Aff{\sigma} , T_p\M \leq \frac{L}{t(\sigma) \Reach{\M}} = \frac{d L^2}{h \Reach{\M}}
\]
\end{lemma}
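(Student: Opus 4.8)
Write $\sigma=\{v_0,\dots,v_d\}$ with $v_0=p$, let $V$ be the $d$-dimensional direction space of $\Aff{\sigma}$, and let $\pi_N$ denote orthogonal projection onto the normal space $N=(T_p\M)^{\perp}$. My first step is to reduce the claim to a pointwise estimate: since $V$ and $T_p\M$ have the same dimension, Lemma~\ref{lemma:MinumalAngleAndOrthogonalPtrojection} together with the definition of the angle between vector spaces gives
\[
\angle(\Aff{\sigma},T_p\M)\;=\;\max_{u\in V,\ \|u\|=1}\ \arcsin\|\pi_N(u)\| ,
\]
so it is enough to bound $\|\pi_N(u)\|$ for a fixed unit vector $u\in V$. (We may assume $L<\Reach{\M}$; otherwise $h\le L$ forces the claimed right-hand side to be at least $d$, and the inequality is immediate.)

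Next I would expand $u$ in the basis $\{v_i-p\}_{i=1}^{d}$ of $V$, writing $u=\sum_{i=1}^{d}\lambda_i(v_i-p)$, and bound the two factors of each term separately. For the direction vectors I invoke Federer's estimate, Lemma~\ref{lemma:distanceToTangent}, inequality~\eqref{eq:DistToTangent}, applied to $v_i,p\in\M$ with $\|v_i-p\|\le L<\Reach{\M}$, which yields $\|\pi_N(v_i-p)\|=d(v_i,\Tangent{p}{\M})\le \|v_i-p\|^2/(2\Reach{\M})\le L^2/(2\Reach{\M})$. For the coefficient $\lambda_i$, the key observation is that it equals the $i$-th barycentric coordinate of $p+u$ relative to $\sigma$; denoting by $\mu_i:\Aff{\sigma}\to\Rspace$ that barycentric coordinate function --- an affine map equal to $1$ at $v_i$ and vanishing on the facet $\Aff(\sigma\setminus\{v_i\})$, which contains $p=v_0$ --- we get $\lambda_i=\mu_i(p+u)-\mu_i(p)=\langle\nabla\mu_i,u\rangle$, and a short computation identifies $\|\nabla\mu_i\|=1/h_i$, where $h_i=d(v_i,\Aff(\sigma\setminus\{v_i\}))\ge h$ is the altitude of $\sigma$ from $v_i$. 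Hence $|\lambda_i|\le\|u\|/h_i\le 1/h$, and by linearity of $\pi_N$,
\[
\|\pi_N(u)\|\;\le\;\sum_{i=1}^{d}|\lambda_i|\,\|\pi_N(v_i-p)\|\;\le\;\frac{dL^2}{2h\,\Reach{\M}} .
\]
Combining this with the reduction of the first step and with $\arcsin t\le\tfrac{\pi}{2}t$ on $[0,1]$ (the bound being vacuous when $dL^2/(2h\Reach{\M})>1$, since then $dL^2/(h\Reach{\M})>2>\pi/2$), I obtain $\angle(\Aff{\sigma},T_p\M)\le\tfrac{\pi}{4}\,dL^2/(h\Reach{\M})\le dL^2/(h\Reach{\M})=L/(t(\sigma)\Reach{\M})$, which is the claim.

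The step I expect to demand the most care is the identity $\|\nabla\mu_i\|=1/h_i$: it is elementary (the level sets of $\mu_i$ are parallel to the facet $\Aff(\sigma\setminus\{v_i\})$ and $\mu_i$ rises from $0$ to $1$ across the altitude segment of length $h_i$), but it is precisely the place where the thickness of $\sigma$ --- equivalently, its smallest altitude $h$ --- genuinely enters and controls the expansion coefficients $\lambda_i$. Everything else is the definition of the angle between subspaces, Federer's tangent-space estimate, and routine bookkeeping; the one point to watch throughout is the distinction between the vector tangent space $T_p\M$, whose direction is used to measure the angle, and the affine tangent space $\Tangent{p}{\M}$, to which Federer's distance bound refers.
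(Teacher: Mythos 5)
Your proof is correct and is essentially the argument the paper itself points to: the lemma is not proved in the text but cited to Whitney and to \cite{boissonnat2018geometric}, and that proof is exactly your combination of Federer's estimate (Lemma~\ref{lemma:distanceToTangent}, \eqref{eq:DistToTangent}) with the $1/h_i$ control of the barycentric-coordinate gradients — i.e.\ an inline re-derivation of the Whitney angle bound (Lemma~\ref{lemma:whitney-angle-bound}) with $t=L^2/(2\Reach{\M})$, followed by the $\arcsin t\le \tfrac{\pi}{2}t$ conversion, which neatly absorbs the spare factor $2$. One small quibble: your parenthetical dismissal of the case $L\ge\Reach{\M}$ is not quite ``immediate'' when $d=1$ (the right-hand side is then only $\ge 1$ while the angle could a priori be as large as $\pi/2$), so that degenerate case deserves one extra line, e.g.\ via \eqref{eq:AngleSegmentToT}; this does not affect the regime in which the lemma is used.
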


We shall also need the Whitney angle bound established in \cite{BOISSONNAT_2013}.

\begin{lemma}[Whitney angle bound {\cite[Lemma 2.1]{BOISSONNAT_2013}}]
  \label{lemma:whitney-angle-bound}
  Consider a $d$-dimensional affine space $H$ and a simplex $\sigma$
  such that $\dim\sigma \leq d$ and $\sigma \subseteq \Offset H t$ for
  some $t \geq 0$. Then
  \[
  \sin \angle(\Aff\sigma,H) \leq \frac{2 t \dim(\sigma)}{\height{\sigma}}
  \]
\end{lemma}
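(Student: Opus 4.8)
The plan is to reduce the statement to a bound on how much unit vectors of $\Aff\sigma$ can be tilted away from $H$, and then to quantify that tilt using the barycentric coordinate functions of $\sigma$, in which the heights $d(v_j,\Aff(\sigma\setminus\{v_j\}))$ appear explicitly. I would first dispose of trivialities: if $\sigma$ is degenerate then $\height\sigma=0$ and there is nothing to prove, and if $\dim\sigma=0$ the angle is $0$; so assume $\sigma$ is non-degenerate with $k=\dim\sigma\ge 1$. Let $V_\sigma$ and $V_H$ be the direction spaces of $\Aff\sigma$ and $H$, and let $\pi_{V_H}$, $\pi_{V_H^\perp}$ denote the orthogonal projections onto $V_H$ and onto its orthogonal complement. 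For a unit vector $u\in V_\sigma$, Lemma~\ref{lemma:MinumalAngleAndOrthogonalPtrojection} shows that $\min_{w\in V_H,\,\|w\|=1}\angle u,w$ is attained at a positive multiple of $\pi_{V_H}(u)$, so this minimal angle equals $\angle u,\pi_{V_H}(u)$, whose sine is $\|u-\pi_{V_H}(u)\|=\|\pi_{V_H^\perp}(u)\|$ (the degenerate case $\pi_{V_H}(u)=0$ gives minimal angle $\pi/2$, consistent with $\|\pi_{V_H^\perp}(u)\|=1$). Since the sine is increasing on $[0,\pi/2]$, Definition~\ref{definition:DefinitionAngleAffineSpaces} yields
\[
\sin\angle(\Aff\sigma,H)\;=\;\max_{u\in V_\sigma,\ \|u\|=1}\ \|\pi_{V_H^\perp}(u)\|.
\]
So I would fix a unit vector $u\in V_\sigma$ realizing this maximum and a unit vector $\nu\in V_H^\perp$ with $\langle u,\nu\rangle=\|\pi_{V_H^\perp}(u)\|$ (if this quantity is $0$ we are done).

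The core step is to express $u$ through the barycentric coordinates of $\sigma=\{v_0,\dots,v_k\}$. For each $j$, the $j$-th barycentric coordinate function $\ell_j$ on $\Aff\sigma$ is affine with $\ell_j(v_i)=\delta_{ij}$ and vanishes on $\Aff(\sigma\setminus\{v_j\})$, hence its gradient, taken as a vector of $V_\sigma$, is $\nu_j/h_j$, where $h_j=d(v_j,\Aff(\sigma\setminus\{v_j\}))\ge\height\sigma$ and $\nu_j$ is the unit vector of $V_\sigma$ normal to $\Aff(\sigma\setminus\{v_j\})$. Writing $u=x-y$ with $x,y\in\Aff\sigma$ and $\lambda_j=\ell_j(x)-\ell_j(y)=\langle u,\nu_j\rangle/h_j$, we obtain $u=\sum_{j=0}^{k}\lambda_j v_j$ with $\sum_j\lambda_j=0$ and $|\lambda_j|\le 1/h_j\le 1/\height\sigma$. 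Because $\sum_j\lambda_j=0$, for any point $q_0\in H$ we may write $\langle u,\nu\rangle=\sum_{j}\lambda_j\langle v_j-q_0,\nu\rangle$; splitting $v_j-q_0=(v_j-\pi_H(v_j))+(\pi_H(v_j)-q_0)$, the second summand lies in $V_H$ and so is killed by $\nu$, while the first has norm $d(v_j,H)\le t$. Hence $|\langle v_j-q_0,\nu\rangle|\le t$, and
\[
\|\pi_{V_H^\perp}(u)\|\;=\;\langle u,\nu\rangle\;\le\;t\sum_{j=0}^{k}|\lambda_j|\;\le\;\frac{(k+1)\,t}{\height\sigma}\;\le\;\frac{2kt}{\height\sigma},
\]
the last inequality using $k\ge 1$. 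Combined with the displayed identity for $\sin\angle(\Aff\sigma,H)$, this proves the lemma.

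The step I expect to require the most care is the middle one: identifying the gradient of the $j$-th barycentric coordinate as exactly $\nu_j/h_j$ — so that the height enters to the first power and the coefficients $\lambda_j$ are controlled by $1/\height\sigma$ — and arranging the representation $u=\sum_j\lambda_j v_j$ so that the coefficients sum to zero, which is precisely what allows subtracting an arbitrary base point $q_0\in H$ and keeping only the components of the $v_j$ orthogonal to $H$. The remaining manipulations are routine; note that the asymmetry of the angle in Definition~\ref{definition:DefinitionAngleAffineSpaces} causes no difficulty here, since $V_\sigma$, the lower-dimensional space, plays the role of the first argument, exactly as needed.
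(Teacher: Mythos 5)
Your proof is correct. Note that the paper does not prove this lemma itself --- it imports it from \cite{BOISSONNAT_2013} --- and your argument is essentially the standard Whitney-type argument behind that citation: reduce $\sin\angle(\Aff\sigma,H)$ to $\max_{u\in V_\sigma,\|u\|=1}\|\pi_{V_H^\perp}(u)\|$, expand $u=\sum_j\lambda_j v_j$ with $\sum_j\lambda_j=0$ via the barycentric coordinate functions (whose gradients have norm $1/h_j\le 1/\height{\sigma}$), and use $d(v_j,H)\le t$ together with $\nu\in V_H^\perp$ to kill the tangential parts. Your bookkeeping in fact yields the marginally sharper constant $(\dim\sigma+1)$ in place of $2\dim\sigma$, which implies the stated bound since $\dim\sigma\ge 1$.
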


Building on these results, we derive yet another  bound between the affine
space spanned by a simplex and a nearby tangent space.

\begin{lemma}
  \label{lemma:general-angle-bound}
  Consider a non-degenerate $\rho$-small simplex $\sigma \subseteq
  \Offset \M \delta$ with $16\delta \leq \rho \leq
  \frac{\reach}{3}$. Let $x \in \Conv \sigma$. Then,
  \[
  \angle(\Aff \sigma, \Tangent {\pi_\M(x)} \M) \leq
  \arcsin \left(
  \frac{2\dim(\sigma)}{\height{\sigma}}
  \left(\frac{3\rho^2}{\reach} + \delta \right) \right).
  \]
\end{lemma}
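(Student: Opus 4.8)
The plan is to bound the distance from the vertices of $\sigma$ to the affine tangent space at $m := \pi_\M(x)$, and then feed this bound into the Whitney angle bound (Lemma~\ref{lemma:whitney-angle-bound}) applied with the $d$-flat $H = \Tangent m \M$. First I would check that $m$ is well defined: since $16\delta \le \rho \le \reach/3$, Lemma~\ref{lemma:small-tubular-neigborhood} gives $\Conv\sigma \subseteq \Offset\M{\rho/4}$, and $\rho/4 < \reach$, so $x$ lies off the medial axis of $\M$ and moreover $\|x - m\| \le \rho/4$.

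Next, fix a vertex $v$ of $\sigma$ and let $p = \pi_\M(v)$, so that $\|v - p\| = d(v,\M) \le \delta$ (again $\delta < \reach$, so $p$ is well defined). Since $\sigma$ is $\rho$-small, $v$ and $x$ lie in a common ball of radius $\rho$, hence $\|v - x\| \le 2\rho$; together with $\|x - m\| \le \rho/4$ and $\delta \le \rho/16$ this gives
\[
\|p - m\| \;\le\; \|p - v\| + \|v - x\| + \|x - m\| \;\le\; \delta + 2\rho + \tfrac{\rho}{4} \;\le\; \tfrac{37}{16}\rho \;<\; \sqrt{6}\,\rho \;<\; \reach .
\]
Then Federer's estimate (Lemma~\ref{lemma:distanceToTangent}, with $m$ and $p$ playing the roles of the lemma's $p$ and $q$) yields $d(p, \Tangent m \M) \le \|p - m\|^2/(2\reach) < 3\rho^2/\reach$, so by the triangle inequality $d(v, \Tangent m \M) \le \|v - p\| + d(p, \Tangent m \M) < \delta + 3\rho^2/\reach$. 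As $v$ was arbitrary, $\sigma \subseteq \Offset{\Tangent m \M}{t}$ with $t = 3\rho^2/\reach + \delta$.

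It then remains to apply Lemma~\ref{lemma:whitney-angle-bound} to the $d$-dimensional affine space $\Tangent m \M$ and to $\sigma$ (with $\dim\sigma \le d$): it gives $\sin\angle(\Aff\sigma, \Tangent m \M) \le 2t\,\dim(\sigma)/\height\sigma = (2\dim(\sigma)/\height\sigma)(3\rho^2/\reach + \delta)$, which is precisely the stated bound after taking $\arcsin$ (legitimate since the angle between affine spaces lies in $[0,\pi/2]$). The only point that needs a little care is the value of the constant: the conclusion forces $\|p - m\| \le \sqrt6\,\rho$, whereas the crude estimate using $\|v - x\| \le \Diam\sigma$ and $\|x - m\| \le \|x - \pi_\M(v)\|$ only yields $\|p - m\| \lesssim 4\rho$, which is too large. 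What rescues the constant is Lemma~\ref{lemma:small-tubular-neigborhood}, which confines $\Conv\sigma$ to the $(\rho/4)$-tubular neighborhood of $\M$, so that $x$ itself — not merely the vertices of $\sigma$ — is within $\rho/4$ of $\M$.
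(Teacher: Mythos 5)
Your proof is correct and follows essentially the same route as the paper: Lemma~\ref{lemma:small-tubular-neigborhood} to place $\Conv\sigma$ in the $(\rho/4)$-tubular neighborhood, Federer's bound (Lemma~\ref{lemma:distanceToTangent}) plus the $\delta$-accuracy of the vertices to get $\sigma \subseteq \Offset{(\Tangent{\pi_\M(x)}\M)}{t}$ with $t = \frac{3\rho^2}{\reach}+\delta$, and then the Whitney angle bound (Lemma~\ref{lemma:whitney-angle-bound}). The only (immaterial) difference is how you bound $\|\pi_\M(v)-\pi_\M(x)\|$: you use the direct triangle inequality through $v$ and $x$ (getting $\tfrac{37}{16}\rho$), while the paper uses the Lipschitz constant $\frac{\reach}{\reach-\rho/4}$ of $\pi_\M$ on the tubular neighborhood (getting $\tfrac{24}{11}\rho$); both stay below the needed $\sqrt{6}\,\rho$.
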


\begin{proof}
  Let $v \in \sigma$. Write $v^\ast = \pi_\M(v)$ and $x^\ast =
  \pi_\M(x)$.  We know from \cite[page 435]{federer-59} that for $0
  \leq h < \Reach{\M}$, the projection map $\pi_\M$ onto $\M$ is
  $\left( \frac{\reach}{\reach - h} \right)$-Lipschitz for points at
  distance less than $h$ from $\M$. By Lemma
  \ref{lemma:small-tubular-neigborhood}, both $x$ and $v$ belong to
  $\Offset \M h$ for $h = \frac{\rho}{4}$, we thus have
  \begin{equation*}
    \| v^\ast - x^\ast \| \leq \frac{\reach}{\reach - \frac{\rho}{4}} \times \| v - x\| 
    \leq  \frac{\reach}{\reach - \frac{\reach}{3\times 4}} \times 2\rho = \frac{24}{11} \rho \leq \sqrt{6} \rho
  \end{equation*}
  Applying Lemma \ref{lemma:distanceToTangent}, we get that
  \[
    d(v,\Tangent {x^\ast} \M) \leq d(v^\ast,\Tangent {x^\ast} \M) + \|v - v^\ast\| 
    \leq \frac{\|v^\ast - x^\ast\|^2}{2\reach} + \delta 
    \leq \frac{3\rho^2}{\reach} + \delta.
  \]
  Hence, $\sigma \subseteq \Offset {(\Tangent {x^\ast} \M)} t$ for $t =
  \frac{3\rho^2}{\reach} + \delta$ and applying Whitney angle bound
  (Lemma \ref{lemma:whitney-angle-bound}), we conclude that $\sin
  \angle(\Aff\sigma,\Tangent {x^\ast} \M) \leq
  \frac{2\dim(\sigma)}{\height{\sigma}} \left(\frac{3\rho^2}{\reach} +
  \delta \right) $.
\end{proof}

%

\clearpage
\section{Smooth local parametrization of the normal bundle}

Let $\M \subseteq  \Rspace^N$ be a $C^2$ $d$-dimensional submanifold  with reach greater than $\rho >0$.
If $x \in \M$, the vector spaces $T_x\M$ and $N_x\M$ are the respective tangent and normal spaces to $\M$ at $x$.
To avoid confusion, we denote by $\TanSpace_x\M$ and $\NorSpace_x\M$ the corresponding
affine subspaces of ambient $\Rspace^N$:  $\TanSpace_x\M = x + T_x\M$ and $\NorSpace_x\M =x + N_x\M$.

We denote by $\M^{\oplus \rho}$  the $\rho$-tubular neighborhood of $\M$.
The normal bundle of $\M$ is denoted $N\M$ and defined as:
\[
N\M \defunder{=} \coprod_{x\in \M} N_x\M
\]
where $\coprod$ is the disjoint union. The normal bundle restricted to radius $\rho$ is denoted $N^\rho\M$:
\[
N^\rho\M \defunder{=} \coprod_{x\in \M} \left\{ (x, v) \in N_x\M,\, \| v \| \leq \rho \right\}
\]

We know from \cite[Item (13) of Theorem 4.8]{federer-59} that, since $\rho < \Reach{\M}$,
the map $\Psi : \M^{\oplus \rho} \rightarrow  N^\rho\M \subseteq   \Rspace^N \times \Rspace^N$ defined by:
\[
\Psi(y) = \left( \pi_\M(y), y-\pi_\M(y) \right)
\]
is a Lipschitz  homeomorphism between $\M^{\oplus \rho}$  and $N^\rho\M$
whose inverse $\sigma: N^\rho\M \rightarrow \M^{\oplus \rho}$ defined as
\begin{equation}\label{eq:DefinitionFunctionSigma}
\sigma(x,v) = x+v
\end{equation}
is Lipschitz as well.
We have \cite{foote1984regularity}  that inside  $\M^{\oplus \rho}$  the distance function $d_\M$  is $C^2$ and  the projection $\pi_\M$ is $C^1$.
It follows that $\Psi$ is $C^1$ and since its inverse is Lipschitz, its Jacobian cannot be singular.
It follows that $\Psi$ is a $C^1$ diffeomorphims between $\M^{\oplus \rho}$ and $N^\rho\M$.

For $x\in \M$ we consider an open neighborhood $U_0$ of $0$ in $\Rspace^d$ and a $C^2$
injective map $\xi: U_0 \rightarrow \M$,  regular in the sense that
    differential $\left(\frac{d \xi}{d u^j}(u) \right)_{ j=1,d}$  has rank $d$ for any $u\in U_0$,
and such that  $\xi(0) = x$.
Let us choose $\xi$ such that, moreover, $\left( \frac{d \xi}{d u^j} \restrict{u=0} \right)_{ j=1,d}$ forms an orthonormal basis of $T_x\M$. 

Also, we consider, as in the proof of Lemma 6.3 of \cite{milnor2016morse}, a set of $(N-d)$ $C^1$-smooth 
vector valued maps $(w_k)_{k=1 \ldots N-d}$,
where $w_k: U_0 \rightarrow \Rspace^N$ and such that, for any $u\in U_0$,   $(w_k(u))_{k=1, N-d}$
is an orthonormal basis of $N_{\xi(u)}\M$. 

 As done by J.Milnor in \cite[paragraph 6, proof of Lemma 6.3]{milnor2016morse}, $\xi$ and $(w_k)_{k=1 \ldots N-d}$
 defines a local trivialization of the normal bundle $N\M$, that is a chart $\phi$ of 
 $N\M$ in the neighborhood of $(x,0)$
 where the parameter $\left(u^1,\ldots,u^d, t^1,\ldots, t^{N-d} \right) \in U_0 \times  \B^{N-d}(0, \rho)$ corresponds to the point:
 \begin{align*}
\phi(u,t) &=  \phi\left(u1,\ldots,u^d, t^1,\ldots, t^{N-d} \right) \\ 
&=  \left( \xi \left(u^1,\ldots,u^d\right), \: \sum _{k=1}^{N-d} t^k  w_k (u^1,\ldots,u^d)\right) \in N^\rho \M
 \end{align*}
 and
\begin{equation}\label{eq:SigmaComposedWithPhi}
 \sigma \circ \phi \left(u,t \right)  = \xi \left(u1,\ldots,u^d \right) + \sum _{k=1}^{N-d} t^k  w_k (u^1,\ldots,u^d) \in \M^{\oplus \rho} \subseteq  \Rspace^N
\end{equation}
 Derivating \eqref{eq:SigmaComposedWithPhi} at $u=0$ gives:
 \begin{align}
 \frac{d}{d u^j}\restrict{u=0} \sigma \circ \phi \left(u,t \right) &=  \frac{d \xi}{d u^j}\restrict{u=0} + \sum _{k=1}^{N-d} t^k  \frac{d w_k}{d u^j}\restrict{u=0}\label{eq:DerivativeSigmaPhiWrtU}\\
 \frac{d}{d t^k}\restrict{u=0}  \sigma \circ \phi \left(u,t \right) &=  w_k (0) \label{eq:DerivativeSigmaPhiWrtT}
 \end{align}

In order to express the derivative of  $\sigma: N^\rho\M \rightarrow \M^{\oplus \rho}$
 at point $(x,v) \in N^\rho\M$ we need, besides the chart $\phi$ of a neighborhood of $(x,v)$, 
 a chart $\hat{\phi}$ of a neighborhood of $\sigma(x,v) = x + v$ in euclidean space.
 A natural choice for $\hat{\phi}$ is
 \[
 \hat{\phi}  \big( y \big) = \hat{\phi}  \big( y^1, \ldots y^N \big) = x+v + \sum_{k=1}^{d}  y^k   \frac{d \xi}{d u^k}\restrict{u=0} + \sum_{k=1}^{N-d} y^{d+k} w_k
 \]
 Observe that, since $\big ( \frac{d \xi}{d u^1}\restrict{u=0}\ldots,  \frac{d \xi}{d u^d}\restrict{u=0}, w_1, \ldots, w_{N-d} \big)$ is an orthonormal  basis,
 seeing  $N^\rho\M$ and $\M^{\oplus \rho}$ as Riemannian manifolds, the metric tensor associated to the 
charts $\phi$ and $ \hat{\phi}$ respectively at $(x,v)$ and $\sigma(x,v) = x +v$ are the identity matrix.

One has:
\begin{equation}\label{equation:ExpressionDiffDXiAndW}
\Differential \restrict{y=0} \hat{\phi} = \Big ( \frac{d \xi}{d u^1}\restrict{u=0}\ldots,  \frac{d \xi}{d u^d}\restrict{u=0}, w_1, \ldots, w_{N-d} \Big)
\end{equation}
 and since te columns of \eqref{equation:ExpressionDiffDXiAndW} are unitary and pairwise orthogonal, 
 taking the inner product of \eqref{eq:DerivativeSigmaPhiWrtU} and \eqref{eq:DerivativeSigmaPhiWrtT} 
  with the columns of $\Differential \restrict{u=0} \hat{\phi} $, namely 
$\left(  \frac{d \xi}{d u^j}\restrict{u=0} \right)_{j=1,d}$ and $\left( w_k \right)_{k=1,N-d}$, gives an expression of the Jacobian of 
$\sigma: N^\rho\M \rightarrow \M^{\oplus \rho} \subseteq   \Rspace^N$
at the point  $\phi(0,t)$ in the charts $\phi$  and $\hat{\phi}$  as:
\begin{equation}\label{eq:JacobianSigma}
\Differential\restrict{u=0}\, \left( {\hat{\phi}}^{-1} \circ \sigma \circ \phi \right) =
\begin{pmatrix}
\left\langle \frac{d \xi}{d u^i},\, \frac{d \xi}{d u^j} \right\rangle + \sum _{k=1}^{N-d} t^k  \left\langle  \frac{d w_k}{d u^i} ,\, \frac{d \xi}{d u^j} \right\rangle & 
\sum _{k=1}^{N-d} t^k  \left\langle  \frac{d w_k}{d u^i} ,\, w_l \right\rangle \\
0& \UnitMatrix
\end{pmatrix}
\end{equation}
Where in both $N^\rho\M \subseteq  \Rspace^N$ and $\M^{\oplus \rho}$ the tangent space is decomposed as direct sum of the tangential and orthogonal fibers.

Denoting by $\mathrm{I}$ the first  fundamental form of $\M$:
\[
\mathrm{I} = \left\langle \frac{d \xi}{d u^i},\, \frac{d \xi}{d u^j} \right\rangle.
\]
Since we have chosen $\xi$ in such a way that
$\left( \frac{d \xi}{d u^j}\restrict{u=0} \right)_{j=1,d}$ is an orthonormal basis of $T_x\M$, we have at $x = \xi(0)$ that $\mathrm{I} = \UnitMatrix$.

For a unit vector 
$v\in N_x\M$  with 
 $v = \sum _{k=1}^{N-d} t^k_v w_k$ we can call, following J. Milnor again, the second fundamental form in the direction $v$:
\[
\mathrm{II}_v =  \sum _{k=1}^{N-d} t^k_v  \left\langle  \frac{d w_k}{d u^i} ,\, \frac{d \xi}{d u^j} \right\rangle.
\]
We also denote  the ``torsion''  term by $T_v$:
\[
T_v = \sum _{k=1}^{N-d} t^k_v  \left\langle  \frac{d w_k}{d u^i} ,\, w_l \right\rangle
\]
In fact we claim that,  without loss of generality, one can choose the maps $(w_k)_{k=1, N-d}$
in such a way that  $T_v=0$. Indeed, since, for any $u\in U_0$,  $ \left\langle  w_k(u) ,\, w_l(u) \right\rangle = \delta_{k,l}$ where 
$\delta_{k,l}$, the Kronecker delta, is constant, one has:
\[
0 =  \frac{d }{d u^i}   \left\langle  w_k ,\, w_l \right\rangle =  \left\langle  \frac{d w_k}{d u^i} ,\, w_l \right\rangle +  \left\langle  w_k,\,  \frac{d w_l}{d u^i}  \right\rangle
\]
in other words, $ \left\langle  \frac{d w_k}{d u^i} ,\, w_l \right\rangle$ is antisymmetric and can then be seen as an infinitesimal rotation, i.e.,  formally,
 an element of the Lie algebra of $SO(N-d)$.
It results that if we replace $w$ by $w'$ defined by :
\[
w'(u^1,\ldots, u^d) =   \exp \left( - \sum_{i=1}^d  u_i  \left\langle  \frac{d w_k}{d u^i}_{|_{u=0}} ,\, w_l \right\rangle \right) w(u^1,\ldots, u^d), 
\]
 we get:
\[
 \left\langle  \frac{d w'_k}{d u^i}_{|_{u=0}} ,\, w'_l \right\rangle =0 .
\]
Since the $\exp$ term is a rotation depending smoothly ($C^\infty$) on the $u_i$ and is applied to the basis $w$, $w$ can be replaced by $w'$ which proves the claim.

We have that for $\tau \in [0, \rho]$,  at the point $(x, \tau v) \in N^\rho\M$ with $\sigma(x, \tau v) = x + \tau v \in \M^{\oplus \rho}$ 
 the Jacobian \eqref{eq:JacobianSigma} can be expressed as, with the assumption $T_v=0$:
\begin{equation}\label{eq:JacobianSigmaWithFundForms}
\begin{pmatrix}
\UnitMatrix + \tau  \mathrm{II}_v& T_v \\
0& \UnitMatrix
\end{pmatrix}
=
\begin{pmatrix}
\UnitMatrix + \tau  \mathrm{II}_v& 0 \\
0& \UnitMatrix
\end{pmatrix}
\end{equation}

Thanks to Lemma \ref{lemma:InverReachBoundsCurvature},
$\|\mathrm{II}_v\|_2$, the operator $L^2$ norm  of $\mathrm{II}_v$, 
which is the maximal extrinsic curvature of $\M$, is upper 
bounded by the  inverse of the reach of $\M$:
\begin{equation}
\forall v \in N_x\M, \|v\| = 1, \: \left\|  \mathrm{II}_v \right\|_2 \leq \frac{1}{\Reach{\M}}
\end{equation}
Since in \eqref{eq:JacobianSigmaWithFundForms} one has $\tau \leq \rho < \Reach{\M}$, 
we get that $ \left\|  \tau \mathrm{II}_v \right\|_2 \leq \frac{\tau}{\Reach{\M}} < 1$ and $\UnitMatrix + \tau  \mathrm{II}_v$, 
the upper left bloc in  \eqref{eq:JacobianSigmaWithFundForms},
 is inversible.  It follows that matrix  \eqref{eq:JacobianSigmaWithFundForms} is inversible:
\begin{equation}\label{eq:JacobianPsi}
\Differential\restrict{y=0}\, \left( \phi^{-1} \circ \Psi \circ \hat{\phi} \right) =
\begin{pmatrix}
\UnitMatrix + \tau  \mathrm{II}_v& 0 \\
0& \UnitMatrix
\end{pmatrix}^{-1} =
\begin{pmatrix}
\left( \UnitMatrix + \tau  \mathrm{II}_v\right)^{-1}& 0 \\
0& \UnitMatrix
\end{pmatrix}
\end{equation}
By local inversion Theorem,  \eqref{eq:JacobianPsi} gives us the Jacobian of $\Psi = \sigma^{-1}$ at $x+ \tau v  \in \M^{\oplus \rho}$. 
In particular, the first row of \eqref{eq:JacobianPsi}:
\begin{equation}\label{eq:JacobianPiM}
\begin{pmatrix}
\left( \UnitMatrix + \tau  \mathrm{II}_v\right)^{-1}& 0 
\end{pmatrix}
\end{equation}
is the Jacobian of $\xi^{-1} \circ \pi_\M \circ \hat{\phi} $ at $x+ \tau v \in \M^{\oplus \rho}$.

\clearpage
\section{Restriction of $\pi_\M$ to a $d$-dimensional affine space in a neighborhood of $x+ \tau v  \in \M^{\oplus \rho}$}
\label{appendix:J}

Let $\Pi  \subseteq  \Rspace^N$ denote a  $d$-dimensional affine space that contains the point $y = x+ \tau v  \in \M^{\oplus \rho}$
for $x \in\M$, $v \in N_x\M$ a unit vector and $\tau < \rho$.
If $\left( y, e_1, \ldots, e_d \right)$ is an orthonormal frame of $\Pi$ centered at $y$. 
This frame defines a parametrization $\Rspace^d \rightarrow \Pi$ defined by 
\begin{equation}\label{equation:OrthonormalPlaneCoordinateSystem}
\left( z^1,\ldots, z^d \right) \mapsto y + \sum_{i=1}^d z^i e_i.
\end{equation}

Each vector $e_i$ can be decomposed uniquely as a sum  $e_i= e_i^T + e_i^N$ where $e_i^T \in T_x\M$ and $e_i^N \in N_x\M$. 
From the definition \eqref{eq:DefinitionAngleAffineSpaces} one has:
\begin{equation}\label{eq:ProjectionFromAngle}
\min_{ z \in \Rspace^d, \|z\| = 1} \left\| \sum_{i=1}^d z^i e_i^T \right\| 
=\min_{ z \in \Rspace^d, \|z\| = 1} \left\| \pi_{T_x\M} \left(\sum_{i=1}^d z^i e_i \right) \right\| = \cos \angle \Pi, T_x\M
\end{equation}

Using \eqref{eq:JacobianPiM} we get:
\begin{equation}\label{eq:JacobianProjectionFromAffineSpace}
\frac{d}{d z^j}\restrict{z=0} \: \xi^{-1} \circ \pi_\M \left(  y + \sum_{i=1}^d z^i e_i \right) 
= \left( \UnitMatrix + \tau  \mathrm{II}_v\right)^{-1} e_j^T 
\end{equation}
Since $\tau \leq \rho$ and the eigenvalues of $  \mathrm{II}_v$ are the principal curvatures, by Lemma \ref{lemma:InverReachBoundsCurvature}
they are upper bounded by $\reach^{-1}$, the inverse of the reach of \M.
The eigenvalues of the symmetric matrix $\left( \UnitMatrix + \tau  \mathrm{II}_v\right)^{-1} $ are therefore included in:
 \[
 \left[ \frac{\reach}{\reach+ \rho}, \, \frac{\reach}{\reach- \rho} \right]
 \]
 and therefore its determinant included by:
 \[
 \left[ \left( \frac{\reach}{\reach+ \rho} \right)^d , \,   \left( \frac{\reach}{\reach- \rho} \right)^d \right]
 \]
 From \eqref{eq:ProjectionFromAngle} we have that the determinant of 
\[
\left( z^1,\ldots, z^d \right) \mapsto  \pi_{T_x\M} \left( y + \sum_{i=1}^d z^i e_i \right) = \sum_{i=1}^d z^i e_i^T
\]
is included in:
\[
 \left[ \big(  \cos \angle \Pi, T_x\M \big)^d , \,  1 \right]
 \]
In fact, a finer analysis exploiting Lemma \ref{lemma:RotationBetweenVectorSubspaces}
 allows to improve this bound to:
 \[
 \left[ \big(  \cos \angle \Pi, T_x\M \big)^{\min(d, N-d)} , \,  \cos \angle \Pi, T_x\M  \right],
 \]
since, when $N< 2d$,  $T_x\M$  and  the vector space associated to $\Pi$ have a common subspace of dimension at least $2d-N$.

 Therefore, the determinant of the differential of  $\left( z^1,\ldots, z^d \right) \mapsto  \xi^{-1} \circ \pi_\M \left( y + \sum_{i=1}^d z^i e_i \right)$
 is bounded by:
 \begin{align}
& \det \left( \frac{d}{d z^j}\restrict{z=0} \: \xi^{-1} \circ \pi_\M \left(  y + \sum_{i=1}^d z^i e_i \right)  \right) \label{equation:determinantProjectionOnMRestrictedDFlat}\\
& \quad \in  \left[ \left(\frac{\reach}{\reach+ \rho} \right)^d \:   \left(\cos \angle \Pi, T_x\M  \right)^{\min(d, N-d)}  , \:   \left( \frac{\reach}{\reach- \rho} \right)^d   \cos \angle \Pi, T_x\M  \right] 
\nonumber
 \end{align}

If $\angle \Pi, T_x\M < \pi/2$, it follows from \eqref{eq:ProjectionFromAngle} that the set of vectors $\left(e_i^T\right)_{i=1,d}$
spans $T_x\M$. Since  $\left( \UnitMatrix + \tau  \mathrm{II}_v\right)^{-1}$ has full rank $d$ we get that  
$\frac{d}{d z^j}\restrict{z=0} \: \xi^{-1} \circ \pi_\M \left(  y + \sum_{i=1}^d z^i e_i \right)$
has full rank. We have proven that:
\begin{lemma} \label{lemma:ProjectionFromNonOrthogonalFlatIsAChart}
Let $\Pi  \subseteq  \Rspace^N$ denote a  $d$-dimensional affine space that contains the point $y  \in \M^{\oplus \rho}$.
If $\angle \Pi, T_{\pi_\M(y)}\M < \pi/2$, then, in some open neighborhood $U_y$ of $y$ in $\Pi$, 
$\pi_\M\restrict{U_y}$ is a $C^1$-diffeomorphism on its image.
 \end{lemma}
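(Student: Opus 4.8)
The plan is to read the differential of $\pi_\M$ restricted to $\Pi$ straight off the normal-bundle Jacobian computed above, and then to invoke the inverse function theorem. Write $y = x + \tau v$ with $x = \pi_\M(y) \in \M$, $v \in N_x\M$ a unit vector, and $\tau < \rho$. By \cite{foote1984regularity} the projection $\pi_\M$ is $C^1$ on $\M^{\oplus\rho}$; hence so is its restriction to any open subset of $\Pi$, and so is $\xi^{-1}\circ\pi_\M$ read in a chart $\xi$ of $\M$ around $x$. Thus the only thing left to check is that the differential of $\pi_\M\restrict{\Pi}$ at $y$ is a linear isomorphism.

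Fix an orthonormal frame $(y, e_1, \dots, e_d)$ of $\Pi$ together with the associated parametrization \eqref{equation:OrthonormalPlaneCoordinateSystem}, and combine it with the expression \eqref{eq:JacobianPiM} of the Jacobian of $\xi^{-1}\circ\pi_\M$ in the chart $\hat{\phi}$. The chain rule then yields, exactly as in \eqref{eq:JacobianProjectionFromAffineSpace},
\[
\frac{d}{d z^j}\restrict{z=0}\ \xi^{-1}\circ\pi_\M\!\left(y + \sum_{i=1}^{d} z^i e_i\right) = \bigl(\UnitMatrix + \tau\,\mathrm{II}_v\bigr)^{-1} e_j^T, \qquad e_j^T := \pi_{T_x\M}(e_j).
\]
I would then argue invertibility of this $d\times d$ map in two steps. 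First, by Lemma \ref{lemma:InverReachBoundsCurvature} the eigenvalues of the symmetric matrix $\mathrm{II}_v$ are bounded in absolute value by $1/\Reach{\M}$, and since $\tau\le\rho<\Reach{\M}$ the symmetric matrix $\UnitMatrix + \tau\,\mathrm{II}_v$ has eigenvalues in $(0,2)$ and is therefore invertible of full rank $d$. Second, the hypothesis $\angle \Pi, T_x\M < \pi/2$ together with \eqref{eq:ProjectionFromAngle} gives $\min_{\|z\|=1}\bigl\|\sum_i z^i e_i^T\bigr\| = \cos\angle \Pi, T_x\M > 0$, so $z\mapsto\sum_i z^i e_i^T$ is injective and $(e_1^T,\dots,e_d^T)$ is a basis of $T_x\M$. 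Composing the two, the displayed differential is a linear isomorphism onto $T_x\M$.

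Finally, since $\xi$ is a chart of $\M$ near $x$, it follows that $\Differential_y(\pi_\M\restrict{\Pi})\colon T_y\Pi \to T_x\M$ is an isomorphism between $d$-dimensional spaces; the inverse function theorem then produces an open neighborhood $U_y$ of $y$ in $\Pi$ on which $\pi_\M$ is a $C^1$-diffeomorphism onto its image, which is exactly the claim. I do not anticipate any genuine obstacle here: the only delicate point is the bookkeeping with the two charts $\phi$ and $\hat{\phi}$ used in the normal-bundle computation and verifying that the chain rule indeed produces \eqref{eq:JacobianProjectionFromAffineSpace}; everything else follows directly from results already established.
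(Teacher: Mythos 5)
Your proposal is correct and follows essentially the same route as the paper: the appendix derives exactly the differential $\bigl(\UnitMatrix + \tau\,\mathrm{II}_v\bigr)^{-1} e_j^T$ in \eqref{eq:JacobianProjectionFromAffineSpace}, notes that $(e_i^T)_i$ spans $T_x\M$ because $\angle \Pi, T_x\M < \pi/2$ via \eqref{eq:ProjectionFromAngle}, and concludes full rank from the invertibility of $\UnitMatrix + \tau\,\mathrm{II}_v$ (eigenvalues of $\mathrm{II}_v$ bounded by $1/\Reach{\M}$ and $\tau < \rho < \Reach{\M}$). Your explicit appeal to the inverse function theorem at the end just makes precise the paper's implicit final step.
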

 

We consider now $\Pi_1,\Pi_2 \subseteq  \Rspace^N$, two  $d$-dimensional affine space,
where, for $i=1,2$ the affine space  $\Pi_i$ contains  the point $y_i = x+ \tau_i v_i  \in \M^{\oplus \rho}$
for $x \in\M$, $v_i \in N_x\M$ a unit vector and $\tau_i < \rho$.
We assume that, for $i=1,2$, one has $\angle \Pi_i,  T_{\pi_\M(y)}\M < \pi/3$.

By Lemma \ref{lemma:ProjectionFromNonOrthogonalFlatIsAChart}, 
the projection on $\M$ restricted to  some
neighborhood of $y_i$ in $\Pi_i$, is an homeomorphism.
For  $i=1,2$, let  $U_i$ be is an open neighborhood
of $y_i$ in $\Pi_i$ such that $\pi_\M\restrict{U_i}$ is an homeomorphism on its image and $U_i \subseteq  \Offset \M \rho$.
Assume moreover that $\pi_\M(U_1)= \pi_\M(U_2)$.
Then one can defines an homomorphism
$\varphi_{1 \to 2} : U_1 \to U_2$  as:
\[
\varphi_{1 \to 2} \defunder{=}  \left( \pi_{\M}\restrict{U_2}  \right)^{-1}   \circ  \pi_{\M}\restrict{U_1}
\]

  
One has, for  a chart $\xi: U_0 \rightarrow \M$, where $U_0 \subseteq  \Rspace^d$, and 
is such that  $\xi(U_0)  \supset \pi_\M(U_1)= \pi_\M(U_2)$, that:
 \begin{align*}
\varphi_{1 \to 2} &=  \left( \pi_{\M}\restrict{U_2}  \right)^{-1}   \circ  \pi_{\M}\restrict{U_1} \\
  &  \left( \pi_{\M}\restrict{U_2}  \right)^{-1} \circ \xi \circ \xi^{-1}  \circ \pi_{\M}\restrict{U_1} 
  \end{align*}
Choosing some  coordinate systems  for $\Pi_1$ and $\Pi_2$
defined by respective orthonormal frame $(y_1, e_{11}, \ldots,e_{1d})$ and $(y_2, e_{21}, \ldots,e_{2d})$,
 as in \eqref{equation:OrthonormalPlaneCoordinateSystem},
  we denote by 
  $D \varphi_{1 \to 2}$ the matrix of the derivative of $\varphi_{1 \to 2}$ in theses coordinate systems
  we have:
\begin{equation*}
  D\varphi_{1 \to 2}(y) =     \left( \frac{d}{d z_2^j}\restrict{z=0} \: \xi^{-1} \circ \pi_\M \left(  y + \sum_{i=1}^d z_2^i e_{2i} \right)  \right)^{-1} 
   \left( \frac{d}{d z_1^j}\restrict{z=0} \: \xi^{-1} \circ \pi_\M \left(  y_1 + \sum_{i=1}^d z_1^i e_{1i} \right)  \right) 
\end{equation*}
So that, using  \eqref{equation:determinantProjectionOnMRestrictedDFlat} ,
we get:
\begin{equation*}
\det  D\varphi_{1 \to 2}(y) \in \left[ 
\frac{\left(\frac{\reach}{\reach+ \rho} \right)^d \:   \left(\cos \angle \Pi_1, T_x\M  \right)^{\min(d, N-d)}   }
{  \left( \frac{\reach}{\reach- \rho} \right)^d   \cos \angle \Pi_2, T_x\M } , 
\frac{  \left( \frac{\reach}{\reach- \rho} \right)^d   \cos \angle \Pi_1, T_x\M } 
{\left(\frac{\reach}{\reach+ \rho} \right)^d \:   \left(\cos \angle \Pi_2, T_x\M  \right)^{\min(d, N-d)}   }
\right]    
\end{equation*}
In other words:

\begin{equation*}
\det  D\varphi_{1 \to 2}(y) \in \left[ 
\frac{\left( \reach- \rho \right)^d \:   \left(\cos \angle \Pi_1, T_x\M  \right)^{\min(d, N-d)}   }
{  (\reach+ \rho)^d  \cos \angle \Pi_2, T_x\M } , 
\frac{ (\reach+ \rho)^d  \cos \angle \Pi_1, T_x\M  } 
{\left( \reach- \rho \right)^d \:   \left(\cos \angle \Pi_2, T_x\M  \right)^{\min(d, N-d)}   }
\right]    
\end{equation*}

To sum up, we have proven:
\begin{lemma}\label{lemma:BoundsOnJacobianPhiPi1Pi2}
Let $\Pi_1,\Pi_2 \subseteq  \Rspace^N$, be two  $d$-dimensional affine spaces,
where, for $i=1,2$   $\Pi_i$ contains  the point $y_i \in \M^{\oplus \rho}$
with  $\pi_\M(y_i) = x \in\M$.
For  $i=1,2$, let  $U_i  \subseteq  \Pi_i \cap \Offset \M \rho$ be   an open subset of 
$\Pi_i$ such that $\pi_\M(U_1)= \pi_\M(U_2)$
and, for $i=1,2$,  $\forall z \in U_i, \angle \Pi_i,  T_{\pi_\M(z)}\M < \theta < \pi/3$.

Then, the Jacobian, of the map $\varphi_{1 \to 2} \defunder{=}  \left( \pi_{\M}\restrict{U_2}  \right)^{-1}   \circ  \pi_{\M}\restrict{U_1}$,
 taking as chart for $U_1$ and $U_2$ the coordinates associated to an orthonormal frame, is bounded by:
\begin{equation}\label{equation:BoundsOnJacobianPhiPi1Pi2}
\left| \det  D\varphi_{1 \to 2}(y)   \right| \in \left[ 
\frac{\left( \reach- \rho \right)^d \:   \left(\cos \theta \right)^{\min(d, N-d)}   }
{  (\reach+ \rho)^d } , 
\frac{ (\reach+ \rho)^d  } 
{\left( \reach- \rho \right)^d \:   \left(\cos \theta  \right)^{\min(d, N-d)}   }
\right]    
\end{equation}
\end{lemma}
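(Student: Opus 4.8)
The plan is to assemble the bound from the single-flat differential computation already carried out in this appendix, together with the chain rule; essentially no new estimate is needed beyond \eqref{equation:determinantProjectionOnMRestrictedDFlat}. First I would fix orthonormal frames $(y_1,e_{11},\dots,e_{1d})$ of $\Pi_1$ and $(y_2,e_{21},\dots,e_{2d})$ of $\Pi_2$, yielding coordinate systems as in \eqref{equation:OrthonormalPlaneCoordinateSystem}, together with a chart $\xi:U_0\to\M$ with $\xi(U_0)\supseteq\pi_\M(U_1)=\pi_\M(U_2)$. Since $\angle\Pi_i,T_{\pi_\M(z)}\M<\theta<\pi/3<\pi/2$ on $U_i$, Lemma \ref{lemma:ProjectionFromNonOrthogonalFlatIsAChart} guarantees that $\pi_\M\restrict{U_i}$ is a $C^1$-diffeomorphism onto its image, so $\varphi_{1\to2}=(\pi_\M\restrict{U_2})^{-1}\circ\pi_\M\restrict{U_1}$ is a well-defined $C^1$-diffeomorphism; inserting $\xi\circ\xi^{-1}$ exhibits it, in the chosen coordinate systems, as the composition of $(\xi^{-1}\circ\pi_\M\restrict{U_2})^{-1}$ with $\xi^{-1}\circ\pi_\M\restrict{U_1}$.

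Next I would apply the chain rule, so that $D\varphi_{1\to2}(y)$ is the matrix product of the inverse of the Jacobian of $\xi^{-1}\circ\pi_\M$ restricted to $\Pi_2$ with the Jacobian of $\xi^{-1}\circ\pi_\M$ restricted to $\Pi_1$, both evaluated at the point corresponding to $y$; this is exactly the displayed formula for $D\varphi_{1\to2}(y)$ that precedes the lemma. Taking determinants, $\det D\varphi_{1\to2}(y)$ is the ratio of the two single-flat determinants, each of which was bounded in \eqref{equation:determinantProjectionOnMRestrictedDFlat} by an interval whose endpoints multiply the curvature factor — coming from the eigenvalues of $(\UnitMatrix+\tau\,\mathrm{II}_v)^{-1}$ lying in $[\reach/(\reach+\rho),\reach/(\reach-\rho)]$ by Lemma \ref{lemma:InverReachBoundsCurvature} — with the tangential-projection factor, which ranges between $(\cos\angle\Pi_i,T_x\M)^{\min(d,N-d)}$ and $\cos\angle\Pi_i,T_x\M$.

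Then I would substitute the two interval bounds, pairing the smallest possible numerator with the largest possible denominator (and conversely). The $\reach$-powers combine into $(\reach-\rho)^d/(\reach+\rho)^d$ and its reciprocal; and since $\angle\Pi_i,T_{\pi_\M(z)}\M<\theta$ and $\cos$ is decreasing on $[0,\pi/2]$, each factor $\cos\angle\Pi_i,T_x\M$ lies in $[\cos\theta,1]$, so replacing these factors by the extreme value that enlarges the interval yields precisely \eqref{equation:BoundsOnJacobianPhiPi1Pi2}. Taking absolute values removes any dependence on orientation, and the fact that the two endpoints are reciprocal is consistent with $\varphi_{2\to1}=\varphi_{1\to2}^{-1}$.

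The only delicate point is this bookkeeping: one must be careful which direction each of the two determinant intervals enters the quotient, and must carry the exponent $\min(d,N-d)$ — which itself originates from the refined analysis of the tangential projection via Lemma \ref{lemma:RotationBetweenVectorSubspaces} in the case $N<2d$ — consistently through numerator and denominator. There is no genuine analytic obstacle; the lemma is a repackaging of the determinant estimate \eqref{equation:determinantProjectionOnMRestrictedDFlat} under the chain rule.
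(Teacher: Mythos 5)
Your proposal is correct and follows essentially the same route as the paper: well-definedness via Lemma \ref{lemma:ProjectionFromNonOrthogonalFlatIsAChart}, insertion of $\xi\circ\xi^{-1}$ and the chain rule to express $D\varphi_{1\to 2}(y)$ as the quotient of the two single-flat Jacobians, then the determinant bound \eqref{equation:determinantProjectionOnMRestrictedDFlat} applied to numerator and denominator with the cosine factors pushed to their extremes in $[\cos\theta,1]$. The only cosmetic caveat is that at a general $y\in U_1$ the relevant angles are $\angle\Pi_i,T_{\pi_\M(y)}\M$ rather than at the fixed point $x$, which is exactly what the hypothesis ``for all $z\in U_i$'' is for, and your bookkeeping already accommodates this.
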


\begin{remark}\label{remark::BoundsOnJacobianPhiPi1Pi2Asymptotic}
The bound \eqref{equation:BoundsOnJacobianPhiPi1Pi2} can be expressed as:
\begin{equation}\label{equation:BoundsOnJacobianPhiPi1Pi2Asymptotic}
\left| \det  D\varphi_{1 \to 2}(y)   \right| \in \left[ (1+J)^{-1}, 1+J \right]    
\end{equation}
with 
\[
J = \frac{ (\reach+ \rho)^d  } 
{\left( \reach- \rho \right)^d \:   \left(\cos \theta  \right)^{\min(d, N-d)}   } - 1
\]
where, as $\frac{\rho}{\reach} \rightarrow 0$, one has:
\begin{equation}\label{equation:BoundsOnJacobianPhiPi1Pi2AsymptoticExpressionOfJ}
J = \mathcal{O} \left( \frac{\rho}{\reach}  \right)
\end{equation}

\end{remark}
%
%

\clearpage
\section{Transfering orientation}
\label{appendix:transfering-orientation}

In this section, we start by recalling what it means for a manifold to
be orientable. Given a manifold with a prescribed orientation, we then
explain how to orient a simplex consistently with the manifold
(Definition \ref{definition:SimplexOrientationTransfert}). Finally, we
provide conditions under which the property for a simplex to be
consistently oriented with a manifold is preserved under projection
onto a nearby tangent space (Lemma
\ref{lemma:TangentSpaceProjectionIPreservesOrientation}).

\begin{definition}[Manifold orientation]\label{definition:ManifoldOrientation}
  An {\em orientation} of a $C^1$-manifold \M consists of an atlas
  $\left\{ \left( U_i \subseteq \M, \psi_i:U_i \to \Rspace^d \right)\right\}_{i \in I}$
  such that:
  \[
  U_i \cap U_j \neq  \emptyset \: \implies \: \forall m \in U_i \cap U_j, \: \det \Differential \left(\psi_j \circ \psi_i^{-1} \right) (\psi_i(m)) > 0.
  \]
  \M is said to be {\em orientable} if such an atlas exists.
\end{definition}

\begin{definition}[Chart consistent with the manifold orientation]\label{definition:ChartOrientation}
If $U$ is an open subset of the oriented manifold \M, a local chart $\psi : U \rightarrow \Rspace^d$ is called {\em consistent} 
with the orientation of \M  defined by the atlas $\left\{\left(U_i, \psi_i \right)\right\}_{i \in I}$
if the following implication holds:
\[
U_i \cap U \neq  \emptyset \: \implies \:  \forall m \in U_i \cap U, \: \det \Differential \left(\psi \circ \psi_i^{-1} \right) (\psi_i(m)) >0.
\]
\end{definition}

Let us associate to each oriented non-degenerate abstract $d$-simplex
$\sigma = \left[ u_0, u_1, \ldots, u_d \right] \subseteq \Rspace^N$
the linear map $\psi_\sigma$ defined by
\begin{equation*}\label{equation:FrameAssociatedToSimplex}
  \psi_\sigma :
  \begin{cases}
    \Aff \sigma &\longrightarrow \quad \Rspace^d\\
    x &\longmapsto \quad (t^1, \ldots, t^d),
  \end{cases}
\end{equation*}
where $t^1, \ldots, t^d$ are the coordinates of $x$ in the frame
$(u_0, u_1 - u_0, \ldots, u_d - u_0)$, that is, real numbers such that
$x = u_0 + \sum_{i=1}^d t^i (u_i - u_0)$. We note that the orientation
of $\sigma$ induces a natural orientation of the affine subspace
$\Aff{\sigma}$ defined by the atlas formed of a unique chart $\{(\Aff \sigma, \psi_\sigma)\}$. Let
us now define what we mean for a simplex to have an orientation
consistent with that of a manifold.

\begin{definition}[Consistent orientation between a simplex and a manifold]
  \label{definition:SimplexOrientationTransfert}
Let $\M \subseteq \Rspace^N$ be an orientable manifold, whose reach
is greater than $\rho >0$. Let $\sigma 
\subseteq \Rspace^\Dim$ be an oriented non-degenerate $d$-simplex such that
$\Conv{\sigma} \subseteq \M^{\oplus \rho}$, and suppose that
\[
\max_{x \in
  \Conv\sigma}\angle (\Aff{\sigma}, \Tangent{\pi_\M(x)}{\M}) <
\frac{\pi}{2}.
\]
We say that the orientation of $\sigma$ is {\em
  consistent} with a given orientation of $\M$ if there exists a point
$x \in \Conv \sigma$ and an open neighborhood $U_x$ of $x$ in
$\Aff\sigma$ such that the chart
\begin{equation}\label{equation:chart-simplex-manifold}
\psi_\sigma \circ \left( \pi_\M\restrict{U_x} \right)^{-1}: \pi_{\M}(U_x) \subseteq \M \rightarrow  \psi_\sigma(U_x) \subseteq \Rspace^d
\end{equation}
 is consistent with the orientation  of $\M$.
 \end{definition}

\begin{remark}
By Lemma \ref{lemma:ProjectionFromNonOrthogonalFlatIsAChart}, the
chart defined in \eqref{equation:chart-simplex-manifold} is indeed a
valid $C^1$-chart for $\M$.
\end{remark}

\begin{remark}
It is easy to see that the above definition does not depend upon the
choice of $x$ inside $\Conv \sigma$. Indeed, suppose that the
orientation of \M is defined by the atlas $\left\{ (U_i, \psi_i)
\right\}_{i \in I}$. Consider a point $m$ such that $m \in U_i \cap
\pi_\M(U_x) \neq \emptyset$ for some $i \in I$. The orientation
consistency between $\sigma$ and $\M$ is determined by the sign of
\[
\det \Differential \left(\psi_\sigma \circ \left( \pi_\M\restrict{U_x} \right)^{-1} \circ (\psi_i)^{-1} \right)(\psi_i(m)).
\]
Thanks to Lemma \ref{lemma:ProjectionFromNonOrthogonalFlatIsAChart},
the above determinant does not vanish as $x$ moves in $\Conv{\sigma}$,
and, since $\Conv{\sigma}$ is connected and the determinant is
continuous, its sign is constant over $\Conv{\sigma}$.
\end{remark}

Given two $d$-dimensional vector subspaces $V$ and $V'$ of $\Rspace^N$
such that $\angle(V,V') < \frac{\pi}{2}$,
Lemma~\ref{lemma:RotationBetweenVectorSubspaces} implicitly gives an
expression of the matrix of the orthogonal projection
$\pi_{V'}\restrict{V}$ on $V'$ restricted to $V$ in an orthonormal
basis.  In particular, it says that the matrix of $\pi_{V'}\restrict{V}$
has a full rank and therefore is invertible. Observe that if $B$ is a
basis of $V$ that defines an orientation of $V$, then
$\pi_{V'}\restrict{V} (B)$ is a basis of $V'$ that induces an
orientation of $V'$. This allows us to transfer the orientation of $V$
to $V'$ as follows.

\begin{definition}[Consistent orientation between vector spaces]
  Let $V$ and $V'$ be two $d$-dimensional vector subspaces of
  $\Rspace^N$.  Let $B$ ({\em resp.} $B'$) be a basis of $V$ ({\em
    resp.} $V'$). We say that $(V',B')$ {\em has a consistent
    orientation through projection with} $(V,B)$ if the
  determinant of the matrix of $\pi_{V'}\restrict{V}$ with respect to the bases $B$ and $B'$ is positive.
\end{definition}

As seen in the proof of Lemma
\ref{lemma:RotationBetweenVectorSubspaces}, the map
$\pi_V\restrict{V'} \circ \pi_{V'}\restrict{V} : V \rightarrow V$
admits a positive definite matrix in an orthonormal basis and
therefore preserves the orientation. It follows that the relation
``has a consistent orientation through projection with'' is symmetric
and could have been defined either by saying (as in the definition
above) that the determinant of the matrix of $\pi_{V'}\restrict{V}$
with respect to bases $B$ and $B'$ is positive, or by saying that the
determinant of the matrix of $\pi_{V}\restrict{V'}$ with respect to
bases $B'$ and $B$ is positive.

However, the relation ``has a consistent orientation through
projection with'' is not transitive in general.  For example, if
$\Delta_1, \Delta_2,\Delta_3$ are three $1$-dimensional vector spaces
in $\Rspace^2$, each oriented by basis $B_1= \{(1,0)\}$,  basis
$B_2=\{(\cos \frac{\pi}{3}, \sin \frac{\pi}{3})\}$, and  basis $B_3
=\{(\cos \frac{2\pi}{3}, \sin \frac{2\pi}{3})\}$ respectively, then
$(\Delta_1, B_1)$ has a consistent orientation through projection with
$(\Delta_2, B_2)$, as do $(\Delta_2, B_2)$ with $(\Delta_3, B_3)$, but
not $(\Delta_1, B_1)$ with $(\Delta_3, B_3)$.  However, one has the
following lemma, useful for the proof of Lemma
\ref{lemma:TangentSpaceProjectionIPreservesOrientation}:

\begin{lemma}[Making consistent orientation between vector spaces transitive]
  \label{lemma:WhenOrientationPropagationIsTransitive}
Let $V$, $V_1$, $V_2$, and $V_3$ be $d$-dimensional vector subspaces of $\Rspace^N$
such that:
\[
\angle(V, V_i) < \frac{\pi}{4}, \quad \textrm{for $i \in \{1,2,3\}$}.
\]
Then, in any basis $B_1$ of $V_1$ the matrix of the linear map from $V_1$ to $V_1$:
\begin{equation*}\label{equation:CyclicProjectionBetween3Spaces}
\pi_{V_1}\restrict{V_3} \circ \pi_{V_3}\restrict{V_2} \circ \pi_{V_2}\restrict{V_1}
\end{equation*}
has a positive determinant. Equivalently, assuming $B_1$, $B_2$, and
$B_3$ are bases defining orientation of $V_1$, $V_2$, and $V_3$
respectively, we have the following: If $(V_1,B_1)$ has a consistent
orientation through projection with $(V_2,B_2)$ and $(V_2,B_2)$ has a
consistent orientation through projection with $(V_3,B_3)$, then so does
$(V_1,B_1)$ with $(V_3,B_3)$.
\end{lemma}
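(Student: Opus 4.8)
The plan is to prove the first assertion — that the endomorphism $g := \pi_{V_1}\restrict{V_3} \circ \pi_{V_3}\restrict{V_2} \circ \pi_{V_2}\restrict{V_1}$ of $V_1$ has positive determinant — and then deduce the ``equivalently'' reformulation by a bookkeeping argument on signs. For that deduction: fixing bases $B_1,B_2,B_3$ of $V_1,V_2,V_3$, the matrix of $g$ with respect to $B_1$ factors as the product of the matrices of $\pi_{V_1}\restrict{V_3}$, $\pi_{V_3}\restrict{V_2}$, $\pi_{V_2}\restrict{V_1}$ in the appropriate pairs of bases, so $\det g$ is the product of their three determinants. Since the relation ``has a consistent orientation through projection with'' is symmetric (as noted just before the lemma), the sign of each of these three determinants is exactly the orientation sign of the corresponding pair $(V_1,V_2)$, $(V_2,V_3)$, $(V_1,V_3)$; hence $\det g>0$ says precisely that the product of the three pairwise signs equals $+1$, so two of them being positive forces the third to be positive. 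This is the claimed reformulation.

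It remains to show $\det g>0$. First, $g$ is an automorphism of $V_1$: by the triangle inequality for angles between subspaces together with symmetry in equal dimension, $\angle(V_i,V_j)\le \angle(V_i,V)+\angle(V,V_j)<\pi/2$, and — as established in the discussion preceding Lemma~\ref{lemma:RotationBetweenVectorSubspaces}, where the operator $C'=A_{V'}A_V$ is shown to have smallest eigenvalue $(\cos\theta)^2>0$ — each orthogonal projection $\pi_{V_j}\restrict{V_i}$ is then invertible, so $\det g\neq 0$. Now I run a connectedness argument. Let $\mathcal{U}$ denote the set of $d$-dimensional vector subspaces $W$ of $\Rspace^N$ with $\angle(V,W)<\pi/4$. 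Each such $W$ is the graph of a unique linear map $L_W:V\to V^{\perp}$, and (using Lemma~\ref{lemma:MinumalAngleAndOrthogonalPtrojection} and the equal-dimension symmetry of angles) $\angle(V,W)=\arctan\|L_W\|_{\mathrm{op}}$; thus $\mathcal{U}$ is parametrized by the convex open operator-norm ball $\{L:\|L\|_{\mathrm{op}}<1\}$ in $\mathrm{Hom}(V,V^{\perp})$, so $\mathcal{U}$ is path-connected and contains $V$ itself (corresponding to $L=0$). Define $F:\mathcal{U}^3\to\Rspace$ by $F(W_1,W_2,W_3)=\det\bigl(\pi_{W_1}\restrict{W_3}\circ\pi_{W_3}\restrict{W_2}\circ\pi_{W_2}\restrict{W_1}\bigr)$, where the determinant is that of the resulting endomorphism of $W_1$ — equivalently, the $N\times N$ determinant of that endomorphism of $\Rspace^N$ obtained by extending it by the identity on $W_1^{\perp}$, which makes $F$ manifestly continuous, since $W\mapsto\pi_W$ is continuous on the Grassmannian. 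By the previous sentence $F$ never vanishes on the connected set $\mathcal{U}^3$, hence has constant sign there, and $F(V,V,V)=\det(\mathrm{id}_V)=1>0$. Therefore $F>0$ on all of $\mathcal{U}^3$; in particular $F(V_1,V_2,V_3)=\det g>0$.

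The only genuinely delicate points are (i) promoting ``the determinant of an endomorphism of the varying subspace $W_1$'' to an honest continuous function on $\mathcal{U}^3$, which is handled by the ``extend by the identity on the orthogonal complement'' device together with continuity of $W\mapsto\pi_W$; and (ii) the identification of $\mathcal{U}$ with the operator-norm unit ball, which merely repackages the graph representation and Lemma~\ref{lemma:MinumalAngleAndOrthogonalPtrojection}. Point (ii) could also be replaced by a direct appeal to Lemma~\ref{lemma:RotationBetweenVectorSubspaces}: scaling each of the angles $\theta_k$ of the rotation carrying $V$ to $W$ down to $0$ produces a path of subspaces staying at angle $\le\theta_1=\angle(V,W)<\pi/4$ from $V$, which again shows $\mathcal{U}$ is path-connected. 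Neither point involves any real computation.
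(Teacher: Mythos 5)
Your proof is correct and is essentially the paper's own argument: both deform the triple to $(V,V,V)$ inside the region $\angle(V,\cdot)<\frac{\pi}{4}$ and exploit that the determinant of the composite projection is continuous and never vanishes there (all pairwise angles stay below $\frac{\pi}{2}$ by the triangle inequality), so it keeps the positive sign it has at the identity. The only difference is packaging: you obtain path-connectedness via the graph/operator-norm-ball parametrization and make continuity explicit with the extend-by-identity determinant, whereas the paper builds explicit rotation paths from Lemma~\ref{lemma:RotationBetweenVectorSubspaces} (a variant you yourself note), and you additionally spell out the sign bookkeeping for the ``equivalently'' reformulation, which the paper leaves implicit.
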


\begin{proof}
The lemma holds trivially when $V_1=V_2=V_3=V$. In this case, the
matrix associated to $\pi_{V_1}\restrict{V_3} \circ
\pi_{V_3}\restrict{V_2} \circ \pi_{V_2}\restrict{V_1}$ is the identity
and its determinant is $1$.  Using Lemma
\ref{lemma:RotationBetweenVectorSubspaces}, it is easy to build a
basis $B_i(t)$ of a $d$-dimensional vector space $V_i(t)$ for each $i
\in \{1,2,3\}$, parametrized by $t \in [0,1]$ and continuous as a
function of $t$, such that $B_i(0)$ is a basis of $V$, $B_i(1)$ a
basis of $V_i$, and $\angle( V, V_i(t)) < \frac{\pi}{4}$ for all $t\in
[0,1]$.  In this condition, the determinant of:
\begin{equation*}\label{equation:CyclicProjectionBetween3SpacesVarying}
\pi_{V_1(t)}\restrict{V_3(t)} \circ \pi_{V_3(t)}\restrict{V_2(t)} \circ \pi_{V_2(t)}\restrict{V_1(t)}
\end{equation*}
is a continuous function of $t$ which equals $1$ when $t=0$. Since for
any $i,j \in \{1,2,3\}$ one has $\angle( V_i(t), V_j(t)) <
\frac{\pi}{2}$, each projection has a full rank and therefore so does
their composition. Thus, the determinant does not vanish and must
remain positive for all $t\in [0,1]$.
\end{proof}

Consider a compact $C^2$ $d$-dimensional submanifold \M of the
Euclidean space $\Rspace^N$ whose reach is positive. For $m \in \M$,
write
\[
\M_m = \M \cap B\left(m,  \sin \left(\frac{\pi}{4}\right) \Reach{\M}\right)^\circ.
\]
Suppose that the orientation of the tangent space $\TanSpace_m\M$ is defined by
the chart $\psi: \TanSpace_m\M \to \Rspace^d$. By Lemma
\ref{lemma:TangentSpaceProjectionIsAMap}, the restriction of $\psi
\circ \pi_{\TanSpace_m\M}$ to $\M_m$ is a local chart of $\M$.  We say that
the orientation of the tangent space $\TanSpace_m\M$ is {\em consistent} with
the orientation of \M if the local chart $\psi \circ \pi_{\TanSpace_m\M}: \M_m \to \Rspace^d$ is
consistent with the orientation of \M.

\begin{lemma}[Projection on a tangent space gives the orientation]
  \label{lemma:TangentSpaceProjectionIPreservesOrientation}
Let $\M$ be a compact, orientable, $C^2$ $d$-submanifold of the
Euclidean space $\Rspace^N$ with a reach greater than $\rho>0$. Let
$m\in \M$ and assume that $T_m\M$ has orientation that is consistent with
that of \M. Let $\sigma = [v_0,\ldots,v_d]$ be an oriented,
non-degenerate $d$-simplex such that $\Conv{\sigma} \subseteq
\M^{\oplus \rho}$, $\pi_\M(\Conv{\sigma}) \subseteq B\left(m, \sin
(\frac{\pi}{4}) \Reach{\M}\right)$, and $\angle( \Aff{\sigma} , T_m\M)
< \frac{\pi}{4}$.  Then the orientation of $\sigma$ is consistent with
the orientation of $\M$ if and only if the orientation
of \[\pi_{T_m\M}(\sigma) =
[\pi_{T_m\M}(v_0),\ldots,\pi_{T_m\M}(v_d)]\] is consistent with the
orientation of $T_m\M$.
\end{lemma}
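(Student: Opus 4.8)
Fix once and for all a point $x\in\Conv\sigma$ and set $m'=\pi_\M(x)$; by the remark on independence of the choice of $x$ following Definition~\ref{definition:SimplexOrientationTransfert}, it suffices to establish the equivalence using this particular $x$. Write $V_\sigma$ for the direction space of $\Aff\sigma$, $B_\sigma=(v_1-v_0,\dots,v_d-v_0)$ for the ordered basis of $V_\sigma$ attached to the oriented simplex $\sigma$, and let $\psi$ be the linear chart defining the orientation of $T_m\M$ (so, by the paragraph preceding the lemma, the chart $\psi\circ\pi_{T_m\M}$ restricted to $\M_m$ is consistent with the orientation of $\M$). First I would record the needed angle bounds. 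Since $\|m-m'\|\le\sin(\frac{\pi}{4})\reach$, Lemma~\ref{lemma:TangentSpaceVariation} gives $\angle(T_m\M,T_{m'}\M)\le 2\arcsin(\frac{\sqrt2}{4})<\frac{\pi}{4}$; combined with the hypothesis $\angle(\Aff\sigma,T_m\M)<\frac{\pi}{4}$ and subadditivity of the angle, this yields $\angle(\Aff\sigma,T_{m'}\M)<\frac{\pi}{2}$. Hence Lemma~\ref{lemma:ProjectionFromNonOrthogonalFlatIsAChart} supplies an open neighbourhood $U_x$ of $x$ in $\Aff\sigma$, which we may shrink so that $\pi_\M(U_x)\subseteq\M_m$, on which $P:=\pi_\M\restrict{U_x}$ is a $C^1$-diffeomorphism onto its image, with $\Differential_xP=\Differential_x\pi_\M\restrict{V_\sigma}\colon V_\sigma\to T_{m'}\M$ invertible.

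Next I would convert each of the two consistency conditions into a statement about these three $d$-dimensional vector spaces. By Definition~\ref{definition:SimplexOrientationTransfert}, ``$\sigma$ is consistent with $\M$'' means that the chart $\psi_\sigma\circ P^{-1}$ is consistent with the orientation of $\M$; comparing it to the consistent chart $\psi\circ\pi_{T_m\M}$ (any two charts consistent with the orientation of $\M$ induce the same orientation, cf.\ Definition~\ref{definition:ChartOrientation}) and differentiating at $m'$, this is equivalent to the linear self-map $\psi_\sigma\circ(\Differential_x\pi_\M\restrict{V_\sigma})^{-1}\circ(\psi\circ\pi_{T_m\M}\restrict{T_{m'}\M})^{-1}$ of $\Rspace^d$ being orientation preserving. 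By equation~\eqref{eq:JacobianProjectionFromAffineSpace} of Appendix~\ref{appendix:J} (applied with foot point $m'$), $\Differential_x\pi_\M\restrict{V_\sigma}=A\circ\pi_{T_{m'}\M}\restrict{V_\sigma}$, where $A\colon T_{m'}\M\to T_{m'}\M$ corresponds, in an orthonormal chart of $\M$ at $m'$, to $(\UnitMatrix+\tau\,\mathrm{II}_v)^{-1}$ and is therefore self-adjoint positive definite, hence orientation preserving. Dropping $A$ and the fixed isomorphisms $\psi_\sigma,\psi$, one sees that ``$\sigma$ is consistent with $\M$'' is equivalent to: $(V_\sigma,B_\sigma)$ has a consistent orientation through projection with $(T_{m'}\M,B')$, where $B'$ is any basis orienting $T_{m'}\M$ compatibly with $\M$. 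The same computation, with the projection onto $\M$ absent, shows that ``$\pi_{T_m\M}(\sigma)$ is consistent with $T_m\M$'' is equivalent to: $(V_\sigma,B_\sigma)$ has a consistent orientation through projection with $(T_m\M,B)$, where $B$ is a $\psi$-positively oriented basis of $T_m\M$.

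It remains to link $(T_m\M,B)$ and $(T_{m'}\M,B')$. The hypothesis that $\psi\circ\pi_{T_m\M}$ is a chart of $\M$ consistent with its orientation, read through Definition~\ref{definition:ChartOrientation} and the identity $\Differential_{m'}(\psi\circ\pi_{T_m\M})=\psi\circ\pi_{T_m\M}\restrict{T_{m'}\M}$, says exactly that $\pi_{T_m\M}\restrict{T_{m'}\M}$ sends the $\M$-orientation of $T_{m'}\M$ to the $\psi$-orientation of $T_m\M$; equivalently, $(T_m\M,B)$ has a consistent orientation through projection with $(T_{m'}\M,B')$ (the relation being symmetric). Finally I would invoke Lemma~\ref{lemma:WhenOrientationPropagationIsTransitive} with reference space $V=T_m\M$: the spaces $V_\sigma,\,T_m\M,\,T_{m'}\M$ each make an angle $<\frac{\pi}{4}$ with $T_m\M$, so ``has a consistent orientation through projection with'' is transitive among them. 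Therefore $(V_\sigma,B_\sigma)$ is consistent with $(T_m\M,B)$ if and only if it is consistent with $(T_{m'}\M,B')$, which by the previous paragraph is precisely the asserted equivalence between the consistency of $\sigma$ with $\M$ and that of $\pi_{T_m\M}(\sigma)$ with $T_m\M$.

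I expect the bookkeeping of orientations in the middle step to be the delicate part: one must be scrupulous about which ordered basis orients each of the three spaces and verify that the curvature-correction factor $(\UnitMatrix+\tau\,\mathrm{II}_v)^{-1}$ produced by Appendix~\ref{appendix:J} is genuinely orientation preserving --- this is where the $C^2$ hypothesis and $\rho<\reach$ enter, forcing its eigenvalues into $[\reach/(\reach+\rho),\reach/(\reach-\rho)]\subseteq(0,\infty)$. A routine but necessary check is that every angle used stays below $\frac{\pi}{4}$, the binding one being $\angle(T_m\M,T_{m'}\M)\le 2\arcsin(\frac{\sqrt2}{4})<\frac{\pi}{4}$.
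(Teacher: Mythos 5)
Your proposal is correct and follows essentially the same route as the paper's proof: the same angle bound on $\angle(T_m\M,T_{m'}\M)$ via Lemma~\ref{lemma:TangentSpaceVariation}, the Jacobian factorization $\Differential_x\pi_\M\restrict{V_\sigma}=\left(\UnitMatrix+\tau\,\mathrm{II}_v\right)^{-1}\circ\pi_{T_{m'}\M}\restrict{V_\sigma}$ from Appendix~\ref{appendix:J} with positive-definiteness of the curvature factor, and Lemma~\ref{lemma:WhenOrientationPropagationIsTransitive} to transfer ``consistent orientation through projection'' among $\Aff\sigma$, $T_m\M$, and $T_{m'}\M$. The only (harmless) difference is organizational: you phrase both consistency conditions as projection-consistency statements and apply transitivity once to get the equivalence directly, whereas the paper fixes compatible bases up front, checks one determinant, and obtains the converse by contraposition.
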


\begin{proof}
  For $x \in \Conv \sigma$, let us denote by $\tilde{x}=\pi_\M(x)$ the
  projection of $x$ on $\M$.  Since $x\in \Conv{\sigma}$, we have that
  $\tilde{x} \in \M \cap B\left(m, \sin (\frac{\pi}{4})
  \Reach{\M}\right)$ and applying Lemma
  \ref{lemma:TangentSpaceProjectionIsAMap}, one obtains that:
\begin{equation*}
\angle (T_m\M, T_{\tilde{x}}\M )< \frac{\pi}{4}.
\end{equation*}
Using the above inequality and our assumption that $\angle(
\Aff{\sigma} , T_m\M) < \frac{\pi}{4}$, we can apply Lemma
\ref{lemma:WhenOrientationPropagationIsTransitive} with vector spaces
$V = V_1 = T_m\M$, $V_2 = \Aff{\sigma}$, and $V_3 = T_{\tilde{x}}\M$.
It follows that we can choose orthonormal bases for $T_m\M$,
$\Aff{\sigma}$, and $T_{\tilde{x}}\M$ which have a consistent
orientation through projection. These bases define respective frames
of $\TanSpace_m\M$, $\Aff{\sigma}$, and $\TanSpace_{\tilde{x}}\M$
centered at $m$, $x$, and $\tilde{x}$, respectively.  These frames
define coordinate systems, or charts, for $\TanSpace_m\M$,
$\Aff{\sigma}$, and $\TanSpace_{\tilde{x}}\M$. We can always choose
those charts so that the induced orientation of $\TanSpace_m\M$ is
consistent with the orientation of \M.

Suppose that the orientation of $\pi_{T_m\M}(\sigma) =
[\pi_{T_m\M}(v_0),\ldots,\pi_{T_m\M}(v_d)]$ is consistent with the
orientation of $T_m\M$ and let us prove that the orientation of
$\sigma = [v_0,\ldots,v_d]$ is consistent with the orientation of
\M. By definition, we thus need to show that for an open neighborhood
$U_x$ of $x$ in $\Aff \sigma$, we have
\begin{equation}\label{equation:TauOrientedAsMBrut}
  \det \Differential \left( \psi_\sigma \circ \left( \pi_\M\restrict{U_x} \right)^{-1}
  \circ \left( \psi \circ \pi_{\TanSpace_m\M}\restrict{\M_m}\right)^{-1} \right)  \left( \psi \circ \pi_{\TanSpace_m\M}(\tilde{x}) \right) >0
\end{equation}
Without loss of generality, we may assume $\psi = \operatorname{Id}$,
identifying $\TanSpace_m\M$ with $\Rspace^d$.
Because of our choice of frames for $\TanSpace_m\M$, $\Aff{\sigma}$,
and $\TanSpace_{\tilde{x}}\M$, we obtain immediately that the
orientation of $\sigma$ is consistent with the orientation of $\Aff
\sigma$. Thus, $\det \Differential \psi_\sigma(y) > 0$ for all $y \in
\Aff \sigma$ and  \eqref{equation:TauOrientedAsMBrut} is
equivalent to
\begin{equation}\label{equation:TauOrientedAsM}
\det \Differential \left( \left( \pi_\M\restrict{U_x} \right)^{-1} \circ \left( \pi_{\TanSpace_m\M}\restrict{\M_m}\right)^{-1} \right)  \left( \pi_{\TanSpace_m\M}(\tilde{x}) \right) >0.
\end{equation}
which in turn is equivalent to
\begin{equation}\label{equation:TauOrientedAsMInverse}
\det \Differential \left( \pi_{\TanSpace_m\M}\restrict{\M_m}  \circ \pi_\M\restrict{U_x} \right) \left( x \right) >0.
\end{equation}

Since $\M$ and $\TanSpace_{\tilde{x}}$ are tangent at $\tilde{x}$, we obtain, by
using the chart $\pi_{T_{\tilde{x}}\M}\restrict{\M_{\tilde{x}}}$  for $\M$ in a neighborhood $\M_{\tilde{x}}$ of $\tilde{x}$ in $\M$:
\begin{equation}\label{equation:TDiffPiTYTildeFromMIs1}
\Differential \left(  \pi_{\TanSpace_{\tilde{x}}\M}\restrict{\M_{\tilde{x}}}  \right) \left( \tilde{x}\right)= \UnitMatrix.
\end{equation}

Since \eqref{eq:JacobianProjectionFromAffineSpace} assumes the
projection on $\TanSpace_{\tilde{x}}\M$ equipped with the orthonormal
basis $(e_1,\ldots,e_d)$ of $T_{\tilde{x}}\M$ as a chart for $\M$ in a
neighborhood $\M_{\tilde{x}}$ of $\tilde{x}$, it gives us an
expression of the derivative of
$\pi_{\TanSpace_{\tilde{x}}\M}\restrict{\M_{\tilde{x}}} \circ
\pi_\M\restrict{U_x}$:
\[
\Differential \left( \pi_{\TanSpace_{\tilde{x}}\M}\restrict{\M_{\tilde{x}}} \circ  \pi_\M\restrict{U_x}\right)  (x)  = \left( \UnitMatrix + \lambda  \mathrm{II}_v(\tilde{x})\right)^{-1} e_j^T  = \left( \UnitMatrix + \lambda  \mathrm{II}_v(\tilde{x})\right)^{-1}  \pi_{T_{\tilde{x}}\M}\restrict{U_x},
\]
where $v$ is a unit vector such that $x = \tilde{x} + \lambda v$,
 and $\pi_{T_{\tilde{x}}\M}\restrict{U_x}$ is the differential of $\pi_{\TanSpace_{\tilde{x}}\M}\restrict{U_x}$.
Then, using \eqref{equation:TDiffPiTYTildeFromMIs1}:
\begin{equation}\label{equation:ExpresionDProjOnMFromTau}
\Differential \left(  \pi_\M\restrict{U_x}\right)  (x)  = \left( \UnitMatrix + \lambda  \mathrm{II}_v(\tilde{x})\right)^{-1}  \pi_{T_{\tilde{x}}\M}\restrict{U_x}.
\end{equation}

%
Using \eqref{equation:TDiffPiTYTildeFromMIs1}, that is the tangent space at $\tilde{x}$ to either $\M_m$ or $\M_{\tilde x}$ coincides, with the same chart,
with the tangent space at   $\tilde{x}$ to $\TanSpace_{\tilde{x}} \M$, we get:

  
\begin{equation}\label{equation:ExpresionDProjOnTmFromM}
 \Differential \left(\pi_{\TanSpace_m\M}\restrict{\M_m} \right)  \left(\tilde{x}\right) 
=   \Differential \left(\pi_{\TanSpace_m\M}\restrict{\TanSpace_{\tilde{x}} \M}   \right)   \left(\tilde{x}\right)   =     \pi_{T_m\M}\restrict{T_{\tilde{x}}\M},
\end{equation}
since $ \pi_{T_m\M}\restrict{T_{\tilde{x}}\M} $ is the derivative of $\pi_{\TanSpace_m\M}\restrict{\TanSpace_{\tilde{x}} \M}$.
Combining \eqref{equation:ExpresionDProjOnMFromTau} and \eqref{equation:ExpresionDProjOnTmFromM}, we obtain that
\eqref{equation:TauOrientedAsMInverse} holds if and only of the
determinant of
\[
\pi_{T_m\M}\restrict{T_{\tilde{x}}\M} \, \left( \UnitMatrix + \lambda  \mathrm{II}_v(\tilde{x})\right)^{-1}  \, \pi_{T_{\tilde{x}}\M}\restrict{U_x}
\]
is positive. This holds since each of the matrices associated to the
three linear maps has a positive determinant: indeed, it holds for the
two projections by the choice of coordinate systems, and for $\left(
\UnitMatrix + \lambda \mathrm{II}_v(\tilde{x})\right)^{-1}$ because,
since $\lambda \leq \rho< \Reach{\M}$ and $\left\| \lambda
\mathrm{II}_v(\tilde{x}) \right\|_2 < 1$, the matrix $\UnitMatrix +
\lambda \mathrm{II}_v(\tilde{x})$ is symmetric and positive definite.

For the reverse implication, we establish the
contraposition. Precisely, letting $\sigma'$ be the simplex sharing
with $\sigma$ the same set of vertices but having the opposite
orientation, we show that if $\sigma'$ has orientation consistent
through projection on $T_m\M$, then $\sigma'$ has its orientation
consistent with \M. But, this statement is a consequence of the direct
proposition that we have just proved, in which we replace $\sigma$ with $\sigma'$.
\end{proof}

\clearpage
\section{Establishing practical conditions}
\label{appendix:practical-conditions}

In this section, we prove formally what is intuitively quite obvious.
Since, in Problem (\ref{problem:reconstruction}), we consider only
cycles $\gamma$ (that is, chains such that $\partial \gamma = 0$), the normalization condition ($\Load {m_0} \M K \gamma
=1$) is, in some sense, ``stable''. Indeed, a small change in $m_0$ or
in the projection direction, may only cross a $(d-1)$-simplex, which,
since $\partial \gamma = 0$ and the small angular change preserves the
orientation, will not change the resulting ``load''.

Therefore, the normalization condition can be replaced by a constraint which does not refer to \M anymore but
refers only to a rough approximation $\Pi$ of \M.

\begin{remark}\label{remark:SameSignInBoundary}
Recall that the  boundary of an oriented simplex $[v_0, \ldots, v_d]$ is defined as:
\[
\partial [v_0, \ldots, v_d] = \sum_{i=0}^d (-1)^i [v_0, \ldots, \hat{v_i},\ldots, v_d]
\]
where $ \hat{v_i}$ means that vertex $v_i$ is omitted,
so that the simplex $\tau = [v_0, \ldots, v_{d-1}]$ has the sign $(-1)^d$ in $\partial [v_0, \ldots, v_d] $.  
It follows that for any   $d$-coface  $\sigma' = \tau \cup \{ v_d'\}$ of $\tau$, $\tau$ appears in $\partial \sigma'$
with the same sign as in $\partial \sigma$ if and only if $\sigma'$ is oriented as $[v_0, \ldots, v_{d-1}, v_{d}']$.
\end{remark}

\Remember{Could the assumption in the lemma below be replaced by : the $d$-simplices of $K$ are $\rho$-small?}

\begin{lemma}\label{lemma:PushForwardOfCycleIsConstant}
Let $K$ be a simplicial complex with vertices in $\Rspace^\Dim$ such
that $| K | \subseteq \M^{\oplus \rho}$ and suppose that the
$d$-simplices of $K$ are non-degenerate and have a diameter upper
bounded by $\rho$
We also assume that for all $d$-simplices $\tau
\in K$ and all points $y\in \Conv{\tau}$, we have
\begin{equation}\label{equation:PracticalNormalizationLemmaConditionAngleSimplexManifold}
 \angle \Aff{\tau}, T_{\pi_\M(y)} < \frac{\pi}{2}.
\end{equation}
Choose an orientation for \M and assume that all $d$-simplices of $K$ inherit this orientation.
Then, for any $d$-cycle $\gamma$ in $K$, the map 
\[
 \sum_{\alpha} \gamma(\alpha) \Indicator{\pi_\M(\Conv{\alpha})} 
 \]
is constant almost everywhere.
\end{lemma}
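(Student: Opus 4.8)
The plan is to show that the function $F:=\sum_{\alpha}\gamma(\alpha)\,\Indicator{\pi_\M(\Conv\alpha)}$ — the sum over the (finitely many) $d$-simplices $\alpha$ of $K$, all of which carry the orientation of $\M$, so no sign $\sign\alpha\M$ appears — is locally constant off a closed set of measure zero (the ``walls''), and then, using the cycle condition $\partial\gamma=0$, that $F$ does not jump across a wall; connectedness of $\M$ then finishes it. Write $\pi:=\pi_\M$, which is well defined and $C^1$ on $\M^{\oplus\rho}\supseteq\US K$ since $\rho<\Reach\M$, and recall $\M$ is connected. Put
\[
B=\bigcup\bigl\{\pi(\Conv\tau)\ :\ \tau\in K,\ \dim\tau\le d-1,\ \tau\text{ a face of some }d\text{-simplex}\bigr\}.
\]
As a finite union of continuous images of compact sets, $B$ is compact of topological dimension $\le d-1$, so $\mathcal H^d(B)=0$; it therefore suffices to prove $F$ is constant on the full-measure open set $\M\setminus B$.

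First, $F$ is locally constant on $\M\setminus B$. For each $d$-simplex $\alpha$, the hypothesis $\angle(\Aff\alpha,\Tangent{\pi(y)}\M)<\tfrac\pi2$ for $y\in\Conv\alpha$ lets us apply Lemma~\ref{lemma:ProjectionFromNonOrthogonalFlatIsAChart} to the $d$-flat $\Aff\alpha$ at each $y\in\relint{\Conv\alpha}$, so $\pi(\relint{\Conv\alpha})$ is open in $\M$; also $\overline{\pi(\relint{\Conv\alpha})}\subseteq\pi(\Conv\alpha)$ by continuity and compactness, while $\pi(\Conv\alpha)\setminus\pi(\relint{\Conv\alpha})\subseteq\bigcup_{\tau\subset\alpha,\,\dim\tau=d-1}\pi(\Conv\tau)\subseteq B$, so $\pi(\relint{\Conv\alpha})$ is clopen in $\M\setminus B$ and there $\Indicator{\pi(\Conv\alpha)}=\Indicator{\pi(\relint{\Conv\alpha})}$ is locally constant; a finite sum of such is locally constant. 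The same angle bound also makes each $\pi|_{\Conv\alpha}$ \emph{injective}: if $x\ne x'$ in $\Conv\alpha$ satisfied $\pi(x)=\pi(x')=m$, then $0\ne x-x'\in\VectorSpace{\Aff\alpha}\cap N_m\M$, forcing $\angle(\Aff\alpha,\Tangent m\M)=\tfrac\pi2$ at $m=\pi(x)\in\pi(\Conv\alpha)$, a contradiction; hence $\pi|_{\Conv\alpha}$ is a homeomorphism onto its image, and likewise for $(d-1)$-faces of $d$-simplices.

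Next, fix a point $m^*\in B$ that avoids $\pi(\Conv\tau)$ for all simplices of dimension $\le d-2$; the excluded set $B_2$ is closed of dimension $\le d-2$, and $\M\setminus B_2$ is still connected. Near $m^*$ each germ $\pi(\relint{\Conv\tau})$ through $m^*$ is an embedded $C^1$ hypersurface (apply Lemma~\ref{lemma:ProjectionFromNonOrthogonalFlatIsAChart} to a $d$-flat containing $\Aff\tau$ and transverse to $\Tangent{m^*}\M$), so it suffices to show $F$ takes the same value on the two sides of \emph{each} such wall $W_\tau$. If a $d$-simplex $\alpha$ is not a coface of $\tau$, then $\tau$ is not a face of $\alpha$ and (for a small ball $U$ around $m^*$) $\pi$ of the geometric boundary of $\Conv\alpha$ misses $W_\tau$, since otherwise $m^*$ would lie in $\pi(\Conv{\tau'})$ for some $(d-1)$-simplex $\tau'\neq\tau$ or in $\pi$ of a $\le(d-2)$-simplex; so $\Indicator{\pi(\Conv\alpha)}$ does not jump across $W_\tau$. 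For a $d$-coface $\sigma_i$ of $\tau$ ($i=1,\dots,k$), $\pi|_{\Conv\sigma_i}$ is a homeomorphism onto its image, and near the preimage $x_i^*$ of $m^*$ — which lies in $\relint{\Conv\tau}$, a relatively open face of $\Conv\sigma_i$ — $\Conv\sigma_i$ is the half of $\Aff\sigma_i$ lying on one side of $\Aff\tau$, while $\pi|_{\Aff\sigma_i}$ is a local $C^1$-diffeomorphism there; hence $\pi(\relint{\Conv\sigma_i})\cap U$ is exactly one of the two sides of $W_\tau$, and crossing $W_\tau$ in a fixed direction changes $F$ by $\sum_i\gamma(\sigma_i)\,\varepsilon_i$ with $\varepsilon_i\in\{\pm1\}$.

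The crux — the step I expect to be the main obstacle — is to show $\varepsilon_i=[\tau:\sigma_i]$, the incidence number (coefficient of the chosen oriented $\tau$ in $\partial\sigma_i$), uniformly in $i$ for a fixed labelling of the sides of $W_\tau$. The side of $\Aff\tau$ in $\Aff\sigma_i$ containing $\Conv\sigma_i$, together with the fixed orientation of $\tau$, reconstitutes the orientation $\sigma_i$ induces on $\Aff\sigma_i$ up to the sign $[\tau:\sigma_i]$ — the vertex-ordering bookkeeping of Remark~\ref{remark:SameSignInBoundary}. Because $\pi$ and the tangential projection $\pi_{\Tangent{m^*}\M}$ have the same differential at $x_i^*$ along $\Aff\sigma_i$ (the first-order agreement recorded in Equation~\eqref{equation:SameDifferentialProjOnTgtPlaneAndThroughProjOnManifold}), the side question reduces to the tangential picture, where $\pi$ carries the orientation of $\Aff\sigma_i$ to that of $\pi_{\Tangent{m^*}\M}(\sigma_i)$, which by Lemma~\ref{lemma:TangentSpaceProjectionIPreservesOrientation} — applicable since every $\sigma_i$ is oriented consistently with $\M$ — is consistent with \emph{one} orientation of $\Tangent{m^*}\M$ independent of $i$, while the fixed orientation of $\tau$ goes to a fixed orientation of $W_\tau$; comparing these, $\pi(\relint{\Conv\sigma_i})$ occupies the ``$+$'' side precisely when $[\tau:\sigma_i]=+1$. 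Therefore the jump of $F$ across $W_\tau$ equals
\[
\sum_i\gamma(\sigma_i)\,[\tau:\sigma_i]=(\partial\gamma)(\tau)=0 .
\]
Consequently $F$ extends to a function on $\M\setminus B_2$ that is locally constant (the chambers cut near any point of $B\setminus B_2$ share a common $F$-value by the no-jump property); since $\M\setminus B_2$ is connected, $F$ is constant there, hence almost everywhere on $\M$. (More concisely: $\pi_\#\gamma$ is a compactly supported $d$-cycle on the closed connected oriented manifold $\M$, hence a constant multiple of its fundamental class, whose multiplicity at a generic point is $F$ by the injectivity and consistent orientation above — the argument just given being an elementary substitute for this constancy theorem.)
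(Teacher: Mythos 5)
Your overall strategy is the one the paper uses: delete the projection of the $(d-2)$-skeleton to get a connected open set, show local constancy away from the projected $(d-1)$-faces, show the jump across each wall $W_\tau$ equals $\pm(\partial\gamma)(\tau)=0$ by identifying the side of the wall occupied by $\pi_\M(\relint{\Conv{\sigma_i}})$ with the incidence sign of $\tau$ in $\partial\sigma_i$ (via consistent orientation and Lemma~\ref{lemma:TangentSpaceProjectionIPreservesOrientation}), and conclude by connectedness. The crux step is handled correctly and in essentially the same way as the paper's Claim~3 (the paper computes the sign of the last coordinate in the chart $\psi_\tau\circ(\pi_\M|_{U})^{-1}$; you reduce to the tangential projection via equality of differentials — same content).

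Two steps are loose as written. First, your justification that $\mathcal H^d(B)=0$ via ``topological dimension $\le d-1$'' is not valid (topological dimension bounds Hausdorff dimension from below, not above, and a continuous image of a compact set can be space-filling); the correct argument, which the paper gives, is that $\pi_\M$ is $C^1$ hence locally Lipschitz on $\M^{\oplus\rho}$, and Lipschitz images of compact $(d-1)$-dimensional convex sets have zero $\mathcal H^d$-measure. Second, and more substantively, your genericity on $m^*$ only excludes projections of simplices of dimension $\le d-2$, so $m^*$ may lie simultaneously on several walls $W_{\tau_1},\dots,W_{\tau_k}$ coming from distinct $(d-1)$-simplices; your claim that for $\alpha$ not a coface of $\tau$ the indicator $\Indicator{\pi_\M(\Conv\alpha)}$ does not jump across $W_\tau$ near $m^*$ tacitly assumes $m^*$ avoids every other wall, and the closing assertion that ``the chambers cut near any point of $B\setminus B_2$ share a common $F$-value'' is exactly the multi-wall case that needs an argument. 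The paper resolves this by decomposing the pushforward at $m^*$ as a sum over the stars $\overline{\Star{\sigma}{K}}$ of \emph{all} simplices $\sigma$ (of dimension $d$ or $d-1$) with $m^*\in\pi_\M(\relint\Conv\sigma)$ — these stars are pairwise disjoint by injectivity of $\pi_\M$ on each $\Conv\alpha$ — and showing each star contribution is separately locally constant at $m^*$; alternatively you could join two generic points by a path crossing one wall at a time. Either fix is routine, but one of them is needed to close the argument.
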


\begin{proof}
Given a set $\Sigma$ of $d$-simplices in $K$ and a $d$-chain $\gamma$
in $K$, we denote by $\gamma\restrict{\Sigma}$ the restriction of
$\gamma$ to $\Sigma$. In other words, $\gamma\restrict{\Sigma}$ is the
chain that coincides with $\gamma$ on $\Sigma$ and is zero elsewhere.
We define the map $\pi_\M^\sharp \gamma : \M \rightarrow \Rspace$ as
the regularization of $m \mapsto \sum_{\alpha} \gamma(\alpha)
\Indicator{\pi_\M(\Conv{\alpha})}$:
\[
\left( \pi_\M^\sharp \gamma \right) (m) \defunder{=} \sum_{\alpha} \gamma(\alpha) \Indicator{\pi_\M(\Conv{\alpha})} (m). 
\]
The notation $\pi_\M^\sharp \gamma$  is justified by the fact that $\pi_\M^\sharp$ is a linear map from the set of chains in $K$
to the set of piecewise-constant real valued functions on $\M$ modulo equality almost everywhere.
The regularized version of it is:
\[
\left( \overline{\pi_\M^\sharp}  \gamma \right) (m) \defunder{=} \lim_{r \rightarrow 0^+} \frac{ \int_{m' \in  \M  \cap B(m,r)} \left( \pi_\M^\sharp \gamma \right) (m')  \:d\mu_\M    } 
{\int_{m' \in  \M  \cap B(m,r)}   \:d\mu_\M     },
\]
where, as usual, $B(m,r)$ designates the ambient ball with center $m$ and radius $r$, and $\mu_\M$ is the $d$-volume on $\M$.
This regularization will allow us to conclude the proof by exploiting the continuity of this  regularized function.

For a simplex $\sigma \in K$, we denote the set of $d$-simplices in
the star of $\sigma$ in $K$ by $\overline{ \Star{\sigma}{K}}$.  We
denote the $k$-skeleton of $K$ by $\Skel^k(K)$.  We start by proving three
claims:

\medskip\noindent {\bf  \underline{Claim 1:} $\pi_\M( \Skel^{d-1}(K))$ has $\mu_\M$-measure zero and $\M  \setminus \pi_\M( \Skel^{d-2}(K))$ is open and connected.}

 Since $\pi_\M$ is $C^1$, the image $\pi_\M(\sigma)$ of a
 $(d-1)$-simplex $\sigma$ has a zero $d$-Hausdorff measure.
 $\Skel^{d-1}(K)$ is a finite union of such images. $\pi_\M(
 \Skel^{d-2}(K))$, as the image of a compact set by $\pi_\M$, is a
 compact set and its complement is therefore open.  Similarly,
 $\pi_\M( \Skel^{d-2}(K))$ is a finite union of smooth compact
 $(d-2)$-submanifolds of $\M$. It follows that any intersection
 of $\pi_\M( \Skel^{d-2}(K))$ with a smooth curve is non
 generic. Therefore, for any two points $m_1,m_2 \in \M \setminus
 \pi_\M( \Skel^{d-2}(K))$, there must exist a smooth curve in $\M
 \setminus \pi_\M( \Skel^{d-2}(K))$ connecting $m_1$ to $m_2$. This
 proves Claim~1.

\medskip\noindent {\bf  \underline{Claim 2:}  If $\tau \in K$ is a $d$-simplex and $m \in \pi_\M (\tau^\circ)$, then
$\overline{\pi_\M^\sharp}  \gamma\restrict{\overline{ \Star{\tau}{K}}} = \pi_\M^\sharp  \gamma\restrict{\overline{ \Star{\tau}{K}}}$
 is constant in a neighborhood of $m$ in $\M$.}

Indeed, $\overline{ \Star{\tau}{K}} = \{ \tau \}$ and for any $m' \in
\pi_\M (\tau^\circ)$ one has $\overline{\pi_\M^\sharp}
\gamma\restrict{\overline{ \Star{\tau}{K}}} (m') = \gamma(\tau)$.  Due
to Lemma \ref{lemma:ProjectionFromNonOrthogonalFlatIsAChart},
$\pi_\M\restrict{\tau^\circ}$ is an open map and there is a
neighborhood $U_m$ of $m$ in \M such that $U_m \subseteq \pi_\M
(\tau^\circ)$. It follows that for any $m'\in U_m$,
$\overline{\pi_\M^\sharp} \gamma\restrict{\overline{ \Star{\tau}{K}}}
(m') = \gamma(\tau)$. This proves Claim 2.

\medskip

Note that Claim 2 does not use the assumption that $\partial \gamma =
0$.  Since, for any $d$-simplex $\tau$, the complement of
$\pi_\M(\tau)$ in $\M$ is open in $\M$,  a consequence of Claim
2 is that $\overline{\pi_\M^\sharp} \gamma$ and $\pi_\M^\sharp \gamma$
coincide on $\M \setminus \pi_\M( \Skel^{d-1}(K))$.

\medskip\noindent {\bf  \underline{Claim 3:}  If $\sigma \in K$ is a $(d-1)$-simplex and 
if $\partial \gamma = 0$ then for any $m \in \pi_\M (\sigma^\circ)$, 
$\overline{\pi_\M^\sharp}  \gamma\restrict{\overline{ \Star{\sigma}{K}}}$ is constant in a neighborhood of $m$ in $\M$.}

In order to prove the claim, pick one $d$-dimensional coface $\tau$ of $\sigma$ and
let $y \in \sigma^\circ$ be such that $\pi_\M(y)=m$.
For some neighborhood $U_{y}$ of $y$ in $\Aff{\tau}$,
 one can use
$\xi = \psi_\tau \circ \left( \pi_\M\restrict{U_{y}} \right)^{-1}$ as a chart for $\M$ in an open  neighborhood $U_m$ of $m$ in $\M$.
$U_m$ can be chosen small enough to be included in $\pi_\M (\Star{\sigma}{K}^\circ)$ and $U_{\xi(m)} = \xi(U_m)  \subseteq  \Rspace^{d}$ 
is an open neighborhood of $\xi(m)$.

Without loss of generality, assume that  $\sigma= [ v_0, \ldots, v_{d-1}]$ and  that the orientation 
of $\tau$ defined by  $\tau=  [ v_0, \ldots, v_{d-1}, v_{d}]$ is consistent with the given orientation of $\M$.

From the definition of $\xi$ and $\psi_\tau$ in Definitions \ref{definition:ChartOrientation}
 and  \ref{definition:SimplexOrientationTransfert},
one has:
\[
\xi( U_m \cap \pi_\M(\sigma) ) = U_{\xi(m)} \cap \Big( \Rspace^{d-1} \times \{0\} \Big).
\]
For a $d$-simplex $\tau' \in \overline{ \Star{\sigma}{K}}$, 
since the restriction of $\xi \circ \pi_\M$ to $\tau'$ is an homeomorphism, the boundary of $\xi \big( U_m \cap   \pi_\M (\tau') \big)$
in $U_{\xi(m)}$ has to coincide with $U_{\xi(m)} \cap \big( \Rspace^{d-1} \times \{0\} \big)$ and, $\xi \big( U_m \cap   \pi_\M (\tau') \big)$
 therefore equals either $U_{\xi(m)} \cap \big( \Rspace^{d-1} \times\Rspace^- \big)$ or $U_{\xi(m)} \cap \big( \Rspace^{d-1} \times\Rspace^+ \big)$.

Let us denote the vertex in $\tau' \setminus \sigma$  by $v_d'$ and consider the curve  $y'(t) = y+ t (v_d' - v_0)$, with $t \in [0, \lambda]$,  
$\lambda>0$  small enough to have $y' ( [0, \lambda] ) \subseteq   \tau'^\circ$
and $\pi_\M \big( y' ( [0, \lambda] ) \big) \subseteq  U_m$.
If $\tau'$ is oriented as $[v_0, \ldots, v_{d-1}, v_{d}']$ and  if both $\tau$ and $\tau'$ have been given orientations consistent with $\M$, we have
 (Definition \ref{definition:SimplexOrientationTransfert}):
\[
\det \Differential \Big(  \xi \circ \pi_\M \circ \psi_{\tau'}^{-1} \Big) (\psi_{\tau'} (y))  > 0,
\]
which is equivalent to:
\begin{align}\label{equation:ExpressionDeterminantImageBaseByXi}
\det \Bigg( &\Differential \Big( \xi \circ \pi_\M  \Big) (y) \cdot (v_1 - v_0) ,\ldots  \nonumber \\
&\ldots, \Differential \Big( \xi \circ \pi_\M   \Big) (y) \cdot  (v_{d-1} - v_0), 
 \: \Differential \Big( \xi \circ \pi_\M  \Big) (y)  \cdot (v_d' - v_0)   \Bigg)> 0.
\end{align}

But by definition of $\xi$ one has that for $i=1,\ldots, d-1$,  $\Differential \Big( \xi \circ \pi_\M  \Big) (y) \cdot (v_i - v_0) \in \Rspace^d$ equals
$(0,\ldots,0,1,0,\ldots,0)$ where the $1$ appears in position $i$ and thus \eqref{equation:ExpressionDeterminantImageBaseByXi} implies that
the last coordinate of $ \Differential \Big( \xi \circ \pi_\M  \Big) (y)  \cdot (v_d' - v_0)$ is positive, which implies that, for $t$ small enough,
$\xi \circ \pi_\M (y'(t)) \in  \Rspace^{d-1} \times\Rspace^+$ under our assumption for  $\tau'$ to be oriented as $[v_0, \ldots, v_{d-1}, v_{d}']$.
At the same time, we know from Remark \ref{remark:SameSignInBoundary} that  $\sigma$ appears
with the same sign  in $\partial \tau'$ as in $\partial \tau$ if and only if $\tau'$ is oriented as $[v_0, \ldots, v_{d-1}, v_{d}']$.
Since the orientations of $\tau$ and $\tau'$ are consistent with the chosen  orientation of $\M$,
$\xi \big( U_m \cap   \pi_\M (\tau') \big)$
is therefore included   in 
 $H^- = \Rspace^{d-1} \times \Rspace^-$ or  $H^+= \Rspace^{d-1} \times \Rspace^+$ respectively depending on 
whether $\sigma$ appears  negatively or positively  in  $\partial \tau$.

The set $\overline{\Star{\sigma}{K} }$ of $d$-cofaces of $\sigma$ can be decomposed in 
$\overline{\Star{\sigma}{K} }^-$ and $\overline{\Star{\sigma}{K} }^+$ such that:
\[
\partial \tau (\sigma) = -1 \iff \tau \in \overline{\Star{\sigma}{K} }^- \quad \text{and} \quad \partial \tau (\sigma) = 1 \iff \tau \in \overline{\Star{\sigma}{K} }^+.
\]
Then:
\begin{equation}\label{eq:ZeroBoundary1}
\partial \gamma = 0  \Rightarrow \left( \sum_{\tau \in \overline{\Star{\sigma}{K} }^+} \gamma(\tau) \right) 
- \left( \sum_{\tau \in \overline{\Star{\sigma}{K} }^-} \gamma(\tau) \right) = 0.
\end{equation}

If $m^- \in \xi_\sigma^{-1}(H^- \cap \xi_\sigma \circ \pi_\M(U_x) \setminus \Pi)$ then:

\begin{equation}\label{eq:EvalLeft}
\pi_\M^\sharp  \gamma\restrict{\overline{ \Star{\sigma}{K}}}(m^-) 
= \pi_\M^\sharp  \gamma\restrict{\overline{ \Star{\sigma}{K}}^-} (m^-) = \sum_{\tau \in \overline{\Star{\sigma}{K} }^-} \gamma(\tau),
\end{equation}
and if $m^+ \in \xi_\sigma^{-1}(H^+  \cap \xi_\sigma \circ \pi_\M(U_x) \setminus \Pi)$ then:
\begin{equation}\label{eq:EvalRight}
\pi_\M^\sharp  \gamma\restrict{\overline{ \Star{\sigma}{K}}} (m^+) 
= \pi_\M^\sharp  \gamma\restrict{\overline{ \Star{\sigma}{K}}^+} (m^+) = \sum_{\tau \in \overline{\Star{\sigma}{K} }^+} \gamma(\tau).
\end{equation}
Now \eqref{eq:ZeroBoundary1}, \eqref{eq:EvalLeft}, and \eqref{eq:EvalRight} give us:
\[
\pi_\M^\sharp  \gamma\restrict{\overline{ \Star{\sigma}{K}}}(m^+)   - \pi_\M^\sharp  \gamma\restrict{\overline{ \Star{\sigma}{K}}}(m^-)   =  0,
\]
which proves Claim 3.

\medskip

One has, for any $m\in \M$:
\begin{equation}\label{eq:psiAsSUmOverStars}
\pi_\M^\sharp  \gamma(m) = \sum_{m \in \pi_\M (\sigma^\circ)} \pi_\M^\sharp  \gamma\restrict{\overline{ \Star{\sigma}{K}}} (m).
\end{equation}
If $m \in \M \setminus \pi_\M( \Skel^{d-2}(K))$ then the simplices
$\sigma$ such that $m \in \pi_\M (\sigma^\circ)$ that contribute to
the sum \eqref{eq:psiAsSUmOverStars} are of dimension either $d$
either $d-1$.
It follows from Claims 1, 2, and 3, that if $m \in \M \setminus
\pi_\M( \Skel^{d-2}(K))$ then $\overline{\pi_\M^\sharp} \gamma$ is
constant in an open neighborhood of $m$. As a consequence, for $ m_0
\in \M \setminus \pi_\M( \Skel^{d-2}(K))$, the set
\[
\{ m \in \M
\setminus \pi_\M( \Skel^{d-2}(K)) \mid \overline{\pi_\M^\sharp}
\gamma(m) = \overline{\pi_\M^\sharp} \gamma (m_0) \}
\]
is open.  Since
$\M \setminus \pi_\M( \Skel^{d-2}(K))$ is connected by Claim 1, it
follows that $\psi_{\gamma, K }$ is constant on $\M \setminus
\pi_\M(\Skel^{d-2}(K))$. Since $\psi_{\gamma, K }$ and $m \mapsto
\sum_{\alpha \in K} \gamma(\alpha) \Indicator{\pi_\M(\Conv{\alpha})}$
coincide on $\M \setminus \pi_\M( \Skel^{d-1}(K))$, and since, due to
Claim 1, $\pi_\M( \Skel^{d-1}(K))$ has a zero Lebesgue measure, we have
shown that $\sum_{\alpha \in K} \gamma(\alpha)
\Indicator{\pi_\M(\Conv{\alpha})}$ is constant almost everywhere.
\end{proof}


The next lemma is useful to derive a realistic algorithm in Section
\ref{section:RealisticAlgorithm}. Roughly, it says that the
normalization constraint in Problem (\ref{problem:reconstruction}) can
be replaced by a constraint which does not refer to \M anymore but
refers only to a rough approximation $\Pi$ of \M. Such an
approximation can be derived from the mere knowledge of the point set $P$, as explained in
Section \ref{section:RealisticAlgorithm}.  For any $x
\in \Rspace^\Dim$ and any $r \geq 0$, recall that
\[
K[x,r] = \{ \sigma \in K \mid \Conv \sigma \cap B(x,r) \neq \emptyset \}
\]
and note that $K[x,r]$ is not necessarily a simplicial complex.



\begin{lemma}[Practical normalization lemma]\label{lemma:PracticalNormalizationByApproximateTangentSpace}
  Suppose $0 \leq \rho \leq \frac{\reach}{25}$. Let $K$ be a simplicial
  complex with vertices in $\Rspace^\Dim$ such that $| K | \subseteq
  \M^{\oplus \rho}$ and suppose that the $d$-simplices of $K$ are
  non-degenerate and have a diameter upper bounded by $\rho$. Suppose furthermore
   that for all $d$-simplices $\tau \in K$ and all points $y\in
  \Conv{\tau}$, we have
  \begin{equation}\label{equation:PracticalNormalizationLemmaConditionAngleSimplexManifold_2}
    \angle (\Aff{\tau}, \Tangent {\pi_\M(y)} \M) < \frac{\pi}{2}
  \end{equation}
  and that every $d$-simplex in $K$ inherits the orientation of the
  manifold \M.  Consider a $d$-dimensional affine space $\Pi$ passing through a point $x \in \M^{\oplus \rho}$
  such that
  \begin{equation}\label{equation:PracticalNormalizationLemmaConditionAngleSimplexPlane}
    \angle (\Pi, \Tangent {\pi_\M(x)} \M) < \frac{\pi}{8}.
  \end{equation}
  Assume that $\Pi$ is oriented consistently with \M.
  Let $K' = K[x,4\rho]$ and suppose that the following conditions hold:
  \begin{enumerate}[label=\styleitem{(\roman*)}]
  \item \label{eq:GenericWhenTIsZeroNoDMinusOneSimplexProject}
    $\forall \beta \in K'^{[d-1]}$,  $x \notin \pi_\Pi (\Conv{\beta})$;
  \item  \label{eq:AngleUpperBoundKPrimeWithPi0}
    $\forall \alpha \in K'^{[d]}$, $\angle \Aff{\alpha} , T_{\pi_\M(x)}\M < \frac{\pi}{4}$;
  \item \label{equation:PracticalOrientation}
    $\forall \alpha \in K'^{[d]}$, $\alpha$ inherits its orientation from $\Pi$.  
  \end{enumerate}
  Then, for all $d$-cycles $\gamma$ in $K$ and all points $m \in \M \setminus \pi_\M \left( \left| K^{[d-1]} \right| \right)$, we have
  \begin{equation}\label{equation:PracticalNormalizationByApproximateTangentSpace}
    \sum_{\alpha \in K^{[d]}} \gamma(\alpha)
    \Indicator{\pi_\M(\Conv{\alpha})} (m) ~~=~~ \sum_{\alpha \in K'^{[d]}
    } \gamma(\alpha) \Indicator{\pi_\Pi(\Conv{\alpha})} (x).
  \end{equation}
\end{lemma}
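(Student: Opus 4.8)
The plan is to exhibit one constant that both sides of \eqref{equation:PracticalNormalizationByApproximateTangentSpace} are equal to, by deforming the ``probe by $\pi_\M$'' continuously into the ``probe by $\pi_\Pi$''. The assumptions made here are precisely the hypotheses of Lemma~\ref{lemma:PushForwardOfCycleIsConstant}, so, by that lemma and its proof, the map $m\mapsto\sum_{\alpha\in K^{[d]}}\gamma(\alpha)\Indicator{\pi_\M(\Conv\alpha)}(m)$ agrees off $\pi_\M(|K^{[d-1]}|)$ with a constant, call it $c$; in particular the left-hand side of \eqref{equation:PracticalNormalizationByApproximateTangentSpace} equals $c$ for every $m$ allowed in the statement, and it remains to prove that the right-hand side equals $c$. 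Write $\tilde x=\pi_\M(x)$. Because $|K|\subseteq\Offset\M\rho$ and $\rho<\reach$, one has, for any $d$-simplex $\alpha$ of $K$, that $m\in\pi_\M(\Conv\alpha)$ iff $\Conv\alpha$ meets the affine space $m+N_m\M$, and $x\in\pi_\Pi(\Conv\alpha)$ iff $\Conv\alpha$ meets $x+\Pi^{\perp}$; thus both sides of \eqref{equation:PracticalNormalizationByApproximateTangentSpace} are weighted counts of intersections of the geometric $d$-simplices of $K$ with an $(N-d)$-dimensional affine probe. Elementary estimates with $\rho\le\frac{\reach}{25}$, $\operatorname{diam}(\Conv\alpha)\le\rho$, $|K|\subseteq\Offset\M\rho$, the angle bound \eqref{equation:PracticalNormalizationLemmaConditionAngleSimplexPlane}, Lemma~\ref{lemma:distanceToTangent} and Lemma~\ref{lemma:TangentSpaceVariation} show that any $d$-simplex $\alpha$ of $K$ with $x\in\pi_\Pi(\Conv\alpha)$ already lies in $K'=K[x,4\rho]$ and that then $\Conv\alpha\cap(x+\Pi^{\perp})$ is a single point lying close to $x$; together with condition~(i) of the lemma this lets us rewrite the right-hand side of \eqref{equation:PracticalNormalizationByApproximateTangentSpace} as $N(0):=\sum_{\alpha\in K^{[d]}}\gamma(\alpha)\,\Indicator{L_{0}\cap\Conv\alpha\ne\emptyset}$, where $L_{0}$ is an $(N-d)$-disk centred at $x$, directed along $\Pi^{\perp}$, of radius large enough to contain all those intersection points, each such intersection being a single point in the relative interior of the simplex met.

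I would then pick a generic point $m_{1}\notin\pi_\M(|K^{[d-1]}|)$ arbitrarily close to $\tilde x$ and construct a continuous family of $(N-d)$-dimensional affine disks $(L_{t})_{t\in[0,1]}$, with $L_{0}$ as above and $L_{1}=(m_{1}+N_{m_{1}}\M)\cap B(m_{1},\rho)$: rotate the direction space from $\Pi^{\perp}$ through $N_{\tilde x}\M$ to $N_{m_{1}}\M$, move the centre from $x$ along the normal segment through $\tilde x$ to $m_{1}$, and choose the clipping radii along the way. Thanks to $\angle(\Pi^{\perp},N_{\tilde x}\M)=\angle(\Pi,T_{\tilde x}\M)<\pi/8$, Lemma~\ref{lemma:RotationBetweenVectorSubspaces} and Lemma~\ref{lemma:TangentSpaceVariation} (with $m_{1}$ near $\tilde x$), the direction of $L_{t}$ can be kept within angle $\pi/4$ of $N_{\tilde x}\M$ for all $t$; and the estimates of the first paragraph, applied along the path, can be used to pick the radii so that every $d$- and $(d-1)$-simplex of $K$ ever met by an $L_{t}$ lies in $K'$ — so that conditions~(ii) and~(iii) apply to all of them. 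Transversality then follows at once: if $0\neq v$ lay in the direction of $L_{t}$ and in $\Aff\alpha$ for such an $\alpha$, then $\angle(v,N_{\tilde x}\M)<\pi/4$ and $\angle(v,T_{\tilde x}\M)\le\angle(\Aff\alpha,T_{\tilde x}\M)<\pi/4$, which is impossible because $\angle(v,N_{\tilde x}\M)=\pi/2-\angle(v,T_{\tilde x}\M)$. Hence each $L_{t}$ meets every geometric $d$-simplex of $K$ in at most one point, and $N(t):=\sum_{\alpha\in K^{[d]}}\gamma(\alpha)\,\Indicator{L_{t}\cap\Conv\alpha\ne\emptyset}$ is well defined.

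It remains to prove that $N$ is constant. After a generic perturbation of the family, $N(t)$ can change only at finitely many parameters, at each of which $L_{t}$ sweeps transversally across $\relint{\Conv\beta}$ for a single $(d-1)$-simplex $\beta$ of $K$ (which, being a face of $d$-simplices and hence $\rho$-small, lies in $K'$). At such an event the jump of $N$ is the difference between the sums of $\gamma(\alpha)$ over the two groups into which the $d$-cofaces $\alpha\supset\beta$ split according to the side of $L_{t}$ on which $\Conv\alpha$ lies near $\beta$; by Remark~\ref{remark:SameSignInBoundary} and the consistency of the orientations of $\beta$ and its cofaces with $T_{\tilde x}\M$ and with the probe direction (Definition~\ref{definition:SimplexOrientationTransfert}, Lemma~\ref{lemma:TangentSpaceProjectionIPreservesOrientation}, conditions~(ii) and~(iii)), this geometric splitting coincides with the combinatorial splitting by the incidence sign $[\partial\alpha\!:\!\beta]$, so the jump equals $\pm(\partial\gamma)(\beta)=0$. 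This is exactly the bookkeeping carried out in Claim~3 of the proof of Lemma~\ref{lemma:PushForwardOfCycleIsConstant}. Therefore $N$ is constant, so the right-hand side of \eqref{equation:PracticalNormalizationByApproximateTangentSpace} equals $N(0)=N(1)$; and $N(1)=\sum_{\alpha\in K^{[d]}}\gamma(\alpha)\Indicator{\pi_\M(\Conv\alpha)}(m_{1})=c$, since $m_{1}$ is generic and, because $\rho<\reach$ and $|K|\subseteq\Offset\M\rho$, $L_{1}$ meets $\Conv\alpha$ exactly when $m_{1}\in\pi_\M(\Conv\alpha)$. Hence the right-hand side equals $c$, which completes the proof.

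The step I expect to be the real obstacle is the crossing analysis in the third paragraph: turning the intuition ``a small deformation of the probe merely crosses a $(d-1)$-simplex, where $\partial\gamma=0$ absorbs it'' into a rigorous signed computation, which forces one to transport the orientation-consistency through the projections onto $T_{\tilde x}\M$ so that the identification with $\partial\gamma$ is exact — this is what the machinery of Appendix~\ref{appendix:transfering-orientation} is for — and to establish the genericity of the interpolating family (a transversality argument guaranteeing that the crossings really do occur one $(d-1)$-simplex at a time and through relative interiors). The accompanying metric estimates of the first two paragraphs — identifying which simplices lie in $K'$ and choosing the clipping radii so that the two endpoint probes are captured while nothing outside $K'$ is ever met — are routine but must be handled with some care, since the precise constants $25$ and $\pi/8$ are what make the relevant inequalities hold.
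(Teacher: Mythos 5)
Your proposal is correct and follows essentially the same strategy as the paper's proof: Lemma~\ref{lemma:PushForwardOfCycleIsConstant} handles the left-hand side, and the right-hand side is reached by continuously deforming the probe while the cycle condition $\partial\gamma=0$, together with the orientation bookkeeping of Appendix~\ref{appendix:transfering-orientation}, kills the jumps at crossings of $(d-1)$-simplices. The only cosmetic difference is that you run a single homotopy of $(N-d)$-dimensional disks from $x+\Pi^{\perp}$ to a normal fiber of $\M$, whereas the paper first treats the case $\Pi=\Pi_0$ (where the fiber of $\pi_{\Pi_0}$ through $x$ literally coincides with the fiber of $\pi_\M$ through $\pi_\M(x)$, cf.\ \eqref{equation:ProjectionOnMCoincidesWithProjectinOnPi0}) and then rotates the $d$-plane from $\Pi_0$ to $\Pi$ with base point fixed, confining the intersection points via \eqref{eq:InclusionProjectionOnXInKAreNearX} rather than by clipping radii.
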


\begin{proof}


Let $\Pi_0$ be defined as the unique affine space passing through $x$
and parallel to $T_{\pi_\M(x)}\M$. We distinguish two cases depending
on whether $\Pi=\Pi_0$ or not.

\medskip\noindent {\bf Case $\Pi=\Pi_0$.}
 We first claim that:
\begin{equation}\label{equation:ProjectionOnMCoincidesWithProjectinOnPi0}
\left(  {\pi_\M}_{|_{\M^{\oplus \rho}}} \right)^{-1} \big(\{\pi_\M(x)\}\big) =  \pi_{\Pi_0}^{-1} \big(\{x\}\big) \cap \M^{\oplus \rho} \cap B(x, 5 \rho).
\end{equation}
Indeed, the left hand member is equal to $\NorSpace_{\pi_{\M}(x)}\M \cap B(\pi_{\M}(x), \rho)$,
 where $\NorSpace_{\pi_{\M}(x)}\M = \{\pi_{\M}(x)\} + N_{\pi_{\M}(x)}\M$
 is the $(N-d)$-dimensional affine subspace subspace of $\Rspace^N$ orthogonal to $\M$ at $\pi_{\M}(x)$.  Notice that $\NorSpace_{\pi_{\M}(x)}\M$ 
 contains $x$ and coincides with $\pi_{\Pi_0}^{-1} \big(\{x\}\big)$ --- the affine space through $x$ orthogonal to $\Pi_0$. 
 Since $B(\pi_{\M}(x), \rho) \subseteq  B(x, 5 \rho)$, the left hand member is included in the right hand member.
 To get the reverse inclusion we only need to show that:
 \[
 \pi_{\Pi_0}^{-1} \big(\{x\}\big) \cap \M^{\oplus \rho} \cap B(x, 5 \rho) \: \subseteq  \: B(\pi_{\M}(x), \rho).
\]
 
 For that observe that for  $y \in \pi_{\Pi_0}^{-1} \big(\{x\}\big) \cap \M^{\oplus \rho} \cap B(x, 5 \rho)$ one has  
 $y \in  B(  \pi_{\M}(x) , 6\rho)$ and, since $6 \rho < \reach$ and $y \in \NorSpace_{\pi_{\M}(x)}\M$, 
 one has   $\pi_{\M}(y) =  \pi_{\M}(x)$ and $d(y, \M) = d(y, \pi_{\M}(x))$. Thus, $y \in \M^{\oplus \rho}$ implies $y \in B(\pi_{\M}(x), \rho)$.
 Equality \eqref{equation:ProjectionOnMCoincidesWithProjectinOnPi0} is proved.

Let  $y\in \Conv{\alpha} \subseteq   \M^{\oplus \rho}$ and assume that $x= \pi_{\Pi_0} (y)$.
If $\Conv{\alpha} \subseteq  B(x, 5 \rho)$, then $y \in   \pi_{\Pi_0}^{-1} \big(\{x\}\big) \cap \M^{\oplus \rho} \cap B(x, 5 \rho) $  and
 \eqref{equation:ProjectionOnMCoincidesWithProjectinOnPi0} gives $\pi_\M(y) = \pi_\M(x)$. Then
\[
\Big( \Conv{\alpha}  \subseteq  B(x, 5 \rho)  \quad \textrm{and} \quad x \in \pi_{\Pi_0} (\Conv{\alpha}) \Big) \iff  \pi_{\M}(x) \in \pi_\M (\Conv{\alpha}).
\]
If we assume the diameter of simplices to be upper bounded by $\rho$, 
then $\Conv{\alpha}  \cap B(x, 4 \rho) \neq \emptyset$ implies $\Conv{\alpha}  \subseteq  B(x, 5 \rho)$, and we get:
\[
\Big( \Conv{\alpha}  \cap B(x, 4 \rho) \neq \emptyset \quad \textrm{and} \quad x \in \pi_{\Pi_0} (\Conv{\alpha}) \Big) \iff \pi_{\M}(x) \in \pi_\M (\Conv{\alpha}).
\]
It results that  for any $\alpha \in K$, $ \pi_{\M}(x) \in \pi_\M (\Conv{\alpha})$ implies $\alpha \in K'$, and:
 \begin{equation}\label{equation:EqualityProjectionIndex}
\sum_{\alpha \in K^{[d]}} \gamma(\alpha) \Indicator{\pi_\M(\Conv{\alpha})}  ( \pi_\M(x))
= \sum_{\alpha \in K'^{[d]} } \gamma(\alpha) \Indicator{\pi_{\Pi_0}(\Conv{\alpha})} (x).
\end{equation}

We  assume the following generic condition:
\begin{equation}\label{equation:GenericConditon_1_ForPracticalNormaliation}
\pi_\M(x) \in\M \setminus \pi_\M \left( \left| K^{[d-1]} \right| \right).
\end{equation}
The condition is generic because, if it does not hold, a sufficiently small $C^2$-perturbation $\M'$ of $\M$ 
would satisfy it and still meet all the conditions of the lemma.
Assuming the generic condition to hold, one can see using
\eqref{equation:EqualityProjectionIndex} that, in the particular case
where $\Pi = \Pi_0$, the lemma is just another formulation of Lemma
\ref{lemma:PushForwardOfCycleIsConstant}.

\medskip \noindent {\bf Case $\Pi \neq \Pi_0$.}  From now on, we
assume that $\Pi \neq \Pi_0$.  Consider a differentiable path
$\Gamma : [0,1] \rightarrow \left( \Rspace^N \right)^{d+1}$, with
$\Gamma(t) \defunder{=} (x, \mathcal{B}( t))$ in the space of
orthonormal $d$-dimensional frames in $\Rspace^N$. Precisely, $\Gamma(t) = (x,
\mathcal{B}( t))$ is the orthonormal frame of a $d$-dimensional affine
space $\Pi(t)$ containing $x$, with $\Pi(0) = \Pi_0$ and $\Pi(1)=\Pi$,
and $\mathcal{B}(t)$ is an orthonormal basis of the vector space
associated to $\Pi(t)$. Lemma \ref{lemma:RotationBetweenVectorSubspaces} allows us to give an explicit formulation for $\mathcal{B}$.
Since $\Pi \neq \Pi_0$,  $\theta = \angle (\Pi_0, \Pi)$ satisfies $0 < \theta <  \frac{\pi}{8}$, and both affine spaces contain $x$.
Consider two $d$-dimensional vector spaces $V$ and $V'$ such that $\Pi_0 = x + V$ and $\Pi = x + V'$, respectively.
Applying Lemma \ref{lemma:RotationBetweenVectorSubspaces} to $V$ and $V'$ and borrowing its notation, we define the parametrized family
of orthonormal bases  $\mathcal{B}(t)$ for $t\in [0,1]$ as:
\begin{align}
\mathcal{B}(t) &\defunder{=}  \left(\mathcal{B}_1(t)  , \ldots, \, \mathcal{B}_d(t) \right) \nonumber \\
&= \left(u_1(t) , \ldots, \, u_{d'}(t), w_1,\ldots, \,w_{d-d'} \right), \nonumber \\
& \textrm{where} \quad u_k(t) = \cos (t \theta_k) v_k + \sin (t \theta_k) v_k'.  \label{equation:DefinitionBasisFamilyB}
\end{align}
For any $t\in [0,1]$, $\Gamma(t) = \left( x, \mathcal{B}(t) \right)$ is an orthonormal frame of $\Pi(t)$.
We want to follow the evolution of the function $\phi:[0,1] \rightarrow \Rspace$ defined as:
\[
\phi (t) \defunder{=} \sum_{\alpha \in K'^{[d]} } \gamma(\alpha) \Indicator{\pi_{\Pi(t)} (\Conv{\alpha})} (x),
\]
or its regularization $\hat{\phi}$, defined similarly as in the proof of Lemma \ref{lemma:PushForwardOfCycleIsConstant}:
\[
\hat{\phi} (t) \defunder{=}  \lim_{h\rightarrow 0} \frac{1}{\min(1,t+h) - \max(0, t - h) } \int_{\max(0, t - h)}^{\min(1,t+h} \phi(s) ds.
\]
The proof then consists of showing that $\hat{\phi} (t)$ remains
constant along the path $t \mapsto \Pi(t)$ and thus, since
Equation
  \ref{equation:PracticalNormalizationByApproximateTangentSpace} is
  satisfied for $\Pi_0 =\Pi(0)$ by
\eqref{equation:EqualityProjectionIndex}, it will extend to $\Pi=
\Pi(1)$. 

The  family of bases $\mathcal{B}(t) = (u_1(t) , \ldots, \, u_{d'}(t), w_1,\ldots, \,w_{d-d'} )$, parametrized by $t\in[0,1]$,
induces a smooth map $\psi : [0,1] \times \Rspace^N \rightarrow \Rspace^d$ where,
 for $y \in \Rspace^N$, the components of $\psi(t, y) \in \Rspace^d$ are the coordinates of $\pi_{\Pi(t)}(y)-x$ in the basis $\mathcal{B}(t)$:
 \[
\psi (t, y)_k \defunder{=} \langle \mathcal{B}_k(t) , \, \pi_{\Pi(t)}(y)-x \rangle =  \langle \mathcal{B}_k(t) , \, y-x \rangle.
\]
With this definition one has:
\begin{equation}\label{eq:ZeroOfPsiIffProjIsX}
\psi (t, y) = 0 \: \iff \: \pi_{\Pi(t)}(y) = x.
\end{equation}

If $\sigma \in K'^{[d-2]}$, the set $\psi \left( [0,1], \Conv{\sigma}\right)$ is included in a compact smooth $(d-1)$-manifold with boundary
in $\Rspace^d$. Since it corresponds to the finite union of complements of sets of codimension $1$,
the condition:
\begin{equation}\label{equation:GenericCondition_2_ForPracticalNormalisation}
0 \notin  \psi  \left([0,1], \: \bigcup_{\sigma \in K'^{[d-2]}} \Conv{\sigma}\right)
\end{equation}
is generic. We  now make the assumption that this generic condition holds, since, if it does not, it
can be satisfied after an arbitrarily small perturbation of $t \mapsto \Gamma(t)$.
One can easily check using \eqref{equation:DefinitionBasisFamilyB} that for any $t\in [0,1]$, $\angle \Pi(t), \Pi_0 \leq \angle \Pi, \Pi_0$,
and even if a small  perturbation is required for ensuring the generic condition,
we can assume that, from \eqref{equation:PracticalNormalizationLemmaConditionAngleSimplexPlane}:
\begin{equation}\label{eq:IUpperBoundAngleOnPiOfT}
\forall t\in [0,1], \angle \Pi(t), T_{\pi_\M(x)}\M = \angle \Pi(t), \Pi_0 < \frac{\pi}{8}.
\end{equation}
We will need the following claim:
\begin{equation}\label{eq:InclusionProjectionOnXInKAreNearXPiOfX}
\forall t \in [0,1], \quad    \left( \pi_{\Pi(t)} \right)^{-1} (x)  \cap   \M^{\oplus \rho}  \cap B(\pi_\M(x) , 6 \rho)  \: \subseteq  \:  B(\pi_\M(x) , 3 \rho).
\end{equation}
Indeed, consider $y \in  \left( \pi_{\Pi(t)} \right)^{-1} (x)  \cap   \M^{\oplus \rho}  \cap B(\pi_\M(x) , 6 \rho)$.
Then $\pi_\M(y) \in B(\pi_\M(x) , 7 \rho)$, and using Lemma \ref{lemma:distanceToTangent}, one has:
\[
\pi_\M(y)  \in \left(T_{\pi_\M(x)}\M \right)^{\oplus \frac{(7 \rho)^2}{2 \reach} } \subseteq   \left(T_{\pi_\M(x)}\M \right)^{\oplus \rho},
\]
which, since $d(y, \pi_\M(y)) < \rho$, implies:
\begin{equation}\label{eq:InclusionYInOffsetTangentSpace}
y \in \left(T_{\pi_\M(x)}\M \right)^{\oplus 2 \rho}.
\end{equation}
Denote the orthogonal projection onto $\NorSpace_{\pi_\M(x)}\M$
by $\pi_{\NorSpace_{\pi_\M(x)}\M}$. Then, since $x
\in \NorSpace_{\pi_\M(x)}\M \cap \left(T_{\pi_\M(x)}\M \right)^{\oplus
  \rho}$, one has with \eqref{eq:InclusionYInOffsetTangentSpace}:
\[
\left\|  \pi_{\NorSpace_{\pi_\M(x)}\M} (y) - \pi_{\NorSpace_{\pi_\M(x)}\M}(x) \right\| \leq 2 \rho + \rho =  3 \rho.
\]
From \eqref{eq:IUpperBoundAngleOnPiOfT}, one has that $\angle (y-x),  \NorSpace_{\pi_\M(x)}\M < \frac{\pi}{8}$, which gives:
\[
d( y, \NorSpace_{\pi_\M(x)}\M) \leq 3 \rho \tan \frac{\pi}{8} < \frac{3}{2} \rho.
\]
Applying \eqref{eq:InclusionYInOffsetTangentSpace} again, we obtain
\[
d(y, \pi_\M(x)) \leq \sqrt{ d(y, T_{\pi_\M(x)}\M  )^2 +  d( y, \NorSpace_{\pi_\M(x)}\M)^2} < \sqrt{4 + \frac{9}{4} } \rho  < 3 \rho,
\]
which proves Equation \eqref{eq:InclusionProjectionOnXInKAreNearXPiOfX}.
Since $B(x, 5 \rho) \subseteq  B(\pi_\M(x) , 6 \rho)$ and  $B(\pi_\M(x) , 3 \rho) \subseteq  B(x, 4 \rho)$,
Equation \eqref{eq:InclusionProjectionOnXInKAreNearXPiOfX} gives us:
\begin{equation}\label{eq:InclusionProjectionOnXInKAreNearX}
\forall t \in [0,1], \:    \left( \pi_{\Pi(t)} \right)^{-1} (x)  \cap   \M^{\oplus \rho}  \cap B(x , 5 \rho)  \: \subseteq  \: B(x, 4 \rho).
\end{equation}
Since $K'$ is not a simplical complex, it is convenient to introduce
the smallest simplicial complex $\overline{K'}$ containing $K'$, in
other words the set of all faces of simplices in $K'$.  In
particular, $\overline{K'}$ contains all $(d-1)$-faces of the
$d$-simplices in $K'$.  One has $\left|\overline{K'}\right| \subseteq
B(x , 5 \rho)$ and, using \eqref{eq:ZeroOfPsiIffProjIsX},
\eqref{eq:InclusionProjectionOnXInKAreNearX} yields:
\begin{equation}\label{eq:InOverLineKPrimeAndProjectOnXThenInKPrime}
 \sigma \in \overline{K'}^{[d-1]}, \quad \textrm{and}\quad   0 \in \psi([0,1],  \Conv{\sigma})  \: \Rightarrow \: \sigma \in K'^{[d-1]}.
\end{equation}
As shown below, the changes in $t \mapsto \hat{\phi} (t)$ may happen only when $\psi(t, \Conv{\sigma}) =0$,
where $\sigma$ is a $(d-1)$-face of some $d$-simplex $\alpha \in K'^{[d]}$, in other words $\sigma \in \overline{K'}^{[d-1]}$.
We need  \eqref{eq:InOverLineKPrimeAndProjectOnXThenInKPrime} to ensure that every such $\sigma$
in fact belongs to $K'^{[d-1]}$.

Let $\sigma \in K'^{[d-1]}$ be such that $0 \in \psi([0,1],
\Conv{\sigma})$, and denote the set of $d$-cofaces
of $\sigma$ in $K'$ (respectively in $K$) by  $\overline{\Star{\sigma}{K'} }$
(respectively $\overline{\Star{\sigma}{K} }$).  Note that, if $\tau$ is a
$d$-coface of $\sigma$ in $K$, and since $\Conv{\sigma} \subseteq
\Conv{\tau}$, one has the implication $\Conv{\sigma} \cap B(x, 4 \rho) \neq \emptyset
\Rightarrow \Conv{\tau} \cap B(x, 4 \rho) \neq \emptyset$. Thus,
$\tau \in K'$ and we have:
\begin{equation}\label{eq:SameStarInKAndKPrime}
\sigma \in K'^{[d-1]}   \implies \overline{\Star{\sigma}{K'} } = \overline{\Star{\sigma}{K} }.
\end{equation}

\begin{remark}\label{remark:CycleConditionCanBeUsedOnKPrime}
  Thanks to \eqref{eq:SameStarInKAndKPrime},   the cycle condition $(\partial \gamma) (\sigma) = 0$ is inherited on each $(d-1)$-simplex $\sigma \in K'^{[d-1]}$
  by the restriction of $\gamma$ to $K'$. This is not true for the  $(d-1)$-simplices of $\overline{K'}^{[d-1]} \setminus K'^{[d-1]}$.
  Thanks to \eqref{eq:InOverLineKPrimeAndProjectOnXThenInKPrime} we only have to consider how  $\phi(t)$ evolves as $t$ continuously increases from $0$ to $1$ on
  $(d-1)$-simplices in $K'^{[d-1]}$ and benefit from the cycle condition.
\end{remark}

From Condition \ref{eq:AngleUpperBoundKPrimeWithPi0} and
\eqref{eq:IUpperBoundAngleOnPiOfT} we get that:
\begin{equation}\label{eq:AngleUppeBoundDSimpexWIthPi0}
\forall t \in [0,1], \tau \in K'^{[d]} \implies  \angle \Aff(\tau), \Pi(t) < \frac{3\pi}{8}.
\end{equation}
As a consequence of Lemma \ref{lemma:RotationBetweenVectorSubspaces}, the restriction of $\pi_{\Pi(t)}$
to $\Aff{\sigma}$ is an affine bijection and in particular an homeomorphism. It sends the boundary of a $d$-simplex $\Conv{\alpha}$
onto the boundary of the image of $\Conv{\alpha}$: $\pi_{\Pi(t)}(\partial \Conv{\alpha})= \partial \pi_{\Pi(t)}(\Conv{\alpha})$.
 
The fact that $\psi$ is uniformly continuous (in fact $C^\infty$ with a compact domain) also in $t$, means that in particular:
\[
\forall \epsilon>0, \exists \eta >0, \forall y \in \Aff{\sigma}, \forall  t, t^\star \in [0,1],
\quad \quad \left| t^\star - t \right| < \eta
\implies
\left\| \psi(t^\star, y) - \psi(t, y) \right\| < \epsilon.
\]
If $0$ is not on the boundary of $ \psi\left( t, \: \Conv{\alpha} \right)$  there is a $\epsilon>0$
such that either $B(0,\epsilon) \subseteq   \psi  \left( t, \: \Conv{\alpha} \right)^\circ$ or
 $B(0,\epsilon) \cap  \psi \left( t, \: \Conv{\alpha} \right) = \emptyset$. It follows that:
 \begin{align} 
 0 \in  \psi \left( t, \Conv{\alpha} \right)^\circ \quad &\implies \quad \exists \eta>0, 
    \: 0 \in \bigcap_{t^\star \in [t-\eta, t+ \eta]} \psi  \left( t^\star,\: \Conv{\alpha} \right), \label{equation:CaseZeroInsideImageOfAlpha}\\
 0 \notin  \psi \left( t, \Conv{\alpha} \right) \quad &\implies \quad   \exists \eta>0,
   \:  0 \notin  \psi  \left( [t-\eta, t+ \eta] ,\: \Conv{\alpha} \right). \label{equation:CaseZeroOutsideImageOfAlpha}
 \end{align} 
We are now ready to track the evolution of $t\mapsto \hat{\phi} (t)$. For
that we consider two cases. First we consider the values of $t$ such that $x \notin \pi_{\Pi(t)} \left(\overline{K'}^{[d-1]} \right)$, or in other words 
$0 \notin \psi \left(t, \: \left| \overline{K'}^{[d-1]}  \right| \right)$. 
Let $\alpha \in K'^{[d]}$. Since $0$ is not on the boundary of $ \psi\left( t, \: \Conv{\alpha} \right)$,
 one of the two cases \eqref{equation:CaseZeroInsideImageOfAlpha} or \eqref{equation:CaseZeroOutsideImageOfAlpha}
 must occur. This implies that, for some $\eta>0$,  $t \mapsto \phi(t)$ 
  is constant on $ [t-\eta, t+ \eta]$, which in turn implies that in this case $\phi$  and $\hat{\phi}$ coincide:
\[
0 \notin \psi \left(t, \: \left| \overline{K'}^{[d-1]}  \right|  \right) \Rightarrow  \phi(t) = \hat{\phi}(t).
\]
We  now consider the second case, namely when $t$ is such that $x \in
\pi_{\Pi(t)} \left( \left| \overline{K'}^{[d-1]} \right| \right)$.
According to \eqref{eq:InOverLineKPrimeAndProjectOnXThenInKPrime}, if
$x \in \pi_{\Pi(t)} \left( \Conv{\sigma} \right)$ for some $\sigma \in
\overline{K'}^{[d-1]}$, then $\sigma \in K'^{[d-1]}$.  Therefore
\eqref{eq:SameStarInKAndKPrime} and Remark
\ref{remark:CycleConditionCanBeUsedOnKPrime} applies.
 
We are interested in the possible change of value of $\phi(t^\star)$
when $t^\star$ belongs to a neighborhood of $t$.  Generically, if
$x \in \pi_{\Pi(t)} \left( \left| K'^{[d-1]} \right| \right)$, there
is a unique $\sigma \in K'^{[d-1]}$ such that $x \in \pi_{\Pi(t)}
\left( \Conv{\sigma} \right)$. However, we do not need this generic
condition, since $\phi(t^\star)$ can be expressed as the following sum
for any $t^\star \in [0,1]$.  We consider the $d$-simplices $\alpha$
separately, depending on whether the inverse image of $0$ by $\psi(t,.)$
is (1) interior: $0 \in \psi(t, \Conv{\alpha} )^\circ$ (the first sum
below), (2) on the boundary: $0 \in \partial (\psi(t, \Conv{\alpha}
))$ (the second sum below), or (3) the complement: $0 \in \psi(t,
\Conv{\alpha} )^c$ (the third sum below):
\begin{align*}
\forall t^\star \in [0,1], \: \phi(t^\star) &= \sum_{\substack{\alpha \in K'^{[d]} \\ 0 \in \psi(t^\star, \Conv{\alpha} )}} \gamma(\alpha) \nonumber   \\
&= \sum_{\substack{\alpha \in K'^{[d]} \\ 0 \in \psi(t, \Conv{\alpha})^\circ  \\ 0 \in \psi(t^\star, \Conv{\alpha} )}} \gamma(\alpha)  
\quad + \sum_{\substack{\alpha \in K'^{[d]} \\ 0 \in \partial\left(\psi(t, \Conv{\alpha}) \right) \\ 0 \in \psi(t^\star, \Conv{\alpha} )}} \gamma(\alpha)  
\quad + \sum_{\substack{\alpha \in K'^{[d]} \\ 0 \in \psi(t, \Conv{\alpha})^c  \\ 0 \in \psi(t^\star, \Conv{\alpha} )}} \gamma(\alpha).
\end{align*}
Thanks to  \eqref{equation:CaseZeroInsideImageOfAlpha} and \eqref{equation:CaseZeroOutsideImageOfAlpha} there is a $\eta>0$ such that,
for any $t^\star \in [\max(0, t - \eta) , \min(1, t+ \eta) ]$,
$0 \in \psi(t^\star, \Conv{\alpha} )$ always holds in the first sum and never occurs in the third one. Then:
\begin{align}
\forall t^\star &\in  [\max(0, t - \eta) , \min(1, t+ \eta) ],\nonumber \\
\phi(t^\star)&= \sum_{\substack{\alpha \in K'^{[d]} \\ 0 \in \psi(t, \Conv{\alpha})^\circ }} \gamma(\alpha)  
\quad + \sum_{\substack{\alpha \in K'^{[d]} \\ 0 \in \partial\left(\psi(t, \Conv{\alpha}) \right) \\ 0 \in \psi(t^\star, \Conv{\alpha} )}} \gamma(\alpha) \nonumber \\
&= \sum_{\substack{\alpha \in K'^{[d]} \\ 0 \in \psi(t, \Conv{\alpha} )^\circ}} \gamma(\alpha) 
\quad + \sum_{\substack{\sigma \in K'^{[d-1]} \\ 0 \in \psi(t, \Conv{\sigma} )}} 
\sum_{\substack{\alpha \in  \overline{\Star{\sigma}{K'} }\\ 0 \in \psi(t^\star, \Conv{\alpha} ) }} \gamma(\alpha). \label{equation:phiAsSumOverStars}
\end{align}
In \eqref{equation:phiAsSumOverStars}, the $d$-simplices in the second sum have been regrouped by the stars of $(d-1)$-simplices.
Notice that no $d$-simplex can be counted twice as, 
under the generic condition  \eqref{equation:GenericCondition_2_ForPracticalNormalisation} 
and with the angle bound \eqref{eq:AngleUppeBoundDSimpexWIthPi0}, one has:
\begin{equation}\label{eq:StarsOfEdgesAboveXAreDisjoint}
\sigma_1 \neq \sigma_2 \quad \text{and} \quad 0 \in \psi(t, \Conv{\sigma_1}),  \quad \text{and} \quad 0 \in \psi(t, \Conv{\sigma_2} ) \Rightarrow 
 \overline{\Star{\sigma_1}{K'} } \cap  \overline{\Star{\sigma_2}{K'} } = \emptyset.
\end{equation}
The first sum does not depend on $t^\star \in [\max(0, t - \eta) , \min(1, t+ \eta) ]$ and therefore remains constant in this interval, as in the first case.
Thanks to \eqref{eq:StarsOfEdgesAboveXAreDisjoint}, there are several $(d-1)$-simplices $\sigma \in K'^{[d-1]}$ such that
$0 \in \psi(t, \Conv{\sigma} )$ in the second sum,  their stars are disjoint.
It is then enough to study the variation of:
\[
t^\star \mapsto \phi_{\sigma} (t^\star ) \defunder{=} \sum_{\substack{\alpha \in  \overline{\Star{\sigma}{K'} } \\ 0 \in \psi(t^\star, \Conv{\alpha} )  }} \gamma(\alpha) 
\]
for a single $\sigma \in K'^{[d-1]}$  such that $0 \in \psi(t, \Conv{\sigma})$, when $t^\star$ belongs to a neighborhood of $t \in [0,1]$.

According to Lemma \ref{lemma:TangentSpaceProjectionIsAMap}, the
projection $\pi_{\Pi(0)}$ is a chart of $\M$ with a domain $U_0 = \M
\cap B\left(\pi_\M(x), \sin (\pi/4) \Reach{\M}\right)$.  Since $\M$ is
orientable, we can orient $\M$ consistently with a given orientation
of $\pi_{\Pi(0)}$. The simplices in $\left| K'^{[d]} \right|\subseteq
\B(x, 5 \rho) $ are sent into $\M \cap \B(x, 6 \rho) \subseteq \M \cap
\B(\pi_\M(x), 7 \rho)$ by $\pi_\M$.  Since, as assumed in the lemma, one has $\rho < \frac{\reach}{25}$, we get
$\pi_\M \left( \left| K'^{[d]} \right| \right) \subseteq U_0$.  In
other words, $\psi(0,.)$ restricted to $U_0$ is a chart of $\M$
consistent with the orientation.

Let $\tau$ be a $d$-simplex in $K'$ oriented consistently with the orientation of $\M$, $y \in \tau$, and $U_y$ a neighborhood of $y$ in  $\Aff(\tau)$.
According to Lemma \ref{lemma:TangentSpaceProjectionIPreservesOrientation}, the projection $\psi(0, .) (\tau)$ onto $\Pi(0)$
is positively oriented with respect to the orientation of $\Pi(0)$.
Since, for $y \in \tau$, the map $t \mapsto \det D \left( \psi(t, .)\restrict{\Conv{\tau}} \right) (y)$
is continuous and does not vanish by \eqref{eq:AngleUppeBoundDSimpexWIthPi0}, one has 
\[
 \det D \left( \psi(0, .)\restrict{\Conv{\tau}} \right) (y) > 0 \Rightarrow  \forall t\in [0,1], \det D \left( \psi(t, .)\restrict{\Conv{\tau}} \right) (y) > 0.
 \]
We have proven that $ \psi(t, .)\restrict{\Conv{\tau}} $ preserves the
orientation of any $d$-simplex in $K'$, which means that the
projection $\psi(t, \tau) = [\psi(t, v_0), \ldots, \psi(t, v_d)]$ of
any $d$-simplex $\tau= [v_0,\ldots, v_d]$ oriented consistently with
$\M$ is positively oriented with respect to the orientation of
$\Pi(t)$.

Therefore, we can apply the same argument as for Claim 3 of the proof of Lemma \ref{lemma:PushForwardOfCycleIsConstant} to the $d$-simplices in the
star of $\sigma$: if $x \in \pi_{\Pi(t)} \left( \Conv{\sigma} \right)$, in other words if for some $y\in \Conv{\sigma}^\circ$ one has $\psi(t,y) = 0$,
and if $H^-$ and $H^+$ are the two half spaces in $\Rspace^d$ bounded by the hyperplane spanned by $\psi(t, \sigma)$, 
then  $\alpha \in \overline{\Star{\sigma}{K'} }$ appears positively
(resp. negatively) in  the boundary of a $d$-coface $\alpha \in  \overline{\Star{\sigma}{K'}}$ if $\psi(t, \alpha) \subseteq  H^+$ (resp. $\psi(t, \alpha) \subseteq  H^-$).

It follows that, for a point $z$ in a neighborhood of $V_0$ of $0$, 
\[
z \mapsto \sum_{\substack{\alpha \in  \overline{\Star{\sigma}{K'} } \\ z \in \psi(t^\star, \Conv{\alpha} )  }} \gamma(\alpha) 
\]
has the same value in $V_0 \cap (H^-)^\circ$ and in  $V_0 \cap (H^+)^\circ$.
We consider the $C^1$ map $F : [0,1] \times \Conv{\sigma}^\circ \rightarrow [0,1] \times \Rspace^d$ defined by:
\[
F(t^\star, y^\star) \defunder{=} \left( t^\star, \psi(t^\star, y^\star) \right).
\]
We note that, in particular, $F(t,y)=(t,0)$.

By the Thom Transversality Theorem \cite[Chapter 2]{guillemin2010differential}
 the map $F$ is generically transversal to the manifold  $[0,1] \times \{0\}$, 
which implies that  $F$ is regular at $(t,y)$, i.e., its derivative at $(t,y)$ has rank $d$ and 
the vector $(1, 0) \in  \Rspace \times \Rspace^d$ does not belong to the image of the derivative of $F$ when $F(t^\star, y^\star) \in (0,1) \times \{0\}$.
Since the transversality property on $F$ is generic, if it does not hold, it will after an arbitrary small perturbation
 of $F$.


It follows that the image of $F$ in a neighborhood of $F(t,y)$  is a smooth hyper-surface whose tangent space does not contain the vector
$(1, 0) \in  \Rspace \times \Rspace^d$ and therefore separates the two vectors $(t-\eta, 0)$ and $(t+\eta, 0)$ of $\Rspace \times \Rspace^d$ for $\eta>0$  small enough.
In particular, (since $K'^{[d-1]}$ is finite) the set of all $t'$ such that $\psi \left( t',  \left| K'^{[d-1]}  \right| \right)$ is made of isolated values.

Therefore, for some $\epsilon>0$ and $\alpha>0$, the complement of $B((t,0), \epsilon)^\circ \cap F([t-\eta, t+\eta], \Conv{\sigma}^\circ )$ in $B((t,0), \epsilon)^\circ$ 
has exactly two connected components, which are open and contain  
$( \{t\} \times (H^-)^\circ ) \cap B((t,0), \epsilon)$ and $( \{t\} \times (H^+)^\circ ) \cap B((t,0), \epsilon)$ respectively.

We know by \eqref{equation:CaseZeroInsideImageOfAlpha} and 
\eqref{equation:CaseZeroOutsideImageOfAlpha}  that the sum
\[
\sum_{\substack{\alpha \in  \overline{\Star{\sigma}{K'} } \\ 0 \in \psi(t^\star, \Conv{\alpha} )  }} \gamma(\alpha)
\]
is locally constant in each connected component.
Then, since it has same value in  $( \{t\} \times (H^-)^\circ ) \cap B((t,0), \epsilon)$ and $( \{t\} \times (H^+)^\circ ) \cap B((t,0), \epsilon)$,
it has same value in both connected components. We have proven that $t^\star \mapsto \phi_{\sigma} (t^\star )$
has the same value for $t^\star \in (t - \eta, t)$ and  $t^\star \in (t , t+ \eta)$, and therefore its regularization $\hat{\phi_{\sigma}}$
is locally constant. This ends the proof of the lemma.

\end{proof}

\clearpage
\section{Approximate tangent space computed by PCA}
\label{appendix:subsection:PCA}

\begin{lemma}\label{lemma:PerturbationAndLLL_pseudoTangent}
Let $0 < 16 \varepsilon \leq \rho < \frac{\reach}{10}$, and suppose that $P \subseteq \Offset \M \delta$ 
for some $0 \leq \delta < \frac{\rho^2}{4 \reach}$, 
$\M \subseteq \Offset P \varepsilon$ and $\Sep{P} > \eta \varepsilon$ for some $\eta >0$.
If, for any point $p\in P$, $c_p$ is the center of mass of $P \cap B(p, \rho)$ and $V_p$ the linear space spanned by the $n$ eigenvectors corresponding 
to the $n$ largest eigenvalues of the inertia tensor of $\big( P \cap B(p, \rho) \big) - c_p$, 
then one has:
\[
\angle V_p , T_{\pi_\M(p)}\M \,< \,\Xi _0( \eta, d) \frac{\rho}{\reach},
\]
where the function $\Xi_0$  is polynomial in $\eta$ and exponential in $d$.
\end{lemma}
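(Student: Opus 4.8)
The plan is to treat this as a standard PCA tangent-space estimate and carry out a perturbation analysis of the inertia tensor. First I would dispose of the uninteresting regime: the conclusion is vacuous once $\Xi_0(\eta,d)\,\rho/\reach \ge \pi/2$, so, after enlarging $\Xi_0$ if necessary, I may assume throughout that $\rho/\reach$ is smaller than any fixed threshold $\tau_0(\eta,d)$ (itself polynomial in $\eta^{-1}$ and exponential in $d$) to be fixed below. Set $m = \pi_\M(p)$, $T = T_m\M$, $N = T^\perp$, $\Pi_0 = m+T$, and $S = P\cap B(p,\rho)$; write $c = c_p$ and let $M = \frac{1}{|S|}\sum_{q\in S}(q-c)(q-c)^t$, whose eigenvectors coincide with those in the statement, so that $V_p$ is the span of the eigenvectors of $M$ attached to its $d$ largest eigenvalues. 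Translating $c$ to the origin and splitting $\Rspace^\Dim = T\oplus N$, I decompose $M$ into a tangential block $A = \frac1{|S|}\sum_q P_T(q-c)\,P_T(q-c)^t$, a normal block $C$ defined analogously with $P_N$, and a coupling block $B = \frac1{|S|}\sum_q P_T(q-c)\,P_N(q-c)^t$.

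Second, I would control the normal extent of $S$. Since $P\subseteq\Offset\M\delta$ with $\delta < \rho^2/(4\reach)$ and $\rho < \reach/10$, the projection $\pi_\M$ is $(1+\mathcal O(\rho/\reach))$-Lipschitz on $\Offset\M\delta$ \cite[p.~435]{federer-59}, so $\|\pi_\M(q)-m\| = \mathcal O(\rho)$ for $q\in S$; Lemma~\ref{lemma:distanceToTangent} then gives $d(\pi_\M(q),\Pi_0) \le \|\pi_\M(q)-m\|^2/(2\reach)$, whence $d(q,\Pi_0) \le d(\pi_\M(q),\Pi_0)+\delta = \mathcal O(\rho^2/\reach)$ for every $q\in S$. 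As $c$ is a convex combination of points of $S$ and $\Pi_0$ is convex, $c$ obeys the same bound, so $\|P_N(q-c)\| = \mathcal O(\rho^2/\reach)$ while plainly $\|P_T(q-c)\| \le 2\rho$. Feeding these into Cauchy--Schwarz bounds for the entries of $B$ and $C$ yields $\|B\| = \mathcal O(\rho^3/\reach)$ and $\|C\| = \mathcal O((\rho^2/\reach)^2)$.

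Third, and this is the crux, I would prove $\lambda_{\min}(A) \ge c_1(\eta,d)\,\rho^2$. For a unit vector $u\in T$ the mean-free identity $u^tAu = \frac{1}{2|S|^2}\sum_{q,q'\in S}\langle P_T(q-q'),u\rangle^2$ reduces this to producing many pairs of points of $S$ whose orthogonal projections onto $T$ are well separated along $u$. By Lemma~\ref{lemma:TangentSpaceProjectionIsAMap}, the orthogonal projection onto $\Pi_0$ restricts to a chart of $\M$ on $\M\cap B(m,\rho/4)$, and by Lemma~\ref{lemma:distanceToTangent} and invariance of domain its image contains a $d$-disk $D$ of radius $\Theta(\rho)$; since $16\varepsilon\le\rho$, every point of $P$ that $\varepsilon$-covers $\M\cap B(m,\rho/4)$ lies in $S$, so the projection of $S$ onto $\Pi_0$ is $\varepsilon$-dense in $D$. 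A volume count then shows that each of the two opposite caps of $D$ in direction $u$ captures the projection of at least $c_2(d)\,(\rho/\varepsilon)^d$ points of $S$, each displaced by $\gtrsim\rho$ along $u$. On the other side, when $\rho/\reach$ is small the normal thickness $\mathcal O(\rho^2/\reach)$ of the slab containing $S$ is $\ll\eta\varepsilon$, so projecting onto $\Pi_0$ keeps the $\eta\varepsilon$-separation of $P$ essentially intact and a $d$-dimensional packing argument gives $|S| \le c_3(d)\,(\rho/(\eta\varepsilon))^d$. Hence a fraction $\Theta(\eta^d)$ of the points of $S$ fall in each cap, at least $\Theta(\eta^{2d})\,|S|^2$ pairs contribute $\Omega(\rho^2)$ to the sum, and $\lambda_{\min}(A) \ge c_1(\eta,d)\,\rho^2$ with $c_1(\eta,d) = \Theta(\eta^{2d})$ times a constant depending only on $d$. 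I expect this step to be the main obstacle, since it must combine the density hypothesis --- through the chart property of the projection onto $\Pi_0$, to fill a tangent disk of radius $\Theta(\rho)$ --- with the separation hypothesis bounding $|S|$, and it needs the normal thickness to be negligible next to $\eta\varepsilon$ so the packing count is governed by $d$ and not by the ambient dimension; a little care with $c_p$ versus $m$ is also required.

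Finally, I would carry out the routine perturbation step (a special case of the Davis--Kahan $\sin\theta$ bound). Weyl's inequality applied to the block-diagonal part of $M$ and the coupling perturbation of norm $\|B\|$ gives $\lambda_d(M) \ge \lambda_{\min}(A) - \|B\| \ge \tfrac12 c_1(\eta,d)\rho^2$, while $\lambda_{\max}(C) = \mathcal O((\rho^2/\reach)^2)$, so the spectrum of $M$ splits into a ``tangential'' part $\ge \tfrac12 c_1(\eta,d)\rho^2$ and a ``normal'' part $\le \tfrac14 c_1(\eta,d)\rho^2$. Picking an orthonormal eigenbasis $v_1,\dots,v_d$ of $M$ spanning $V_p$ and reading $Mv_i = \lambda_i v_i$ blockwise yields $(\lambda_i I - C)\,P_N v_i = B^t P_T v_i$, hence $\|P_N v_i\| \le \|B\|/(\lambda_d(M)-\lambda_{\max}(C)) = \mathcal O(\rho/(c_1(\eta,d)\reach))$; summing over $i$, $\sin\angle(V_p,T) = \sup_{v\in V_p,\|v\|=1}\|P_N v\| \le \sqrt d\,\max_i\|P_N v_i\| = \mathcal O(\sqrt d\,\rho/(c_1(\eta,d)\reach))$. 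This gives $\angle(V_p,T_{\pi_\M(p)}\M) < \Xi_0(\eta,d)\,\rho/\reach$ with $\Xi_0(\eta,d) = \mathcal O(\sqrt d/c_1(\eta,d))$, which is polynomial in $\eta^{-1}$ and exponential in $d$, completing the argument.
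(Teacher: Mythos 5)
Your overall route is essentially the paper's: you split the inertia tensor at $c_p$ into a tangential block, a coupling block and a normal block with respect to $T_{\pi_\M(p)}\M\oplus N_{\pi_\M(p)}\M$, you use the Federer distance-to-tangent bound (Lemma~\ref{lemma:distanceToTangent}) to get $\|B\|=\mathcal O(\rho^3/\reach)$ and $\|C\|=\mathcal O(\rho^4/\reach^2)$, you lower-bound the least eigenvalue of the tangential block by $c_1(\eta,d)\rho^2$, and you finish with a spectral perturbation argument giving an angle $\mathcal O\bigl(\sqrt d\,\rho/(c_1\reach)\bigr)$. The only structural difference is in the last step: the paper argues eigenvector by eigenvector, deriving $\sin 2\theta=\mathcal O(\rho/\reach)$ and then separating the eigenvectors near $T$ from those near $N$ by the size of their eigenvalues, whereas you invoke a Weyl/Davis--Kahan-type gap argument with the resolvent bound $\|P_Nv_i\|\le\|B\|/(\lambda_d(M)-\lambda_{\max}(C))$; both yield the same conclusion and your version is arguably cleaner. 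You also go further than the paper on the crucial tangential bound: the paper simply asserts $\lambda_{\min}(\mathbf A_{TT})\ge C_{TT}(\eta,d)\rho^2$ ``because of the sampling conditions,'' with a footnote saying the claim still has to be detailed, while you sketch an actual proof via the pairwise-variance identity, $\varepsilon$-density of the projected sample in a tangent disk of radius $\Theta(\rho)$ (via Lemma~\ref{lemma:TangentSpaceProjectionIsAMap}), a cap-counting lower bound, and a packing upper bound on $\#\bigl(P\cap B(p,\rho)\bigr)$.

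One step of that sketch does not follow from the stated hypotheses, and you should flag it. You justify the $d$-dimensional packing bound $\#\bigl(P\cap B(p,\rho)\bigr)\le c_3(d)\,(\rho/(\eta\varepsilon))^d$ by claiming that for $\rho/\reach$ small the normal thickness $\mathcal O(\delta+\rho^2/\reach)$ of the slab containing the sample is $\ll\eta\varepsilon$. But the lemma imposes no lower bound on $\varepsilon$ relative to $\rho^2/\reach$ (only $16\varepsilon\le\rho$ and $\delta<\rho^2/(4\reach)$), so one may have $\eta\varepsilon\ll\rho^2/\reach$ no matter how small $\rho/\reach$ is; your initial reduction to small $\rho/\reach$ does not control this ratio. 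In that regime points of $P\cap B(p,\rho)$ can stack in the $N-d$ normal directions, the count can exceed your bound by a factor of order $\bigl(1+(\delta+\rho^2/\reach)/(\eta\varepsilon)\bigr)^{N-d}$, and the resulting constant would depend on $N$ and on $\varepsilon\reach/\rho^2$ rather than only on $\eta$ and $d$. To make your argument airtight you either need the additional (and, in the paper's intended application, satisfied) assumption $\eta\varepsilon\gtrsim\delta+\rho^2/\reach$, or a finer argument showing that normally stacked points dilute $\mathbf A_{TT}$ and the coupling block $\mathbf A_{TN}$ in a correlated way. To be fair, the paper's own proof is incomplete at exactly this point --- it never proves the claimed $C_{TT}(\eta,d)$ bound --- so your proposal is at least as complete as the published argument, but the gap you would need to close is precisely this one.
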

\begin{proof} 
Consider a frame centered at $c_p$ with an orthonormal basis of $\Rspace^N$,
whose $d$ first vectors  $e_1,\ldots,e_d$ belong to $T_{\pi_{\M}(p)}\M$ and the $N-d$ last vectors  $e_{d+1},\ldots,e_N$ to the normal fiber $N_{\pi_{\M}(p)}\M$.
Consider the symmetric $N\times N$ normalized inertia tensor $\mathbf{A}$ of $ P \cap B(p, \rho) $ in this frame:
\[
\mathbf{A}_{ij} \defunder{=} \frac{1}{\sharp P \cap B(p, \rho)} \sum_{p \in P \cap B(p, \rho)} \langle v_i, p - c_p\rangle  \langle v_j, p - c_p\rangle.
\]
The symmetric matrix $A$ decomposes into $4$ blocs:
\begin{equation}
\mathbf{A}=
\begin{pmatrix}
\mathbf{A}_{TT} & \mathbf{A}_{TN} \\
\mathbf{A}_{TN}^t & \mathbf{A}_{NN}
\end{pmatrix},
\end{equation}
where  the tangental inertia $\mathbf{A}_{TT}$  is a $d\times d$ symmetric define positive matrix. Because of the sampling conditions, 
we claim\footnote{This claim has to be  detailed if one wants to provide an explicit expression of the quantity $\Xi _0( \eta, d)$.} 
that there is a constant $C_{TT}>0$ depending only on $\eta$ and $d$ such that the smallest eigenvalue of 
$\mathbf{A}_{TT}$ is at least $C_{TT} \rho^2$:
\begin{equation}\label{equation:LowerBoundOnATT}
\forall u \in \Rspace^d, \|u\| = 1 \Rightarrow \: u^t \, \mathbf{A}_{TT} \, u \geq C_{TT} \rho^2.
\end{equation}

Observe that, by Lemma \ref{lemma:distanceToTangent}, the points in $P
\cap B(p, \rho)$ are at a distance less than
$\frac{(\rho+\delta)^2}{2\reach}$ from the space $\pi_{\M}(p)+
T_{\pi_{\M}(p)}\M$, and therefore so is $c_p$. Thus, the points in $P
\cap B(p, \rho)$ are at a distance less than $2
\frac{(\rho+\delta)^2}{2\reach} \leq 2 \frac{\rho^2}{\reach} $
(assuming $(\rho+ \delta)^2 \leq 2 \rho^2$) from the space $c_p+
T_{\pi_{\M}(p)}\M$.  Then, there are constants $C_{TN}$ and $C_{NN}$
such that the operator norms of
$\mathbf{A}_{TN}$ and $\mathbf{A}_{NN}$  induced by the Euclidean vector norm are upper bounded by:
\begin{equation}\label{equation:LowerBoundOnATN}
\forall u,v \in \Rspace^d,  \|u\| =  \|v\| = 1  \Rightarrow  \: v^t \, \mathbf{A}_{TN} \, u \leq C_{TN}  \frac{ \rho^2}{\reach} \rho,
\end{equation}
and:
\begin{equation}\label{equation:LowerBoundOnANN}
\forall u \in \Rspace^d,  \|u\| = 1 \Rightarrow  \:  u^t \,\mathbf{A}_{NN} \, u  \leq C_{NN}   \frac{ \rho^2}{\reach}  \frac{ \rho^2}{\reach}.
\end{equation}
Let $v\in \Rspace^N$ be a unit eigenvector of $\mathbf{A}$ with an eigenvalue $\lambda$:
\begin{equation}\label{equation:EigneVectorOfA}
\mathbf{A}\, v = \lambda \, v.
\end{equation}
Define $T \defunder{=} \Rspace^d \times \{0\}^{N-d} \subseteq  \Rspace^N$ and $N \defunder{=}  \{0\}^d \times  \Rspace^{N-d} \subseteq  \Rspace^N$,
corresponding, in the space of coordinates, to $T_{\pi_{\M}(p)}\M$ and $N_{\pi_{\M}(p)}\M$ respectively.

Let $\theta$ be the angle between $v$ and $T$. There are unit vectors $v_T \in \Rspace^d$ and $v_N \in  \Rspace^{N-d}$
such that:
\[
v = ((\cos \theta) v_T, (\sin \theta) v_N)^t,
\] 
where for a matrix $u$, $u^t$ denotes the transpose of $u$.
 \eqref{equation:EigneVectorOfA} can be rewritten as:
 \begin{equation}\label{equation:EigneVectorOfA_2X2}
\begin{pmatrix}
\mathbf{A}_{TT} & \mathbf{A}_{TN} \\
\mathbf{A}_{TN}^t & \mathbf{A}_{NN}
\end{pmatrix}\,  
\begin{pmatrix}
(\cos \theta) v_T \\
 (\sin \theta) v_N
\end{pmatrix}  
= \lambda \,
\begin{pmatrix}
(\cos \theta) v_T \\
 (\sin \theta) v_N
\end{pmatrix}.
\end{equation}
Equivalently:
\begin{align}
 (\cos \theta) \mathbf{A}_{TT}  v_T &+  (\sin \theta) \mathbf{A}_{TN}   v_N = \lambda (\cos \theta) v_T, \label{equation:LambdaCosThetaV_T} \\
 (\cos \theta)\mathbf{A}_{TN}^t  v_T &+  (\sin \theta)  \mathbf{A}_{NN}  v_N = \lambda (\sin \theta) v_N.  \label{equation:LambdaSinThetaV_N} 
\end{align}
Multiplying the two equations on the left hand side by $(\sin
\theta)v_T^t$ and $ (\cos \theta) v_N^t$ respectively, we get:
 \begin{align*}
(\sin \theta) (\cos \theta)  v_T^t \mathbf{A}_{TT}  v_T +  (\sin \theta)^2  v_T^t  \mathbf{A}_{TN}   v_N &= \lambda  (\sin \theta) (\cos \theta),   \\
 (\cos \theta)^2 v_N^t \mathbf{A}_{TN}^t  v_T +   (\cos \theta) (\sin \theta) v_N^t \mathbf{A}_{NN}  v_N &= \lambda (\cos \theta)  (\sin \theta).
\end{align*}
Thus,
\[
(\sin \theta) (\cos \theta)  v_T^t \mathbf{A}_{TT}  v_T +  (\sin \theta)^2  v_T^t  \mathbf{A}_{TN}   v_N  
=  (\cos \theta)^2 v_N^t \mathbf{A}_{TN}^t  v_T +   (\cos \theta) (\sin \theta) v_N^t \mathbf{A}_{NN}  v_N.
\]
Using  \eqref{equation:LowerBoundOnATN} and \eqref{equation:LowerBoundOnANN}, 
we get:
\[
(\sin \theta) (\cos \theta)  v_T^t \mathbf{A}_{TT}  v_T  \leq 2  C_{TN}  \frac{ \rho^2}{\reach} \rho  +  C_{NN}   \frac{ \rho^2}{\reach}  \frac{ \rho^2}{\reach}.
\]
 Using \eqref{equation:LowerBoundOnATT}, we get:
 \[
(\sin \theta) (\cos \theta)   \leq 2    \frac{ C_{TN}}{C_{TT}}  \frac{ \rho}{\reach}   +  C_{NN}   \frac{ C_{NN}}{C_{TT}}   \frac{ \rho^2}{\reach^2}.
\]
Using $\sin 2\theta = 2 \sin \theta \cos \theta$, we get:
\begin{equation}\label{equation:UpperBoundOnSinTwoTheta}
\frac{1}{2} \sin 2 \theta \leq 2    \frac{ C_{TN}}{C_{TT}}  \frac{ \rho}{\reach}   +   C_{NN}   \frac{ C_{NN}}{C_{TT}}   \frac{ \rho^2}{\reach^2} = \mathcal{O}\left(  \frac{ \rho}{\reach} \right).
\end{equation}
Thus,
\[
\theta \in \left[0, t\right] \cup \left[\frac{\pi}{2} - t, \frac{\pi}{2}\right],
\]
with
\[
t = \frac{1}{2}  \arcsin 2 \left(2    \frac{ C_{TN}}{C_{TT}}  \frac{ \rho}{\reach}   
+   C_{NN}   \frac{ C_{NN}}{C_{TT}}   \frac{ \rho^2}{\reach^2} \right) = \mathcal{O}\left(  \frac{ \rho}{\reach} \right).
\]
This means that the eigenvectors of $\mathbf{A}$ form an angle less
than $\mathcal{O}\left( \frac{ \rho}{\reach} \right)$ with either
$T$ or $N$.  For the non-generic situation of a multiple
eigenvalue, one chooses arbitrarily the vectors of an orthogonal basis
of the corresponding eigenspace. Since no more than $d$ pairwise
orthogonal vectors can make a small angle with the $d$-dimensional
space $T$, and the same holds for the $(N-d)$-dimensional space $T$, we know
that $d$ eigenvectors form an angle $\mathcal{O}\left( \frac{
  \rho}{\reach} \right)$ with $T$ and the $(N-d)$ others, an angle $\mathcal{O} \left( \frac{\rho}{\reach} \right)$ with $N$.  Multiplying 
the left hand side of \eqref{equation:LambdaCosThetaV_T} by $v_T^t$, and
the left hand side of \eqref{equation:LambdaSinThetaV_N} by $v_N^t$, we get:
\begin{align*}
 (\cos \theta)  v_T^t \mathbf{A}_{TT}  v_T &+  (\sin \theta) v_T^t \mathbf{A}_{TN}   v_N = \lambda (\cos \theta),\\
 (\cos \theta)  v_N^t \mathbf{A}_{TN}^t  v_T &+  (\sin \theta) v_N^t \mathbf{A}_{NN}  v_N = \lambda (\sin \theta).
\end{align*}
When the angle between the eigenvector $v$ and the space $T$ is in $\mathcal{O}\left(  \frac{ \rho}{\reach} \right)$, 
then $|1 - \cos \theta| = \mathcal{O}\left( \left( \frac{ \rho}{\reach} \right)^2\right)$ 
and $|\sin \theta | = \mathcal{O}\left(  \frac{ \rho}{\reach} \right)$.
The first equation implies that $\lambda$ approximately equals
$v_T^t \mathbf{A}_{TT}  v_T  \geq  C_{TT} \rho^2$.
When the angle between the eigenvector $v$ and the space $N$ is in $\mathcal{O}\left(  \frac{ \rho}{\reach} \right)$, 
the second equation implies that $\lambda =\mathcal{O}\left( \left( \frac{ \rho}{\reach} \right)^2\right)$,
which is smaller than $C_{TT} \rho^2$ for $\frac{ \rho}{\reach}$ small enough. 

So far we have proven that the $d$ orthonormal eigenvectors $v_1,\ldots, v_d$ corresponding to the $d$ largest eigenvalues of $\mathbf{A}$
form an angle with $T_{\pi_{\M}(p)}\M$ that is upper bounded by $C \left(\frac{ \rho}{\reach} \right)$, for some constant $C$ that depends only upon $d$ and $\eta$.
For any unit vector  $u$, its angle with $T_{\pi_{\M}(p)}\M$ satisfies:
\[
\sin \angle u, T_{\pi_{\M}(p)}\M = \| u - \pi_{T_{\pi_{\M}(p)}\M} (u)\|.
\]
Any  unit vector  $u= \sum_{i=1,d} a_i v_i$  in the $d$-space spanned by $v_1,\ldots, v_d$ satisfies:
\begin{align*}
\sin \angle u, T_{\pi_{\M}(p)}\M &= \left\|  \sum_{i=1,d} a_i v_i - \pi_{T_{\pi_{\M}(p)}\M} \left( \sum_{i=1,d} a_i v_i\right) \right\| \\
& =    \left\|  \sum_{i=1,d} a_i  \left(v_i - \pi_{T_{\pi_{\M}(p)}\M} (  v_i)  \right)\right\| \\
& \leq \sum_{i=1,d} | a_i |  \left\| v_i - \pi_{T_{\pi_{\M}(p)}\M} (  v_i) \right\|  \\
& \leq  \sum_{i=1,d} | a_i | C \left(\frac{ \rho}{\reach} \right)   \leq \sqrt{d} \,C \left(\frac{ \rho}{\reach} \right),
\end{align*}
since $\sum_{i=1,d} a_i^2 = 1$, we conclude that $\sum_{i=1,d} |a_i | \leq \sqrt{d}$.
\end{proof}

\end{document}